\documentclass[aps,pra,
onecolumn,tightenlines,
superscriptaddress,nofootinbib,longbibliography,11pt]{revtex4-2}

\usepackage{graphicx}
\usepackage{amsmath,amsfonts,amsbsy,amssymb,amsthm,mathtools}
\usepackage{mathrsfs}
\usepackage{tabularx}
\usepackage{booktabs}
\usepackage{color}
\usepackage{enumitem}
\usepackage{siunitx}
\usepackage{multirow}
\usepackage[theorems]{tcolorbox}
\usepackage{color}
\usepackage{tcolorbox}
\usepackage{booktabs}
\usepackage{graphicx}   %
\usepackage{environ}
\usepackage{comment}
\usepackage{xcolor}
\usepackage{stmaryrd}
\usepackage{booktabs}
\usepackage{tikz}
\usetikzlibrary{automata,arrows,positioning,calc}
\usepackage{tikz-cd}
\usepackage[normalem]{ulem}
\usepackage{subcaption}
\usepackage{ragged2e}
\usepackage{subfiles}

\usepackage{float}
\usepackage{placeins} %
\usepackage{tikzsymbols}
\usetikzlibrary{plotmarks,spy}

\newtheorem{theorem}{Theorem}
\newtheorem{proposition}[theorem]{Proposition}
\newtheorem{lemma}[theorem]{Lemma}

\theoremstyle{definition}
\newtheorem{definition}[theorem]{Definition}

\newtheorem{example}[theorem]{Example}
\newtheorem{remark}[theorem]{Remark}

\theoremstyle{remark}

\usepackage[colorlinks = true, citecolor=blue, linkcolor=blue, urlcolor=blue]{hyperref}
\usepackage[T1]{fontenc}
\usepackage{mwe}
\providecommand{\bra}{}
\providecommand{\ket}{}
\providecommand{\braket}{}
\renewcommand{\bra}[1]{\left\langle #1\right|}
\renewcommand{\ket}[1]{\left| #1 \right\rangle}

\renewcommand{\braket}[2]{\left\langle {#1}\middle|{#2}\right\rangle}
\newcommand{\brakets}[3]{\left\langle {#1}\middle| {#2}\middle| {#3}\right\rangle}
\newcommand{\abs}[1]{\left|{#1}\right|}
\newcommand{\nrm}[1]{\left\|{#1}\right\|}

\newcommand{\C}{{\mathbb{C}}} %
\newcommand{\R}{{\mathbb{R}}} %

\newcommand{\F}{{\mathbb{F}}} %
\newcommand{\Z}{{\mathbb{Z}}}
\renewcommand{\S}{{\mathbb{S}}}
\newcommand{\GL}{{\mathsf{GL}}}
\renewcommand{\O}{{\mathsf{O}}}

\newcommand{\spn}{{\textrm{span}\,}}

\newcommand{\diag}{{\textrm{diag}}}
\newcommand{\stab}{{\textbf{stab}}}

\newcommand{\lt}{{\mathtt{LT}\,}}
\newcommand{\lc}{{\mathtt{LC}\,}}
\newcommand{\lm}{{\mathtt{LM}\,}}
\newcommand{\lex}{{\mathtt{lex}\,}}
\newcommand{\grlex}{{\mathtt{grlex}\,}}
\newcommand{\grevlex}{{\mathtt{grevlex}\,}}
\newcommand{\muldeg}{{\mathtt{muldeg}\,}}

\newcommand{\at}[1]{{\textcolor{red}{Andrew: [#1]}}}
\definecolor{darkred}{RGB}{120,0,0}
\definecolor{darkpurple}{RGB}{120,0,120}

\begin{document}

\normalem
\newlength\figHeight 
\newlength\figWidth 

\title{BARC codes: general polynomial framework for coherent-state superposition codes}

\author{Praneel Gore}
\affiliation{Institute of Advanced Intelligence and Computing (IAIC),
Agency for Science, Technology and Research (A*STAR),
1 Fusionopolis Way, \#16-16 Connexis, Republic of Singapore 138632}

\author{Davit Aghamalyan}
\affiliation{SMT Pillar, Singapore University of Technology and Design,
Singapore 487372}

\author{Varun Narasimhachar}
\affiliation{Institute of Advanced Intelligence and Computing (IAIC),
Agency for Science, Technology and Research (A*STAR),
1 Fusionopolis Way, \#16-16 Connexis, Republic of Singapore 138632}

\author{Andrew Tanggara}
\affiliation{Centre for Quantum Technologies, National University of Singapore,
3 Science Drive 2, Singapore 117543}

\affiliation{Nanyang Quantum Hub, School of Physical and Mathematical Sciences,
Nanyang Technological University, Singapore 639673}

\date{\today}

\begin{abstract}
We introduce an algebraic framework for constructing a new class of quantum error-correcting codes, bosonic algebraically-restricted constellation (BARC) codes. The code states are finite superpositions of coherent-state constellations 
constrained by
symmetries of solution sets to multivariate complex polynomial systems. The constraints are associated with photon-gain and photon-loss errors, thereby imposing approximate Knill--Laflamme conditions for such errors.
For equally-weighted constellation points, we derive upper bounds on the minimum geometric separation between points, indicating how noise-resilient the code is.
In the single-mode case, we characterize degree-two polynomial solution sets using an orthogonal group symmetry to obtain explicit ellipsoidal and hexagonal {BARC codes}.
We benchmark representative instances of these families against spherical and cubature codes using the entanglement fidelity with optimal recovery under pure-loss noise.
\end{abstract}

\maketitle

\tableofcontents

\section{Introduction}

Quantum error correction (QEC) is essential for realizing reliable, large-scale quantum computers capable of addressing practically relevant problems \cite{NielsenChuang,Spencer2026,ErrorCorrectionZoo}. 
By redundantly encoding logical information, QEC enables physical errors to be detected and corrected without destroying the encoded quantum state, making the development of efficient error-correcting codes a central challenge on the path toward fault-tolerant quantum computation.

Bosonic QEC provides an attractive alternative to encodings built solely from collections of two-level systems. 
A single harmonic oscillator possesses an infinite-dimensional Hilbert space and can therefore accommodate highly redundant logical encodings within a single physical mode, allowing the structure of the encoding to be tailored to dominant physical noise processes and potentially reducing hardware overhead \cite{GKP2001,Albert2018,Grimsmo2020}. 
This observation has motivated several major families of single-mode bosonic codes. The Gottesman--Kitaev--Preskill (GKP) code encodes quantum information into grid states possessing discrete translation symmetry in oscillator phase space \cite{GKP2001}; its experimental relevance has since been underscored by real-time QEC beyond break-even in a superconducting-cavity implementation \cite{Sivak2023}. Cat codes, based on superpositions of coherent states, constitute another major family of single-mode bosonic encodings and have developed from early amplitude-damping constructions through dissipatively protected and optimized coherent-state codes, broader structural analyses, and experimentally accessible state-generation proposals \cite{Cochrane1999,Mirrahimi2014,Li2017,Albert2018,Erneman2026}. Binomial codes use finite superpositions of Fock states with binomial weights and can be constructed to correct photon loss and gain as well as dephasing  errors to prescribed orders \cite{Michael2016,Albert2018}. A broader unifying perspective is provided by rotation-symmetric bosonic codes, which include important cat- and binomial-code families and organize bosonic encodings according to discrete phase-space rotational symmetry \cite{Grimsmo2020,Totey2023}. Other single-mode approaches exploit squeezing as an additional resource, including squeezed-cat and squeezed-Fock-state encodings and related state-engineering proposals aimed at improving protection against relevant bosonic noise channels \cite{Schlegel2022,Hillmann2023,Zeng2026,Nauth2026}. Analogous and intrinsically multimode bosonic constructions have also been developed across several code families \cite{Royer2022,XuRotor2024,Lemonde2024,Aydin2026,XuTiger2025,Ahmed2026}.

Among bosonic encodings, coherent-state constructions are especially appealing because of their direct phase-space interpretation and their compatibility with experimentally mature cavity, circuit-QED, and photonic control techniques~\cite{Brune1992,  Brune1996,Deleglise2008,Auffeves2003,Ofek2016,Lemonde2024,LachanceQuirion2024,Putterman2025, Berdou2023,Reglade2024,Marquet2024,Guillaud2019}.
Experimental progress has established coherent-state bosonic QEC as more than a theoretical alternative. In a landmark experiment, Ofek \emph{et al.} demonstrated a cat-code logical qubit whose error-corrected lifetime surpassed that of the best constituent physical qubit, reaching the QEC break-even regime \cite{Ofek2016}. 
More recently, concatenated architectures have combined individually protected cat qubits with an outer repetition code, demonstrating a hardware-efficient route toward scaling bosonic QEC \cite{Putterman2025}. 
These developments motivate the search for coherent-state codes that combine strong error protection with structures amenable to preparation, control, and decoding.

Coherent-state bosonic codes also admit a natural geometric interpretation in phase space, where logical states are supported on discrete constellations of coherent amplitudes and their error-correction properties can be related to the geometry and moments of these constellations. 
A particularly natural geometric realization of this idea is provided by quantum spherical codes (QSCs) \cite{Jain2023}. 
In this framework, logical code states are superpositions of coherent states whose amplitudes form finite constellations on a fixed-energy sphere in phase space. 
The geometry of a constellation provides a natural notion of resolution: the minimum squared Euclidean separation \(d_E\) between distinct constellation points controls their asymptotic distinguishability and their robustness against sufficiently small phase-space perturbations. 
At the same time, photon-number-changing errors are represented by monomials in creation and annihilation operators. 
Because coherent states are eigenstates of annihilation operators, the corresponding Knill--Laflamme (KL) constraints acquire a direct algebraic form in the coherent-state amplitudes. Appropriate constellation moment conditions can therefore protect against photon-loss/gain and dephasing-type errors, while geometric separation provides complementary protection against small displacement-like perturbations in the large-energy regime \cite{Jain2023}. Quantum cubature codes (QCCs) subsequently generalize this geometric approach by constructing coherent-state QEC codes based on cubature formulas, which allow nonuniform weights and constellations distributed across multiple energy shells \cite{Yang2025}. 
This enlarged design space yields new single- and multimode constructions, demonstrating that relaxing the geometric restrictions on the constellation can reveal further bosonic encodings. 
We use representative single-mode QSCs and QCCs as benchmarks for our constructions under pure-loss noise in Section~\ref{sec:noise-benchmarking}.

{Despite this progress, systematically discovering weighted coherent-state codes in one or several modes remains a nontrivial problem; our explicit constructions below focus on the single-mode, uniform-weight case.}
Geometrically motivated frameworks such as QSCs and intrinsically multimode constructions such as tiger codes have uncovered rich families of bosonic codes \cite{Jain2023,XuTiger2025}, but they approach code discovery through specific geometric, symmetry, or algebraic structures. 
This motivates a complementary question: can the Knill--Laflamme conditions themselves be treated as the defining equations of the code space and solved systematically to uncover previously unknown coherent-state constellations? 
Such a formulation would provide a common more general framework for single- and multimode code discovery without requiring a particular constellation geometry to be imposed in advance.

Algebraic geometry and commutative algebra provide a natural language for such a program. Polynomial systems define algebraic varieties, while ideals, elimination theory, and Gr\"obner bases provide systematic tools for characterizing and solving coupled polynomial constraints \cite{Cox2015}. 

{In this work, we develop an algebraic framework for constructing bosonic algebraically-restricted constellation (BARC) codes by solving systems of polynomial moment equations for weighted coherent-state constellations.}
Shared weighted moments, multiple modes, and multiple energy shells are already part of the QCC framework~\cite{Yang2025}; our contribution is the formulation by equivalent real polynomial ideals, the degree-two solution-space parameterization, and the explicit symmetry-based construction procedures.
We simplify the polynomial generators and exploit symmetries of their real solution sets; expressing shared moment constraints in this way does not by itself define a code class disjoint from QCCs.

The framework applies generally to weighted coherent-state constellations in an arbitrary number of bosonic modes. 
However, in the present work, explicit constructions, classifications, and numerical examples focus on the uniformly-weighted single-mode setting. 
{We exploit symmetries of the polynomial solution space to construct code families, derive bounds on their constellation resolution, and characterize the real solution set of the single-mode, uniform-weight moment constraints through degree two.}

Concrete multimode code families obtained from the same polynomial framework will be investigated in a follow-up work.
The resulting real polynomial systems could also be
solved numerically to search for further constellations.
{We have not undertaken a systematic numerical root search or Gr\"obner-basis elimination to generate the reported examples; the numerical results below benchmark the explicit constellations obtained from the analytic constructions.}
To assess these constructions, we benchmark representative single-mode instances against selected QSC and QCC references using the optimized entanglement fidelity defined in Section~\ref{sec:noise-benchmarking} under bosonic pure-loss noise.
We compare equal constellation sizes using the best sampled amplitudes on a finite grid at one reference loss probability and the stated photon-budget reference-selection rule; amplitudes are then held fixed while recovery is optimized along each loss parameter.
Under this benchmarking, several of our constructed code families outperform the selected QSC and QCC references over all sampled loss probabilities, while other families outperform their selected references only in particular loss regimes.
%
%
%
%
%

The remainder of this manuscript is organized as follows.
Section~\ref{sec:coherent_state_codes} introduces coherent-state superpositions and their exact KL conditions.
Section~\ref{sec:constellations} formulates the single-mode, uniform-weight moment constraints as real polynomial equations.
Section~\ref{sec:moment_tensors} gives their moment-tensor representation and, in Section~\ref{sec:moment_symmetry}, their moment-map symmetries.
Section~\ref{sec:resolution_bounds} derives the general multimode resolution bound and its single-mode specialization.
Section~\ref{sec:degree_two} analyzes the constraints through degree two, and Sections~\ref{sec:ellipsoid_constellations} and~\ref{sec:nonellipsoid_constellations} present the ellipsoid and hexagonal constructions.
Section~\ref{sec:noise-benchmarking} compares the constructed families with the selected QSC and QCC references under pure-loss noise, and Section~\ref{sec:conclusions} gives conclusions and future directions.
Appendix~\ref{sec:weighted_constellation_polynomials} extends the polynomial formulation and moment-coordinate transformation to weighted multimode constellations.

\section{Codes from superpositions of coherent states}\label{sec:coherent_state_codes}\label{sec:weighted_KL}

We consider bosonic codes whose logical states are superpositions of finitely many coherent states. 
Cat codes and QSCs provide uniform-weight examples, while QCCs allow nonuniform weights and multiple energy shells~\cite{Jain2023,Yang2025}. 
Rotation symmetry also organizes cat and binomial codes, although binomial codes are defined using finite Fock-state superpositions~\cite{Michael2016,Grimsmo2020}.
Here we recall how coherent-state overlaps and constellation moments enter the KL conditions as in~\cite{Jain2023,Yang2025}, before treating the moment constraints as polynomial equations in Section~\ref{sec:constellations}.
{Appendix~\ref{sec:qsc_qcc_review} reviews the spherical-design and positive-cubature constructions of QSCs and QCCs, respectively.}

Let $P$ be the orthogonal projector onto a $K$-dimensional codespace. 
{An error set $\left\{E_\mu\right\}_\mu$ is exactly correctable if and only if there are constants $\lambda_{\mu,\nu}$ satisfying the KL conditions~\cite{Knill1997}}
\begin{equation}\label{eqn:KL_general}
    PE_\mu^\dag E_\nu P=\lambda_{\mu,\nu}P
    \quad\forall\mu,\nu .
\end{equation}
In an orthonormal logical basis this is the usual KL condition with a Kronecker delta. 
For linearly independent logical states $\left\{\ket{\mathcal{C}_i}\right\}_i$ spanning the codespace that may not be orthogonal, the equivalent form is
\begin{equation}\label{eqn:KL_Gram}
    \brakets{\mathcal C_i}{E_\mu^\dag E_\nu}{\mathcal C_j}
    =\lambda_{\mu,\nu}\braket{\mathcal C_i}{\mathcal C_j} .
\end{equation}
This form retains the generally nonzero overlaps between logical constellation states. 
Detectability of an operator $L$ requires $PLP=\lambda_LP$; correction requires this condition for every product $E_\mu^\dag E_\nu$.
{The constants $\lambda_{\mu,\nu}$ are independent of the logical states, and taking the adjoint of eqn.\ \eqref{eqn:KL_general} gives $\lambda_{\nu,\mu}=\lambda_{\mu,\nu}^*$, so the matrix $\left[\lambda_{\mu,\nu}\right]_{\mu,\nu}$ is Hermitian.}

Consider a set of weighted constellations $\left\{\mathcal{C}_j\right\}_j$ for a subset $\mathcal{C}_j = \left\{\left(w_\alpha,\alpha\right)\right\}$ where $w_\alpha\in\C$ is the weight of point $\alpha\in\C^n$ and all $\alpha\in\C^n$ component of elements of $\mathcal{C}_j$ and $\mathcal{C}_k$ are distinct (i.e. $\alpha\neq\beta$ for all $\left(w_\alpha,\alpha\right)\in\mathcal{C}_j$ and all $\left(w_\beta,\beta\right)\in\mathcal{C}_k$ if $j\neq k$).
A constellation $\mathcal{C}_j$ codestate over $n$ modes is defined as
\begin{equation}
    \ket{\mathcal{C}_j} = \frac{1}{\mathcal{N}_j} \sum_{\left(w_\alpha,\alpha\right)\in\mathcal{C}_j} w_\alpha \ket{\alpha}
\end{equation}
Here $\mathcal N_j>0$ is the positive normalization factor, and $\ket{\alpha}=\ket{\alpha_1}\otimes\dots\otimes\ket{\alpha_n}$ with $\ket{\alpha_i}$ being a coherent state at the $i$-th mode.
We assume that each constellation contains finitely many distinct amplitude points.

For multi-indices $p,q\in\Z_{\geq0}^n$, let $L_{p,q}\left(a^\dag,a\right)=\prod_{m=1}^n\left(a_m^\dag\right)^{p_m}a_m^{q_m}$ be the \emph{ladder monomial} with total degree $\abs{p}+\abs{q}$, where $\abs{p}=\sum_{m=1}^n p_m$.
The identities $a_m\ket{\alpha}=\alpha_m\ket{\alpha}$ and $\bra{\beta}a_m^\dag=\beta_m^*\bra{\beta}$ give
\begin{equation}
    \brakets{\beta}{L_{p,q}\left(a^\dag,a\right)}{\alpha}=\prod_{m=1}^n\left(\beta_m^*\right)^{p_m}\alpha_m^{q_m}\braket{\beta}{\alpha} .
\end{equation}
Therefore, the exact matrix elements entering eqn.\ \eqref{eqn:KL_Gram} are
\begin{equation}
    \brakets{\mathcal C_i}{L_{p,q}\left(a^\dag,a\right)}{\mathcal C_j}
    =\frac{1}{\mathcal N_i\mathcal N_j}
      \sum_{\substack{\left(w_\beta,\beta\right)\in\mathcal C_i\\
                       \left(w_\alpha,\alpha\right)\in\mathcal C_j}}
      w_\beta^*w_\alpha
      \prod_{m=1}^n\left(\beta_m^*\right)^{p_m}\alpha_m^{q_m}
      \braket{\beta}{\alpha} .
\end{equation}
In particular, the exact normalization and logical-state overlap are given by
\begin{equation}\label{eqn:weighted_normalization_Gram}
\begin{gathered}
    \mathcal N_j^2=\sum_{\substack{\left(w_\beta,\beta\right)\in\mathcal C_j\\\left(w_\alpha,\alpha\right)\in\mathcal C_j}}w_\beta^*w_\alpha\braket{\beta}{\alpha},\qquad
    \braket{\mathcal C_i}{\mathcal C_j}=\frac{1}{\mathcal N_i\mathcal N_j}\sum_{\substack{\left(w_\beta,\beta\right)\in\mathcal C_i\\\left(w_\alpha,\alpha\right)\in\mathcal C_j}}w_\beta^*w_\alpha\braket{\beta}{\alpha} ,
\end{gathered}
\end{equation}
where the coherent-state overlap satisfies $\abs{\braket{\beta}{\alpha}}=\exp\left(-\nrm{\beta-\alpha}_2^2/2\right)$.

Now, let $\mathcal A=\bigcup_j\left\{\alpha:\left(w_\alpha,\alpha\right)\in\mathcal C_j\right\}$ be the set of all coherent state amplitudes in the constellations of a code spanned by $\left\{\mathcal{C}_j\right\}_j$ (assuming that it contain at least two amplitudes).
The \emph{resolution} of this code is defined by
\begin{equation}\label{eqn:constellation_resolution}
    d_E=\min_{\substack{\alpha,\beta\in\mathcal A\\\alpha\neq\beta}}\nrm{\alpha-\beta}_2^2,\qquad 
    \epsilon=e^{-d_E/2} .
\end{equation}
Here $\nrm{\cdot}_2$ denotes the Euclidean norm on $\C^n$, so $\nrm{\alpha-\beta}_2^2=\sum_{m=1}^n\abs{\alpha_m-\beta_m}^2$ and $d_E$ is the minimum squared Euclidean distance between distinct constellation points.
Thus $\sqrt{d_E}$ is the minimum separation between distinct amplitudes, and $\epsilon$ bounds the modulus overlap $\abs{\braket{\alpha}{\beta}}$ for all amplitudes $\alpha\neq\beta$ in $\mathcal{A}$.

Define $W_j=\sum_{\left(w_\alpha,\alpha\right)\in\mathcal C_j}\abs{w_\alpha}^2$.
{We impose the same normalized moments on every logical constellation:}
\begin{equation}\label{eqn:weighted_moment_constraints}
    \frac{1}{W_j}
    \sum_{\left(w_\alpha,\alpha\right)\in\mathcal C_j}
    \abs{w_\alpha}^2\prod_{m=1}^n\left(\alpha_m^*\right)^{p_m}\alpha_m^{q_m}
    =c_{p,q}\quad\forall j .
\end{equation}
{To relate the moment constraints in eqn.\ \eqref{eqn:weighted_moment_constraints} to the KL conditions, we approximate the exact normalized matrix elements $\brakets{\mathcal C_i}{L_{p,q}\left(a^\dag,a\right)}{\mathcal C_j}$ by $c_{p,q}\delta_{i,j}$ and the exact logical-state overlaps $\braket{\mathcal C_i}{\mathcal C_j}$ by $\delta_{i,j}$.}
{This approximation discards the contributions containing coherent-state overlaps $\braket{\beta}{\alpha}$ with $\alpha\neq\beta$, including their contributions to the normalization factors; its errors are controlled by $\epsilon=e^{-d_E/2}$ in the large-separation limit.}
{To bound these errors, write $R=\max\left\{1,\max_{\alpha\in\mathcal A}\nrm{\alpha}_2\right\}$ and first separate equal- and unequal-amplitude contributions using the following exact identities:}
\begin{equation}
\begin{aligned}
    \mathcal N_j^2&=W_j+\sum_{\substack{\left(w_\beta,\beta\right),\left(w_\alpha,\alpha\right)\in\mathcal C_j\\\alpha\neq\beta}}w_\beta^*w_\alpha\braket{\beta}{\alpha},\\
    \brakets{\mathcal C_i}{L_{p,q}\left(a^\dag,a\right)}{\mathcal C_j}&=\frac{\delta_{i,j}}{\mathcal N_j^2}\sum_{\left(w_\alpha,\alpha\right)\in\mathcal C_j}\abs{w_\alpha}^2\prod_{m=1}^n\left(\alpha_m^*\right)^{p_m}\alpha_m^{q_m}+B_{i,j}^{p,q},\\
    B_{i,j}^{p,q}&=\frac{1}{\mathcal N_i\mathcal N_j}\sum_{\substack{\left(w_\beta,\beta\right)\in\mathcal C_i,\;\left(w_\alpha,\alpha\right)\in\mathcal C_j\\\alpha\neq\beta}}w_\beta^*w_\alpha\prod_{m=1}^n\left(\beta_m^*\right)^{p_m}\alpha_m^{q_m}\braket{\beta}{\alpha} .
\end{aligned}
\end{equation}
The normalization correction $\abs{\mathcal N_j^2-W_j}$ and the remainder magnitude $\abs{B_{i,j}^{p,q}}$ satisfy the following bounds:
\begin{equation}
\begin{gathered}
    \abs{\mathcal N_j^2-W_j}\leq\epsilon\left(\left(\sum_{\left(w_\alpha,\alpha\right)\in\mathcal C_j}\abs{w_\alpha}\right)^2-W_j\right) , \\
    \abs{B_{i,j}^{p,q}}\leq\frac{\epsilon R^{\abs{p}+\abs{q}}}{\mathcal N_i\mathcal N_j}\left(\sum_{\left(w_\beta,\beta\right)\in\mathcal C_i}\abs{w_\beta}\right)\left(\sum_{\left(w_\alpha,\alpha\right)\in\mathcal C_j}\abs{w_\alpha}\right) .
\end{gathered}
\end{equation}

A \emph{logical-state Gram matrix} is the $K\times K$ matrix $S$ with entries $S_{i,j}=\braket{\mathcal C_i}{\mathcal C_j}$, explicitly given in eqn.\ \eqref{eqn:weighted_normalization_Gram}.
Setting the unequal-amplitude contributions to zero replaces $\mathcal N_j^2$ by $W_j$ and $B_{i,j}^{p,q}$ by $0$, giving $c_{p,q}\delta_{i,j}$ and $S_{i,j}=\delta_{i,j}$ in this approximation.
The two bounds above quantify the errors in the moment approximation to the KL matrix elements caused by ignoring overlaps between coherent states with distinct amplitudes.
Now, fix the finite sets of distinct amplitudes and their nonzero weight lists, let their union be $\mathcal A_0$, and multiply every amplitude by the same real factor $\varrho\geq1$ and denote $\mathcal A\left(\varrho\right)=\varrho\mathcal A_0$ while the weights remain fixed.
Define
\begin{equation}
    d_{E,0}=\min_{\substack{\zeta,\xi\in\mathcal A_0\\\zeta\neq\xi}}\nrm{\zeta-\xi}_2^2>0,\qquad R_0=\max\left\{1,\max_{\zeta\in\mathcal A_0}\nrm{\zeta}_2\right\} .
\end{equation}
Then the definitions of $d_E$, $\epsilon$, and $R$ give the scaled resolution and errors
\begin{equation}
\begin{gathered}
    d_E\left(\varrho\right)=\varrho^2d_{E,0},\qquad \epsilon\left(\varrho\right)=e^{-\varrho^2d_{E,0}/2},\qquad R\left(\varrho\right)\leq\varrho R_0,\\
    \text{so that}\\
    R\left(\varrho\right)^{\abs{p}+\abs{q}}\epsilon\left(\varrho\right)\leq\varrho^{\abs{p}+\abs{q}}R_0^{\abs{p}+\abs{q}}e^{-\varrho^2d_{E,0}/2}\longrightarrow0\quad\text{as}\quad\varrho\to\infty .
\end{gathered}
\end{equation}
For fixed $p,q$, the exponential decrease therefore dominates the polynomial amplitude factor, and the normalization bound gives $\mathcal N_j^2\left(\varrho\right)\to W_j>0$, so the denominators in the bound on $B_{i,j}^{p,q}$ stay bounded away from zero.
This shows the decreasing coherent-state overlaps in the large-separation limit.
The common target moments also scale as $c_{p,q}\left(\varrho\right)=\varrho^{\abs{p}+\abs{q}}c_{p,q}\left(1\right)$, so their equality across logical constellations is preserved, rather than their numerical values being held fixed.

%

%
%
%
%

%

At finite separation, the exact detectability condition is $\brakets{\mathcal C_i}{L_{p,q}\left(a^\dag,a\right)}{\mathcal C_j}=\lambda_{p,q}\braket{\mathcal C_i}{\mathcal C_j}$ for every $i,j$.
Substituting the coherent-state matrix elements and eqn.\ \eqref{eqn:weighted_normalization_Gram} gives
\begin{equation}
\begin{aligned}
    &\frac{1}{\mathcal N_i\mathcal N_j}\sum_{\substack{\left(w_\beta,\beta\right)\in\mathcal C_i\\\left(w_\alpha,\alpha\right)\in\mathcal C_j}}w_\beta^*w_\alpha\prod_{m=1}^n\left(\beta_m^*\right)^{p_m}\alpha_m^{q_m}\braket{\beta}{\alpha}
    =\frac{\lambda_{p,q}}{\mathcal N_i\mathcal N_j}\sum_{\substack{\left(w_\beta,\beta\right)\in\mathcal C_i\\\left(w_\alpha,\alpha\right)\in\mathcal C_j}}w_\beta^*w_\alpha\braket{\beta}{\alpha} .
\end{aligned}
\end{equation}
Multiplying both sides by the nonzero factor $\mathcal N_i\mathcal N_j$ gives
\begin{equation}
    \sum_{\substack{\left(w_\beta,\beta\right)\in\mathcal C_i\\\left(w_\alpha,\alpha\right)\in\mathcal C_j}}w_\beta^*w_\alpha\prod_{m=1}^n\left(\beta_m^*\right)^{p_m}\alpha_m^{q_m}\braket{\beta}{\alpha}
    =\lambda_{p,q}\sum_{\substack{\left(w_\beta,\beta\right)\in\mathcal C_i\\\left(w_\alpha,\alpha\right)\in\mathcal C_j}}w_\beta^*w_\alpha\braket{\beta}{\alpha} .
\end{equation}
Subtracting the right-hand side and combining the two sums yields
\begin{equation}\label{eqn:weighted_KL_exact}
    \sum_{\substack{\left(w_\beta,\beta\right)\in\mathcal C_i\\\left(w_\alpha,\alpha\right)\in\mathcal C_j}}w_\beta^*w_\alpha\left(\prod_{m=1}^n\left(\beta_m^*\right)^{p_m}\alpha_m^{q_m}-\lambda_{p,q}\right)\braket{\beta}{\alpha}=0\quad\forall i,j .
\end{equation}
{This equation is equivalent to the exact detectability condition $PL_{p,q}P=\lambda_{p,q}P$ and retains every coherent-state overlap term; when $L_{p,q}=E_\mu^\dag E_\nu$, it is the corresponding exact KL condition for that error pair.}

Following Refs.\ \cite{Jain2023,Yang2025}, define the loss-detection distance $d_\downarrow$ as the largest integer such that every $L_{0,q}$ with $\abs{q}<d_\downarrow$ is detectable.
The loss-correction parameter $t_\downarrow$ is the largest integer such that the error set $\left\{L_{0,q}:\abs{q}<t_\downarrow\right\}$ is correctable.
Since $L_{0,p}^\dag L_{0,q}=L_{p,q}$, this is equivalent to detectability of every $L_{p,q}$ with $\abs{p},\abs{q}<t_\downarrow$.
The degree distance $d_\updownarrow$ is the largest integer such that every $L_{p,q}$ with $\abs{p}+\abs{q}<d_\updownarrow$ is detectable.
Thus the numbers of detectable and correctable photon losses are $d_\downarrow-1$ and $t_\downarrow-1$, respectively, and
\begin{equation}
    \left\lfloor\frac{d_\updownarrow+1}{2}\right\rfloor\leq t_\downarrow\leq d_\updownarrow\leq d_\downarrow .
\end{equation}
When these parameters are inferred from constellation moments alone, the corresponding detection and correction statements apply in the large-separation approximation; exact finite-separation statements require eqn.\ \eqref{eqn:weighted_KL_exact} for the relevant ladder monomials.

These finite-degree moment conditions concern the specified ladder monomials, not the entire pure-loss channel at a nonzero loss probability.
Writing $\eta$ for transmissivity, the exact pure-loss Kraus operators contain a photon-number-dependent factor:
\begin{equation}\label{eqn:moment_vs_finite_loss}
    K_l\left(\eta\right)=\sqrt{\frac{\left(1-\eta\right)^l}{l!}}\,\eta^{a^\dag a/2}a^l,\qquad 0<\eta<1.
\end{equation}
The factor $\eta^{a^\dag a/2}$ is not a finite-degree polynomial, so moment matching through a fixed degree does not imply exact correction of all products of these Kraus operators.
The overlap bounds above control the stated monomial matrix elements; they are not by themselves a uniform recovery-error bound for the full channel.
The full-channel benchmark is specified independently in Section~\ref{sec:noise-benchmarking} and Appendix~\ref{sec:beam-splitter-derivation}.

Our explicit code constructions and numerical benchmarks in the remainder of this paper focus on the single-mode, uniform-weight special case of the framework above.
Section~\ref{sec:constellations} formulates its moment constraints as polynomial equations, which we use to construct {BARC codes}.
The weighted multi-mode formulation remains available as a generalization, and the bounds on resolution $d_E$ also cover multiple modes.

\section{Finding constellations for BARC codes}\label{sec:constellations}\label{sec:single_mode_constellation_polynomials}

%

%

In this section, we formulate the moment constraints of Section~\ref{sec:coherent_state_codes} as real polynomial equations for single-mode constellations with uniform weights.
We write each coherent-state amplitude as $\alpha_j=u_j+\iota v_j$, separate the real and imaginary parts of the ladder-moment equations, and replace them by equivalent equations for the real monomial moments $\tau_{a,b}$.
We prove that the two sets of real polynomials generate the same ideal and hence have the same real solution set.
Appendix~\ref{sec:weighted_constellation_polynomials} gives the weighted multimode extension, including the corresponding change-of-coordinates lemma.
The moment equations describe the diagonal contributions to the KL conditions; exact finite-separation correction still requires the overlap terms in eqn.\ \eqref{eqn:weighted_KL_exact}, as discussed in Remark~\ref{rem:moment_vs_full_KL}.

We specialize the construction in Section~\ref{sec:coherent_state_codes} to one mode, uniform weights $w_\alpha=1$, and pairwise disjoint constellations $\mathcal C_k\subseteq\C$ with $\abs{\mathcal C_k}=\ell$ and $K\ell>1$.
Then $W_k=\ell$, and the normalized moment conditions in eqn.\ \eqref{eqn:weighted_moment_constraints} reduce to equality of the sums $\sum_{\alpha\in\mathcal C_k}\left(\alpha^*\right)^p\alpha^q$ across $k$.
We denote these unnormalized sums by $c_{p,q}$, so the corresponding normalized target moments in eqn.\ \eqref{eqn:weighted_moment_constraints} are $c_{p,q}/\ell$.
The exact normalization remains that of eqn.\ \eqref{eqn:weighted_normalization_Gram}, and exact finite-separation detectability still requires eqn.\ \eqref{eqn:weighted_KL_exact}.

{For $p,q\in\Z_{\geq0}$ and $c_{p,q}\in\C$, consider a polynomial $f_{p,q}\in\C\left[x_1,y_1,\dots,x_l,y_l\right]$ given by}
\begin{equation}
    f_{p,q}\left(x\right) = x_1^p y_1^q + x_2^p y_2^q + \dots + x_\ell^p y_\ell^q - c_{p,q} .
\end{equation}
For constellations $\mathcal{C}_1,\dots,\mathcal{C}_K\subseteq\C$ with $\abs{\mathcal{C}_k}=\ell$, let us define complex polynomial
\begin{equation}
    f_{p,q}\left(\mathcal{C}_k\right) = - c_{p,q} + \sum_{\alpha\in\mathcal{C}_k} \left(\alpha^*\right)^p \alpha^q .
\end{equation}

Assume the constellations are pairwise disjoint and contain
$\ell$ distinct points each. Set
\begin{equation}
\begin{gathered}
    \mathcal C=\bigcup_{k=1}^K\mathcal C_k,\qquad
    R=\max\left\{1,\max_{\alpha\in\mathcal C}\abs{\alpha}\right\},\qquad
    \epsilon=
    \max_{\substack{\alpha,\beta\in\mathcal C\\\alpha\neq\beta}}
    \abs{\braket{\beta}{\alpha}}
    =\exp\!\left(-\frac12
    \min_{\substack{\alpha,\beta\in\mathcal C\\\alpha\neq\beta}}
    \abs{\alpha-\beta}^2\right) .
\end{gathered}
\end{equation}
Here $R$ bounds the modulus of every coherent amplitude and is chosen to be at least $1$, while $\epsilon=e^{-d_E/2}$ is the largest overlap magnitude between distinct coherent states in $\mathcal C$.
For normalized uniform superpositions,
$f_{p,q}\left(\mathcal C_k\right)=0$ for all $k$ implies
\begin{equation}
    \brakets{\mathcal C_i}{L_{p,q}\left(a^\dag,a\right)}{\mathcal C_k}
    =\frac{c_{p,q}}{\ell}\delta_{i,k}
      +O\left(R^{p+q}\epsilon\right) ,
\end{equation}
as $\epsilon\to0$, with $K,\ell,p,q$ fixed.
Indeed, the exact normalization is
\begin{equation}
\begin{aligned}
    \mathcal N_k^2&=\sum_{\alpha,\beta\in\mathcal C_k}\braket{\beta}{\alpha}=\ell+\sum_{\substack{\alpha,\beta\in\mathcal C_k\\\alpha\neq\beta}}\braket{\beta}{\alpha}
    \quad\text{and}\quad
    \abs{\mathcal N_k^2-\ell}&\leq\ell\left(\ell-1\right)\epsilon .
\end{aligned}
\end{equation}
For these uniform superpositions, the distinct-amplitude contribution is
\begin{equation}
    B_{i,k}^{p,q}=\frac{1}{\mathcal N_i\mathcal N_k}\sum_{\substack{\beta\in\mathcal C_i,\;\alpha\in\mathcal C_k\\\alpha\neq\beta}}\left(\beta^*\right)^p\alpha^q\braket{\beta}{\alpha} .
\end{equation}
Since $f_{p,q}\left(\mathcal C_k\right)=0$, separating equal and distinct amplitudes gives
\begin{equation}
\begin{gathered}
    \brakets{\mathcal C_i}{L_{p,q}\left(a^\dag,a\right)}{\mathcal C_k}=\frac{c_{p,q}\delta_{i,k}}{\mathcal N_i\mathcal N_k}+B_{i,k}^{p,q}
    \\ \text{with}\\
    \abs{B_{i,k}^{p,q}}\leq\frac{\left(\ell^2-\ell\delta_{i,k}\right)R^{p+q}\epsilon}{\mathcal N_i\mathcal N_k},\qquad \abs{c_{p,q}}\leq\ell R^{p+q} .
\end{gathered}
\end{equation}
These bounds imply $\mathcal N_k=\sqrt\ell\left(1+O\left(\epsilon\right)\right)$ and the stated $O\left(R^{p+q}\epsilon\right)$ remainder as $\epsilon\to0$ with $K,\ell,p,q$ fixed.

For a prescribed list of pairs $\left(p_i,q_i\right)$, write
$f_i=f_{p_i,q_i}$ and $c_i=c_{p_i,q_i}$.
Each solution of $f_1=\dots=f_m=0$ satisfying $x_j=y_j^*$ specifies a constellation with amplitudes $\alpha_j=y_j$, provided these amplitudes are distinct.
Selecting $K$ suitable solutions with the same prescribed constants $c_i=c_{p_i,q_i}$ for $i=\in\left\{1,\dots,m\right\}$, imposes $\sum_{\alpha\in\mathcal C_k}\left(\alpha^*\right)^{p_i}\alpha^{q_i}=c_i$ for every $i$ and $k$.
For error correction, the targeted ladder monomials must include those appearing after normal ordering every product $E_\mu^\dag E_\nu$ of errors in the chosen error set.
We call the code spanned by the normalized constellation states $\ket{\mathcal C_1},\dots,\ket{\mathcal C_K}$ obtained from solutions of this common polynomial system a \emph{bosonic algebraically-restricted constellation} (BARC) code.
The name refers to the shared polynomial moment restrictions; exact finite-separation correctability still requires eqn.~\eqref{eqn:weighted_KL_exact}.
The same terminology applies to the weighted multimode construction in Appendix~\ref{sec:weighted_constellation_polynomials}.

The formal variables satisfy $x_j=\alpha_j^*$ and
$y_j=\alpha_j$. 
Write $\alpha_j=u_j+\iota v_j$ with
$u_j,v_j\in\R$, so that $x_j=u_j-\iota v_j$ and
$y_j=u_j+\iota v_j$. 
We write
$\alpha=\left(\alpha_1,\dots,\alpha_\ell\right)$ and similarly for $u,v$.
So, we now have $m$ complex polynomial equations of the form
\begin{equation}
    f_i\left(\alpha\right) = -c_i + \sum_{j=1}^l \left(\alpha_j^*\right)^{p_i} \alpha_j^{q_i} = 0 .
\end{equation}
Since $f_i\left(\alpha\right) = 0$ if and only if $\Re\left(f_i\left(\alpha\right)\right) = 0$ and $\Im\left(f_i\left(\alpha\right)\right) = 0$, solving for $f_i\left(\alpha\right) = 0$ for all $i\in\left[1,m\right]$ is equivalent to solving 
\begin{equation}
\begin{gathered}
    g_i^\Re\left(u,v\right) = \Re\left(f_i\left(\alpha\right)\right) = 0
    \quad\text{and}\quad
    g_i^\Im\left(u,v\right) = \Im\left(f_i\left(\alpha\right)\right) = 0
\end{gathered}
\end{equation}
for all $i\in\left[1,m\right]$, for some real polynomials $g_i^\Re,g_i^\Im\in \R\left[u,v\right]:=\R\left[u_1,v_1,\dots,u_l,v_l\right]$.
Below we give an explicit form of these real polynomials

\begin{proposition}\label{prop:real_polynomial_equations_1mode}
    Solving for each complex polynomial equation $f_i\left(\alpha\right) = -c_i + \sum_{j=1}^l \left(\alpha_j^*\right)^{p_i} \alpha_j^{q_i} = 0$ is equivalent to solving two real polynomial equations:
    \begin{equation}
    \begin{gathered}
        g_i^\Re\left(u,v\right) = - c_i^\Re + \sum_{j=1}^l \sum_{\substack{ 0\leq a\leq p_i \\ 0\leq b\leq q_i\\ a+b\text{ even} }} \binom{p_i}{a}\binom{q_i}{b} \left(-1\right)^{a + \frac{a+b}{2}} u_j^{p_i+q_i-a-b} v_j^{a+b} = 0 \\
        \text{and}\\
        g_i^\Im\left(u,v\right) = - c_i^\Im + \sum_{j=1}^l \sum_{\substack{ 0\leq a\leq p_i \\ 0\leq b\leq q_i\\ a+b\text{ odd} }} \binom{p_i}{a}\binom{q_i}{b} \left(-1\right)^{a+\frac{a+b-1}{2}} u_j^{p_i+q_i-a-b}v_j^{a+b} = 0 ,
    \end{gathered}
    \end{equation}
    where $c_i = c_i^\Re + \iota c_i^\Im$.
\end{proposition}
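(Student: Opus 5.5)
The plan is a direct binomial expansion followed by sorting terms according to powers of $\iota$. First, since $u_j,v_j$ and all binomial coefficients are real, $f_i(\alpha)=0$ holds if and only if both its real and imaginary parts vanish. It therefore suffices to show that $\Re f_i$ and $\Im f_i$, written as polynomials in $u,v$, are exactly $g_i^\Re$ and $g_i^\Im$. By linearity in $j$ and separation of the constant $c_i=c_i^\Re+\iota c_i^\Im$, this reduces to computing the real and imaginary parts of a single monomial $(\alpha_j^*)^{p}\alpha_j^{q}$ with $\alpha_j=u_j+\iota v_j$.

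The key step is to expand each factor by the binomial theorem:
\begin{equation*}
(u-\iota v)^{p}=\sum_{a=0}^{p}\binom{p}{a}u^{p-a}(-\iota)^{a}v^{a},\qquad (u+\iota v)^{q}=\sum_{b=0}^{q}\binom{q}{b}u^{q-b}\iota^{b}v^{b}.
\end{equation*}
Multiplying the two sums gives a double sum over $0\le a\le p$, $0\le b\le q$ of
\begin{equation*}
\binom{p}{a}\binom{q}{b}(-1)^{a}\iota^{a+b}u^{p+q-a-b}v^{a+b}.
\end{equation*}
Everything except the factor $\iota^{a+b}$ is real. So the real part collects exactly the terms with $a+b$ even, and the imaginary part collects those with $a+b$ odd.

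It then remains to evaluate the powers of $\iota$:
\begin{itemize}
\item if $a+b$ is even, $\iota^{a+b}=(-1)^{(a+b)/2}$;
\item if $a+b$ is odd, $\iota^{a+b}=\iota\,(-1)^{(a+b-1)/2}$, so the coefficient of $\iota$ is $(-1)^{(a+b-1)/2}$.
\end{itemize}
Combining these with $(-1)^a$ gives exactly the sign exponents $a+\frac{a+b}{2}$ and $a+\frac{a+b-1}{2}$ in the statement. Summing over $j$ and subtracting $c_i^\Re$, respectively $c_i^\Im$, yields $g_i^\Re$ and $g_i^\Im$.

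There is no real obstacle here. The only step that needs care is the sign bookkeeping: $(-\iota)^a=(-1)^a\iota^a$, and the half-integer exponent in the odd case must be split correctly. We check it on $p=q=1$, where $\abs{\alpha}^2=u^2+v^2$:
\begin{itemize}
\item the even terms $(a,b)=(0,0)$ and $(1,1)$ give $u^2$ and $(-1)^{1+1}v^2=v^2$, as required;
\item the odd terms $(1,0)$ and $(0,1)$ give $(-1)^{1}uv+uv=0$, consistent with $\Im\abs{\alpha}^2=0$.
\end{itemize}
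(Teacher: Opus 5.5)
Your proposal is correct and follows essentially the same route as the paper: expand $(u-\iota v)^{p}(u+\iota v)^{q}$ binomially, isolate the factor $\iota^{a+b}$, and split into even and odd $a+b$ using $\Re(\iota^{t})=(-1)^{t/2}$ and $\Im(\iota^{t})=(-1)^{(t-1)/2}$. Your sign bookkeeping and the $p=q=1$ check are also correct.
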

\begin{proof}
    Let $\alpha=\gamma+\iota\tau$ and then note that we can express each monomial of $\alpha$ and its conjugate $\alpha^*$ as
    \begin{equation}
    \begin{aligned}
        \left(\alpha^*\right)^p \alpha^q &= \left(\gamma-\iota \tau\right)^p \left(\gamma+\iota \tau\right)^q \\
        &= \left(\sum_{a=0}^p \binom{p}{a} \gamma^{p-a}\left(-\iota \tau\right)^a \right) \left(\sum_{b=0}^q \binom{q}{b} \gamma^{q-b}\left(\iota \tau\right)^b \right) \\
        &= \sum_{a=0}^p \sum_{b=0}^q \binom{p}{a}\binom{q}{b} \left(-1\right)^a \iota^{a+b} \gamma^{p+q-a-b}\tau^{a+b} .
    \end{aligned}
    \end{equation}
    Thus the real and imaginary parts of this monomial are given by
    \begin{equation}\label{eqn:real_imaginary_monomial_1mode}
    \begin{gathered}
        \Re\left(\left(\alpha^*\right)^p\alpha^q\right) = \sum_{a=0}^p \sum_{b=0}^q \binom{p}{a}\binom{q}{b} \left(-1\right)^a \Re\left(\iota^{a+b}\right) \gamma^{p+q-a-b}\tau^{a+b} \\
            = \sum_{\substack{ 0\leq a\leq p \\ 0\leq b\leq q\\ a+b\text{ even} }} \binom{p}{a}\binom{q}{b} \left(-1\right)^{a + \frac{a+b}{2}} \gamma^{p+q-a-b}\tau^{a+b} \\
        \text{and}\\
        \Im\left(\left(\alpha^*\right)^p\alpha^q\right) = \sum_{a=0}^p \sum_{b=0}^q \binom{p}{a}\binom{q}{b} \left(-1\right)^a \Im\left(\iota^{a+b}\right) \gamma^{p+q-a-b}\tau^{a+b} \\
            = \sum_{\substack{ 0\leq a\leq p \\ 0\leq b\leq q\\ a+b\text{ odd} }} \binom{p}{a}\binom{q}{b} \left(-1\right)^{a+\frac{a+b-1}{2}} \gamma^{p+q-a-b}\tau^{a+b} ,
    \end{gathered}
    \end{equation}
    since $\Re\left(\iota^t\right) = \left(-1\right)^{t/2}$ whenever $t$ is even and $\Re\left(\iota^t\right) = 0$ otherwise, whereas $\Im\left(\iota^t\right) = \left(-1\right)^{\left(t-1\right)/2}$ whenever $t$ is odd and $\Im\left(\iota^t\right) = 0$ otherwise.
    Solving for each complex polynomial equation $f_i\left(\alpha\right) = -c_i + \sum_{j=1}^l \left(\alpha_j^*\right)^{p_i} \alpha_j^{q_i} = 0$, therefore is equivalent to solving two real polynomial equations:
    \begin{equation}
    \begin{gathered}
        g_i^\Re\left(u,v\right) = - c_i^\Re + \sum_{j=1}^l \sum_{\substack{ 0\leq a\leq p_i \\ 0\leq b\leq q_i\\ a+b\text{ even} }} \binom{p_i}{a}\binom{q_i}{b} \left(-1\right)^{a + \frac{a+b}{2}} u_j^{p_i+q_i-a-b} v_j^{a+b} = 0 \\
        \text{and}\\
        g_i^\Im\left(u,v\right) = - c_i^\Im + \sum_{j=1}^l \sum_{\substack{ 0\leq a\leq p_i \\ 0\leq b\leq q_i\\ a+b\text{ odd} }} \binom{p_i}{a}\binom{q_i}{b} \left(-1\right)^{a+\frac{a+b-1}{2}} u_j^{p_i+q_i-a-b}v_j^{a+b} = 0 ,
    \end{gathered}
    \end{equation}
    which proves the claim.
\end{proof}

For the remainder of this section, take all $p,q\in\Z_{\geq0}$ with $p+q\leq t$, and assume $c_{p,q}=c_{q,p}^*$ and $c_{0,0}=\ell$.
Consider the ideal of $\R\left[u,v\right]$ generated by the corresponding real polynomials in Proposition~\ref{prop:real_polynomial_equations_1mode}:
\begin{equation}\label{eqn:real_constellation_ideal}
    I_t=\left\langle g_{p,q}^\Re,g_{p,q}^\Im:p+q\leq t\right\rangle .
\end{equation}
{For nonnegative integers $a,b$ with $a+b\leq t$, define the \emph{moment polynomials} by}
\begin{equation}\label{eqn:moment_polynomials}
    h_{a,b}=-\tau_{a,b}+\sum_{j=1}^l u_j^a v_j^b \;
\end{equation}
{The target moments $\tau_{a,b}$ are determined by the relations}
\begin{equation}\label{eqn:monomial_moment_constant_real}
\begin{gathered}
        c_{p,q}^\Re =
    \sum_{\substack{0\leq r\leq p+q\\r\text{ even}}}
    {\tau_{p+q-r,r}}
    \sum_{\substack{0\leq a\leq p\\0\leq b\leq q\\a+b=r}}
    \binom{p}{a}\binom{q}{b}
    {\left(-1\right)^{a+\frac{r}{2}}} \\
    \text{and}\\
    c_{p,q}^\Im =
    \sum_{\substack{0\leq r\leq p+q\\r\text{ odd}}}
    \tau_{p+q-r,r}
    \sum_{\substack{0\leq a\leq p\\0\leq b\leq q\\a+b=r}}
    \binom{p}{a}\binom{q}{b}
    \left(-1\right)^{a+\frac{r-1}{2}} .
\end{gathered}
\end{equation}
We can also equivalently define the constants $\tau_{a,b}$ in terms of the real and imaginary parts of complex constants $c_{p,q}$ for $p+q=s$ as
\begin{equation}\label{eqn:monomial_moment_constant_complex}
\begin{gathered}
    c_{p,q} = \sum_{r=0}^s A_{p,q;r}^s \tau_{s-r,r} \quad
    \text{for}\quad 
    A_{p,q;r}^s = \sum_{\substack{ 0\leq a\leq p \\ 0\leq b\leq q\\ a+b=r }} \binom{p}{a}\binom{q}{b} \left(-1\right)^a \iota^r .
\end{gathered}
\end{equation}
Let us denote the complex matrix $A^s = \left[A_{p,q;r}^s\right]_{\left(p,q\right),r} \in \C^{\left(s+1\right)\times\left(s+1\right)}$ so that for vectors $c=\left(c_{s,0},c_{s-1,1},\dots,c_{0,s}\right) \in\C^{s+1}$ and $\tau = \left(\tau_{s,0},\tau_{s-1,1},\dots,\tau_{0,s}\right)\in\C^{s+1}$ we have
\begin{equation}
    c = A^s\tau .
\end{equation}
Now we show that for any $s\geq0$, matrix $A^s$ is always invertible.

\begin{lemma}\label{lem:real_complex_polynomial_coordinate_tranform}
    Matrix $A^s$ defined in eqn.\ \eqref{eqn:monomial_moment_constant_complex} is invertible.
    Moreover $\left(A^s\right)^{-1}c$ is a real vector in $\R^{s+1}$ if and only if $c_{p,q}=c_{q,p}^*$.
\end{lemma}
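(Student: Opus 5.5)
The plan is to recognize $A^s$ as the matrix of a change of basis between two bases of the space of real binary forms of degree $s$. Consider the homogeneous polynomials of degree $s$ in two variables $(\gamma,\tau)$. The expansion in the proof of Proposition~\ref{prop:real_polynomial_equations_1mode} gives
\begin{equation}
(\gamma-\iota\tau)^p(\gamma+\iota\tau)^q=\sum_{r=0}^s A^s_{p,q;r}\,\gamma^{s-r}\tau^r ,\qquad p+q=s .
\end{equation}
Thus row $(p,q)$ of $A^s$ lists the coefficients of $\bar z^p z^q$ in the monomial basis $\{\gamma^{s-r}\tau^r\}_r$, where $z=\gamma+\iota\tau$. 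Invertibility of $A^s$ is then equivalent to the $s+1$ polynomials $\bar z^p z^q$ ($p+q=s$) forming a basis of $\C[\gamma,\tau]_s$.

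To establish this, I will invert the linear substitution $(\gamma,\tau)\mapsto(\bar z,z)$. Its inverse is $\gamma=(z+\bar z)/2$ and $\tau=(z-\bar z)/(2\iota)$, so every monomial $\gamma^{s-r}\tau^r$ expands as a combination of the $\bar z^pz^q$ with $p+q=s$. The $s+1$ forms $\bar z^pz^q$ therefore span the $(s+1)$-dimensional space, hence are a basis, and $A^s$ is invertible. Alternatively, $A^s$ is the $s$-th symmetric power of the invertible $2\times2$ matrix $\begin{pmatrix}1&-\iota\\1&\iota\end{pmatrix}$, whose determinant is $2\iota\neq0$. Either route gives $\det A^s\neq0$, and I will use the substitution argument because it is self-contained.

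For the reality statement, conjugation swaps the two factors: $\overline{A^s_{p,q;r}}=A^s_{q,p;r}$, since $\overline{(\gamma-\iota\tau)^p(\gamma+\iota\tau)^q}=(\gamma-\iota\tau)^q(\gamma+\iota\tau)^p$ coefficientwise for real $\gamma,\tau$. Let $J$ be the permutation matrix sending index $(p,q)$ to $(q,p)$. Then $\overline{A^s}=JA^s$.

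The forward direction: if $\tau\in\R^{s+1}$, then $\bar c=\overline{A^s}\tau=JA^s\tau=Jc$, which is exactly $c_{p,q}^*=c_{q,p}$. The converse: if $Jc=\bar c$, set $\tau=(A^s)^{-1}c$ and conjugate to get $\bar\tau=(\overline{A^s})^{-1}\bar c=(JA^s)^{-1}Jc=(A^s)^{-1}J^{-1}Jc=\tau$. Hence $\tau$ is real.

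The only point needing care is verifying $\overline{A^s}=JA^s$ entrywise from the explicit formula for $A^s_{p,q;r}$, rather than from the polynomial identity. To do this, I will substitute $a\mapsto b$, $b\mapsto a$ in the sum defining $A^s_{q,p;r}$. Then I will check that
\begin{equation}
(-1)^b\iota^r=(-1)^{r-a}\iota^r=(-1)^a(-\iota)^r ,
\end{equation}
using $a+b=r$ and $(-1)^{r}\iota^r=(-\iota)^r$. The right-hand side is exactly the conjugate of $(-1)^a\iota^r$, which confirms the identity. I expect this sign bookkeeping to be the main, though minor, obstacle.
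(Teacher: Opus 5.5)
Your proof is correct and follows the paper's argument. The paper also shows that $A^s$ is the change-of-basis matrix between $\{(u-\iota v)^p(u+\iota v)^q\}_{p+q=s}$ and $\{u^{s-r}v^r\}_r$, invertible because the substitution $X=u-\iota v$, $Y=u+\iota v$ is invertible, and it obtains reality in both directions from $\left(A^s_{p,q;r}\right)^*=A^s_{q,p;r}$; your permutation matrix $J$ with $\overline{A^s}=JA^s$ is a compact packaging of the paper's componentwise computation. One small fix: you use $\tau$ both as a polynomial variable and as the target-moment vector, so rename the variables (e.g.\ to $u,v$).
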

\begin{proof}
    For fixed $s\geq0$, the monomials
    $\left\{u^{s-r}v^r:0\leq r\leq s\right\}$ form a basis of the
    complex vector space of homogeneous degree-$s$ polynomials
    in $u,v$.
    Set $X=u-\iota v$ and $Y=u+\iota v$.
    This change of variables is invertible, with
    \begin{equation}
        u=\frac{X+Y}{2}
        \quad\text{and}\quad
        v=\frac{Y-X}{2\iota} .
    \end{equation}
    Hence $\left\{X^pY^q:p+q=s\right\}$ is another basis of the same
    vector space. Expanding these monomials gives
    \begin{equation}
        \left(u-\iota v\right)^p\left(u+\iota v\right)^q
        =\sum_{r=0}^s A_{p,q;r}^s u^{s-r}v^r .
    \end{equation}
    Thus $A^s$ is a change-of-basis matrix and is invertible.
    
    Moreover, conjugating the coefficients of the preceding
    identity exchanges $p$ and $q$, so that
    $\left(A_{p,q;r}^s\right)^*=A_{q,p;r}^s$.
    If $\tau\in\R^{s+1}$ and $c=A^s\tau$, this identity
    implies $c_{p,q}=c_{q,p}^*$.
    Conversely, suppose $c_{p,q}=c_{q,p}^*$ and set
    $\tau=\left(A^s\right)^{-1}c$. Then, componentwise,
    \begin{equation}
    \begin{aligned}
        \left(A^s\tau^*\right)_{p,q}
        &=\sum_{r=0}^s A_{p,q;r}^s\tau_{s-r,r}^*
        =\left(\sum_{r=0}^s A_{q,p;r}^s\tau_{s-r,r}\right)^*
          =c_{q,p}^*=c_{p,q} .
    \end{aligned}
    \end{equation}
    Invertibility gives $\tau^*=\tau$, proving that
    $\tau\in\R^{s+1}$.
\end{proof}

{The conjugacy condition guarantees that the target moments $\tau_{a,b}$ are real; it does not by itself guarantee that the resulting polynomial system has a real solution.
The degree-zero condition gives $\tau_{0,0}=\ell$, so $h_{0,0}=0$ can be omitted from the generating set.}

Now, we show that the ideal generated by the moment polynomials is equal to the ideal $I_t$ generated by the $g_{p,q}^\Re,g_{p,q}^\Im$'s with $p+q\leq t$.

\begin{proposition}\label{prop:moment_poly_ideal_equals_original_ideal}
    For moment polynomials $\left\{h_{a,b} : a+b\leq t\right\}$ defined in eqn.\ \eqref{eqn:moment_polynomials}, it holds that
    \begin{equation}
        J_t = \left\langle h_{a,b} : a+b\leq t\right\rangle = I_t .
    \end{equation}
\end{proposition}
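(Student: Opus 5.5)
We prove the two inclusions by showing that, degree by degree, the generators of $I_t$ and of $J_t$ are related by an invertible linear change of basis whose entries are scalars. The natural tool is the matrix $A^s$ from Lemma~\ref{lem:real_complex_polynomial_coordinate_tranform}. Fix $s\leq t$ and, for each pair $(p,q)$ with $p+q=s$, consider the complex polynomial
\begin{equation*}
    F_{p,q}=g_{p,q}^\Re+\iota g_{p,q}^\Im=-c_{p,q}+\sum_{j=1}^\ell\left(u_j-\iota v_j\right)^p\left(u_j+\iota v_j\right)^q .
\end{equation*}
By the expansion in the proof of the Lemma, together with the defining relation $c=A^s\tau$ from eqn.~\eqref{eqn:monomial_moment_constant_complex}, we get
\begin{equation*}
    F_{p,q}=\sum_{r=0}^s A^s_{p,q;r}\left(-\tau_{s-r,r}+\sum_j u_j^{s-r}v_j^r\right)=\sum_{r=0}^s A^s_{p,q;r}h_{s-r,r}.
\end{equation*}
The constants match precisely because the $\tau$ are defined so that $c=A^s\tau$. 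The conjugacy assumption $c_{p,q}=c_{q,p}^*$ guarantees, via the Lemma, that the $\tau$ are real, so each $h_{a,b}\in\R[u,v]$.

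For the inclusion $I_t\subseteq J_t$, take real and imaginary parts of the identity above. Since the $h_{s-r,r}$ are real polynomials, we obtain
\begin{equation*}
    g_{p,q}^\Re=\sum_r \Re\!\left(A^s_{p,q;r}\right)h_{s-r,r},\qquad g_{p,q}^\Im=\sum_r \Im\!\left(A^s_{p,q;r}\right)h_{s-r,r},
\end{equation*}
which are real linear combinations of generators of $J_t$.

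For the reverse inclusion $J_t\subseteq I_t$, invert $A^s$ and write $h_{s-r,r}=\sum_{(p,q)}\left((A^s)^{-1}\right)_{r,(p,q)}F_{p,q}$. The coefficients here may be complex, so this by itself only places $h_{s-r,r}$ in the complexified ideal $I_t\otimes\C\subseteq\C[u,v]$. The real difficulty is descending back to $\R[u,v]$. Writing each coefficient $\kappa=\kappa^\Re+\iota\kappa^\Im$, we have $\kappa F_{p,q}=(\kappa^\Re g^\Re_{p,q}-\kappa^\Im g^\Im_{p,q})+\iota(\kappa^\Re g^\Im_{p,q}+\kappa^\Im g^\Re_{p,q})$. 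Both brackets are real combinations of generators of $I_t$. Taking the real part of the expansion of the real polynomial $h_{s-r,r}$ then gives $h_{s-r,r}=\Re(\cdots)$ as an $\R$-linear combination of the $g^\Re_{p,q},g^\Im_{p,q}$, which places it in $I_t$.

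The degree $s=0$ case is handled separately: $h_{0,0}$ and $g^\Re_{0,0}$ are both the zero polynomial, because $\tau_{0,0}=c_{0,0}=\ell$, and $g^\Im_{0,0}=0$. Ranging over all $s\leq t$ then gives $J_t=I_t$.
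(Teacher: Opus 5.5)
Your proof is correct and matches the paper's argument: expand $f_{p,q}=\sum_r A^s_{p,q;r}h_{s-r,r}$ and take real and imaginary parts to get $I_t\subseteq J_t$, then invert $A^s$ and take real parts, using that $h_{a,b}$ is real, to get $J_t\subseteq I_t$. Your separate treatment of $s=0$ is harmless but not needed, since $A^0=(1)$ is already covered by the general argument.
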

\begin{proof}
    First we show that $I_t\subseteq J_t$ by showing that each $g_{p,q}^\Re,g_{p,q}^\Im$ is a linear combination of the $h_{a,b}$'s.
    For shorthand, let $s=p+q$ and
    \begin{equation}
        A_{p,q,r}^s = \sum_{\substack{ 0\leq a\leq p \\ 0\leq b\leq q\\ a+b=r }} \binom{p}{a}\binom{q}{b} \left(-1\right)^a \iota^r ,
    \end{equation}
    so that the relation in eqn.\ \eqref{eqn:monomial_moment_constant_complex} becomes $c_{p,q} = \sum_{r=0}^s A_{p,q,r}^s \tau_{s-r,r}$.
    By this relation we can write the complex polynomials $f_{p,q}$ as
\begin{equation}
\begin{aligned}
    f_{p,q}&=-c_{p,q}+\sum_{j=1}^l\left(u_j-\iota v_j\right)^p\left(u_j+\iota v_j\right)^q\\
    &=-c_{p,q}+\sum_{r=0}^s A_{p,q,r}^s\sum_j u_j^{s-r}v_j^r\\
    &=\sum_{r=0}^s A_{p,q,r}^s h_{s-r,r}.
\end{aligned}
\end{equation}
    Since $g_{p,q}^\Re$ and $g_{p,q}^\Im$ are the real and imaginary part of $f_{p,q}$, respectively, then they can be expressed as a linear combination of the $h_{a,b}$'s.
    So, $I_t\subseteq J_t$.

    {Now we show that $J_t\subseteq I_t$ by showing that each $h_{a,b}=-\tau_{a,b}+\sum_{j=1}^l u_j^av_j^b$ is a linear combination of the $g_{p,q}^\Re,g_{p,q}^\Im$ polynomials.}
    Let $s=a+b$.
    For each $a,b\in\Z_{\geq0}$ with $a+b=s$, the
    invertibility of $A^s$ gives coefficients
    $\lambda_{p,q}^{a,b}=\left(\left(A^s\right)^{-1}\right)_{\left(a,b\right),\left(p,q\right)}$ such that
    \begin{equation}
        u_j^av_j^b
        =\sum_{p,q:p+q=s}\lambda_{p,q}^{a,b}
          \left(u_j-\iota v_j\right)^p\left(u_j+\iota v_j\right)^q .
    \end{equation}
    Then we have the relation
    \begin{equation}
        \sum_{j=1}^l \left(u_j-\iota v_j\right)^p \left(u_j+\iota v_j\right)^q = f_{p,q} + c_{p,q} ,
    \end{equation}
    which allows us to write
    \begin{equation}
        \sum_{j=1}^l u_j^av_j^b = \sum_{p,q:p+q=s} \lambda_{p,q}^{a,b} \sum_{j=1}^l \left(u_j-\iota v_j\right)^p \left(u_j+\iota v_j\right)^q 
        = \sum_{p,q:p+q=s} \lambda_{p,q}^{a,b} f_{p,q} + \sum_{p,q:p+q=s} \lambda_{p,q}^{a,b} c_{p,q} .
    \end{equation}
    Now note that for a matrix $A = \left[A_{\left(p,q\right),r}^s\right]_{\left(p,q\right),r}$ we have $c = A\tau$ by eqn.\ \eqref{eqn:monomial_moment_constant_complex} so that $\tau = A^{-1}c$, which gives us $\tau_{a,b} = \sum_{p,q:p+q=s} \lambda_{p,q}^{a,b} c_{p,q}$ where $A^{-1} = \left[\lambda_{p,q}^{a,b}\right]_{\left(a,b\right),\left(p,q\right)}$.
    Then we obtain
    \begin{equation}
    \begin{aligned}
            h_{a,b}
    = \sum_{p,q:p+q=s}\lambda_{p,q}^{a,b} f_{p,q} .
    \end{aligned}
    \end{equation}
    Taking real parts, since $h_{a,b}$ is real, gives
    \begin{equation}
        h_{a,b}
        =\sum_{p,q:p+q=s}
          \left(\Re\left(\lambda_{p,q}^{a,b}\right)g_{p,q}^\Re
              -\Im\left(\lambda_{p,q}^{a,b}\right)g_{p,q}^\Im\right) .
    \end{equation}
    Thus $h_{a,b}$ belongs to the ideal generated by the real polynomials $g_{p,q}^\Re,g_{p,q}^\Im$.
    Again, since $g_{p,q}^\Re$ and $g_{p,q}^\Im$ are the real and imaginary part of $f_{p,q}$, respectively, then $h_{a,b}$ is a linear combination of $g_{p,q}^\Re$ and $g_{p,q}^\Im$.
    Hence, $J_t\subseteq I_t$.
\end{proof}

\section{Real single-mode constellation polynomials with degree \texorpdfstring{$p+q\leq t$}{p+q ≤ t}}\label{sec:moment_tensors}

In this section, we organize the real moment constraints into symmetric tensors. 
For each degree $\xi$, the entries of the moment tensor encode the monomial sums $S_{a,b}$ with $a+b=\xi$.
We identify the corresponding target tensor and show that matching these tensors is equivalent to solving the moment polynomials from Section~\ref{sec:constellations}.
This formulation separates the constellation coordinates from their prescribed moments and prepares the study of moment fibers and their symmetries in Section~\ref{sec:moment_symmetry}.
We write $N_t=\left(t+1\right)\left(t+2\right)/2$ for the number of moment coordinates through degree $t$.

A logical code state $\ket{\mathcal{C}_k}$ corresponding to constellation $\mathcal{C}_k=\left\{\alpha_1^{\left(k\right)},\ldots,\alpha_\ell^{\left(k\right)}\right\}\subset\C$ is a uniform superposition of $\ell$ coherent states,
\begin{equation}
  \ket{\mathcal{C}_k} = \frac{1}{\mathcal{N}_k\sqrt{\abs{\mathcal{C}_k}}} \sum_{j=1}^\ell \ket{\alpha_j^{\left(k\right)}} .
\end{equation}
The common moment constraints associated with detecting ladder errors $L_{p,q}=\left(a^\dagger\right)^pa^q$ with $p+q\leq t$ are
\begin{equation}
    \sum_{j=1}^\ell\left(\alpha_j^{\left(k\right)*}\right)^p
       \left(\alpha_j^{\left(k\right)}\right)^q=c_{p,q}\quad\forall k .
\end{equation}
The constants are independent of $k$. Their relation to the full KL conditions, including finite-separation corrections, is specified in Remark~\ref{rem:moment_vs_full_KL}.

Consider $\alpha_j^{\left(k\right)}=u_j^{\left(k\right)}+\iota v_j^{\left(k\right)}$ and define the vector in the physical real plane $\R^2$
\begin{equation}
    q_j^{\left(k\right)} = u_j^{\left(k\right)}e_1+v_j^{\left(k\right)}e_2
    = \begin{bmatrix}
        u_j^{\left(k\right)}\\
        v_j^{\left(k\right)}
    \end{bmatrix}
    \quad\text{for}\quad
    e_1=\begin{bmatrix}1\\0\end{bmatrix}
    \quad\text{and}\quad
    e_2=\begin{bmatrix}0\\1\end{bmatrix} .
\end{equation}
For an integer $\xi\geq1$, let $\left(\R^2\right)^{\otimes\xi}$ denote the $\xi$-fold tensor product of $\R^2$ and define its symmetric subspace by
\begin{equation}\label{eqn:symmetric_tensor_space}
    \operatorname{Sym}^{\xi}\left(\R^2\right)
    = \left\{T\in\left(\R^2\right)^{\otimes\xi}:P_\pi T=T\text{ for every }\pi\in\mathfrak{S}_\xi\right\} ,
\end{equation}
where $P_\pi:\left(\R^2\right)^{\otimes\xi}\rightarrow\left(\R^2\right)^{\otimes\xi}$ is the permutation map that permutes the $\xi$ tensor factors according to $\pi$ and $\mathfrak{S}_\xi$ is the permutation group on $\left\{1,\dots,\xi\right\}$.
We also set $\operatorname{Sym}^0\left(\R^2\right)=\R$ and $q^{\otimes0}=1$.

For $r\in\left\{0,\dots,\xi\right\}$, define the unnormalized symmetric-tensor monomial
\begin{equation}\label{eqn:symmetric_tensor_monomial_basis}
    E_{\xi,r}
    = \sum_{\substack{S\subseteq\left\{1,\dots,\xi\right\}\\\abs{S}=r}}
    \bigotimes_{m=1}^{\xi}e_{\varepsilon_S\left(m\right)}
    \quad\text{for}\quad
    \varepsilon_S\left(m\right)=
    \begin{cases}
        2,&m\in S,\\
        1,&m\notin S.
    \end{cases}
\end{equation}
Thus, $E_{\xi,r}$ is the sum of all tensor words containing $\xi-r$ copies of $e_1$ and $r$ copies of $e_2$.
The set
\begin{equation}
    \mathcal{E}_\xi=\left\{E_{\xi,0},E_{\xi,1},\dots,E_{\xi,\xi}\right\}
\end{equation}
is a basis of $\operatorname{Sym}^{\xi}\left(\R^2\right)$, and for $q=ue_1+ve_2$ it holds that
\begin{equation}\label{eqn:pure_symmetric_tensor_expansion}
    q^{\otimes\xi}
    = \sum_{r=0}^{\xi}u^{\xi-r}v^rE_{\xi,r} .
\end{equation}

For a constellation $\mathcal{C}_k$ and $\left(a,b\right)\in\Z_{\geq0}\times\Z_{\geq0}$, its real moment sums
\begin{equation}\label{eqn:raw_real_moment_sum_code state}
    S_{a,b}^{\left(k\right)}
    = \sum_{j=1}^l\left(u_j^{\left(k\right)}\right)^a\left(v_j^{\left(k\right)}\right)^b
\end{equation}
and the corresponding moment polynomials $\left\{h_{p,q}:p+q\leq t\right\} \subseteq\R\left[u,v\right]$ introduced in eqn.\ \eqref{eqn:moment_polynomials} can be written as
\begin{equation}
    h_{a,b}^{\left(k\right)}
    = -\tau_{a,b}+S_{a,b}^{\left(k\right)} .
\end{equation}

\begin{definition}[Moment tensor and target moment tensor]\label{def:moment_and_target_tensors}
    For $0\leq\xi\leq t$, the \emph{$\xi$-th moment tensor} of $\mathcal{C}_k$ is
    \begin{equation}\label{eqn:moment_tensor_definition}
        M_\xi^{\left(k\right)}
        = \sum_{j=1}^l\left(q_j^{\left(k\right)}\right)^{\otimes\xi}
        \in\operatorname{Sym}^{\xi}\left(\R^2\right) 
        \quad\text{for}\quad q_j^{\left(k\right)} = u_j^{\left(k\right)} e_1 + v_j^{\left(k\right)} e_2 .
    \end{equation}
    The \emph{target $\xi$-th moment tensor} corresponding to the real constants $\tau=\left(\tau_{a,b}\right)_{a+b\leq t}$ is
    \begin{equation}\label{eqn:target_moment_tensor_definition}
        T_\xi\left(\tau\right)
        = \sum_{r=0}^{\xi}\tau_{\xi-r,r}E_{\xi,r}
        \in\operatorname{Sym}^{\xi}\left(\R^2\right) .
    \end{equation}
\end{definition}

By eqns.\ \eqref{eqn:pure_symmetric_tensor_expansion} and~\eqref{eqn:raw_real_moment_sum_code state}, the coordinates of $M_\xi^{\left(k\right)}$ in the basis $\mathcal{E}_\xi$ are the raw moments,
\begin{equation}\label{eqn:moment_tensor_coordinate_expansion}
\begin{gathered}
    M_\xi^{\left(k\right)}
    = \sum_{r=0}^{\xi}S_{\xi-r,r}^{\left(k\right)}E_{\xi,r}
    \quad\text{and}\quad
    \left[M_\xi^{\left(k\right)}\right]_{\mathcal{E}_\xi}
    = \left(S_{\xi,0}^{\left(k\right)},S_{\xi-1,1}^{\left(k\right)},\dots,S_{0,\xi}^{\left(k\right)}\right) .
\end{gathered}
\end{equation}
Consequently, the tuple of moment polynomials is the coordinate vector of the moment-tensor residual rather than the coordinate vector of $M_\xi^{\left(k\right)}$ itself:
\begin{equation}\label{eqn:moment_tensor_residual_expansion}
\begin{gathered}
    M_\xi^{\left(k\right)}-T_\xi\left(\tau\right)
    = \sum_{r=0}^{\xi}h_{\xi-r,r}^{\left(k\right)}E_{\xi,r} \\
    \text{and}\quad
    \left[M_\xi^{\left(k\right)}-T_\xi\left(\tau\right)\right]_{\mathcal{E}_\xi}
    = \left(h_{\xi,0}^{\left(k\right)},h_{\xi-1,1}^{\left(k\right)},\dots,h_{0,\xi}^{\left(k\right)}\right) .
\end{gathered}
\end{equation}

\begin{proposition}[Tensor form of the moment constraints]\label{prop:tensor_form_moment_constraints}
    Let $\tau=\left(\tau_{a,b}\right)_{a+b\leq t}\in\R^{N_t}$ and let the compatible complex constants $c=\left(c_{p,q}\right)_{p+q\leq t}$ be related to $\tau$ by eqns.\ \eqref{eqn:monomial_moment_constant_real},\eqref{eqn:monomial_moment_constant_complex}.
    For every constellation $\mathcal{C}_k$, the following statements are equivalent:
    \begin{enumerate}
        \item $h_{a,b}^{\left(k\right)}=0$ for every $a,b\geq0$ with $a+b\leq t$.
        \item $M_\xi^{\left(k\right)}=T_\xi\left(\tau\right)$ for every $0\leq\xi\leq t$.
        \item $f_{p,q}\left(\mathcal{C}_k\right)=0$ for every $p,q\geq0$ with $p+q\leq t$.
    \end{enumerate}
\end{proposition}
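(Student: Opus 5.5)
We prove the equivalences (1)$\Leftrightarrow$(2) and (1)$\Leftrightarrow$(3) separately, both degree by degree; the claim then follows.

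For (1)$\Leftrightarrow$(2), we use the residual expansion in eqn.\ \eqref{eqn:moment_tensor_residual_expansion}. For each $0\leq\xi\leq t$ it gives $M_\xi^{(k)}-T_\xi(\tau)=\sum_{r=0}^{\xi}h_{\xi-r,r}^{(k)}E_{\xi,r}$. The family $\mathcal E_\xi$ is a basis of $\operatorname{Sym}^\xi(\R^2)$, since the words summed in distinct $E_{\xi,r}$ are disjoint sets of standard basis tensors. Hence this residual vanishes if and only if all coefficients $h_{\xi-r,r}^{(k)}$, $0\leq r\leq\xi$, vanish. Letting $\xi$ run from $0$ to $t$ covers every pair $(a,b)$ with $a+b\leq t$ exactly once. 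For $\xi=0$ the convention $q^{\otimes0}=1$ gives $M_0^{(k)}=\ell$ and $T_0=\tau_{0,0}=\ell$, which is consistent with $h_{0,0}^{(k)}=0$.

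For (1)$\Leftrightarrow$(3), we fix the degree $s=p+q$. We then reuse the identity from the proof of Proposition~\ref{prop:moment_poly_ideal_equals_original_ideal}, now evaluated at the constellation $\mathcal C_k$ instead of being read as an identity in $\R[u,v]$:
\begin{equation}
    f_{p,q}\left(\mathcal C_k\right)=\sum_{r=0}^{s}A_{p,q;r}^s\,h_{s-r,r}^{(k)} .
\end{equation}
This holds because $(\alpha_j^*)^p\alpha_j^q=\sum_r A^s_{p,q;r}(u_j^{(k)})^{s-r}(v_j^{(k)})^r$ and $c_{p,q}=\sum_r A^s_{p,q;r}\tau_{s-r,r}$ by eqn.\ \eqref{eqn:monomial_moment_constant_complex}. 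In vector form, $(f_{p,q}(\mathcal C_k))_{p+q=s}=A^s\,(h_{s-r,r}^{(k)})_r$. Lemma~\ref{lem:real_complex_polynomial_coordinate_tranform} says $A^s$ is invertible, so the left vector is zero if and only if the right vector is zero. Doing this for every $s\leq t$ gives the equivalence.

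The main point to check is that the constants match. The hypothesis that $\tau$ and $c$ are related by eqns.\ \eqref{eqn:monomial_moment_constant_real},\eqref{eqn:monomial_moment_constant_complex} is exactly what makes the constant terms cancel. Lemma~\ref{lem:real_complex_polynomial_coordinate_tranform} guarantees that a real $\tau$ corresponds to $c$ satisfying $c_{p,q}=c_{q,p}^*$, so the setting is consistent.

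Once this is in place the argument is linear algebra, with no ideal-theoretic input needed beyond the pointwise evaluation of Proposition~\ref{prop:moment_poly_ideal_equals_original_ideal}. Alternatively, (1)$\Leftrightarrow$(3) follows directly from $J_t=I_t$, since equal ideals have equal zero sets. Evaluating at $\mathcal C_k$, the vanishing of every $g^\Re_{p,q},g^\Im_{p,q}$ is equivalent to $f_{p,q}(\mathcal C_k)=0$ by Proposition~\ref{prop:real_polynomial_equations_1mode}.
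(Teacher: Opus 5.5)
Your proof is correct and follows the paper's argument. You get (1)$\Leftrightarrow$(2) from the residual expansion together with linear independence of $\mathcal E_\xi$. You get (1)$\Leftrightarrow$(3) from the degree-wise invertible transformation $A^s$, and you also note the alternative via Proposition~\ref{prop:moment_poly_ideal_equals_original_ideal}; the paper cites exactly these two routes. Your explicit pointwise identity $f_{p,q}(\mathcal C_k)=\sum_r A^s_{p,q;r}h^{(k)}_{s-r,r}$ and the disjoint-support justification of linear independence fill in details the paper leaves implicit. Your $\xi=0$ check does not actually need $\tau_{0,0}=\ell$, since $M_0^{(k)}-T_0(\tau)=h_{0,0}^{(k)}$ holds regardless.
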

\begin{proof}
    By eqn.\ \eqref{eqn:moment_tensor_residual_expansion} and linear independence of the basis tensors $E_{\xi,0},\dots,E_{\xi,\xi}$, it holds that
    \begin{equation}
        M_\xi^{\left(k\right)}=T_\xi\left(\tau\right)
        \quad\Longleftrightarrow\quad
        h_{\xi-r,r}^{\left(k\right)}=0\quad\forall r\in\left\{0,\dots,\xi\right\} .
    \end{equation}
    Taking all $0\leq\xi\leq t$ proves equivalence of statements 1 and 2.
    Equivalence of statements 1 and 3 follows from Proposition~\ref{prop:moment_poly_ideal_equals_original_ideal}, or directly from the invertible degree-wise coordinate transformation $c=A\tau$ and $\Tilde{\Phi}_t=A\Phi_t$.
\end{proof}

\begin{remark}[Moment constraints and the full KL condition]\label{rem:moment_vs_full_KL}
    Proposition~\ref{prop:tensor_form_moment_constraints} characterizes the equality of the atomic ladder moments
    \begin{equation}
        \sum_{j=1}^l\left(\alpha_j^{\left(k\right)*}\right)^p\left(\alpha_j^{\left(k\right)}\right)^q=c_{p,q}
    \end{equation}
    across the logical constellations.
    For finite coherent-state separation, the exact KL matrix element also contains coherent-state overlap terms.
    These overlap terms must either cancel exactly or be shown to be sufficiently small; this point is made explicit below when constructing codes from fiber orbits.
\end{remark}

\subsection{Constellations from moment-map symmetry}\label{sec:moment_symmetry}

We now regard the moment constraints as the preimage of a prescribed moment vector under a polynomial map.
This preimage, called a moment fiber, contains all constellation coordinates with those prescribed moments.
We identify transformations that preserve a moment fiber and use their orbits to generate candidate logical constellations from one seed.
Polynomial invariants give sufficient conditions for preserving these moments, while the coherent-state overlap terms must still be checked as in Section~\ref{sec:coherent_state_codes}.

First, recall that for integer $t\geq2$ and real polynomials $g_{p,q}^\Re=\Re\left(f_{p,q}\right)$ and $g_{p,q}^\Im=\Im\left(f_{p,q}\right)$, we have the ideal
\begin{equation}\label{eqn:target_moment_ideal}
    J_t^\tau=\left\langle h_{a,b}:a+b\leq t\right\rangle
    =\left\langle g_{p,q}^\Re,g_{p,q}^\Im:p+q\leq t\right\rangle=I_t .
\end{equation}
Here $h_{a,b}=-\tau_{a,b}+\sum_{i=1}^l u_i^av_i^b$, as established in Proposition~\ref{prop:moment_poly_ideal_equals_original_ideal}.
The number of real moment coordinates with total degree at most $t$ is
\begin{equation}
    N_t=\abs{\left\{\left(a,b\right)\in\Z_{\geq0}^2:a+b\leq t\right\}}
    = \frac{\left(t+1\right)\left(t+2\right)}{2} .
\end{equation}
The degree-zero coordinate is $S_{0,0}=l$ and can be omitted whenever the number $l$ of constellation points is fixed.

\begin{definition}[Real moment map and moment fiber]\label{def:real_moment_map_fiber}
    The \emph{degree-$t$ real moment map} is
    \begin{equation}\label{eqn:real_moment_map}
        \Phi_t:\R^{2l}\longrightarrow\R^{N_t}
        \quad\text{with}\quad
        \Phi_t\left(u,v\right)=\left(S_{a,b}\left(u,v\right)\right)_{a+b\leq t} .
    \end{equation}
    For $\tau=\left(\tau_{a,b}\right)_{a+b\leq t}\in\R^{N_t}$, the preimage $F_\tau=\Phi_t^{-1}\left(\tau\right)$ is the \emph{fiber} of $\tau$ or a \emph{$\tau$-fiber}:
    \begin{equation}\label{eqn:real_moment_fiber}
        F_\tau
        = \Phi_t^{-1}\left(\tau\right)
        = \left\{\left(u,v\right)\in\R^{2l}:S_{a,b}\left(u,v\right)=\tau_{a,b}\quad\forall a+b\leq t\right\} ,
    \end{equation}
    which is the set of all solutions corresponding to constants $\tau$
\end{definition}
{By the definition of $J_t^\tau$ in eqn.\ \eqref{eqn:target_moment_ideal},}
\begin{equation}\label{eqn:variety_equals_moment_fiber}
    \mathbf{V}\left(J_t^\tau\right)=F_\tau=\Phi_t^{-1}\left(\tau\right) .
\end{equation}
Thus two ordered constellations belong to the same moment fiber if and only if they have identical real monomial moments through total degree $t$.

For fixed $l$, the degree-zero polynomial is $h_{0,0}=l-\tau_{0,0}$.
The compatibility condition $\tau_{0,0}=c_{0,0}=l$ makes this polynomial zero, so it can be omitted from the generating set.
Any other identically zero real or imaginary constraint can likewise be omitted. A nonzero constant generator must be retained: it implies $1\in I_t$ and hence $\mathbf V\left(I_t\right)=\varnothing$.
For $a+b>0$, the polynomial $h_{a,b}=S_{a,b}-\tau_{a,b}$ has a nonconstant part.
We retain the degree-zero coordinate in Definition~\ref{def:real_moment_map_fiber} to keep the indexing uniform, with $\tau_{0,0}=l$ understood.

In coordinates, the moment map of Definition~\ref{def:real_moment_map_fiber} is given by
\begin{equation}
    \Phi_t\left(u_1,v_1,\dots,u_l,v_l\right) = \left(S_{a,b}\right)_{a+b\leq t} 
    \quad\text{for}\quad
    S_{a,b} = \sum_{i=1}^l u_i^a v_i^b .
\end{equation}
{Then the variety of $J_t^\tau$ in eqn.\ \eqref{eqn:target_moment_ideal} (equivalently the variety of $I_t$ defined in eqn.\ \eqref{eqn:real_constellation_ideal}) can be written as}
\begin{equation}
\begin{gathered}
    \mathbf{V}\left(J_t^\tau\right) = \Phi_t^{-1}\left(\tau\right) := \left\{\left(u,v\right)\in\R^{2l} : S_{a,b}\left(u,v\right) = \tau_{a,b} ,\; a+b\leq t \right\} \\
    \text{for}\quad
    \tau = \left(\tau_{a,b}\right)_{a+b\leq t} \in\R^{N_t} .
\end{gathered}
\end{equation}

In the complex coordinate $\alpha_j = u_j+\iota v_j$, we define the \emph{complex moment map} as
\begin{equation}
    \Tilde{\Phi}_t\left(\alpha_1,\dots,\alpha_l\right) = \left(\Tilde{S}_{p,q}\right)_{p+q\leq t}
    \quad\text{for}\quad
    \Tilde{S}_{p,q} = \sum_{i=1}^l \left(\alpha_i^*\right)^p \alpha_i^q ,
\end{equation}
which $c$-fiber $\Tilde{\Phi}_t^{-1}\left(c\right)$ is defined similarly to the real moment map.
By Lemma~\ref{lem:real_complex_polynomial_coordinate_tranform}, then for coordinate relation $\alpha_i=u_i+\iota v_i$ between $\left(u,v\right)\in\R^{2l}$ and $\alpha\in\C^l$ the real and complex moment maps are related by a coordinate-transformation matrix $A$ as
\begin{equation}
\begin{gathered}
    \Tilde{\Phi}_t\left(\alpha\right) = A \Phi_t\left(u,v\right)
    \quad\text{for}\quad
    A = \bigoplus_{s=0}^t A^s
\end{gathered}
\end{equation}
where $A^s$ acts on the $\left(a,b\right)$-th entries of $\Phi_t\left(u,v\right)$ such that $a+b=s$.

\subsubsection{Fiber group symmetries}

We distinguish transformations that preserve every fiber of the moment map from transformations that preserve only one prescribed fiber.
For transformations with an induced map on the target moments, preserving that fiber amounts to fixing its target moment vector.
We apply this criterion to phase-space rotations and determine which rotations preserve a given set of complex moments.
These rotations generate candidate constellations with the same moments; their finite-separation KL conditions still require the overlap terms in eqn.\ \eqref{eqn:weighted_KL_exact}.

\begin{definition}
    For $t\geq2$, a group $G$ with group action on $\R^{2l}$ is a \emph{group symmetry} of moment map $\Phi_t$ if $\Phi_t\circ g = \Phi_t$, i.e. $\Phi_t\left(g\left(u,v\right)\right) = \Phi_t\left(u,v\right)$ for all $\left(u,v\right)\in\R^{2l}$.
\end{definition}

\begin{definition}[Fiber stabilizer and constant map]
    For a nonempty fiber $F_\tau$, define its linear stabilizer by
    \begin{equation}
        \stab_\tau=\left\{T\in\GL\left(2l,\R\right):TF_\tau=F_\tau\right\} .
    \end{equation}
    If a map $\varrho_t\left(T\right)$ on moment coordinates satisfies
    $\Phi_t\left(Tz\right)=\varrho_t\left(T\right)\Phi_t\left(z\right)$ for every $z\in\R^{2l}$,
    we call it the induced constant map. For an invertible $T$
    with invertible induced constant map, $T$ preserves $F_\tau$
    if and only if $\varrho_t\left(T\right)\tau=\tau$.
\end{definition}

\begin{proposition}[Cyclic fiber symmetries]\label{prop:cyclic_fiber_symmetries}
    Let $c\in\C^{N_t}$ specify a nonempty complex moment fiber.
    The rotation
    \begin{equation}
        R_\theta\left(u,v\right)=\left(\cos\theta\,u-\sin\theta\,v,
                       \sin\theta\,u+\cos\theta\,v\right)
    \end{equation}
    preserves this fiber if and only if
    \begin{equation}
        e^{\iota\left(q-p\right)\theta}c_{p,q}=c_{p,q}
        \quad\forall p+q\leq t .
    \end{equation}
    If the set $\left\{\abs{q-p}:p+q\leq t,\ p\neq q,\ c_{p,q}\neq0\right\}$
    is nonempty, let $D$ be its greatest common divisor.
    The preserving rotations form $\left\langle R_{2\pi/D}\right\rangle$.
    If this set is empty, every rotation preserves the fiber.
\end{proposition}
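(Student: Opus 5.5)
The plan is to work entirely in the complex coordinates $\alpha_j=u_j+\iota v_j$. There the rotation $R_\theta$ acts diagonally, as $\alpha_j\mapsto e^{\iota\theta}\alpha_j$ for every $j$. For any $\alpha\in\C^l$ we then have
\begin{equation*}
    \Tilde{S}_{p,q}\left(e^{\iota\theta}\alpha\right)=\sum_{i=1}^l\left(e^{-\iota\theta}\alpha_i^*\right)^p\left(e^{\iota\theta}\alpha_i\right)^q=e^{\iota\left(q-p\right)\theta}\,\Tilde{S}_{p,q}\left(\alpha\right).
\end{equation*}
So $R_\theta$ has an induced constant map on complex moment coordinates, namely the invertible diagonal map $\varrho_t\left(R_\theta\right)=\diag\left(e^{\iota\left(q-p\right)\theta}\right)_{p+q\leq t}$. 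Through $\Tilde{\Phi}_t=A\Phi_t$, with $A=\bigoplus_s A^s$ invertible by Lemma~\ref{lem:real_complex_polynomial_coordinate_tranform}, this is equivalently the invertible map $A^{-1}\varrho_t A$ on real coordinates. The fiber $\Tilde{\Phi}_t^{-1}\left(c\right)$ coincides with $\Phi_t^{-1}\left(A^{-1}c\right)$, so everything can be read off from the diagonal map.

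For the equivalence, suppose first that $e^{\iota\left(q-p\right)\theta}c_{p,q}=c_{p,q}$ for all $p+q\leq t$. If $\alpha$ lies in the fiber, then $\Tilde{\Phi}_t\left(R_\theta\alpha\right)=\varrho_t c=c$, so $R_\theta$ maps the fiber into itself. The same argument applies to $R_{-\theta}$, which satisfies the same condition because conjugating it gives $e^{-\iota\left(q-p\right)\theta}c_{p,q}=c_{p,q}$. Hence $R_\theta$ maps the fiber onto itself, $R_\theta F=F$.

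Conversely, suppose $R_\theta F=F$. Nonemptiness is essential here: pick $\alpha\in F$. Then $R_\theta\alpha\in F$, so $c=\Tilde{\Phi}_t\left(R_\theta\alpha\right)=\varrho_t c$, which is the stated condition componentwise. This is exactly the criterion from the fiber-stabilizer definition, specialized to $T=R_\theta$.

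It remains to describe the set of preserving angles. Let $\Theta=\left\{\theta\in\R/2\pi\Z: e^{\iota\left(q-p\right)\theta}c_{p,q}=c_{p,q}\ \forall p+q\leq t\right\}$. Constraints with $c_{p,q}=0$ or $p=q$ are vacuous. If no other constraint remains, then $\Theta$ is the whole circle, which gives the final sentence of the statement.

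Otherwise, $\Theta$ is the intersection over the set $\mathcal D=\left\{\abs{q-p}:p+q\leq t,\ p\neq q,\ c_{p,q}\neq0\right\}$ of the groups $\left\{\theta: n\theta\in2\pi\Z\right\}=\frac{2\pi}{n}\Z/2\pi\Z$, where $n$ ranges over $\mathcal D$. Using $\abs{q-p}$ is harmless, since $e^{\iota k\theta}=1$ if and only if $e^{-\iota k\theta}=1$. The key number-theoretic step, which I expect to be the only nonroutine point, is to show that $\bigcap_{n\in\mathcal D}\frac{2\pi}{n}\Z=\frac{2\pi}{D}\Z$ modulo $2\pi$, with $D=\gcd\mathcal D$. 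One inclusion is immediate, because $D\mid n$. For the other, suppose $n\theta\in2\pi\Z$ for all $n\in\mathcal D$. Bézout gives integers $k_n$ with $D=\sum_n k_n n$, so $D\theta=\sum_n k_n\left(n\theta\right)\in2\pi\Z$, and hence $\theta\in\frac{2\pi}{D}\Z$. Therefore $\Theta$ is the cyclic group generated by $2\pi/D$, and the preserving rotations are $\left\langle R_{2\pi/D}\right\rangle$.
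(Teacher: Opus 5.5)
Your proposal is correct and follows the paper's proof essentially step for step: the diagonal induced map $\diag\left(e^{\iota\left(q-p\right)\theta}\right)$ on complex moment coordinates, necessity obtained from a point of the nonempty fiber, sufficiency together with the $R_{-\theta}$ argument for $R_\theta F=F$, and the B\'ezout argument for $D$. One wording point: the $-\theta$ condition comes directly from multiplying $e^{\iota\left(q-p\right)\theta}c_{p,q}=c_{p,q}$ by $e^{-\iota\left(q-p\right)\theta}$, whereas conjugating it needs an extra step, but this is cosmetic.
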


\begin{proof}
    Under $R_\theta$, the complex coordinates transform as
    \begin{equation}
    \begin{aligned}
        \alpha_j'&=\left(\cos\theta\,u_j-\sin\theta\,v_j\right)
             +\iota\left(\sin\theta\,u_j+\cos\theta\,v_j\right)\\
             &=e^{\iota\theta}\left(u_j+\iota v_j\right)
             =e^{\iota\theta}\alpha_j .
    \end{aligned}
    \end{equation}
    Thus $\Tilde R_\theta\alpha=e^{\iota\theta}\alpha$.
    The transformed real moments are
    \begin{equation}
        \tau_{a,b}\left(\theta\right)
        =\left[\Phi_t\left(R_\theta\left(u,v\right)\right)\right]_{a,b}
        =\sum_{j=1}^l\left(\cos\theta\,u_j-\sin\theta\,v_j\right)^a
                       \left(\sin\theta\,u_j+\cos\theta\,v_j\right)^b .
    \end{equation}
    For $s=p+q$, the degree-$s$ coordinate transformation of
    Lemma~\ref{lem:real_complex_polynomial_coordinate_tranform}
    therefore gives
    \begin{equation}
    \begin{aligned}
        {}\left[\Tilde\Phi_t\left(\Tilde R_\theta\alpha\right)\right]_{p,q}
        &=\sum_{r=0}^s A_{p,q;r}^s\tau_{s-r,r}\left(\theta\right)\\
        &=\sum_{j=1}^l
          \left(e^{-\iota\theta}\alpha_j^*\right)^p
          \left(e^{\iota\theta}\alpha_j\right)^q\\
        &=e^{\iota\left(q-p\right)\theta}
          \sum_{j=1}^l\left(\alpha_j^*\right)^p\alpha_j^q
         =e^{\iota\left(q-p\right)\theta}\left[\Tilde\Phi_t\left(\alpha\right)\right]_{p,q} .
    \end{aligned}
    \end{equation}
    Hence the induced constant map in complex coordinates is
    \begin{equation}
        \Tilde\Phi_t\left(\Tilde R_\theta\alpha\right)
        =\varrho_t\left(\Tilde R_\theta\right)\Tilde\Phi_t\left(\alpha\right)
        \quad\text{with}\quad
        \varrho_t\left(\Tilde R_\theta\right)
        =\diag\left(e^{\iota\left(q-p\right)\theta}\right)_{p+q\leq t} .
    \end{equation}
    In particular, if $\alpha\in\Tilde\Phi_t^{-1}\left(c\right)$, then
    $\Tilde R_\theta\alpha\in\Tilde\Phi_t^{-1}\left(c\right)$ if and only if
    \begin{equation}
        e^{\iota\left(q-p\right)\theta}c_{p,q}=c_{p,q}
        \quad\forall p+q\leq t .
    \end{equation}
    Nonemptiness of the fiber makes these conditions necessary
    for fiber preservation. Conversely, when they hold, they
    also hold with $-\theta$ in place of $\theta$. Applying the
    two inverse rotations therefore gives
    \begin{equation}
        \Tilde R_\theta\Tilde\Phi_t^{-1}\left(c\right)
        =\Tilde\Phi_t^{-1}\left(c\right) .
    \end{equation}

    Moments with $p=q$ or $c_{p,q}=0$ impose no restriction on
    $\theta$. For each remaining moment the condition is
    $\left(q-p\right)\theta\in2\pi\Z$.
    Write the distinct positive integers $\abs{q-p}$ occurring in
    these conditions as $d_1,\dots,d_r$, and let
    $D=\gcd\left(d_1,\dots,d_r\right)$.
    Since $d_i=Dm_i$ for integers $m_i$, the condition
    $D\theta\in2\pi\Z$ implies $d_i\theta\in2\pi\Z$ for every $i$.
    Conversely, B\'ezout's identity gives integers $b_1,\dots,b_r$
    such that $D=\sum_i b_i d_i$. Hence
    \begin{equation}
        d_i\theta\in2\pi\Z\quad\forall i
        \quad\Longrightarrow\quad
        D\theta=\sum_i b_i d_i\theta\in2\pi\Z .
    \end{equation}
    The preserving angles are therefore $\theta=2\pi k/D$
    modulo $2\pi$, giving the group $\left\langle R_{2\pi/D}\right\rangle$.
    If there are no such positive integers $d_i$, every
    moment is preserved for every $\theta$, as claimed.
\end{proof}

\begin{example}[Cyclic symmetry on fibers]
    For $t=3$, setting all moments with odd $q-p$ to zero
    ensures that the stabilizer contains $\left\{R_0,R_\pi\right\}$.
    Setting all off-diagonal moments to zero except possibly
    $c_{0,3}$ and $c_{3,0}$ ensures that it contains
    $\left\{R_0,R_{2\pi/3},R_{4\pi/3}\right\}$.
    The diagonal moments, including $c_{0,0}=l$ and $c_{1,1}$,
    remain subject to feasibility of the moment fiber.
\end{example}

\subsection{Polynomial invariants}\label{sec:polynomial_invariants}

\begin{definition}[{\cite[Section 7.2, Definition 7]{cox2015ideals}}]\label{def:polynomial_invariance}
    For a finite subgroup $G\leq\GL\left(l,\F\right)$ over a field $\F$, a polynomial $f\in\F\left[x_1,\dots,x_l\right]$ is \textit{$G$-invariant} if for every $g\in G$ the following polynomial identity holds:
    \begin{equation}\label{eqn:polynomial_invariance}
        f\left(x\right)=f\left(gx\right)\quad\text{in}\quad\F\left[x_1,\dots,x_l\right] .
    \end{equation}
    The set of all such polynomials is denoted as $\F\left[x_1,\dots,x_l\right]^G$.
\end{definition}

These are some of the nice properties of $G$-invariant polynomials:
\begin{itemize}
    \item It is shown in~\cite[Section 7.2, Proposition 9]{cox2015ideals} that $\F\left[x_1,\dots,x_l\right]^G$ itself is a commutative ring.

    \item A polynomial $f$ is $G$-invariant polynomial if and only if each of its homogeneous component $f_d$ is $G$-invariant (\cite[Chapter 7.2, Proposition 10]{cox2015ideals}).
    A \emph{homogeneous polynomial} is a polynomial $f=\sum_\alpha a_\alpha x^\alpha$ where all degrees $\alpha,\alpha'$ with nonzero $a_\alpha,a_{\alpha'}$ satisfy $\abs{\alpha}=\abs{\alpha'}$, i.e. total degree of all of its terms is equal.
    Homogeneous components $f_k$ of $f$ is defined by its unique decomposition $f = \sum_\alpha a_\alpha x^\alpha = \sum_{k=0}^d f_k$ where $f_k = \sum_{\alpha : \abs{\alpha} = k} a_\alpha x^\alpha$ is the sum over all terms of $f$ with degree $k$.

    \item Also, by~\cite[Section 7.2, Lemma 11]{cox2015ideals} If $G$ is generated by $A_1,\dots,A_r\in G$, then $\F\left[x_1,\dots,x_l\right]^G$ contains all polynomials satisfying
    \begin{equation}
        f\left(x\right) = f\left(A_1x\right) = \dots = f\left(A_rx\right) .
    \end{equation}
\end{itemize}

For $a+b>0$, the moment sum $S_{a,b}$ is homogeneous of degree $a+b$, whereas $h_{a,b}=S_{a,b}-\tau_{a,b}$ also has a degree-zero component unless $\tau_{a,b}=0$.
Since constants are invariant under every linear action, $h_{a,b}$ is $G$-invariant if and only if $S_{a,b}$ is $G$-invariant. 
Similarly, each $g_{p,q}^\Re$ or $g_{p,q}^\Im$ has a degree-$p+q$ component and a constant component.
For a group generated by matrices $A_1,\dots,A_r$, these invariance conditions can be checked on those generators.
They provide a sufficient condition for producing further solutions with the same target moments from one solution.
More generally, a group may preserve the common zero set while mixing its defining polynomials.
As we will see later, the degree-$t=2$ action of orthogonal group $\O\left(Z\right)$ in Proposition~\ref{prop:symmetry_group_transf_cordinate_variety} gives an explicit construction from invariant first and second moments.

For the compatible target moments, let
\begin{equation}
    V=\mathbf V\left(J_t^\tau\right)=\mathbf V\left(I_t\right)=F_\tau .
\end{equation}
Suppose a finite group $G\leq\GL\left(2l,\R\right)$ preserves $V$;
invariance of every $h_{a,b}$ is sufficient.
For a seed $\left(u_0,v_0\right)\in V$, its orbit is
\begin{equation}\label{eqn:invariant_solution_orbit}
    O_{\left(u_0,v_0\right)}^G=\left\{g\left(u_0,v_0\right):g\in G\right\}\subseteq V .
\end{equation}
Each orbit point defines a candidate constellation with the
same moments. Coordinate permutations only relabel an
equally weighted constellation and do not give additional
logical states. We therefore select distinct constellations
with well-separated supports, and assess their normalized
states using eqn.\ \eqref{eqn:KL_Gram}.

\section{Resolution bounds}\label{sec:resolution_bounds}

We derive geometric upper bounds on the squared Euclidean resolution of a union of coherent-state constellations.
We first compare volumes in the $2n$-dimensional real amplitude space of $n$ modes, and then express the enclosing radius in terms of the common centered moments.
The single-mode result is stated afterward as a specialization, with its explicit scalar moment expansion and a derivation from the multimode bound.
These bounds apply to the constellation geometry; they do not by themselves imply exact finite-separation KL conditions.
For $n$ bosonic modes, the amplitude space $\C^n$ is identified with $\R^{2n}$.
Consider $N>1$ distinct amplitudes with squared Euclidean resolution $d_E=\min_{\left(k,j\right)\neq\left(k',j'\right)}\nrm{\alpha_j^{\left(k\right)}-\alpha_{j'}^{\left(k'\right)}}_2^2$.
For any fixed center $\mu\in\C^n$, define
\begin{equation}\label{eqn:enclosing_radius_amplitudes}
    R=\max_{j,k}\nrm{\alpha_j^{\left(k\right)}-\mu}_2 .
\end{equation}
All amplitudes then lie in the ball of radius $R$ centered at $\mu$; the moment-based bound below chooses $\mu$ to be the common centroid determined by the first moments.

\begin{theorem}\label{thm:multimode_resolution_bound}
The squared Euclidean resolution $d_E$ satisfies
    \begin{equation}
        d_E\leq\frac{4R^2}{\left(N^{1/\left(2n\right)}-1\right)^2} .
    \end{equation}
\end{theorem}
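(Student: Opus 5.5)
The plan is a standard sphere-packing volume comparison in $\R^{2n}$, using the identification $\C^n\cong\R^{2n}$ under which $\nrm{\cdot}_2$ is the Euclidean norm. Set $\delta=\sqrt{d_E}$, the minimum separation between the $N$ distinct amplitudes. Around each amplitude $\alpha_j^{(k)}$ place the open Euclidean ball of radius $\delta/2$. For two distinct amplitudes, the triangle inequality and $\nrm{\alpha-\beta}_2\geq\delta$ show that these balls are pairwise disjoint. By eqn.~\eqref{eqn:enclosing_radius_amplitudes}, every amplitude lies in the closed ball of radius $R$ about $\mu$. Hence every small ball lies inside the ball of radius $R+\delta/2$ about $\mu$.

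Next we compare volumes. Let $\omega_{2n}$ denote the volume of the unit ball in $\R^{2n}$. Disjointness and containment give
\begin{equation}
    N\,\omega_{2n}\left(\frac{\delta}{2}\right)^{2n}\leq\omega_{2n}\left(R+\frac{\delta}{2}\right)^{2n}.
\end{equation}
Cancelling $\omega_{2n}$ and taking $2n$-th roots yields $N^{1/(2n)}\delta/2\leq R+\delta/2$, that is,
\begin{equation}
    \frac{\delta}{2}\left(N^{1/(2n)}-1\right)\leq R .
\end{equation}
Since $N>1$, the factor $N^{1/(2n)}-1$ is strictly positive. We may therefore divide by it, obtaining $\delta\leq 2R/(N^{1/(2n)}-1)$. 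Squaring both sides gives the claimed bound $d_E\leq 4R^2/(N^{1/(2n)}-1)^2$.

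No step is a real obstacle; the argument only needs care with a few points. The amplitudes must be distinct, so that $\delta>0$ and the balls make sense; this holds because the constellations are pairwise disjoint and each contains distinct points. The containment radius must be $R+\delta/2$ rather than $R$. Finally, $N>1$ is what makes the final division legitimate. The bound holds for every choice of center $\mu$, which is what later allows $\mu$ to be taken as the common centroid.
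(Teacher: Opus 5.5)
Your proposal is correct and follows essentially the same route as the paper: you place disjoint balls of radius $\sqrt{d_E}/2$ at the amplitudes inside the ball of radius $R+\sqrt{d_E}/2$ about $\mu$, compare volumes in $\R^{2n}$, and rearrange. Your explicit remark that $N>1$ makes $N^{1/(2n)}-1>0$ justifies the final division, a step the paper leaves implicit.
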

\begin{proof}
Place a $2n$-dimensional ball of radius $\sqrt{d_E}/2$ at each amplitude. Their interiors are disjoint by the definition of $d_E$, and the triangle inequality places all of them inside a ball of radius $R+\sqrt{d_E}/2$. A $2n$-dimensional ball of radius $r$ has volume $v_{2n}r^{2n}$, where $v_{2n}=\pi^n/n!$. Volume comparison and cancellation of $v_{2n}$ give
    \begin{equation}
    \begin{aligned}
        N\left(\frac{\sqrt{d_E}}{2}\right)^{2n}
        &\leq \left(R+\frac{\sqrt{d_E}}{2}\right)^{2n} \\
        N^{\frac{1}{2n}}\left(\frac{\sqrt{d_E}}{2}\right) &\leq \left(R+\frac{\sqrt{d_E}}{2}\right)\\
        \frac{\sqrt{d_E}}{2}\left(N^{\frac{1}{2n}}-1\right) &\leq R \\
        \sqrt{d_E} &\leq \frac{2R}{N^{\frac{1}{2n}}-1} ,
    \end{aligned}
    \end{equation}
which proves the claim.
\end{proof}

For $K$ equally weighted constellations with $l$ points each, assume that all $N=Kl>1$ support points are distinct, and write their common unnormalized moments as $c_{p,q}=\sum_{j=1}^l\left(\alpha_j^{\left(k\right)*}\right)^p\left(\alpha_j^{\left(k\right)}\right)^q$ for multi-indices $p,q\in\Z_{\geq0}^n$ with $\abs{p}+\abs{q}\leq t$. Here $c_{0,0}=l$, and $\mu$ is the common centroid, with $\mu_m=c_{0,e_m}/l$, where $e_m$ is the $m$th standard basis vector. For integers $1\leq s\leq\left\lfloor t/2\right\rfloor$, define
\begin{equation}
    M_s^{\left(k\right)}=\sum_{j=1}^l{\nrm{\alpha_j^{\left(k\right)}-\mu}_2}^{2s} .
\end{equation}
To show that this is independent of $k$, apply the multinomial identity
\begin{equation}
    \left(z_1+\dots+z_n\right)^s
    =\sum_{\substack{r\in\Z_{\geq0}^n\\\abs{r}=s}}
        \binom{s}{r}\prod_{m=1}^n z_m^{r_m}
    \quad\text{with}\quad
    z_m=\abs{\alpha_{j,m}^{\left(k\right)}-\mu_m}^2 .
\end{equation}
Expanding each of the two factors in $\abs{\alpha_{j,m}^{\left(k\right)}-\mu_m}^{2r_m}$ by the binomial theorem gives
\begin{equation}
\begin{aligned}
    M_s^{\left(k\right)}
    &=\sum_{j=1}^l\sum_{\abs{r}=s}\binom{s}{r}
       \prod_{m=1}^n
       \left(\alpha_{j,m}^{\left(k\right)*}-\mu_m^*\right)^{r_m}
       \left(\alpha_{j,m}^{\left(k\right)}-\mu_m\right)^{r_m}\\
    &=\sum_{\abs{r}=s}\binom{s}{r}
       \sum_{0\leq p,q\leq r}\binom{r}{p}\binom{r}{q}
       \left(-\mu^*\right)^{r-p}\left(-\mu\right)^{r-q}
       \sum_{j=1}^l\left(\alpha_j^{\left(k\right)*}\right)^p\left(\alpha_j^{\left(k\right)}\right)^q\\
    &=\sum_{\abs{r}=s}\binom{s}{r}
       \sum_{0\leq p,q\leq r}\binom{r}{p}\binom{r}{q}
       \left(-\mu^*\right)^{r-p}\left(-\mu\right)^{r-q}c_{p,q}
     =M_s ,
\end{aligned}
\end{equation}
where powers and inequalities are componentwise, $\binom{s}{r}=s!/\prod_mr_m!$, and $\binom{r}{p}=\prod_m\binom{r_m}{p_m}$. The last equality uses $\abs{p}+\abs{q}\leq2\abs{r}=2s\leq t$. Thus $M_s$ is independent of $k$.

Choose $R=\max_{j,k}\nrm{\alpha_j^{\left(k\right)}-\mu}_2$. Since each summand is nonnegative, for every $j,k$ we have
\begin{equation}
    {\nrm{\alpha_j^{\left(k\right)}-\mu}_2}^{2s}
    \leq\sum_{j'=1}^l{\nrm{\alpha_{j'}^{\left(k\right)}-\mu}_2}^{2s}
    =M_s .
\end{equation}
Taking the maximum over $j,k$ and then the $2s$th root gives $R\leq M_s^{1/\left(2s\right)}$. Substituting this into Theorem~\ref{thm:multimode_resolution_bound} with $N=Kl$ gives
\begin{equation}\label{fig:inequality_for_multi_mode_universal_bound}
    d_E\leq\frac{4M_s^{1/s}}{\left(\left(Kl\right)^{1/\left(2n\right)}-1\right)^2} .
\end{equation}

For integer $t\geq2$ and real polynomials $g_i^\Re,g_i^\Im$ containing all constraints of total degree $p_i+q_i\leq t$ as given in Proposition~\ref{prop:real_polynomial_equations_1mode}, we consider $K$ solutions $\mathcal{C} = \left\{\alpha^{\left(k\right)}\right\}_{k=1}^K$ for $\left\{\left(u^{\left(k\right)},v^{\left(k\right)}\right)\right\}_{k=1}^K$ such that $\alpha^{\left(k\right)} = u^{\left(k\right)}+\iota v^{\left(k\right)}$, namely $g_i^\Re\left(u^{\left(k\right)},v^{\left(k\right)}\right)=g_i^\Im\left(u^{\left(k\right)},v^{\left(k\right)}\right)=0$ for all $i$ and $k$. Assume that all $N=Kl>1$ support points are distinct and that $c_{0,0}=l$. The squared Euclidean resolution $d_E$ of $\mathcal{C}$ is defined as
\begin{equation}
    d_E = \min_{\left(j,k\right)\neq\left(j',k'\right)} \abs{\alpha_j^{\left(k\right)} - \alpha_{j'}^{\left(k'\right)}}^2 ,
\end{equation}
which is the minimum squared distance between two distinct points within and across solutions $\left\{\alpha^{\left(k\right)}\right\}_k$.

Consider $\mu = \frac{c_{0,1}}{l} = \Bar{u}+\iota\Bar{v}$ for the scalars $\Bar{u}=c_{0,1}^\Re/l$ and $\Bar{v}=c_{0,1}^\Im/l$, let $1_l\in\R^l$ be the vector of all ones, and define
\begin{equation}
    \beta^{\left(k\right)} = \alpha^{\left(k\right)} - \mu 1_l = \left(u^{\left(k\right)}-\Bar{u}1_l\right) + \iota \left(v^{\left(k\right)}-\Bar{v}1_l\right) ,
\end{equation}
so that $\abs{\alpha_j^{\left(k\right)} - \alpha_{j'}^{\left(k'\right)}} = \abs{\beta_j^{\left(k\right)} - \beta_{j'}^{\left(k'\right)}}$. Then let us consider $M_s^{\left(k\right)} = \sum_{j=1}^l \abs{\beta_j^{\left(k\right)}}^{2s}$ for integers $1\leq s\leq\left\lfloor t/2\right\rfloor$, which satisfy the inequality
\begin{equation}
    \max_j \abs{\beta_j^{\left(k\right)}}^{2s} \leq \sum_{j=1}^l \abs{\beta_j^{\left(k\right)}}^{2s} = M_s^{\left(k\right)} ,
\end{equation}
which in turn implies the following upper-bound on the modulus of the furthest centered point in constellation $k$ as
\begin{equation}\label{eqn:radius_upper_bound}
    \max_j \abs{\beta_j^{\left(k\right)}} \leq \left(M_s^{\left(k\right)}\right)^{\frac{1}{2s}} .
\end{equation}
Now we show that the $\left(M_s^{\left(k\right)}\right)^{\frac{1}{2s}}$ term in the upper bound holds for all centered constellations $\left\{\beta_j^{\left(k\right)}\right\}_j$ and that it can be expressed in terms of the constants $c_{p,q}$ in the polynomials $\left\{f_{p,q} : p+q\leq t\right\}$.

\begin{proposition}\label{prop:radius_upper_bound}
For all $k\in\left\{1,\dots,K\right\}$ and all integers $1\leq s\leq\left\lfloor t/2\right\rfloor$, it holds that
    \begin{equation}
        M_s^{\left(k\right)} = M_s = \sum_{p,q=0}^s \binom{s}{p}\binom{s}{q} \left(-\mu^*\right)^{s-p} \left(-\mu\right)^{s-q} c_{p,q} .
    \end{equation}
\end{proposition}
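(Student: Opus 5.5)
The plan is a direct expansion of $\abs{\beta_j^{(k)}}^{2s}$, mirroring the multimode computation given just before Theorem~\ref{thm:multimode_resolution_bound}, specialized to $n=1$. First, write $\abs{\beta_j^{(k)}}^{2s}=\left(\beta_j^{(k)*}\right)^s\left(\beta_j^{(k)}\right)^s$ with $\beta_j^{(k)}=\alpha_j^{(k)}-\mu$. Then expand each factor by the binomial theorem:
\begin{equation}
    \left(\alpha_j^{(k)*}-\mu^*\right)^s\left(\alpha_j^{(k)}-\mu\right)^s
    =\sum_{p=0}^s\sum_{q=0}^s\binom{s}{p}\binom{s}{q}\left(-\mu^*\right)^{s-p}\left(-\mu\right)^{s-q}\left(\alpha_j^{(k)*}\right)^p\left(\alpha_j^{(k)}\right)^q .
\end{equation}

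Next, sum over $j=1,\dots,l$ and interchange the finite sums. This produces the inner sums $\sum_{j}\left(\alpha_j^{(k)*}\right)^p\left(\alpha_j^{(k)}\right)^q$ with $0\le p,q\le s$.

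The key step is to justify replacing each inner sum by $c_{p,q}$ for every $k$. The bound $p+q\leq 2s\leq 2\lfloor t/2\rfloor\leq t$ places every pair $(p,q)$ among the imposed constraints. Each $\alpha^{(k)}$ solves $g_{p,q}^\Re=g_{p,q}^\Im=0$, which by Proposition~\ref{prop:real_polynomial_equations_1mode} is equivalent to $f_{p,q}(\mathcal C_k)=0$. So every inner sum equals $c_{p,q}$; for $p=q=0$ this uses $c_{0,0}=l$.

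Substituting gives the stated closed form. It depends only on the constants $c_{p,q}$ and on $\mu=c_{0,1}/l$, and $\mu$ is itself common to all $k$. Hence $M_s^{(k)}=M_s$ for every $k$. There is no real obstacle here; the only point needing care is the degree bookkeeping $p+q\le t$, together with the observation that $\mu$ is $k$-independent because $c_{0,1}$ is a shared moment.
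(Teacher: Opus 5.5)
Your proposal is correct and follows essentially the same route as the paper's proof: binomially expand $\left|\beta_j^{(k)}\right|^{2s}$, sum over $j$, and replace each inner sum by $c_{p,q}$ using $f_{p,q}\left(\alpha^{(k)}\right)=0$ for $p+q\leq 2s\leq t$. Your explicit remark that $\mu=c_{0,1}/l$ is itself $k$-independent is a small point the paper leaves implicit.
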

\begin{proof}
First let us expand $\abs{\beta_j^{\left(k\right)}}^{2s}$ as
    \begin{equation}
    \begin{aligned}
        \abs{\beta_j^{\left(k\right)}}^{2s} &= \left(\left(\alpha_j^{\left(k\right)}\right)^* - \mu^*\right)^s \left(\alpha_j^{\left(k\right)} - \mu\right)^s \\
        &= \sum_{p,q=0}^s \binom{s}{p}\binom{s}{q} \left(-\mu^*\right)^{s-p} \left(-\mu\right)^{s-q} \left(\left(\alpha_j^{\left(k\right)}\right)^*\right)^p \left(\alpha_j^{\left(k\right)}\right)^q .
    \end{aligned}
    \end{equation}
Then, substituting this back into $M_s^{\left(k\right)}$ gives
    \begin{equation}
    \begin{aligned}
        M_s^{\left(k\right)} = \sum_{j=1}^l \abs{\beta_j^{\left(k\right)}}^{2s}
        &= \sum_{j=1}^l \sum_{p,q=0}^s \binom{s}{p}\binom{s}{q} \left(-\mu^*\right)^{s-p} \left(-\mu\right)^{s-q} \left(\left(\alpha_j^{\left(k\right)}\right)^*\right)^p \left(\alpha_j^{\left(k\right)}\right)^q \\
        &= \sum_{p,q=0}^s \binom{s}{p}\binom{s}{q} \left(-\mu^*\right)^{s-p} \left(-\mu\right)^{s-q} \underbrace{ \sum_{j=1}^l \left(\left(\alpha_j^{\left(k\right)}\right)^*\right)^p \left(\alpha_j^{\left(k\right)}\right)^q }_{=f_{p,q}\left(\alpha^{\left(k\right)}\right)+c_{p,q}=c_{p,q}} .
    \end{aligned}
    \end{equation}
Then, we can substitute in $c_{p,q}$ in place of the sum $\sum_j \left(\left(\alpha_j^{\left(k\right)}\right)^*\right)^p \left(\alpha_j^{\left(k\right)}\right)^q$, since $p+q\leq2s\leq t$ and $f_{p,q}\left(\alpha^{\left(k\right)}\right)=0$. The resulting expression is independent of $k$, proving the claim.
\end{proof}

\begin{remark}[Single-mode specialization]\label{thm:single_mode_resolution_bound}
Consider $K$ equally weighted constellations $\mathcal{C}^{\left(k\right)}=\left\{\alpha_j^{\left(k\right)}\right\}_{j=1}^l$ with common moments $c_{p,q}$ for $p+q\leq t$, and assume that all $N=Kl>1$ support points are distinct. Then for every integer $1\leq s\leq\left\lfloor t/2\right\rfloor$ its squared Euclidean resolution $d_E$ is bounded as
    \begin{equation}
        d_E \leq \frac{4}{\left(\sqrt{Kl}-1\right)^2} \left( \sum_{p,q=0}^s \binom{s}{p}\binom{s}{q} \left(-\mu^*\right)^{s-p} \left(-\mu\right)^{s-q} c_{p,q} \right)^{\frac{1}{s}} .
    \end{equation}
    To obtain this bound, set $n=1$ and $N=Kl$ in Theorem~\ref{thm:multimode_resolution_bound}, so that $\left(Kl\right)^{1/\left(2n\right)}=\sqrt{Kl}$ and the Euclidean norm of a complex amplitude is its modulus.
    With the common centroid $\mu=c_{0,1}/l$, eqn.\ \eqref{eqn:radius_upper_bound} and Proposition~\ref{prop:radius_upper_bound} give $R\leq M_s^{1/\left(2s\right)}$ and the scalar expression for $M_s$.
    Substituting these into the geometric bound yields
    \begin{equation}
    \begin{aligned}
        d_E&\leq\frac{4R^2}{\left(\sqrt{Kl}-1\right)^2}
            \leq\frac{4M_s^{1/s}}{\left(\sqrt{Kl}-1\right)^2}\\
           &=\frac{4}{\left(\sqrt{Kl}-1\right)^2}\left(\sum_{p,q=0}^s\binom{s}{p}\binom{s}{q}\left(-\mu^*\right)^{s-p}\left(-\mu\right)^{s-q}c_{p,q}\right)^{1/s} .
    \end{aligned}
    \end{equation}
    Equivalently, this is eqn.\ \eqref{fig:inequality_for_multi_mode_universal_bound} with $n=1$, followed by the substitution from Proposition~\ref{prop:radius_upper_bound}.
\end{remark}

\section{Real constellation polynomials with degree \texorpdfstring{$p+q\leq2$}{p+q at most 2}}\label{sec:degree_two}

We now specialize the moment constraints of Section~\ref{sec:moment_tensors} to $t=2$.
We reduce the real constellation equations to five generators and translate their real solution set to centered coordinates with a fixed Gram matrix.
We then show that every solution is obtained from one seed by the action of $\O\left(Z\right)$ and use this description to formulate a search for constellations with large resolution.

\subsection{Generators of real constellation polynomials of degree \texorpdfstring{$p+q\leq2$}{p+q ≤ 2}}

{For the nonconstant complex polynomials of total degree $1\leq p+q\leq 2$ we have $\left(p,q\right)\in\left\{\left(0,1\right),\allowbreak\left(1,0\right),\allowbreak\left(1,1\right),\allowbreak\left(0,2\right),\allowbreak\left(2,0\right)\right\}$, so we have 10 real constellation polynomial equations to solve.}
The explicit form of these real polynomials can be obtained using Proposition~\ref{prop:real_polynomial_equations_1mode}:
\begin{equation}\label{eqn:poly_generators_pq_leq_2}
\begin{gathered}
    g_{0,1} = -c_{0,1}^\Re + \sum_{j=1}^l u_j 
    ,\quad
    h_{0,1} = -c_{0,1}^\Im + \sum_{j=1}^l v_j
    ,\quad
    g_{1,0} = -c_{1,0}^\Re + \sum_{j=1}^l u_j 
    ,\quad
    h_{1,0} = -c_{1,0}^\Im - \sum_{j=1}^l v_j
    ,\\
    g_{1,1} = -c_{1,1}^\Re + \sum_{j=1}^l \left(u_j^2+v_j^2\right) 
    ,\\
    g_{0,2} = -c_{0,2}^\Re + \sum_{j=1}^l \left(u_j^2-v_j^2\right) 
    ,\quad
    h_{0,2} = -c_{0,2}^\Im + 2\sum_{j=1}^l u_jv_j
    ,\\
    g_{2,0} = -c_{2,0}^\Re + \sum_{j=1}^l \left(u_j^2-v_j^2\right) 
    ,\quad
    h_{2,0} = -c_{2,0}^\Im - 2\sum_{j=1}^l u_jv_j
    .
\end{gathered}
\end{equation}
The omitted imaginary-part equation is $h_{1,1}=-c_{1,1}^\Im=0$.
The compatibility condition here is $c_{p,q}=c_{q,p}^*$ from Lemma~\ref{lem:real_complex_polynomial_coordinate_tranform}; taking $p=q=1$ gives $c_{1,1}=c_{1,1}^*\in\R$, so $c_{1,1}^\Im=0$.

Let us now consider polynomials in $\R\left[u,v\right]$
\begin{equation}
\begin{gathered}
    g_u = \sum_{j=1}^l u_j ,\quad
    g_v = \sum_{j=1}^l v_j ,\\
    g_{u,v}^+ = \sum_{j=1}^l \left(u_j^2+v_j^2\right) ,\quad
    g_{u,v}^- = \sum_{j=1}^l \left(u_j^2-v_j^2\right) ,\quad
    g_{u,v}^\times = \sum_{j=1}^l u_jv_j .
\end{gathered}
\end{equation}
Then from the original generators in eqn.\ \eqref{eqn:poly_generators_pq_leq_2} we see that
\begin{equation}
\begin{gathered}
    g_{0,1} + c_{0,1}^\Re = g_u = g_{1,0} + c_{1,0}^\Re
    ,\quad
    h_{0,1} + c_{0,1}^\Im = g_v = -h_{1,0} - c_{1,0}^\Im ,\\
    g_{0,2} + c_{0,2}^\Re = g_{u,v}^- = g_{2,0} + c_{2,0}^\Re 
    ,\quad
    \frac{h_{0,2} + c_{0,2}^\Im}{2} = g_{u,v}^\times = -\frac{h_{2,0} + c_{2,0}^\Im}{2} .
\end{gathered}
\end{equation}
Now let us write the compatible moment constants as
\begin{equation}
\begin{gathered}
    c_1=c_{0,1}^\Re=c_{1,0}^\Re ,\quad
    c_2=c_{0,1}^\Im=-c_{1,0}^\Im ,\\
    c_3=c_{0,2}^\Re=c_{2,0}^\Re ,\quad
    c_4=\frac{c_{0,2}^\Im}{2}=-\frac{c_{2,0}^\Im}{2} .
\end{gathered}
\end{equation}
The nonzero scalar factors in the generators do not change the ideal, so
\begin{equation}
\begin{aligned}
    I&=\left\langle g_u-c_1,\;g_v-c_2,\;g_{u,v}^--c_3,\;g_{u,v}^\times-c_4,\;g_{1,1}\right\rangle
    =\left\langle g_{0,1},\;h_{0,1},\;g_{0,2},\;h_{0,2},\;g_{1,1}\right\rangle .
\end{aligned}
\end{equation}

Now let use define polynomials
\begin{equation}
\begin{gathered}
    h_u = \frac{g_{1,1} + g_{0,2}}{2} = \sum_{j=1}^l u_j^2 -\frac{c_{1,1}^\Re+c_3}{2}
    \quad\text{and}\quad
    h_v = \frac{g_{1,1} - g_{0,2}}{2} = \sum_{j=1}^l v_j^2 -\frac{c_{1,1}^\Re-c_3}{2} ,
\end{gathered}
\end{equation}
so that we can write the ideal $I$ as
\begin{equation}\label{eqn:qp_leq2_five_ideal_grobner}
\begin{gathered}
    I = \left\langle g_{0,1},\; h_{0,1},\; h_{0,2},\; h_u,\; h_v \right\rangle \\
    \text{for}\\
    g_{0,1} = -c_{0,1}^\Re + \sum_{j=1}^l u_j ,\quad
    h_{0,1} = -c_{0,1}^\Im + \sum_{j=1}^l v_j ,\quad
    h_{0,2} = -c_{0,2}^\Im + 2\sum_{j=1}^l u_jv_j ,\\
    h_u = \sum_{j=1}^l u_j^2 -\frac{c_{1,1}^\Re+c_3}{2}
    ,\quad
    h_v = \sum_{j=1}^l v_j^2 -\frac{c_{1,1}^\Re-c_3}{2}
    .
\end{gathered}
\end{equation}

\subsection{Affine variety of real constellation polynomials of degree \texorpdfstring{$p+q\leq2$}{p+q ≤ 2}}

Let $1_l$ be the vectors of all ones in $\R^l$ and $\left\langle\cdot,\cdot\right\rangle$ the dot product over $\R^l$.
Then, the variety $\mathbf{V}\left(I\right)\subseteq\R^{2l}$ consists of all $\left(u,v\right)\in\R^{2l}$ satisfying
\begin{equation}\label{eqn:variety_pq_leq2_constraints}
\begin{gathered}
    \left\langle u,u\right\rangle = c_3^+ = \frac{c_{1,1}^\Re+c_3}{2} ,\quad
    \left\langle v,v\right\rangle = c_3^- = \frac{c_{1,1}^\Re-c_3}{2} ,\quad
    \\ \left\langle u,1\right\rangle =  c_1 ,\quad
    \left\langle v,1\right\rangle =  c_2 ,\quad\text{and}\quad
    \left\langle u,v\right\rangle =  c_4 .
\end{gathered}
\end{equation}
For $\alpha_i = u_i+\iota v_i$, we can write the conditions above in terms of $\left(\alpha_1,\dots,\alpha_l\right)\in\C^l$ as
\begin{equation}\label{eqn:complex_constraints_pq_leq2}
\begin{gathered}
    \sum_{i=1}^l \abs{\alpha_i}^2 = \sum_{i=1}^l u_i^2+v_i^2 = \left\langle u,u\right\rangle+\left\langle v,v\right\rangle = c_{1,1}^\Re
    ,\\
    \sum_{i=1}^l \alpha_i = \sum_{i=1}^l u_i+\iota v_i = \left\langle u,1\right\rangle+\iota\left\langle v,1\right\rangle = c_1+\iota c_2
    ,\\
    \sum_{i=1}^l \alpha_i^* = \sum_{i=1}^l u_i-\iota v_i = \left\langle u,1\right\rangle-\iota\left\langle v,1\right\rangle = c_1-\iota c_2
    ,\\
    \sum_{i=1}^l \alpha_i^2 = \sum_{i=1}^l u_i^2 + 2\iota u_iv_i - v_i^2 = \left\langle u,u\right\rangle + 2\iota\left\langle u,v\right\rangle - \left\langle v,v\right\rangle = c_3 + 2\iota c_4
    ,\\
    \sum_{i=1}^l \left(\alpha_i^*\right)^2 = \sum_{i=1}^l u_i^2 - 2\iota u_iv_i - v_i^2 = \left\langle u,u\right\rangle - 2\iota\left\langle u,v\right\rangle - \left\langle v,v\right\rangle = c_3 - 2\iota c_4 .
\end{gathered}
\end{equation}
For $\ket{\mathcal C}=\left(\mathcal N\sqrt l\right)^{-1}\sum_{i=1}^l\ket{\alpha_i}$, where $\mathcal N^2=l^{-1}\sum_{i,j}\braket{\alpha_i}{\alpha_j}$, the mean photon number is
\begin{equation}
    \brakets{\mathcal C}{a^\dag a}{\mathcal C}
    =\frac{1}{\mathcal N^2l}\left(c_{1,1}^\Re+
      \sum_{i=1}^l\sum_{j\neq i}\alpha_i^*\alpha_j
      e^{-\frac12\left(\abs{\alpha_i}^2+\abs{\alpha_j}^2-2\alpha_i^*\alpha_j\right)}\right) .
\end{equation}
The first and second moments do not, in general, determine this exponentially weighted double sum or $\mathcal N$.
They determine the unweighted sum
\begin{equation}
    \sum_{i=1}^l\sum_{j\neq i}\alpha_i^*\alpha_j
    =c_1^2+c_2^2-c_{1,1}^\Re .
\end{equation}
For fixed $l$, let $R=\max_i\abs{\alpha_i}$ and $\epsilon=\max_{i\neq j}\abs{\braket{\alpha_i}{\alpha_j}}$.
In the limit $\epsilon\to0$, we have $\mathcal N^2=1+O\left(\epsilon\right)$ and
\begin{equation}
    \brakets{\mathcal C}{a^\dag a}{\mathcal C}
    =\frac{c_{1,1}^\Re}{l}+O\left(R^2\epsilon\right) .
\end{equation}
Thus the common second moment determines the leading mean photon number, while the coherent-state overlaps give constellation-dependent corrections.

\subsection{Centered solution space}

{Now let us consider variables $x_i=u_i-\Bar{u}$ and $y_i=v_i-\Bar{v}$, where $\Bar{u}=c_1/l$ and $\Bar{v}=c_2/l$.}
In vector form, $x=u-\Bar{u}1_l$ and $y=v-\Bar{v}1_l$, where $1_l\in\R^l$ is the all $1$ vector.
So for $x=\left(x_1,\dots,x_l\right)$ and $y=\left(y_1,\dots,y_l\right)$ we have 
\begin{equation}
\begin{gathered}
    \left\langle x,1\right\rangle = \sum_i \left(u_i - \Bar{u}\right) = 0 
    \quad\text{and}\quad
    \left\langle y,1\right\rangle = \sum_i \left(v_i-\Bar{v}\right) = 0
\end{gathered}
\end{equation}
by the third and fourth constraints in eqn.\ \eqref{eqn:variety_pq_leq2_constraints}.
Also, we can write the first two and the last constraints in eqn.\ \eqref{eqn:variety_pq_leq2_constraints} as
\begin{equation}
\begin{gathered}
    \left\langle u,u\right\rangle = \sum_i x_i^2 + 2 x_i\Bar{u} + \Bar{u}^2 
        = \left\langle x,x\right\rangle + 2\Bar{u}\underbrace{\left\langle x,1\right\rangle}_{=0} + \frac{c_1^2}{l} 
        = \left\langle x,x\right\rangle + \frac{c_1^2}{l} ,\\
    \left\langle v,v\right\rangle = \sum_i y_i^2 + 2 y_i\Bar{v} + \Bar{v}^2 
        = \left\langle y,y\right\rangle + 2\Bar{v}\underbrace{\left\langle y,1\right\rangle}_{=0} + \frac{c_2^2}{l} 
        = \left\langle y,y\right\rangle + \frac{c_2^2}{l} ,\\
    \left\langle u,v\right\rangle = \sum_i \left(x_i+\Bar{u}\right)\left(y_i+\Bar{v}\right) 
        = \left\langle x,y\right\rangle +\left\langle x,1\right\rangle\Bar{v} + \left\langle y,1\right\rangle\Bar{u} + \frac{c_1c_2}{l} 
        = \left\langle x,y\right\rangle + \frac{c_1c_2}{l} .
\end{gathered}
\end{equation}
Thus now we can write the constraints in eqn.\ \eqref{eqn:variety_pq_leq2_constraints} in terms of these new $\left(x,y\right)\in\R^{2l}$ coordinates as a subset $V'$ of $\R^{2l}$, given by
\begin{equation}\label{eqn:pq_leq2_transformed_coordinates_constraints}
\begin{gathered}
    V' = \left\{\left(x,y\right)\in Z\times Z : \left\langle x,x\right\rangle=\tau_x ,\; \left\langle y,y\right\rangle=\tau_y ,\; \left\langle x,y\right\rangle=\tau_{xy} \right\} \\
    \text{for}\\
    Z = \left\{w\in\R^l : \left\langle w,1\right\rangle = 0\right\} ,\quad
    \tau_x = \frac{c_{1,1}^\Re+c_3}{2} - \frac{c_1^2}{l} 
    ,\quad
    \tau_y = \frac{c_{1,1}^\Re-c_3}{2} - \frac{c_2^2}{l} 
    ,\quad
    \tau_{xy} = c_4 - \frac{c_1c_2}{l} ,
\end{gathered}
\end{equation}
where $Z$ an $l-1$-dimensional subspace of $\R^l$ that is orthogonal the 1-dimensional subspace $1\R$ spanned by the vector $1$.
So, for $\left(x,y\right)\in Z\times Z \subseteq\R^{2l}$, we can further write the equality constraints in $V'$ as
\begin{equation}
\label{eq:taus}
    \sum_{i=1}^l
    \begin{bmatrix}
        x_i \\
        y_i
    \end{bmatrix}
    \begin{bmatrix}
        x_i & y_i
    \end{bmatrix} 
    = T
    \quad\text{for}\quad
    T =
    \begin{bmatrix}
        \tau_x & \tau_{xy} \\
        \tau_{xy} & \tau_y
    \end{bmatrix}
\end{equation}
and therefore, the entire constraints over $\left(x,y\right)\in\R^{2l}$ can be expressed as
\begin{equation}\label{eqn:pq_leq2_transformed_coordinates_matrix_constraints}
\begin{gathered}
    M_{x,y}^\top 1 = 0 
    \quad\text{and}\quad
    M_{x,y}^\top M_{x,y} = T
    \quad\text{for}\quad
    M_{x,y} = 
    \begin{bmatrix}
        x_1 & y_1 \\
        \vdots & \vdots\\
        x_l & y_l
    \end{bmatrix} .
\end{gathered}
\end{equation}

Now let us consider the orthogonal group $\O\left(l,\R\right)$ over $\R^l$ and its subset:
\begin{equation}
    \O\left(Z\right) = \left\{Q\in\O\left(l,\R\right) : Q1=1 \right\} .
\end{equation}
We show that matrices in $\O\left(Z\right)$ is an orthogonal group over subspace $Z$ and that $Z$ is closed under $\O\left(Z\right)$.

\begin{lemma}
    Subset $\O\left(Z\right)$ of orthogonal group $\O\left(l,\R\right)$ over $\R^l$ satisfies:
    \begin{enumerate}
        \item 
        {$\O\left(Z\right)$ is a subgroup of $\O\left(l,\R\right)$ where $Q^\top Q = QQ^\top = I$ for all $Q\in\O\left(Z\right)$.}
        \item $Z$ is closed under the action of $\O\left(Z\right)$, i.e. for any $Q\in\O\left(Z\right)$, $Qx\in Z$ whenever $x\in Z$.
    \end{enumerate}
\end{lemma}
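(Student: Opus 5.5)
The plan is to verify the subgroup axioms directly from the defining condition $Q1=1$, and then to get closure of $Z$ from orthogonality. For part 1, note first that the identity matrix $I$ lies in $\O\left(l,\R\right)$ and satisfies $I1=1$, so $\O\left(Z\right)$ is nonempty. For $Q_1,Q_2\in\O\left(Z\right)$ the product $Q_1Q_2$ is orthogonal, and $Q_1Q_2 1=Q_1 1=1$, so $\O\left(Z\right)$ is closed under products. For inverses, take $Q\in\O\left(Z\right)$. Its inverse is $Q^{-1}=Q^\top$, which is orthogonal because $\O\left(l,\R\right)$ is a group. Applying $Q^{-1}$ to both sides of $Q1=1$ gives $Q^{-1}1=1$. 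Hence $\O\left(Z\right)$ is a subgroup. The identities $Q^\top Q=QQ^\top=I$ are inherited from membership in $\O\left(l,\R\right)$, since a square matrix with $Q^\top Q=I$ has $Q^\top$ as a two-sided inverse.

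For part 2, fix $x\in Z$, so that $\left\langle x,1\right\rangle=0$, and let $Q\in\O\left(Z\right)$. We compute
\begin{equation}
    \left\langle Qx,1\right\rangle=x^\top Q^\top 1 .
\end{equation}
Part 1 gives $Q^\top=Q^{-1}\in\O\left(Z\right)$, so $Q^\top 1=1$. Therefore $\left\langle Qx,1\right\rangle=\left\langle x,1\right\rangle=0$, and $Qx\in Z$. Equivalently, $Q$ preserves the line $1\R$ and, being orthogonal, preserves its orthogonal complement $Z$.

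There is no real obstacle here. The only point needing care is that the defining condition concerns $Q1$, whereas the closure argument uses $Q^\top 1$. The inverse step of part 1 bridges this: it shows $Q^\top 1=1$, and that is exactly what part 2 needs. One may also remark that restricting $Q$ to $Z$ yields an orthogonal map of $Z$, which justifies the name. This restriction is orthogonal because $Q$ preserves inner products and maps $Z$ to itself, and it is surjective because it is injective on a finite-dimensional space.
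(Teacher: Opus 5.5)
Your proof is correct and follows essentially the same route as the paper: the subgroup axioms are checked directly from $Q1=1$, and the inverse step shows $Q^\top 1=Q^\top Q1=1$. The only cosmetic difference is in part 2, where the paper writes $\left\langle Qx,1\right\rangle=\left\langle Qx,Q1\right\rangle=\left\langle x,1\right\rangle$ using orthogonality directly, whereas you use $Q^\top 1=1$ from part 1; the two are trivially equivalent.
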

\begin{proof}
    First we show statement 1.
    {Since $\O\left(Z\right)\subseteq\O\left(l,\R\right)$ therefore any $Q\in\O\left(Z\right)$ satisfies $Q^\top Q = QQ^\top = I$.}
    The identity matrix $I\in\O\left(l,\R\right)$ is in $\O\left(Z\right)$ since $I1=1$ and it is the identity element of $\O\left(Z\right)$ since $IQ=Q$ for any $Q\in\O\left(Z\right)$.
    Also, $QQ'\in\O\left(Z\right)$ for any $Q,Q'\in\O\left(Z\right)$ because $QQ'1=Q1=1$.
    {For any $Q\in\O\left(Z\right)$, the identity $Q1=1$ implies $Q^\top1=Q^\top Q1=1$, so $Q^{-1}=Q^\top\in\O\left(Z\right)$.}
    Thus $\O\left(Z\right)$ is a subgroup of $\O\left(l,\R\right)$.

    To show statement 2, note that for any $x\in Z$ and any $Q\in\O\left(Z\right)$, it holds that $Qx\in Z$ since
    \begin{equation}
        \left\langle Qx,1\right\rangle = \left\langle Qx,Q1\right\rangle = \left\langle x,Q^\top Q 1\right\rangle = \left\langle x,1\right\rangle = 0 .
    \end{equation}
    Thus $Z$ is closed under the action of $Q\in\O\left(Z\right)$.
\end{proof}

{In an orthonormal basis $e_1,\dots,e_l$ with $e_1 = 1_l/\sqrt{l}$ and $e_2,\dots,e_l$ a basis of subspace $Z$, a matrix $Q$ in $\O\left(Z\right)$ takes the form}
\begin{equation}\label{eqn:block_Z_orthogonal_group}
    Q 
    = \begin{bmatrix}
        1 & 0 \\
        0 & Q_Z
    \end{bmatrix} ,
\end{equation}
which means that $Q_Z\in\O\left(l-1,\R\right)$ since $Q^\top Q= QQ^\top = I_l$ implies that $Q_Z^\top Q_Z= Q_ZQ_Z^\top = I_{l-1}$.
Therefore $\O\left(Z\right) \cong \O\left(l-1,\R\right)$.
It turns out the diagonal subgroup of $\O\left(Z\right)\oplus\O\left(Z\right)$ is a symmetry group of $Z\times Z$.
Namely, $Q\oplus Q \left(Z\times Z\right) = Z\times Z$ for all $Q\in\O\left(Z\right)$.

\begin{proposition}\label{prop:symmetry_group_transf_cordinate_variety}
    Consider the subset $G$ of $\O\left(Z\right)\oplus\O\left(Z\right)$, given by
    \begin{equation}
        G = \left\{Q\oplus Q : Q\in\O\left(Z\right)\right\} .
    \end{equation}
    The subset $G$ is a group under the usual matrix multiplication and for any $\left(\hat{x},\hat{y}\right)\in V'$, it holds that
    \begin{equation}
        V' = G\cdot\left(\hat{x},\hat{y}\right) = \left\{\left(x,y\right)\in\R^{2l} : \exists Q\in\O\left(Z\right) ,\; x=Q\hat{x} \;\text{and}\; y=Q\hat{y} \right\} .
    \end{equation}
\end{proposition}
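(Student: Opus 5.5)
The plan has three parts: check that $G$ is a group, check that the orbit of the seed lies in $V'$, and then show that every point of $V'$ lies in that orbit. The last part is a Witt-type extension argument: two pairs of vectors in $Z$ with the same Gram matrix $T$ are related by an isometry of $Z$, which extends to an element of $\O\left(Z\right)$.

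\emph{Group property.} The map $Q\mapsto Q\oplus Q$ is a homomorphism from $\O\left(Z\right)$ into $\GL\left(2l,\R\right)$. This holds because $\left(Q\oplus Q\right)\left(Q'\oplus Q'\right)=QQ'\oplus QQ'$ and $\left(Q\oplus Q\right)^{-1}=Q^\top\oplus Q^\top$. By the preceding lemma $\O\left(Z\right)$ is a group, so its image $G$ is a group.

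\emph{The orbit lies in $V'$.} Fix $\left(\hat x,\hat y\right)\in V'$ and $Q\in\O\left(Z\right)$. Statement 2 of the lemma gives $Q\hat x,Q\hat y\in Z$. Orthogonality gives $\left\langle Q\hat x,Q\hat y\right\rangle=\left\langle\hat x,\hat y\right\rangle$, and likewise for the two squared norms. Hence the Gram matrix is still $T$, and $G\cdot\left(\hat x,\hat y\right)\subseteq V'$.

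\emph{Every point of $V'$ is in the orbit.} Take any $\left(x,y\right)\in V'$. Define a linear map $\phi:W=\spn\left\{\hat x,\hat y\right\}\to W'=\spn\left\{x,y\right\}$ by $a\hat x+b\hat y\mapsto ax+by$.

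\begin{enumerate}
\item Show that $\phi$ is well defined, which is the one delicate point since $\hat x,\hat y$ may be linearly dependent. Both pairs have the same Gram matrix $T$, so for all $a,b$
\begin{equation}
\nrm{a\hat x+b\hat y}^2=\left(a,b\right)T\left(a,b\right)^\top=\nrm{ax+by}^2 .
\end{equation}
Therefore $a\hat x+b\hat y=0$ if and only if $ax+by=0$. This makes $\phi$ well defined and injective, and the polarized version of the same identity makes $\phi$ an isometry from $W$ onto $W'$. In particular $\dim W=\dim W'=\operatorname{rank}T$.
\item Extend $\phi$ to all of $Z$. Both $W$ and $W'$ lie in $Z$, and $\dim Z=l-1$. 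Their orthogonal complements $W^\perp\cap Z$ and $W'^\perp\cap Z$ therefore have the same dimension $l-1-\operatorname{rank}T$. Choose orthonormal bases of the two complements and map one onto the other. Together with $\phi$ this gives an orthogonal map $Q_Z$ of $Z$ with $Q_Z\hat x=x$ and $Q_Z\hat y=y$.
\item Extend $Q_Z$ to $\R^l$ by setting $Q1=1$. This is the block form of eqn.\ \eqref{eqn:block_Z_orthogonal_group}. The resulting $Q$ lies in $\O\left(l,\R\right)$ and fixes $1$, so $Q\in\O\left(Z\right)$ and $\left(Q\oplus Q\right)\left(\hat x,\hat y\right)=\left(x,y\right)$.
\end{enumerate}

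This gives $V'\subseteq G\cdot\left(\hat x,\hat y\right)$, and together with the previous step, equality. The main obstacle is the degenerate case in step 1, where $\hat x$ and $\hat y$ are dependent or one of them is zero (i.e. $\operatorname{rank}T<2$). The Gram-matrix norm identity handles it uniformly. The dimension count in step 2 needs $\operatorname{rank}T\le l-1$, and this holds automatically because $V'$ is assumed nonempty and $W,W'\subseteq Z$.
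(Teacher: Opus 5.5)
Your proposal is correct and follows essentially the same route as the paper. Both arguments check the group property, show that $Q\oplus Q$ preserves the Gram matrix $T$, and use the norm identity from $T$ to show that $a\hat{x}+b\hat{y}\mapsto ax+by$ is a well-defined isometry, which is then extended to an orthogonal map of $Z$ and by the identity on $\spn\left\{1\right\}$. Your explicit dimension count for the orthogonal complements inside $Z$, and your explicit check that $Q\hat{x},Q\hat{y}\in Z$, make precise two points the paper's proof passes over quickly.
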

\begin{proof}
    Let us denote elements of $G$ as $g_Q=Q\oplus Q$.
    Subset $G$ is a group under matrix multiplication with identity $g_I=I\oplus I$ and inverse $g_Q^{-1}=g_{Q^\top}$, since $g_Q=Q\oplus Q$ and $g_Qg_{Q'}\in G$ for all $Q\in\O\left(Z\right)$.
    
    Now we show that for any $\left(\hat{x},\hat{y}\right)\in V'$ it holds that $\left(Q\hat{x},Q\hat{y}\right)\in V'$.
    In the matrix form of eqn.\ \eqref{eqn:pq_leq2_transformed_coordinates_matrix_constraints}, we have $M_{Q\hat{x},Q\hat{y}} = QM_{\hat{x},\hat{y}}$.
    Since $Q^\top Q=I$ for all $Q\in\O\left(Z\right)$ and $M_{\hat{x},\hat{y}}^\top M_{\hat{x},\hat{y}}=T$ we have
    \begin{equation}
        M_{Q\hat{x},Q\hat{y}}^\top M_{Q\hat{x},Q\hat{y}} = M_{\hat{x},\hat{y}}^\top Q^\top Q M_{\hat{x},\hat{y}} = T .
    \end{equation}
    Hence, $\left(Q\hat{x},Q\hat{y}\right)$ is in $V'$ for any $Q\in\O\left(Z\right)$ so that $G\cdot\left(\hat{x},\hat{y}\right)\subseteq V'$.
    
    Now we show that for any $\left(\hat{x},\hat{y}\right),\left(x,y\right)\in V'$ there exists $Q\in\O\left(Z\right)$ such that $x=Q\hat{x}$ and $y=Q\hat{y}$ to obtain $V' \subseteq G\cdot\left(\hat{x},\hat{y}\right)$.
    From matrix constraints in eqn.\ \eqref{eqn:pq_leq2_transformed_coordinates_constraints} for elements of $V'$, we have
    \begin{equation}\label{eqn:pq_leq2_constraints2}
        M_{\hat{x},\hat{y}}^\top M_{\hat{x},\hat{y}} = T = M_{x,y}^\top M_{x,y} .
    \end{equation}
    For $a,b\in\R$, consider mappings $\phi_0\left(a,b\right) = a\hat{x}+b\hat{y}$ and $\phi\left(a,b\right)=ax+by$.
    
    Consider a map $\Phi:S_0\rightarrow S$ for $S_0=\spn\left\{\hat{x},\hat{y}\right\}$ and $S=\spn\left\{x,y\right\}$ given by
    \begin{equation}
        \Phi\left(\phi_0\left(a,b\right)\right) = \phi\left(a,b\right) 
    \end{equation}
    for any $a,b\in\R$ and $\phi_0\left(a,b\right) = a\hat{x}+b\hat{y}$ and $\phi\left(a,b\right)=ax+by$.
    Now we show that $\Phi$ is well-defined, namely we need to show that for all $a,b\in\R$, $\phi\left(a,b\right)=0$ whenever $\phi_0\left(a,b\right)=0$, i.e. $\ker\phi_0 \subseteq \ker\phi$.
    Now suppose that $a\hat{x}+b\hat{y}=\phi_0\left(a,b\right)=0$.
    Since $\left(\hat{x},\hat{y}\right)\in V'$, by eqn.\ \eqref{eqn:pq_leq2_constraints2} we have
    \begin{equation}
        T \begin{bmatrix}
            a\\ b
        \end{bmatrix} =
        M_{\hat{x},\hat{y}}^\top M_{\hat{x},\hat{y}} \begin{bmatrix}
            a\\ b
        \end{bmatrix} = 0 .
    \end{equation}
    The norm squared of $\phi\left(a,b\right)=ax+by$ is given by
    \begin{equation}
    \begin{aligned}
        \nrm{ax+by}^2 
        &= \begin{bmatrix}
            a & b
        \end{bmatrix} M_{x,y}^\top M_{x,y} \begin{bmatrix}
            a \\ b
        \end{bmatrix}
        = \begin{bmatrix}
            a & b
        \end{bmatrix} T \begin{bmatrix}
            a \\ b
        \end{bmatrix} = 0 ,
    \end{aligned}
    \end{equation}
    where we again use eqn.\ \eqref{eqn:pq_leq2_constraints2} since $\left(x,y\right)\in V'$.
    Since $\nrm{\phi\left(a,b\right)}^2=0$ therefore $\phi\left(a,b\right)=0$, and thus $\phi_0\left(a,b\right)=0$ implies $\phi\left(a,b\right)=0$ for all $a,b\in\R$.
    So $\Phi$ is well-defined.

    {Also, $\Phi$ is a linear isometry from $S_0$ onto $S$ since for $a,b,c,d\in\R$ we have}
    \begin{equation}
    \begin{gathered}
        \Phi\left(\phi_0\left(a,b\right)+\phi_0\left(c,d\right)\right) = \phi\left(a+c,b+d\right) = \phi\left(a,b\right)+\phi\left(c,d\right) = \Phi\left(\phi_0\left(a,b\right)\right) + \Phi\left(\phi_0\left(c,d\right)\right) \\
        \text{and}\\
        \left\langle\Phi\left(\phi_0\left(a,b\right)\right),\Phi\left(\phi_0\left(c,d\right)\right)\right\rangle = \left\langle\phi\left(a,b\right),\phi\left(c,d\right)\right\rangle =
        \begin{bmatrix}
            a&b
        \end{bmatrix} T \begin{bmatrix}
            c \\ d
        \end{bmatrix}
        = \left\langle\phi_0\left(a,b\right),\phi_0\left(c,d\right)\right\rangle .
    \end{gathered}
    \end{equation}
    {Since $\Phi:S_0\rightarrow S$ is an isometry, it maps an orthonormal basis $v_1,\dots,v_k$ of $S_0$ to an orthonormal basis $u_i=\Phi\left(v_i\right)$ of $S$. 
    Extend both bases to orthonormal bases of $Z$ and define $Q_Zv_i=u_i$ for $i=1,\dots,l-1$.
    Then $Q_Z|_{S_0}=\Phi$, and extending $Q_Z$ by the identity on $\spn\left\{1\right\}$ gives the required $Q\in\O\left(Z\right)$.}
\end{proof}

\subsection{Solutions in the original real-coordinates and the complex coordinates}

Recall that the transformed coordinates are defined as $x_i = u_i-\Bar{u}$ and $y_i = v_i-\Bar{v}$ for $\Bar{u}=c_1/l$ and $\Bar{v}=c_2/l$.
So, a $\left(\hat{x},\hat{y}\right)\in V'$ correspond to $\left(\hat{u},\hat{v}\right) \in \mathbf{V}\left(I\right)$ where $\hat{u}_i = \hat{x}_i+\Bar{u}$ and $\hat{v}_i = \hat{y}_i+\Bar{v}$.
By Proposition~\ref{prop:symmetry_group_transf_cordinate_variety}, for any $\left(u,v\right)\in\mathbf{V}\left(I\right)$ there exists $Q\in\O\left(Z\right)$ such that 
\begin{equation}
    u = Q\left(\hat{u}-\Bar{u}1\right) + \Bar{u}1 = Q\hat{u}
    \quad\text{and}\quad
    v = Q\left(\hat{v}-\Bar{v}1\right) + \Bar{v}1 = Q\hat{v}
\end{equation}
since $\Bar{u}=c_1/l$ and $\Bar{v}=c_2/l$ and $Q1=1$.
Thus affine variety $\mathbf{V}\left(I\right)$ can be described by a single solution $\left(\hat{u},\hat{v}\right) \in \mathbf{V}\left(I\right)$ as
\begin{equation}
    \mathbf{V}\left(I\right) = \left\{ \left(u,v\right)\in\R^{2l} : \exists Q\in\O\left(Z\right),\; u = Q\hat{u} \;\text{and}\; v = Q\hat{v} \right\} .
\end{equation}

{For a constellation $\hat{\alpha}=\left(\hat{\alpha}_1,\dots,\hat{\alpha}_l\right)$ where $\hat{\alpha}_i=\hat{u}_i+\iota \hat{v}_i$, the vector $\alpha = Q\hat{\alpha} = Q\hat{u}+\iota Q\hat{v}$ satisfies the same moment constraints and gives another $l$-point constellation whenever its entries are distinct.}

The action of $\O\left(Z\right)$ gives a search over constellations with fixed first and second moments. 
For a seed constellation $\hat\alpha$, choose $Q_0,\dots,Q_{K-1}\in\O\left(Z\right)$ and set $\alpha^{\left(k\right)}=Q_k\hat\alpha$.
A natural geometric objective is
\begin{equation}
\begin{gathered}
    \max_{Q_0,\dots,Q_{K-1},\,\delta\geq0}\ \delta\\
    \text{subject to}\\
    Q_k\in\O\left(Z\right) \quad\forall k ,\\
    \abs{\left(Q_k\hat\alpha\right)_j-\left(Q_{k'}\hat\alpha\right)_{j'}}^2
      \geq\delta \quad\forall \left(k,j\right)\neq\left(k',j'\right) .
\end{gathered}
\end{equation}
If the optimal value of $\delta$ is positive, it equals the resolution $d_E$ of the union of the resulting logical constellations.
The positivity qualification excludes repeated indexed amplitudes.

For a permutation matrix $P$, there is a permutation $\pi$ of $\left[l\right]$ such that $\left(P\hat\alpha\right)_j=\hat\alpha_{\pi\left(j\right)}$, and hence
\begin{equation}
    \sum_{j=1}^l\ket{\left(P\hat\alpha\right)_j}=\sum_{j=1}^l\ket{\hat\alpha_{\pi\left(j\right)}}=\sum_{j=1}^l\ket{\hat\alpha_j} .
\end{equation}
Thus a coordinate permutation gives exactly the same uniform superposition and normalization, so the resulting normalized states are identical and cannot represent distinct logical states.

A feasible $\delta>0$ gives a distance of at least $\sqrt\delta$ between every pair of amplitudes with different indices $\left(k,j\right)\neq\left(k',j'\right)$, so each constellation has distinct points and different logical constellations have disjoint supports; finite-amplitude performance still requires the normalized KL matrix elements or the noise benchmarking of Section~\ref{sec:noise-benchmarking}.
{The constructions in Sections~\ref{sec:ellipsoid_constellations} and~\ref{sec:nonellipsoid_constellations} provide explicit candidates for this search.}
{These constructions are not claimed to maximize $d_E$ over the full $\O\left(Z\right)$ search; optimizing a rotation within one chosen family does not establish optimality over all constellations with the same moments.}

\subsection{\texorpdfstring{$\O\left(Z\right)$}{O(Z)}-invariant polynomials}

Now let us consider polynomials $h_1,\dots,h_5$ in $\R\left[x,y\right]:=\R\left[x_1,y_1,\dots,x_l,y_l\right]$, given by
\begin{equation}\label{eqn:pq_leq2_polynomials_transformed_coord}
\begin{gathered}
    h_1 = \sum_{i=1}^l x_i ,\quad
    h_2 = \sum_{i=1}^l y_i ,\\
    h_3 = - \tau_x + \sum_{i=1}^l x_i^2 ,\quad
    h_4 = - \tau_y + \sum_{i=1}^l y_i^2 ,\quad
    h_5 = - \tau_{xy} + \sum_{i=1}^l x_iy_i ,
\end{gathered}
\end{equation}
for constants $\tau_x,\tau_y,\tau_{xy}$ given in eqn.\ \eqref{eqn:pq_leq2_transformed_coordinates_constraints}.
Now we show that these polynomials are the generators of $I$ given in eqn.\ \eqref{eqn:qp_leq2_five_ideal_grobner} ``translated'' to the new coordinates, so that their variety is equal to the translated variety $V'$ defined in eqn.\ \eqref{eqn:pq_leq2_transformed_coordinates_constraints}.

\begin{lemma}
    For polynomials $h_1,\dots,h_5$ given in eqn.\ \eqref{eqn:pq_leq2_polynomials_transformed_coord}, it holds that $\mathbf{V}\left(h_1,\dots,h_5\right) = V'$.
\end{lemma}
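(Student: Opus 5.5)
We prove the set equality $\mathbf{V}\left(h_1,\dots,h_5\right)=V'$ by double inclusion, unpacking both sides directly from eqn.\ \eqref{eqn:pq_leq2_transformed_coordinates_constraints}. By definition, a point $\left(x,y\right)\in\R^{2l}$ lies in $\mathbf{V}\left(h_1,\dots,h_5\right)$ if and only if
\begin{equation*}
    \left\langle x,1\right\rangle=0,\quad \left\langle y,1\right\rangle=0,\quad \left\langle x,x\right\rangle=\tau_x,\quad \left\langle y,y\right\rangle=\tau_y,\quad \left\langle x,y\right\rangle=\tau_{xy} ,
\end{equation*}
since $h_1=\left\langle x,1\right\rangle$, $h_2=\left\langle y,1\right\rangle$, $h_3=\left\langle x,x\right\rangle-\tau_x$, $h_4=\left\langle y,y\right\rangle-\tau_y$ and $h_5=\left\langle x,y\right\rangle-\tau_{xy}$ as polynomial functions on $\R^{2l}$.

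The first two conditions say exactly that $x\in Z$ and $y\in Z$, i.e.\ $\left(x,y\right)\in Z\times Z$. The remaining three are precisely the equality constraints defining $V'$. Hence the zero set of $h_1,\dots,h_5$ and $V'$ are described by the same list of conditions, which gives both inclusions at once.

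To justify the description of these polynomials as the generators of $I$ in eqn.\ \eqref{eqn:qp_leq2_five_ideal_grobner} ``translated'' to the new coordinates, I will also relate the varieties through the affine bijection $\left(u,v\right)\mapsto\left(u-\Bar{u}1,\,v-\Bar{v}1\right)$. Under this substitution, $g_{0,1}$ and $h_{0,1}$ become $h_1$ and $h_2$, because $\Bar{u}l=c_1$ and $\Bar{v}l=c_2$. On the locus where $h_1=h_2=0$, the expansions already computed before eqn.\ \eqref{eqn:pq_leq2_transformed_coordinates_constraints} show that $h_u$, $h_v$ and $h_{0,2}/2$ become $h_3$, $h_4$ and $h_5$. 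Here one uses $\tau_x=c_3^+-c_1^2/l$ and its analogues, together with $c_{0,2}^\Im=2c_4$.

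The only point needing care is this last step: the quadratic generators translate to $h_3,h_4,h_5$ only modulo the linear ones. Concretely, $h_u$ becomes $h_3+2\Bar{u}h_1$ in general. This does not affect the zero sets, since both sides include the linear conditions $h_1=h_2=0$. The same observation shows that the two ideals coincide after the change of coordinates, not merely their varieties.
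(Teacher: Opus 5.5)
Your argument is correct and is essentially the paper's proof: both sides unpack to the same five conditions $\langle x,1\rangle=\langle y,1\rangle=0$, $\langle x,x\rangle=\tau_x$, $\langle y,y\rangle=\tau_y$, $\langle x,y\rangle=\tau_{xy}$, which gives the double inclusion directly. Your extra paragraph on translating the generators of $I$ (e.g.\ $h_u\mapsto h_3+2\Bar{u}h_1$ and $h_{0,2}/2\mapsto h_5+\Bar{v}h_1+\Bar{u}h_2$) is also correct and even gives equality of the translated ideals, but the lemma as stated does not need it.
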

\begin{proof}
    Since any $\left(x,y\right)\in V'$ solves the polynomials $h_1,\dots,h_5$, we have $V'\subseteq\mathbf{V}\left(h_1,\dots,h_5\right)$.
    On the other hand, any $\left(x,y\right)\in\R^{2l}$ in the variety $\mathbf{V}\left(h_1,\dots,h_5\right)$ must satisfy $h_1\left(x,y\right)=\dots=h_5\left(x,y\right)=0$ which is equivalent to
    \begin{equation}
    \begin{gathered}
        \left\langle x,1\right\rangle=\left\langle y,1\right\rangle=0 ,\\
        \left\langle x,x\right\rangle=\tau_x ,\quad
        \left\langle y,y\right\rangle=\tau_y ,\quad
        \left\langle x,y\right\rangle=\tau_{xy} .
    \end{gathered}
    \end{equation}
    Therefore any $\left(x,y\right)\in\mathbf{V}\left(h_1,\dots,h_5\right)$ must be in $V'$ thus $\mathbf{V}\left(h_1,\dots,h_5\right) \subseteq V'$.
    \newline Hence, $\mathbf{V}\left(h_1,\dots,h_5\right) = V'$.
\end{proof}

\section{Ellipsoid BARC codes from cyclic subgroups of \texorpdfstring{$O\left(Z\right)$}{O(Z)}}\label{sec:ellipsoid_constellations}

We construct an ellipsoidal family of single-mode BARC codes by selecting constellations from the degree-two polynomial solution set characterized in Section~\ref{sec:degree_two}.

Consider a two-dimensional subspace of $Z\subseteq\R^l$ spanned by $c,s\in \R^l$, given by 
\begin{equation}\label{eqn:equal_angle_ellipsoid_solutions}
    c = \sqrt{\frac{2}{l}} \begin{bmatrix}
        \cos\left(0\right) \\
        \cos\left(\frac{2\pi}{l}\right) \\
        \vdots\\
        \cos\left(\frac{2\pi\left(l-1\right)}{l}\right)
    \end{bmatrix}
    \quad\text{and}\quad
    s = \sqrt{\frac{2}{l}} \begin{bmatrix}
        \sin\left(0\right) \\
        \sin\left(\frac{2\pi}{l}\right) \\
        \vdots\\
        \sin\left(\frac{2\pi\left(l-1\right)}{l}\right)
    \end{bmatrix} .
\end{equation}
The following lemma shows that $c,s$ are in $Z$ and orthonormal.

\begin{lemma}\label{lem:ellipsoid_pq_leq2_solution_subspace_basis}
    For $l\geq3$, vectors $c,s\in\R^l$ given in eqn.\ \eqref{eqn:equal_angle_ellipsoid_solutions} satisfy
    \begin{equation}
        \left\langle c,1\right\rangle = \left\langle s,1\right\rangle = 0 
        ,\quad
        \left\langle c,c\right\rangle = \left\langle s,s\right\rangle = 1 
        ,\quad\text{and}\quad
        \left\langle c,s\right\rangle = 0 .
    \end{equation}
    Namely, $c,s\in Z$ and they are orthonormal.
\end{lemma}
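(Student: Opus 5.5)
The plan is to reduce all five identities to sums of $l$-th roots of unity. Set $\omega=e^{2\pi\iota/l}$, so that for each $j\in\{0,\dots,l-1\}$,
\begin{equation*}
\cos\left(\tfrac{2\pi j}{l}\right)=\Re\left(\omega^j\right),\qquad \sin\left(\tfrac{2\pi j}{l}\right)=\Im\left(\omega^j\right).
\end{equation*}
The single tool is the geometric-sum identity $\sum_{j=0}^{l-1}\omega^{mj}=0$ for every integer $m$ with $m\not\equiv0 \pmod l$. This follows from $\omega^m\neq1$ and $\omega^{ml}=1$, since then $\sum_j\omega^{mj}=(\omega^{ml}-1)/(\omega^m-1)=0$.

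\textbf{Orthogonality to $1$.} Taking real and imaginary parts of $\sum_j\omega^j=0$ gives $\langle c,1\rangle=\sqrt{2/l}\,\Re\sum_j\omega^j=0$ and $\langle s,1\rangle=\sqrt{2/l}\,\Im\sum_j\omega^j=0$. This uses only $m=1\not\equiv0$, which holds for every $l\geq2$.

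\textbf{Norms and mutual orthogonality.} I will use the double-angle identities $\cos^2\theta=\tfrac12(1+\cos2\theta)$, $\sin^2\theta=\tfrac12(1-\cos2\theta)$ and $\cos\theta\sin\theta=\tfrac12\sin2\theta$ with $\theta=2\pi j/l$. They give
\begin{equation*}
\langle c,c\rangle=\frac{2}{l}\cdot\frac12\Big(l+\Re\sum_j\omega^{2j}\Big),\qquad
\langle s,s\rangle=\frac{2}{l}\cdot\frac12\Big(l-\Re\sum_j\omega^{2j}\Big),\qquad
\langle c,s\rangle=\frac{2}{l}\cdot\frac12\,\Im\sum_j\omega^{2j}.
\end{equation*}
Applying the identity with $m=2$ makes $\sum_j\omega^{2j}=0$, which yields $\langle c,c\rangle=\langle s,s\rangle=1$ and $\langle c,s\rangle=0$.

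The only delicate point is the hypothesis $l\geq3$, which is exactly what guarantees $2\not\equiv0\pmod l$. For $l=2$ the identity fails: $s=0$ and $c=(1,-1)$ does not have the claimed norm. It is worth flagging this so the reader sees why the lemma's hypothesis is needed. The conclusion $c,s\in Z$ follows directly from the first two identities, and orthonormality follows from the last three.
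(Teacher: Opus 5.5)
Your proposal is correct and follows essentially the same route as the paper: both reduce $\langle c,1\rangle$ and $\langle s,1\rangle$ to $\sum_j\omega^j=0$, and use the double-angle identities to reduce the three Gram entries to $\sum_j\omega^{2j}=0$, which requires $\omega^2\neq1$, i.e.\ $l\geq3$. Your $l=2$ example ($s=0$, $\langle c,c\rangle=2$) usefully shows why that hypothesis is needed; the paper does not spell this out.
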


\begin{proof}
    First we show that $c,s\in Z$, i.e. $\left\langle c,1\right\rangle = \left\langle s,1\right\rangle = 0$.
    For the $l$-th root of unity $\omega = e^{\frac{2\pi\iota}{l}} = \cos\frac{2\pi}{l} + \iota\sin\frac{2\pi}{l}$, it holds that
    \begin{equation}\label{eqn:ellipsoid_first_root_identity}
        \sum_{k=1}^{l}\omega^{k-1} = \sum_{k=0}^{l-1}\omega^k = \frac{1-\omega^l}{1-\omega} = 0
    \end{equation}
    since $\omega\neq1$ for $l\geq2$.
    Let us denote the $k$-th entry of $c$ as $c_k$, and similarly $s_k$ for the $k$-th entry of $s$.
    Because $c_k + \iota s_k = \sqrt{2/l}\,\omega^{k-1}$, we have
    \begin{equation}
    \begin{aligned}
        \left\langle c,1\right\rangle + \iota\left\langle s,1\right\rangle = \sum_{k=1}^{l} \left(c_k+\iota s_k\right) = \sqrt{\frac{2}{l}}\sum_{k=1}^{l}\omega^{k-1} = 0
    \end{aligned}
    \end{equation}
    so it must hold that $\left\langle c,1\right\rangle = \left\langle s,1\right\rangle = 0$ and therefore $c,s\in Z$.
    
    Now we show that $c,s$ are orthonormal.
    By using trigonometric identities $\left(\cos\theta\right)^2 = \left(1+\cos\left(2\theta\right)\right)/2$ and $\left(\sin\theta\right)^2 = \left(1-\cos\left(2\theta\right)\right)/2$ and $\left(\cos\theta\right)\left(\sin\theta\right) = \left(\sin\left(2\theta\right)\right)/2$, we have
    \begin{equation}
    \begin{gathered}
        \left\langle c,c\right\rangle = \frac{2}{l} \sum_{k=1}^l \cos^2\left(\frac{2\pi\left(k-1\right)}{l}\right)
            = \frac{2}{l} \sum_{k=1}^l \frac{1+\cos\left(4\pi\left(k-1\right)/l\right)}{2}
            = 1+ \frac{1}{l}\sum_{k=1}^l \cos\frac{4\pi\left(k-1\right)}{l} ,\\
        \left\langle s,s\right\rangle = \frac{2}{l} \sum_{k=1}^l \sin^2\left(\frac{2\pi\left(k-1\right)}{l}\right)
            = \frac{2}{l} \sum_{k=1}^l \frac{1-\cos\left(4\pi\left(k-1\right)/l\right)}{2}
            = 1- \frac{1}{l}\sum_{k=1}^l \cos\frac{4\pi\left(k-1\right)}{l} ,\\
        \left\langle c,s\right\rangle = \frac{2}{l} \sum_{k=1}^l \cos\left(\frac{2\pi\left(k-1\right)}{l}\right)\sin\left(\frac{2\pi\left(k-1\right)}{l}\right)
            = \frac{2}{l} \sum_{k=1}^l \frac{\sin\left(4\pi\left(k-1\right)/l\right)}{2}
            = \frac{1}{l}\sum_{k=1}^l \sin\frac{4\pi\left(k-1\right)}{l} .
    \end{gathered}
    \end{equation}
    Then, consider again the $l$-th root of unity $\omega = e^{\frac{2\pi\iota}{l}} = \cos\frac{2\pi}{l} + \iota\sin\frac{2\pi}{l}$.
    Thus we have
    \begin{equation}\label{eqn:ellipsoid_second_root_identity}
        \sum_{k=1}^{l}\omega^{2\left(k-1\right)} = \sum_{k=0}^{l-1}\omega^{2k} = \frac{1-\omega^{2l}}{1-\omega^2} = 0
    \end{equation}
    since $\omega^2\neq1$ for $l\geq3$, which implies that $\sum_{k=1}^l \sin\frac{4\pi\left(k-1\right)}{l} = \sum_{k=1}^l \cos\frac{4\pi\left(k-1\right)}{l} =0$.
    Eqns.\ \eqref{eqn:ellipsoid_first_root_identity} and~\eqref{eqn:ellipsoid_second_root_identity} are the first and second root-of-unity identities used here.
    Hence
    \begin{equation}
        \left\langle c,c\right\rangle = \left\langle s,s\right\rangle = 1 
        \quad\text{and}\quad
        \left\langle c,s\right\rangle = 0 ,
    \end{equation}
    as claimed.
\end{proof}

Now assume that $T$ is positive definite, so that
$\tau_x>0$ and $\tau_x\tau_y-\tau_{xy}^2>0$.
Define
\begin{equation}\label{eqn:cholesky_G_matrix}
    \frac{l}{2}LL^\top=T \quad\text{for}\quad
    L=\sqrt{\frac{2}{l}}
    \begin{bmatrix}
        \sqrt{\tau_x}&0\\
        \tau_{xy}/\sqrt{\tau_x}&\sqrt{\tau_y-\tau_{xy}^2/\tau_x}
    \end{bmatrix} .
\end{equation}
Since $c,s$ are orthonormal, the vectors
\begin{equation}
    x^{\left(0\right)}=\sqrt{\tau_x}\,c \quad\text{and}\quad
    y^{\left(0\right)}=\frac{\tau_{xy}}{\sqrt{\tau_x}}c
             +\sqrt{\tau_y-\frac{\tau_{xy}^2}{\tau_x}}\,s
\end{equation}
satisfy $M_{x^{\left(0\right)},y^{\left(0\right)}}^\top M_{x^{\left(0\right)},y^{\left(0\right)}}=T$ and
therefore $\left(x^{\left(0\right)},y^{\left(0\right)}\right)\in V'$.
For $\mu=\left(c_1+\iota c_2\right)/l$, the corresponding seed constellation has
$\alpha_j^{\left(0\right)}=\mu+x_j^{\left(0\right)}+\iota y_j^{\left(0\right)}$.

Now let us consider a linear rotation map over $\spn\left\{c,s\right\}$ as
\begin{equation}
    R_\theta = \left(\cos\theta\right) cc^\top + \left(\sin\theta\right) cs^\top
        - \left(\sin\theta\right) sc^\top + \left(\cos\theta\right) ss^\top + \sum_{i=1}^{l-2} v_iv_i^\top
\end{equation}
where $v_1,\dots,v_{l-2}$ is an orthonormal basis on the orthogonal subspace of $\spn\left\{c,s\right\}$, so that $R_\theta c = \left(\cos\theta\right) c - \left(\sin\theta\right)s$ and $R_\theta s = \left(\sin\theta\right) c + \left(\cos\theta\right) s$, whereas $R_\theta z = z$ for all $z\in\left(\spn\left\{c,s\right\}\right)^\perp$.
Moreover, since $R_\theta R_\theta^\top = R_\theta^\top R_\theta = I$ it holds that $\left\{R_\theta\right\}_\theta$ is a subgroup of $O\left(Z\right)$.
Since $c,s\in Z$, the map $R_\theta$ also fixes $1$, so the polynomials in eqn.\ \eqref{eqn:pq_leq2_polynomials_transformed_coord} satisfy
\begin{equation}
    h_a\left(R_\theta x,R_\theta y\right)=h_a\left(x,y\right),\qquad a\in\left\{1,\dots,5\right\} .
\end{equation}
Indeed, fixing $1$ preserves the two first-moment sums, while orthogonality preserves $\left\langle x,x\right\rangle$, $\left\langle y,y\right\rangle$, and $\left\langle x,y\right\rangle$.
For the finite cyclic subgroup chosen below, these identities are precisely the $G$-invariance condition of Definition~\ref{def:polynomial_invariance} and eqn.\ \eqref{eqn:polynomial_invariance}; applying the orbit inclusion in eqn.\ \eqref{eqn:invariant_solution_orbit} to the defining polynomials of eqn.\ \eqref{eqn:pq_leq2_polynomials_transformed_coord} therefore generates constellations with the same prescribed first and second moments from the seed solution.
The cyclic subgroup chosen below selects finitely many such constellations; their common moments give the degree-two KL conditions in the large-separation approximation of Section~\ref{sec:coherent_state_codes}.
Therefore by Proposition~\ref{prop:symmetry_group_transf_cordinate_variety} $R_\theta$ maps a transformed-coordinate solution to another transformed-coordinate solution, i.e. $\left(R_\theta x, R_\theta y\right)\in V'$ if $\left(x,y\right)\in V'$.
Thus for equally spaced $l$ points $\left\{\left(\cos\left(2\pi\left(j-1\right)/l\right),\sin\left(2\pi\left(j-1\right)/l\right)\right)\right\}_{j=1}^l$ on the unit circle of $\R^2$ we have
\begin{equation}
\begin{gathered}
    R_\theta \begin{bmatrix}
            \cos\left(0\right) \\
            \cos\left(\frac{2\pi}{l}\right) \\
            \vdots\\
            \cos\left(\frac{2\pi\left(l-1\right)}{l}\right)
        \end{bmatrix}
        = \begin{bmatrix}
            \cos\left(\theta\right)\cos\left(0\right) - \sin\left(\theta\right)\sin\left(0\right) \\
            \cos\left(\theta\right)\cos\left(\frac{2\pi}{l}\right) - \sin\left(\theta\right)\sin\left(\frac{2\pi}{l}\right) \\
            \vdots\\
            \cos\left(\theta\right)\cos\left(\frac{2\pi\left(l-1\right)}{l}\right) - \sin\left(\theta\right)\sin\left(\frac{2\pi\left(l-1\right)}{l}\right)
        \end{bmatrix}
        = \begin{bmatrix}
            \cos\left(\theta\right) \\
            \cos\left(\theta+\frac{2\pi}{l}\right)\\
            \vdots\\
            \cos\left(\theta+\frac{2\pi\left(l-1\right)}{l}\right)
        \end{bmatrix}\\
    \quad\text{and}\quad\\
    R_\theta \begin{bmatrix}
        \sin\left(0\right) \\
        \sin\left(\frac{2\pi}{l}\right) \\
        \vdots\\
        \sin\left(\frac{2\pi\left(l-1\right)}{l}\right)
    \end{bmatrix}
        = \begin{bmatrix}
            \sin\left(\theta\right)\cos\left(0\right) + \cos\left(\theta\right)\sin\left(0\right) \\
            \sin\left(\theta\right)\cos\left(\frac{2\pi}{l}\right) + \cos\left(\theta\right)\sin\left(\frac{2\pi}{l}\right) \\
            \vdots\\
            \sin\left(\theta\right)\cos\left(\frac{2\pi\left(l-1\right)}{l}\right) + \cos\left(\theta\right)\sin\left(\frac{2\pi\left(l-1\right)}{l}\right)
        \end{bmatrix}
        = \begin{bmatrix}
            \sin\left(\theta\right) \\
            \sin\left(\theta+\frac{2\pi}{l}\right)\\
            \vdots\\
            \sin\left(\theta+\frac{2\pi\left(l-1\right)}{l}\right)
        \end{bmatrix} .
\end{gathered}
\end{equation}

\begin{figure}
    \centering
    \includegraphics[width=0.9\linewidth]{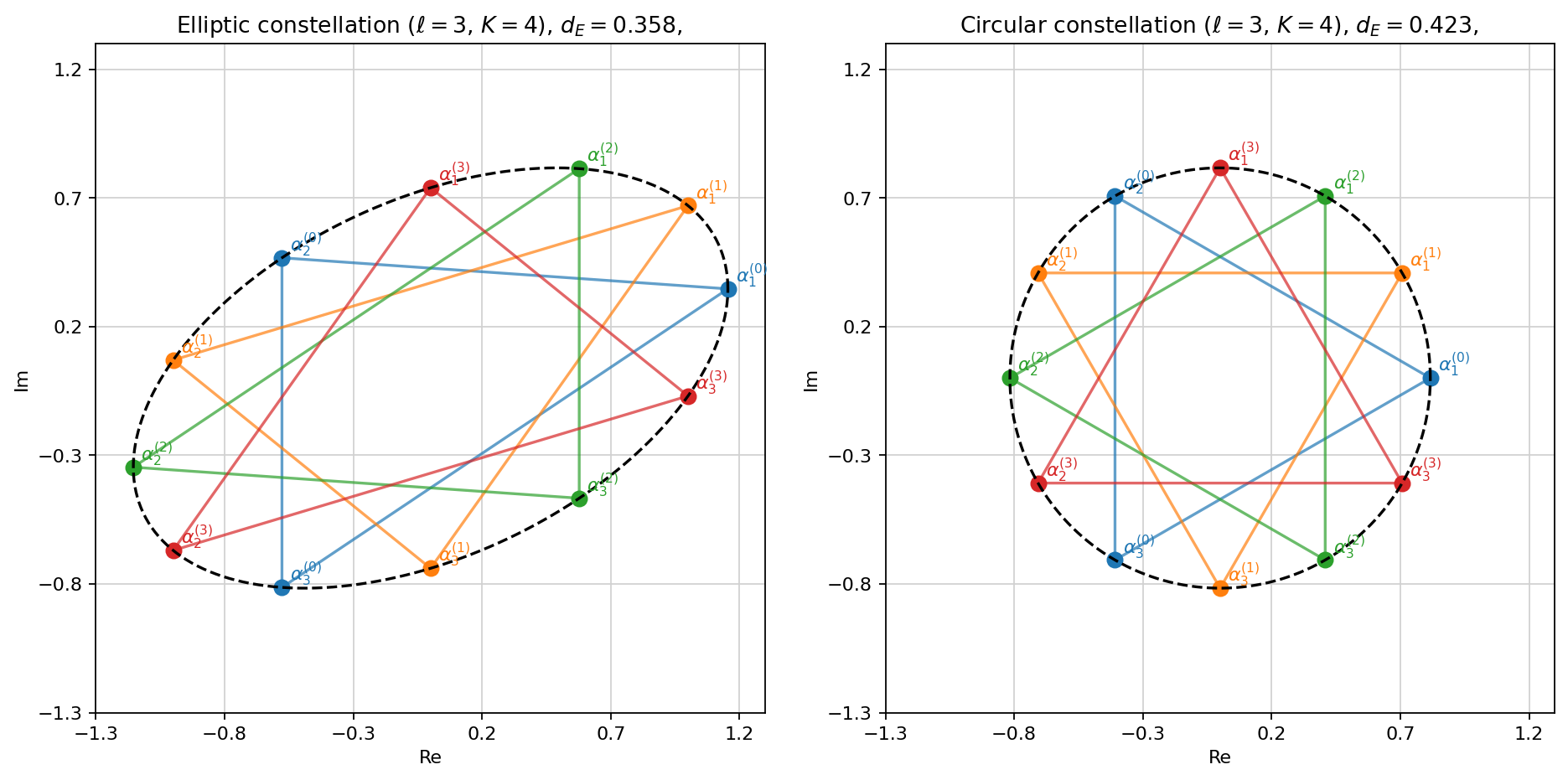}%
    \caption{
    {Ellipsoid constellations $\mathcal{C}^{\left(0\right)},\mathcal{C}^{\left(1\right)},\mathcal{C}^{\left(2\right)},\mathcal{C}^{\left(3\right)}$ spanning a $4$-dimensional {BARC code} encoded in a single mode, with minimum separation $\Delta=\sqrt{d_E}$, where each constellation state is a uniform superposition of $3$ coherent states.}
    The plots show the two real phase-space coordinates of that mode.
    {The left panel has $\left(\tau_x,\tau_y,\tau_{xy}\right)=\left(2,1,0.6\right)$, minimum separation $\Delta=\sqrt{d_E}\approx0.358$, and $\brakets{\mathcal C^{\left(k\right)}}{a^\dag a}{\mathcal C^{\left(k\right)}}\approx0.451$.}
    {The right panel has $\left(\tau_x,\tau_y,\tau_{xy}\right)=\left(1,1,0\right)$, $\Delta\approx0.423$, and energy $\brakets{\mathcal{C}^{\left(k\right)}}{a^\dag a}{\mathcal{C}^{\left(k\right)}}\approx0.142$ per constellation state.}
    The distance labels in the panels report $\Delta$, whereas $d_E$ denotes squared Euclidean resolution; $\left(\tau_x,\tau_y,\tau_{xy}\right)$ are as defined in \ref{eq:taus}.
    }
    \label{fig:ellispoid_constellations}
\end{figure}

For $N=lK$, consider the cyclic subgroup $C_N$ of $O\left(Z\right)$, given by
\begin{equation}
    C_N = \left\langle R_{2\pi/N}\right\rangle = \left\{R_0,R_{2\pi/N},\dots,R_{2\pi\left(N-1\right)/N}\right\} ,
\end{equation}
where $R_{2\pi\left(j-1\right)/N} = \left(R_{2\pi/N}\right)^{j-1}$ for $j=1,\dots,N$.
Then, let $x^{\left(k\right)} = R_{2\pi/N}^k x^{\left(0\right)}$ and $y^{\left(k\right)} = R_{2\pi/N}^k y^{\left(0\right)}$ for $k\in\left\{0,\dots,K-1\right\}$, which define constellations $\left\{\mathcal{C}^{\left(k\right)}\right\}_{k=0}^{K-1}$ where
\begin{equation}\label{eqn:pq_leq2_ellipsoid_constellations}
\begin{gathered}
    \mathcal{C}^{\left(k\right)} = \left\{\alpha_1^{\left(k\right)},\dots,\alpha_l^{\left(k\right)}\right\} \\
    \text{for}\\
    \alpha_j^{\left(k\right)} = \mu +
        \sqrt{\frac{2}{l}} \left(\vphantom{\left( \sqrt{\tau_x} + \iota \tau_{xy}\left(\tau_x\right)^{-\frac{1}{2}} \right) \cos\left(\frac{2\pi\left(j-1\right)}{l}+\frac{2\pi k}{lK}\right) + \iota \sqrt{\tau_y-\frac{\tau_{xy}^2}{\tau_x}} \sin\left(\frac{2\pi\left(j-1\right)}{l}+\frac{2\pi k}{lK}\right)} \left( \sqrt{\tau_x} + \iota \tau_{xy}\left(\tau_x\right)^{-\frac{1}{2}} \right) \cos\left(\frac{2\pi\left(j-1\right)}{l}+\frac{2\pi k}{lK}\right) \right.\\\left.\vphantom{\left( \sqrt{\tau_x} + \iota \tau_{xy}\left(\tau_x\right)^{-\frac{1}{2}} \right) \cos\left(\frac{2\pi\left(j-1\right)}{l}+\frac{2\pi k}{lK}\right) + \iota \sqrt{\tau_y-\frac{\tau_{xy}^2}{\tau_x}} \sin\left(\frac{2\pi\left(j-1\right)}{l}+\frac{2\pi k}{lK}\right)}
        + \iota \sqrt{\tau_y-\frac{\tau_{xy}^2}{\tau_x}} \sin\left(\frac{2\pi\left(j-1\right)}{l}+\frac{2\pi k}{lK}\right) \right) ,
\end{gathered}
\end{equation}
which two instances are shown in Fig.\ \ref{fig:ellispoid_constellations}.

Below we give lower and upper bounds on the squared Euclidean resolution
$d_E=\min_{\left(k,j\right)\neq\left(k',j'\right)}\abs{\alpha_j^{\left(k\right)}-\alpha_{j'}^{\left(k'\right)}}^2$
of a code defined by such constellations.

\begin{proposition}\label{prop:ellipsoid_resolution}
    For the ellipsoid constellations in
    eqn.\ \eqref{eqn:pq_leq2_ellipsoid_constellations}, the squared
    Euclidean resolution is bounded as
    \begin{equation}
        4\sigma_{\min}\left(L\right)^2\sin^2\frac{\pi}{lK}
        \leq d_E\leq
        4\sigma_{\max}\left(L\right)^2\sin^2\frac{\pi}{lK} ,
    \end{equation}
    where $\sigma_{\min}\left(L\right)$ and $\sigma_{\max}\left(L\right)$ are the
    smallest and largest singular values of $L$ (given in
    eqn.\ \eqref{eqn:cholesky_G_matrix}), respectively.
\end{proposition}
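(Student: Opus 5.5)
We will write every amplitude as the image of a point on the unit circle under a fixed real-linear map. Identify $\C$ with $\R^2$ via $\alpha=u+\iota v\mapsto(u,v)^\top$. Set $\phi_{j,k}=\frac{2\pi(j-1)}{l}+\frac{2\pi k}{lK}$. Eqn.~\eqref{eqn:pq_leq2_ellipsoid_constellations} then reads
\begin{equation}
    \alpha_j^{(k)}-\mu \;\longleftrightarrow\; L\begin{bmatrix}\cos\phi_{j,k}\\ \sin\phi_{j,k}\end{bmatrix},
\end{equation}
with $L$ the lower-triangular matrix of eqn.~\eqref{eqn:cholesky_G_matrix}. The first column of $L$ is $\sqrt{2/l}\,(\sqrt{\tau_x},\tau_{xy}/\sqrt{\tau_x})^\top$, which multiplies $\cos\phi$. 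The second column is $\sqrt{2/l}\,(0,\sqrt{\tau_y-\tau_{xy}^2/\tau_x})^\top$, which multiplies $\sin\phi$. One reads this identification off directly by separating real and imaginary parts.

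Next we collect the angles. The map $(j,k)\mapsto\phi_{j,k}\bmod 2\pi$, with $j\in\{1,\dots,l\}$ and $k\in\{0,\dots,K-1\}$, is a bijection onto $\{2\pi m/(lK):m=0,\dots,lK-1\}$, since $\phi_{j,k}=2\pi\left(K(j-1)+k\right)/(lK)$. The union of all constellations is therefore $\mu+L\,P_N$, where $P_N$ is the set of vertices of the regular $N$-gon on the unit circle and $N=lK$. Because $T$ is positive definite, $L$ is invertible, so distinct indices give distinct amplitudes.

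For a pair of vertices $w_m,w_{m'}$ of $P_N$ we use $\nrm{w_m-w_{m'}}^2=4\sin^2\left(\pi(m-m')/N\right)$. Singular-value bounds then give
\begin{equation}
    \sigma_{\min}(L)^2\nrm{w_m-w_{m'}}^2\leq\nrm{L(w_m-w_{m'})}^2\leq\sigma_{\max}(L)^2\nrm{w_m-w_{m'}}^2 .
\end{equation}
For the lower bound, we minimize over $m\neq m'$. The minimum of $\sin^2(\pi d/N)$ over $d\in\{1,\dots,N-1\}$ equals $\sin^2(\pi/N)$, attained at $d=1$ or $d=N-1$, and this yields $d_E\geq 4\sigma_{\min}^2\sin^2(\pi/N)$.

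For the upper bound, we take any adjacent pair, for instance $m'=m+1$. Then $d_E\leq\nrm{L(w_m-w_{m+1})}^2\leq4\sigma_{\max}^2\sin^2(\pi/N)$.

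The main point needing care is the first step: the identification of the complex formula with $L$ acting on $(\cos\phi,\sin\phi)$, including the orientation, meaning that $L$ itself and not $L^\top$ appears. A transpose would leave the singular values unchanged, so the bound is robust either way. The reindexing into a single regular $N$-gon is routine.
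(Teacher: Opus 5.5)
Your proposal is correct and follows essentially the same route as the paper's proof. Both identify the union of constellations with the affine image $\mu+L P_N$ of the regular $N$-gon, with $N=lK$, and use $\|\nu_i-\nu_j\|^2=4\sin^2(m\pi/N)$ together with the singular-value sandwich for $L$, taking an adjacent pair for the upper bound and minimizing $\sin(m\pi/N)$ over $1\leq m\leq N-1$ for the lower bound.
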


\begin{proof}
    First note that the set of points in the union of constellations 
    \begin{equation}
        \bigcup_{k=0}^{K-1}\mathcal C^{\left(k\right)}=\left\{\alpha_j^{\left(k\right)}:j\in\left\{1,\dots,l\right\},\;k\in\left\{0,\dots,K-1\right\}\right\}    
    \end{equation}
    corresponds to the image of an affine map $w\mapsto \left(\Bar{u},\Bar{v}\right) + Lw$ in $\R^2$ from a set of equally-spaced $N=Kl$ points on the unit sphere.
    Namely if $\alpha_j^{\left(k\right)} = u_j^{\left(k\right)} + \iota v_j^{\left(k\right)}$, then
    \begin{equation}
    \begin{gathered}
        \begin{bmatrix}
            u_j^{\left(k\right)}\\
            v_j^{\left(k\right)}
        \end{bmatrix}
            = \begin{bmatrix}
                \Bar{u}\\
                \Bar{v}
            \end{bmatrix} +
            L\begin{bmatrix}
                \cos\left(\frac{2\pi\left(j-1\right)}{l} + \frac{2\pi k}{N}\right) \\
                \sin\left(\frac{2\pi\left(j-1\right)}{l} + \frac{2\pi k}{N}\right)
            \end{bmatrix} ,\quad k\in\left\{0,\dots,K-1\right\}
        \\
        \text{where}\\
        \begin{bmatrix}
            u_j^{\left(0\right)}\\
            v_j^{\left(0\right)}
        \end{bmatrix}
        =
        \begin{bmatrix}
            \Bar{u}\\
            \Bar{v}
        \end{bmatrix} +
            \begin{bmatrix}
                x_j^{\left(0\right)}\\
                y_j^{\left(0\right)}
            \end{bmatrix}
        = \begin{bmatrix}
            \Bar{u}\\
            \Bar{v}
        \end{bmatrix} +
            \sqrt{\frac{l}{2}}L \begin{bmatrix}
                c_j\\
                s_j
            \end{bmatrix}
        = \begin{bmatrix}
            \Bar{u}\\
            \Bar{v}
        \end{bmatrix} +
            \sqrt{\frac{2}{l}} \begin{bmatrix}
                \sqrt{\tau_x} & 0\\
                \frac{\tau_{xy}}{\sqrt{\tau_x}} & \sqrt{\tau_y-\frac{\tau_{xy}^2}{\tau_x}}
            \end{bmatrix}
            \begin{bmatrix}
                \cos\frac{2\pi\left(j-1\right)}{l}\\
                \sin\frac{2\pi\left(j-1\right)}{l}
            \end{bmatrix} \\
        \text{for}\\
        x_j^{\left(0\right)} = \sqrt{\frac{2\tau_x}{l}} \cos\frac{2\pi\left(j-1\right)}{l}
        \quad\text{and}\\
        y_j^{\left(0\right)} = \tau_{xy} \sqrt{\frac{2}{l\tau_x}} \cos\frac{2\pi\left(j-1\right)}{l} + \sqrt{\frac{2}{l}}\sqrt{\tau_y-\frac{\tau_{xy}^2}{\tau_x}} \sin\frac{2\pi\left(j-1\right)}{l} .
    \end{gathered}
    \end{equation}
    So, writing $\Delta=\sqrt{d_E}$, the minimum separation of
    the union of constellations $\left\{\mathcal C^{\left(k\right)}\right\}_{k=0}^{K-1}$ is
    \begin{equation}
        \Delta=\min_{\left(k,j\right)\neq\left(k',j'\right)}
            \abs{\alpha_j^{\left(k\right)}-\alpha_{j'}^{\left(k'\right)}}
        =\min_{\substack{0\leq i,j\leq N-1\\i\neq j}}
            \nrm{L\left(\nu_i-\nu_j\right)}_2 ,
    \end{equation}
    where $\nu_j=\left(\cos\theta_j,\sin\theta_j\right)\in\R^2$ lies on
    the unit circle and $\theta_j=2\pi j/N$ for $j\in\left\{0,\dots,N-1\right\}$.
        
    Now let us consider $i\neq j$ with $\abs{i-j}=m\geq1$.
    Then it holds that
    \begin{equation}
    \begin{aligned}
        {\nrm{\nu_i-\nu_j}_2}^2 &= \left(\cos\theta_i-\cos\theta_j\right)^2 + \left(\sin\theta_i-\sin\theta_j\right)^2 
            \\ &= 2-2\cos\left(\theta_i-\theta_j\right) 
            = 4\sin^2\left(\frac{\theta_i-\theta_j}{2}\right) 
            = 4\sin^2\left(\frac{m\pi}{N}\right) .
    \end{aligned}
    \end{equation}
    Note that for any $z\in\R^2$ it holds that
    \begin{equation}
        \sigma_{\min}\left(L\right)\nrm{z}_2 \leq \nrm{Lz}_2 \leq \sigma_{\max}\left(L\right)\nrm{z}_2
    \end{equation}
    where $\sigma_{\min}\left(L\right) = \min_{z:\nrm{z}_2=1} \nrm{Lz}_2$ and $\sigma_{\max}\left(L\right) = \max_{z:\nrm{z}_2=1} \nrm{Lz}_2$ are the smallest and largest singular value of $L$.
    Since $\Delta$ is upper bounded by $\nrm{L\left(\nu_i-\nu_j\right)}_2$ for any $i\neq j$, we have
    \begin{equation}
    \begin{aligned}
        \Delta \leq \nrm{L\left(\nu_{j+1}-\nu_j\right)}_2 \leq \sigma_{\max}\left(L\right)\nrm{\nu_{j+1}-\nu_j}_2
            = 2 \sigma_{\max}\left(L\right) \sin\left(\frac{\pi}{N}\right) .
    \end{aligned}
    \end{equation}
    On the other hand, for any $i\neq j$ it holds that
    \begin{equation}
        \nrm{L\left(\nu_i-\nu_j\right)}_2 \geq \sigma_{\min}\left(L\right) \nrm{\nu_i-\nu_j}_2 \geq 2\sigma_{\min}\left(L\right) \sin\left(\frac{\abs{i-j}\pi}{N}\right) .
    \end{equation}
    For $1\leq m=\abs{i-j}\leq N-1$, the smallest value of
    $\sin\left(m\pi/N\right)$ is $\sin\left(\pi/N\right)$: use
    $\sin\left(m\pi/N\right)=\sin\left(\left(N-m\right)\pi/N\right)$ and monotonicity of sine
    on $\left[0,\pi/2\right]$. This value is attained by adjacent points.
    Therefore
    \begin{equation}
    \begin{aligned}
        \Delta = \min_{\substack{0\leq i,j\leq N-1\\i\neq j}} \nrm{L\left(\nu_i-\nu_j\right)}_2
            \geq \min_{\substack{0\leq i,j\leq N-1\\i\neq j}} 2\sigma_{\min}\left(L\right) \sin\left(\frac{\abs{i-j}\pi}{N}\right)
            = 2\sigma_{\min}\left(L\right) \sin\left(\frac{\pi}{N}\right) .
    \end{aligned}
    \end{equation}
    Squaring the nonnegative upper and lower bounds for $\Delta$ gives the claimed bounds for $d_E$.
\end{proof}

\section{Non-ellipsoid constellations}\label{sec:nonellipsoid_constellations}

This section first gives the moment identities for the hexagonal patch $\mathcal P_r$ in Lemma~\ref{lem:hexagonal_moments} and the transformation to identity second moment in Lemma~\ref{lem:planar_whitening}.
Using these two tools, we optimize the relative rotation of the $r=2$ patch, which is an example of the Hex1 family defined in Section~\ref{sec:hexagonal_codes}.
That subsection then specifies Hex1, Hex2, and Hex3: rotated full patches, balanced shifts followed by rewhitening, and alternating triangles of sixfold orbits, respectively.
Hex2 and Hex3 use selected lattice orbits and are not, in general, pairs of full patches $\mathcal P_r$; they meet the same first- and second-moment constraints by the identities stated below.

Consider $L=\sqrt{T}$ so that $LL^\top = T$ and real points $q_i^{\left(k\right)} = \left(u_i^{\left(k\right)},v_i^{\left(k\right)}\right) \in\R^2$ for $i\in\left[l\right]$ and $k\in\left[K\right]$.
So for $N=lK$ real points $w_i^{\left(k\right)} \in\R^2$ given by
\begin{equation}
    q_i^{\left(k\right)} = L w_i^{\left(k\right)}
\end{equation}
it holds that the following two conditions over the $w_i^{\left(k\right)}$'s
\begin{equation}\label{eqn:normalized_planar_moments}
    \sum_{i=1}^l w_i^{\left(k\right)} = 0
    \quad\text{and}\quad
    \sum_{i=1}^l w_i^{\left(k\right)} \left(w_i^{\left(k\right)}\right)^\top = I
\end{equation}
imply that that $\left\{q^{\left(k\right)}\right\}_k$ are in the centered solution $V'$ since the $q_i^{\left(k\right)}$'s satisfy 
\begin{equation}
\begin{gathered}
    \sum_{i=1}^l q_i^{\left(k\right)} = L \sum_i w_i^{\left(k\right)} = 0
    \\ \text{and} \\
    \sum_{i=1}^l q_i^{\left(k\right)} \left(q_i^{\left(k\right)}\right)^\top = L \left(\sum_{i=1}^l w_i^{\left(k\right)} \left(w_i^{\left(k\right)}\right)^\top\right) L^\top = LL^\top = T .
\end{gathered}
\end{equation}
For $b_1=\left(1,0\right)$, $b_2=\left(1/2,\sqrt3/2\right)$, and an integer $r\geq1$, define
\begin{equation}
    \mathcal P_r=\left\{mb_1+nb_2:m,n\in\Z,\;
       1\leq\max\left\{\abs{m},\abs{n},\abs{m+n}\right\}\leq r\right\} .
\end{equation}

\begin{lemma}\label{lem:hexagonal_moments}
    The set $\mathcal P_r$ has $3r\left(r+1\right)$ points, minimum
    Euclidean separation $1$, and zero first moment. Its second moment is
    \begin{equation}
        \sum_{w\in\mathcal P_r}ww^\top=c_rI_2
        \quad\text{for}\quad
        c_r=\frac{r\left(r+1\right)\left(5r^2+5r+2\right)}{8} .
    \end{equation}
\end{lemma}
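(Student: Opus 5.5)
I will organize $\mathcal P_r$ by hexagonal shells. For $1\le \rho\le r$, let $H_\rho$ be the set of lattice points $mb_1+nb_2$ with hexagonal norm $\max\{\abs{m},\abs{n},\abs{m+n}\}=\rho$. Then $\mathcal P_r=\bigsqcup_{\rho=1}^r H_\rho$.

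\emph{Counting.} The number of lattice points with hexagonal norm at most $\rho$ is the centered hexagonal number $3\rho(\rho+1)+1$. To see this, for fixed $m\in[-\rho,\rho]$ the admissible $n$ satisfy $\abs{n}\le\rho$ and $\abs{m+n}\le\rho$, which gives $2\rho+1-\abs{m}$ values; summing over $m$ gives $(2\rho+1)^2-\rho(\rho+1)=3\rho^2+3\rho+1$. Removing the origin gives $\abs{\mathcal P_r}=3r(r+1)$, and differencing gives $\abs{H_\rho}=6\rho$.

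\emph{Separation.} Every nonzero lattice vector has $\nrm{mb_1+nb_2}^2=m^2+mn+n^2\ge1$, since this is a positive definite integer form. The value $1$ is attained by $b_1-0$ inside $\mathcal P_r$, for example $b_1$ and $b_1+b_2$ when $r\ge2$, or $b_1$ and $b_2$ when $r=1$. So the minimum separation is exactly $1$.

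\emph{Zero first moment.} The map $w\mapsto -w$ preserves the hexagonal norm, so each $H_\rho$ is symmetric and $\sum_{w\in\mathcal P_r}w=0$.

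\emph{Second moment.} Let $g$ be rotation by $\pi/3$, which sends $b_1\mapsto b_2$ and $b_2\mapsto b_2-b_1$, so $(m,n)\mapsto(-n,m+n)$. This permutes $\{\abs{m},\abs{n},\abs{m+n}\}$, so $g$ preserves each $H_\rho$. Hence $S=\sum_{w\in\mathcal P_r}ww^\top$ satisfies $gSg^\top=S$. A symmetric $2\times2$ matrix commuting with a rotation by an angle other than $0$ or $\pi$ is scalar: writing $S=aI+B$ with $B$ traceless symmetric, conjugation by a rotation by $\theta$ rotates $B$ by $2\theta$, and $2\pi/3\not\equiv0$, so $B=0$. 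Thus $S=\tfrac12\left(\sum_{w}\nrm{w}^2\right)I_2$.

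It remains to compute $\Sigma_r=\sum_{w\in\mathcal P_r}(m^2+mn+n^2)$, and this is the main computational step. I plan to do it shell by shell. Each $H_\rho$ is a union of six sides related by $g$, and one side is $\{\rho b_1+j(b_2-b_1)\}$ for $j=0,\dots,\rho-1$, that is $(m,n)=(\rho-j,j)$. The norm on this side is $\rho^2-\rho j+j^2$, so
\begin{equation*}
\sum_{w\in H_\rho}\nrm{w}^2=6\sum_{j=0}^{\rho-1}\left(\rho^2-\rho j+j^2\right).
\end{equation*}
Evaluating with the standard power sums gives
\begin{equation*}
6\left(\rho^3-\tfrac{\rho^2(\rho-1)}{2}+\tfrac{(\rho-1)\rho(2\rho-1)}{6}\right)=5\rho^3+\rho,
\end{equation*}
which checks at $\rho=1$: six unit vectors give $6$. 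Summing over $\rho$,
\begin{equation*}
\Sigma_r=5\left(\tfrac{r(r+1)}{2}\right)^2+\tfrac{r(r+1)}{2}=\tfrac{r(r+1)(5r^2+5r+2)}{4}.
\end{equation*}
Halving gives $c_r=r(r+1)(5r^2+5r+2)/8$, as claimed.
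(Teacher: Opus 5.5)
Your proposal is correct and takes the same route as the paper's proof: shells of $6\rho$ points with squared-norm sum $5\rho^3+\rho$, inversion symmetry for the first moment, sixfold rotation for isotropy of the second moment, and the lattice minimum distance, with the counting and power-sum computations that the paper only states filled in. One small slip: ``$b_1-0$'' cannot witness the separation because the origin is excluded from $\mathcal P_r$, but the pairs you then list ($b_1,b_2$, or $b_1,b_1+b_2$) do.
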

\begin{proof}
    The $k$th hexagonal shell has $6k$ points and squared-norm
    sum $k\left(5k^2+1\right)$. Summing over $k=1,\dots,r$ gives the point
    count and $\sum_w{\nrm{w}_2}^2=2c_r$.
    The first moment vanishes by inversion symmetry, and the
    second moment is scalar by sixfold rotational symmetry.
    Distinct lattice points have separation at least $1$,
    and neighboring points in $\mathcal P_r$ attain it.
\end{proof}

\begin{lemma}\label{lem:planar_whitening}
    If $\sum_iq_i=0$ and $T=\sum_iq_iq_i^\top>0$, then
    $w_i=T^{-1/2}q_i$ satisfies
    \begin{equation}
        \sum_iw_i=0 ,\quad \sum_iw_iw_i^\top=I_2 ,
    \end{equation}
    and
    \begin{equation}
        \frac{\nrm{q_i-q_j}_2}{\sqrt{\lambda_{\max}\left(T\right)}}
        \leq\nrm{w_i-w_j}_2\leq
        \frac{\nrm{q_i-q_j}_2}{\sqrt{\lambda_{\min}\left(T\right)}} .
    \end{equation}
    In particular, if $T=cI_2$, then $w_i=q_i/\sqrt c$.
\end{lemma}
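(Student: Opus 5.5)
The plan is to prove the three assertions of Lemma~\ref{lem:planar_whitening} in order, using only linearity and the spectral theorem for the symmetric positive definite $2\times2$ matrix $T$. Since $T>0$, write $T=U\Lambda U^\top$ with $U$ orthogonal and $\Lambda=\diag\left(\lambda_{\max}\left(T\right),\lambda_{\min}\left(T\right)\right)$ having positive entries. Then set $T^{-1/2}=U\Lambda^{-1/2}U^\top$. This matrix is symmetric, positive definite, and satisfies $T^{-1/2}TT^{-1/2}=I_2$.

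The moment identities are direct. For the first moment, linearity gives $\sum_iw_i=T^{-1/2}\sum_iq_i=0$. For the second moment,
\begin{equation}
    \sum_iw_iw_i^\top=T^{-1/2}\left(\sum_iq_iq_i^\top\right)\left(T^{-1/2}\right)^\top=T^{-1/2}TT^{-1/2}=I_2,
\end{equation}
where the last steps use the symmetry of $T^{-1/2}$.

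For the distance bounds, write $z=q_i-q_j$, so that $w_i-w_j=T^{-1/2}z$. Then $\nrm{T^{-1/2}z}_2^2=z^\top T^{-1}z$. By the Rayleigh quotient bounds for $T^{-1}$, whose eigenvalues are $1/\lambda_{\max}\left(T\right)$ and $1/\lambda_{\min}\left(T\right)$,
\begin{equation}
    \frac{\nrm{z}_2^2}{\lambda_{\max}\left(T\right)}\leq z^\top T^{-1}z\leq\frac{\nrm{z}_2^2}{\lambda_{\min}\left(T\right)} .
\end{equation}
Taking nonnegative square roots gives the stated two-sided inequality. These are the same singular-value bounds already used in the proof of Proposition~\ref{prop:ellipsoid_resolution}, because the singular values of $T^{-1/2}$ are $\lambda^{-1/2}$.

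The special case follows at once. If $T=cI_2$ with $c>0$, then $T^{-1/2}=c^{-1/2}I_2$, so $w_i=q_i/\sqrt c$, and both distance bounds coincide as equalities.

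There is no real obstacle in this lemma. The only point needing care is to fix the symmetric square root, so that the transpose in the second-moment computation cancels correctly. With a non-symmetric factor such as the Cholesky factor, one would need $L^{-1}TL^{-\top}=I_2$ instead, which also works, but the eigenvalue bounds then go through the singular values of $L^{-1}$.
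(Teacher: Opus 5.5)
Your proof is correct and follows essentially the same route as the paper. In both, the moment identities come from linearity and $T^{-1/2}TT^{-1/2}=I_2$, and the distance bounds come from the extreme singular values $1/\sqrt{\lambda_{\max}\left(T\right)}$ and $1/\sqrt{\lambda_{\min}\left(T\right)}$ of $T^{-1/2}$; your Rayleigh-quotient bound on $z^\top T^{-1}z$ is just the squared form of those singular-value bounds.
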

\begin{proof}
    The first- and second-moment identities in the statement follow from
    \begin{equation}
    \begin{aligned}
        \sum_iw_i&=T^{-1/2}\sum_iq_i=0,\quad\text{and}\\
        \sum_iw_iw_i^\top&=T^{-1/2}\left(\sum_iq_iq_i^\top\right)T^{-1/2}=T^{-1/2}TT^{-1/2}=I_2 .
    \end{aligned}
    \end{equation}
    Its smallest and largest singular values are $1/\sqrt{\lambda_{\max}\left(T\right)}$ and $1/\sqrt{\lambda_{\min}\left(T\right)}$, respectively.
    Since $w_i-w_j=T^{-1/2}\left(q_i-q_j\right)$, the singular-value inequalities give
    \begin{equation}
    \begin{aligned}
        \frac{\nrm{q_i-q_j}_2}{\sqrt{\lambda_{\max}\left(T\right)}}
        &=\sigma_{\min}\left(T^{-1/2}\right)\nrm{q_i-q_j}_2\\
        &\leq\nrm{T^{-1/2}\left(q_i-q_j\right)}_2=\nrm{w_i-w_j}_2\\
        &\leq\sigma_{\max}\left(T^{-1/2}\right)\nrm{q_i-q_j}_2
        =\frac{\nrm{q_i-q_j}_2}{\sqrt{\lambda_{\min}\left(T\right)}} .
    \end{aligned}
    \end{equation}
    If $T=cI_2$ with $c>0$, then $T^{-1/2}=c^{-1/2}I_2$, so $w_i=q_i/\sqrt c$ and $\nrm{w_i-w_j}_2=\nrm{q_i-q_j}_2/\sqrt c$.
\end{proof}

By Lemma~\ref{lem:hexagonal_moments}, the patch has zero first moment and second moment $c_rI_2$, so the scalar case of Lemma~\ref{lem:planar_whitening} gives the normalized patch $\mathcal P_r/\sqrt{c_r}$ satisfying eqn.\ \eqref{eqn:normalized_planar_moments}.
Pairing it with a rotated copy and applying $L$ gives exactly the Hex1 construction in eqn.\ \eqref{eqn:hex1_family}.
For $K=2$, $r=2$, and $T=cI_2$ with $c>0$, we have $l=18$ and $c_r=24$.
The case $r=1$ is a circular constellation; the $r=2$ patch used here contains several radii and provides the non-ellipsoid example.

Identify $\mathcal P_2$ with complex amplitudes and take
\begin{equation}
    \mathcal C^{\left(0\right)}=\sqrt{\frac{c}{24}}\,\mathcal P_2
    \quad\text{and}\quad
    \mathcal C^{\left(1\right)}=e^{\iota\theta}\mathcal C^{\left(0\right)} .
\end{equation}
Both constellations have the same first and second moments.
For this centered isotropic example, the unnormalized complex moments through degree two are
\begin{equation}
    c_{0,0}=18,\qquad c_{1,1}=2c,\qquad c_{0,1}=c_{1,0}=c_{0,2}=c_{2,0}=0 .
\end{equation}
Thus every off-diagonal moment $c_{p,q}$ with $p\neq q$ and $p+q\leq2$ vanishes, and Proposition~\ref{prop:cyclic_fiber_symmetries} with $t=2$ shows that every phase-space rotation preserves this moment fiber.
This justifies generating the second Hex1 constellation by rotation of the first; it does not assert that every such rotation preserves higher-degree moments or satisfies exact finite-separation KL conditions.
{Sixfold symmetry and reflection reduce the relative angle to $0<\theta\leq\pi/6$ for two disjoint constellations.}
On this interval,
\begin{equation}
    d_E\left(\theta\right)=\frac{c}{24}\min\left\{
       2-2\cos\theta,\;
       7-4\sqrt3\cos\left(\frac{\pi}{6}-\theta\right)\right\} .
\end{equation}
The two terms come from pairs at radii $1,1$ and $\sqrt3,2$, respectively; the remaining pair distances are larger.
The first term increases with $\theta$, and the second decreases.
Equating them therefore gives the optimal relative rotation
\begin{equation}\label{eqn:hex_r2_optimal_rotation}
    \theta^*=\arccos\frac{13}{14}
    =2\arctan\frac{1}{3\sqrt3}\approx21.79^\circ
    \quad\text{and}\quad d_E\left(\theta^*\right)=\frac{c}{168} .
\end{equation}
This improves on $d_E\left(\pi/6\right)=c\left(7-4\sqrt3\right)/24$.
The optimum is within this relative-rotation family; it does not establish optimality over all moment-preserving deformations.
For general $T>0$, rotating $\mathcal P_2/\sqrt{24}$ before applying $L$ still preserves $T$, but the optimal angle depends on $T$.

\subsection{Hexagonal BARC codes}\label{sec:hexagonal_codes}

\begin{figure}
    \centering
    \includegraphics[
        width=0.32\textwidth,
        height=0.25\textheight,
        keepaspectratio
    ]{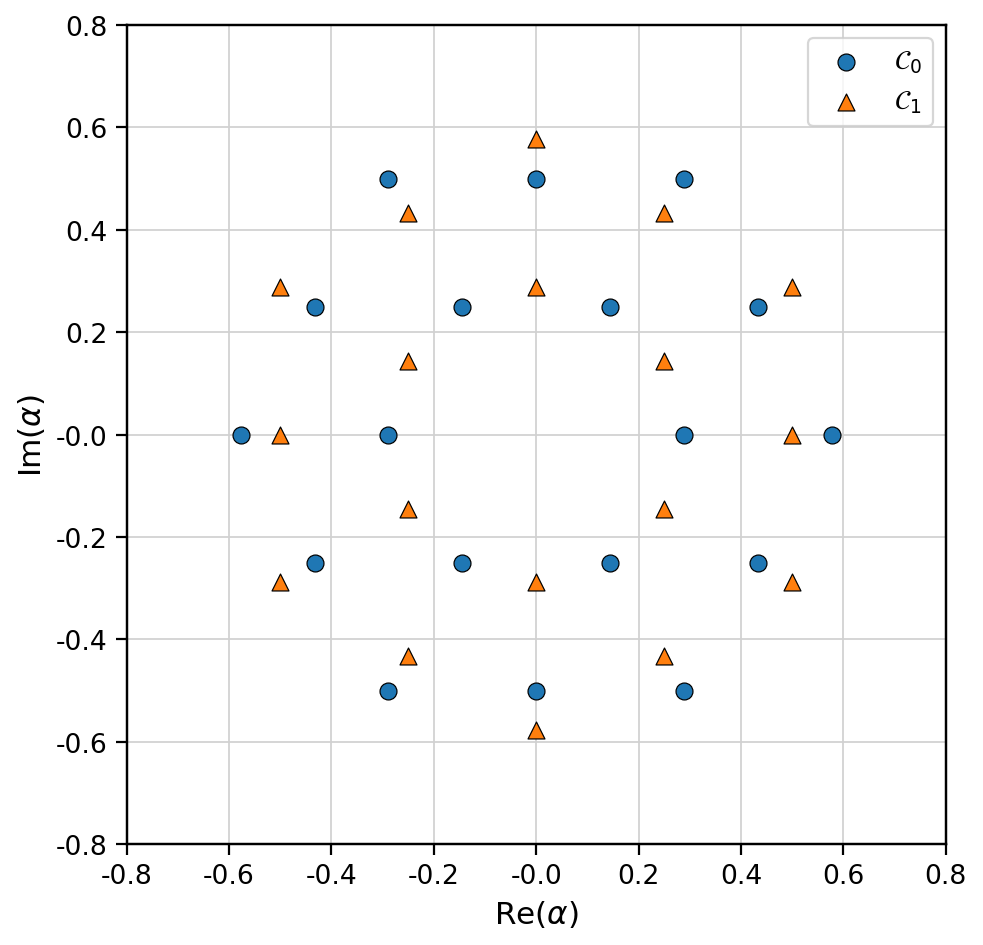}\hfill%
    \includegraphics[
        width=0.32\textwidth,
        height=0.25\textheight,
        keepaspectratio
    ]{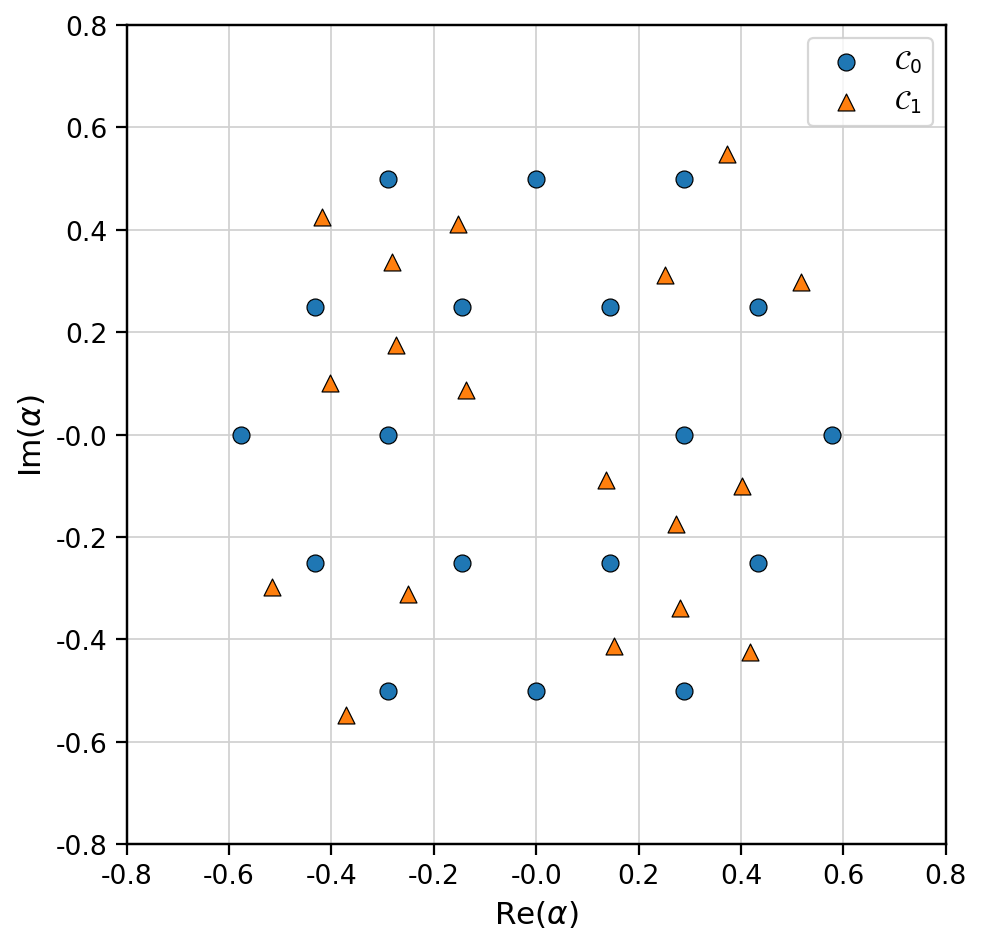}\hfill%
    \includegraphics[
        width=0.34\textwidth,
        height=0.215\textheight,
        keepaspectratio
    ]{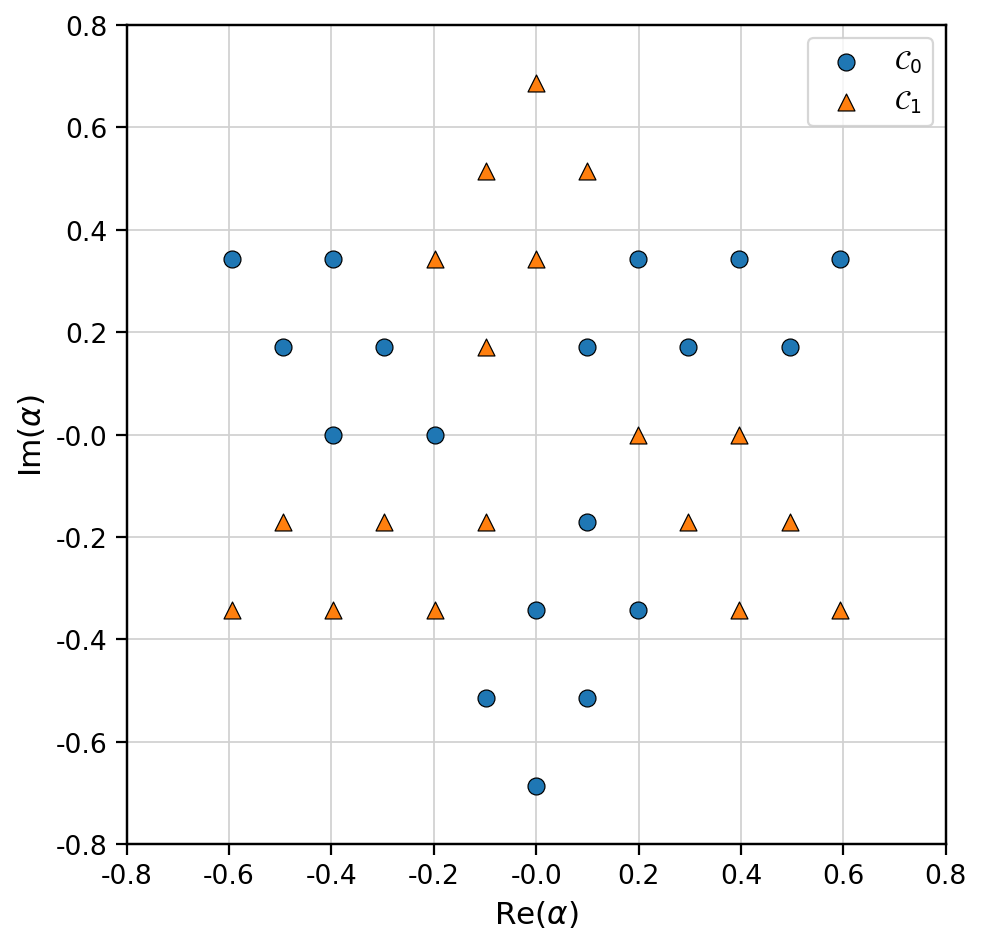}

    \caption{Constellations for single-mode Hex1-$18$, Hex2-$18$, and Hex3-$18$ {BARC codes} from Section~\ref{sec:hexagonal_codes}. Each construction gives two normalized constellation states, each a uniform superposition of $18$ coherent states. The panels illustrate their geometries at a fixed reference scale, where $\left(\tau_x, \tau_y, \tau_{xy}\right)=\left(2,2,0\right)$; the actual operating scale is varied and optimized in our numerical results reported in section \ref{sec:noise-benchmarking}. Left: Hex1-$18$ uses the origin-excluded radius-two patch and relative rotation $\pi/2$ (\ref{sec:Hex1-desc}). Center: Hex2-$18$ uses balanced opposite shifts followed by rewhitening (\ref{sec:Hex2-desc}). Right: Hex3-$18$ uses six chosen sixfold orbits with alternating triangle colors.(\ref{sec:Hex3-desc})}
    \label{fig:hexagonal_codes_figure}
\end{figure}

We describe {three two-constellation constructions of BARC codes} on the triangular lattice $\Lambda=\left\{mb_1+nb_2:m,n\in\Z\right\}$, with $b_1,b_2$ as above.
Let $U$ be rotation by $\pi/3$ in $\R^2$.
For a nonzero lattice point $\zeta$, define the two alternating triangles of its sixfold orbit by
\begin{equation}\label{eqn:hex_orbit_triangles}
\begin{gathered}
    A\left(\zeta\right)=\left\{\zeta,U^2\zeta,U^4\zeta\right\},\qquad B\left(\zeta\right)=UA\left(\zeta\right),\qquad O\left(\zeta\right)=A\left(\zeta\right)\cup B\left(\zeta\right),\\
    \text{so that}\\
    \sum_{x\in A\left(\zeta\right)}x=\sum_{x\in B\left(\zeta\right)}x=0
    \quad\text{and}\quad
    \sum_{x\in A\left(\zeta\right)}xx^\top=\sum_{x\in B\left(\zeta\right)}xx^\top=\frac32\nrm{\zeta}_2^2I_2 .
\end{gathered}
\end{equation}
To verify these identities, identify $\zeta$ with a complex number $z$ and put $\omega=e^{2\pi\iota/3}$.
The first-moment sum is $z\sum_{j=0}^2\omega^j=0$, and the complex second-moment sum is $z^2\sum_{j=0}^2\omega^{2j}=0$.
The latter gives equality of the two real squared-coordinate sums and vanishing mixed sum; since their sum is $3\abs{z}^2$, each squared-coordinate sum is $3\abs{z}^2/2$.
Rotating by $U$ gives the same identities for $B\left(\zeta\right)$.

In each construction below, $\mathcal W^{\left(0\right)}$ and $\mathcal W^{\left(1\right)}$ have $l$ points, zero first moment, and identity second-moment matrix.
For a positive-definite target matrix $T$ and amplitude scale $\varrho>0$, define
\begin{equation}\label{eqn:hex_target_map}
    \mathcal C^{\left(k\right)}=\left\{\varrho\left[\left(T^{1/2}w\right)_1+\iota\left(T^{1/2}w\right)_2\right]:w\in\mathcal W^{\left(k\right)}\right\},\qquad k=0,1 .
\end{equation}
Thus each real-coordinate constellation has zero first moment and second-moment matrix $\varrho^2T$.
Taking uniform coherent-state coefficients gives the moment constraints through degree two, so the Hex1, Hex2, and Hex3 constructions below, including their elliptical versions, are instances of BARC codes as defined in Section~\ref{sec:constellations}.

The isotropic constructions use $T=2I_2$, while the elliptical versions replace this by the target $T$ specified in the numerical section.
Configurations are retained only when the two supports are disjoint, as required in Section~\ref{sec:coherent_state_codes}.

\paragraph{Hex1: rotated full patches.}
\label{sec:Hex1-desc}
Using $\mathcal P_r$ and $c_r$ from Lemma~\ref{lem:hexagonal_moments}, set
\begin{equation}\label{eqn:hex1_family}
    \mathcal W^{\left(0\right)}=\frac{\mathcal P_r}{\sqrt{c_r}},\qquad
    \mathcal W^{\left(1\right)}=R_\theta\mathcal W^{\left(0\right)},\qquad l=3r\left(r+1\right),
\end{equation}
where $R_\theta$ here denotes the two-dimensional phase-space rotation through angle $\theta$.
Both moment matrices are $I_2$, since $R_\theta I_2R_\theta^\top=I_2$.
The reported Hex1-$18$ geometry takes $r=2$ and $\theta=\pi/2$.
Since the patch is invariant under $U$, this gives the same second constellation as a relative rotation by $\pi/6$.
It is therefore not the resolution-optimized member with $\theta=\theta^*$ from eqn.\ \eqref{eqn:hex_r2_optimal_rotation}.

\paragraph{Hex2: balanced shifts and rewhitening.}
\label{sec:Hex2-desc}
Choose $m$ nonzero lattice representatives $\zeta_1,\dots,\zeta_m$ with disjoint sixfold orbits, and set $l=6m$ and $\kappa=3\sum_{a=1}^m\nrm{\zeta_a}_2^2$.
Split each orbit into its two triangles, choosing one triangle for $\mathcal P_+$ and the other for $\mathcal P_-$.
Explicitly, for choices $\sigma_a\in\left\{0,1\right\}$,
\begin{equation}
    \mathcal P_+=\bigcup_{a=1}^m U^{\sigma_a}A\left(\zeta_a\right),\qquad
    \mathcal P_-=\bigcup_{a=1}^m U^{\sigma_a+1}A\left(\zeta_a\right),\qquad
    \mathcal W^{\left(0\right)}=\frac{\mathcal P_+\cup\mathcal P_-}{\sqrt\kappa} .
\end{equation}
Each subset has $l/2$ points and zero first moment, and $\mathcal W^{\left(0\right)}$ has identity second moment.
Let $h=\left(b_1+b_2\right)/\left(3\sqrt\kappa\right)$, the scaled shift to the center of an elementary lattice triangle, and define
\begin{equation}\label{eqn:hex2_family}
\begin{gathered}
    \widetilde{\mathcal W}^{\left(1\right)}=\left(\frac{\mathcal P_+}{\sqrt\kappa}+h\right)\cup\left(\frac{\mathcal P_-}{\sqrt\kappa}-h\right),\qquad
    H=\left(I_2+l hh^\top\right)^{-1/2},\qquad
    \mathcal W^{\left(1\right)}=H\widetilde{\mathcal W}^{\left(1\right)} .
\end{gathered}
\end{equation}
The opposite shifts leave the first moment zero because the two subsets have equal cardinality.
Writing $\mathcal W_\pm^{\left(0\right)}=\mathcal P_\pm/\sqrt\kappa$ and using their zero sums gives
\begin{equation}
\begin{aligned}
    \sum_{x\in\widetilde{\mathcal W}^{\left(1\right)}}xx^\top
    &=\sum_{x\in\mathcal W_+^{\left(0\right)}}\left(x+h\right)\left(x+h\right)^\top+\sum_{x\in\mathcal W_-^{\left(0\right)}}\left(x-h\right)\left(x-h\right)^\top\\
    &=I_2+l hh^\top,\\
    \sum_{w\in\mathcal W^{\left(1\right)}}ww^\top
    &=H\left(I_2+l hh^\top\right)H^\top=I_2 .
\end{aligned}
\end{equation}
Thus the second constellation is a rewhitened union of oppositely shifted subsets of the first constellation, not a translation of the entire first constellation.
The map $H$ is the transformation of Lemma~\ref{lem:planar_whitening} applied to the shifted set, whose second-moment matrix is $I_2+l hh^\top$ by the calculation above.

Choices $m=1,2,3$ give $l=6,12,18$.
The selected representatives and triangle assignments specify the individual member of the family.
For the $l=18$ pattern shown in Fig.\ \ref{fig:hexagonal_codes_figure}, one realization is
\begin{equation}\label{eqn:hex2_18_representatives}
    \left(\zeta_1,\zeta_2,\zeta_3\right)=\left(b_1,b_1+b_2,2b_1\right),\qquad \left(\sigma_1,\sigma_2,\sigma_3\right)=\left(1,0,1\right),\qquad\kappa=24 .
\end{equation}
For the $18$-point representatives in eqn.\ \eqref{eqn:hex2_18_representatives} above, the unshifted union $\mathcal P_+\cup\mathcal P_-$ equals $\mathcal P_2$, so Lemma~\ref{lem:hexagonal_moments} gives $\kappa=24$ and the first normalized constellation is the same patch as in Hex1.
For general representative lists this union need not be a full patch, and the orbit identities in eqn.\ \eqref{eqn:hex_orbit_triangles} supply its moments instead.

\paragraph{Hex3: alternating orbit colors.}
\label{sec:Hex3-desc}
Choose $m=l/3$ nonzero lattice representatives $\zeta_1,\dots,\zeta_m$ with disjoint sixfold orbits.
Assign opposite triangles of each orbit to the two logical constellations and alternate the assignment between successive chosen orbits:
\begin{equation}\label{eqn:hex3_family}
\begin{gathered}
    \sigma_a=\left(a-1\right)\bmod2,\qquad \kappa_3=\frac32\sum_{a=1}^m\nrm{\zeta_a}_2^2,\\
    \mathcal W^{\left(0\right)}=\frac{1}{\sqrt{\kappa_3}}\bigcup_{a=1}^m U^{\sigma_a}A\left(\zeta_a\right),\qquad
    \mathcal W^{\left(1\right)}=\frac{1}{\sqrt{\kappa_3}}\bigcup_{a=1}^m U^{\sigma_a+1}A\left(\zeta_a\right) .
\end{gathered}
\end{equation}
Each constellation contains three points from every chosen orbit, so it has $l=3m$ points.
Equation~\eqref{eqn:hex_orbit_triangles} gives zero first moment and identity second moment for both constellations, independently of the color assignment.
Moreover, $\mathcal W^{\left(1\right)}=U\mathcal W^{\left(0\right)}$.
The common factor $1/\sqrt{\kappa_3}$ is the scalar case of Lemma~\ref{lem:planar_whitening} applied separately to the two selected triangle unions, whose unnormalized second moments are $\kappa_3I_2$ by eqn.\ \eqref{eqn:hex_orbit_triangles}.
Unlike Hex1, neither constellation is required to be a full patch $\mathcal P_r$.
For isotropic $T=cI_2$, eqn.\ \eqref{eqn:hex_target_map} gives $\mathcal C^{\left(1\right)}=e^{\iota\pi/3}\mathcal C^{\left(0\right)}$, so Proposition~\ref{prop:cyclic_fiber_symmetries} with $t=2$ also applies to this pair.
For general positive-definite $T$, the map $T^{1/2}$ need not commute with $U$, so equality of the target moments follows from the matrix identities above rather than from an ordinary phase-space rotation of the physical amplitudes.

The cases $l=6,12,18$ use $m=2,4,6$ chosen orbits, respectively.
The seed representatives and their ordering remain part of the specification of a particular code; the construction does not prescribe a unique code from $l$ alone.
For the $l=18$ point pattern in Fig.\ \ref{fig:hexagonal_codes_figure}, one can use the effective triangle representatives $\xi_a=U^{\sigma_a}\zeta_a$ given by
\begin{equation}\label{eqn:hex3_18_representatives}
    \left(\xi_1,\dots,\xi_6\right)=\left(-b_1,b_1+b_2,-2b_1,2b_1+b_2,-3b_1+b_2,2b_1+2b_2\right),\qquad\kappa_3=51 .
\end{equation}
The corresponding orbit representatives in eqn.\ \eqref{eqn:hex3_family} are $\zeta_a=U^{-\sigma_a}\xi_a$.
These choices reproduce the displayed $18$-point geometries at the reference scale $\varrho=1$ in eqn.\ \eqref{eqn:hex_target_map}; specifying the other constellation sizes and reproducing all numerical operating points still requires the corresponding representative lists and amplitude scales.

\section{Numerical performance under bosonic pure-loss noise}\label{sec:noise-benchmarking}

{We now benchmark the new BARC codes against quantum spherical codes (QSCs) and quantum cubature codes (QCCs).}

Consider a code with $K$ code states $\left\{\ket{\psi_i}\right\}_{i\in\left[K\right]}$ in the bosonic Hilbert space (with one or more modes); this code is supposed to encode the quantum information of a $K$-dimensional Hilbert space (we'll call it a ``qu$K$it'' for convenience).
The benchmark is based on the following operational task:
\begin{enumerate}
    \item Prepare a qu$K$it--mode ``probe state'' of the form $\ket{\Gamma_K}=K^{-1/2}\sum\limits_{i\in\left[K\right]}\ket i\otimes\ket{\psi_i}$;
    \item Subject the mode to the relevant noise channel $\mathcal{N}$;
    \item Apply a recovery channel $\mathcal{R}$ on the noise-affected mode to map it to a qu$K$it, aiming to correctly restore the intended value $i$.
\end{enumerate}
The success at the task can be measured as the final state's fidelity with the two-qu$K$it maximally-entangled state
\begin{equation}
    \ket{\Phi_K}=\frac{1}{\sqrt K}\sum_{i\in\left[K\right]}\ket{i}\otimes\ket{i} .
\end{equation}
Defining the composite channel
\begin{equation}
    \Lambda = \mathcal{R} \circ \mathcal{N},
\end{equation}
this fidelity metric can be expressed as
\begin{equation}\label{eq:logical_entanglement_fidelity}
\begin{gathered}
    F^\star= \max_{\mathcal R\ {\rm CPTP}}\bra{\Phi_K}
       \left(\operatorname{id}\otimes\Lambda\right)
       \left(\ket{\Gamma_K}\bra{\Gamma_K}\right)\ket{\Phi_K}.%
\end{gathered}
\end{equation}

We use $F^{\star}$ as the optimized fidelity of the recovered prepared state with the specified maximally entangled target; the phrase ``entanglement fidelity'' in the numerical comparisons below refers to this particular prepared-state benchmark.
It is not, in general, the standard entanglement fidelity of an encoded logical channel~\cite{NielsenChuang}.
Indeed, let $S^\psi$ be the Gram matrix of the normalized constellation states, with $S^\psi_{i,j}=\braket{\psi_i}{\psi_j}$.
The reference marginal of the probe is
\begin{equation}\label{eq:benchmark-reference-marginal}
    \rho_A=\operatorname{Tr}_B\ket{\Gamma_K}\bra{\Gamma_K}
    =\frac1K\sum_{i,j\in\left[K\right]}\braket{\psi_j}{\psi_i}\ket{i}\bra{j}
    =\frac{\left(S^\psi\right)^{\mathsf T}}{K} .
\end{equation}
It equals $I_K/K$ precisely when the constellation states are orthonormal.
In that case $V\ket{i}=\ket{\psi_i}$ is an encoding isometry, $\ket{\Gamma_K}=\left(I\otimes V\right)\ket{\Phi_K}$, and eqn.\ \eqref{eq:logical_entanglement_fidelity} is the optimal recovered entanglement fidelity for the logical input initially maximally entangled with a reference.
For nonorthogonal constellation states, it instead tests recovery of the directly prepared probe toward the Bell target. Even in the absence of noise, nonorthogonal code states imply an upper bound
\begin{equation}\label{eq:benchmark-overlap-ceiling}
    F^{\star}\leq F\left(\rho_A,I_K/K\right)
    =\frac{\left(\operatorname{Tr}\sqrt{S^\psi}\right)^2}{K^2} .
\end{equation}
Here $F$ denotes the squared fidelity; in the $K=2$ case, the bound is $\left(1+\sqrt{1-s^2}\right)/2$, where $s:=\abs{\braket{\psi_0}{\psi_1}}$.

It is important to note the following in connection with our benchmarking operational task:
\begin{enumerate}
    \item From a purely coding-theoretic perspective, the optimal way to encode the qu$K$it would be to map each value $i$ to a corresponding member of \emph{an orthonormal basis} spanning the code space, rather than to the literal code state $\ket{\psi_i}$. But this would undermine the operational significance of the code states as the central primitives of our code construction, relegating their role to simply defining a subspace. Instead, we use the ``literally-encoded'' qu$K$it--mode probe state $\ket{\Gamma_K}$ as the initial state, with the express intention to privilege the literal code state states in our benchmarking.
    \item We take the preparation of this probe state for granted, ignoring both the practical challenge of preparing it and the details of its operational interpretation. On the latter note: it is tempting to interpret the state as the result of an ``encoding map'' acting on a pure logical--logical state (the $\ket{\Phi_K}$ defined above), but this fails because non-orthogonal code states don't admit such a map acting locally on only one of the qu$K$its and producing a pure probe state like $\ket{\Gamma_K}$ (orthogonal code states admit an isometry, as noted above). Success-heralded probabilistic preparation is possible, but its success probability is not included in eqn.\ \eqref{eq:logical_entanglement_fidelity}. Here we content ourselves with noting that (1) this preparation-related difficulty applies alike for both our candidate codes and their reference counterparts, and (2) the overall optimized fidelity of recovery back to the logical value $i$ is adversely affected by the very same structure---viz.\ code state non-orthogonality---as is the preparation success rate; as such, trends in the two quantities can reasonably be expected to correlate.
    \item As with the preparation, we ignore the practical challenges in implementing the recovery channel $\mathcal{R}$. This is standard practice in the theoretical literature on bosonic QEC \cite{Fletcher_2006,Albert2018,Jain2023}.
\end{enumerate}

The above benchmarking scheme applies in full generality, but in the results we report here, we restrict to
\begin{itemize}
    \item $K=2$ (i.e.,\ the logical information is a qubit);
    \item single-mode codes; and
    \item $\mathcal{N}\equiv\mathcal{N}_\gamma$ a pure-loss channel with loss probability $\gamma$.
\end{itemize}   

For a single bosonic mode, the pure-loss channel acts as $\mathcal{N}_\gamma \left(\cdot\right) = \sum
\limits_{\ell\in\mathbb Z_+} E_{\ell}\left(\cdot\right) E_{\ell}^{\dagger}$ with the Kraus operators (see Appendix \ref{sec:beam-splitter-derivation} for a derivation)
\begin{equation}
    E_{\ell} = \left(\frac{\gamma}{1-\gamma} \right)^{\ell/2}\frac{\hat{a}^{\ell}}{\sqrt{\ell !}}\left(1-\gamma\right)^{\hat{n}/2}.
\end{equation}

The optimization problem in \eqref{eq:logical_entanglement_fidelity} can be cast as an SDP, following the channel-optimization approach of Ref.\ \cite{Fletcher_2006}. Let $J_{\mathcal R}$ be the
unnormalized recovery Choi matrix in output--input order and set
$X=J_{\mathcal R}^{\mathsf T}/K$, with full transpose in that product basis; also define $M
=\left(\operatorname{id}\otimes\mathcal{N}_{\gamma}\right)
\left(\ket{\Gamma_K}\bra{\Gamma_K}\right)$.
Then the Bell overlap is $\operatorname{Tr}\left(MX\right)$ and the SDP is
\begin{equation}
\begin{aligned}
F^{\star}
= &&\max_{X} &&&\operatorname{Tr}\left(MX\right) \\
&&\text{s.t.} &&& X \succeq 0, \\
&&&&& \operatorname{Tr}_{A} X = I_B/K,
\end{aligned}
\label{eq:benchmark-recovery-sdp}
\end{equation}
where the subscripts $A$ and $B$ in the last condition refer to the first and second subsystems, respectively. The partial trace here is over the logical output. The factor $1/K$ is absorbed
in $X$; using the unscaled Choi matrix would instead require moving this
factor into the objective.

We consider $\ell\in\left\{6,12,18\right\}$, except for Hex1 and Elliptical Hex1, whose nontrivial variants\footnote{The only smaller Hex1 codes, with $\ell=6$, are subsumed in the QSC family, while the only smaller elliptical Hex1 codes, also with $\ell=6$, are contained within the Ellipse family.} require a minimum $\ell=18$. QSC references are circular single-mode $\ell$-gon constructions; QCC references are available at $\ell=12,18$. Where Elliptical Hex family is deformed Hex family such that it is anisotropic (with $\left(\tau_x,\tau_y, \tau_{xy}\right)=\left(2,1,0.5\right)$ used for benchmarking as presented in the following section).
The geometric definitions of Hex1, Hex2, and Hex3 are given in Section~\ref{sec:hexagonal_codes}, specifically eqns.\ \eqref{eqn:hex1_family},~\eqref{eqn:hex2_family}, and~\eqref{eqn:hex3_family}.
Their elliptical versions use the same normalized point patterns with the anisotropic target matrix in eqn.\ \eqref{eqn:hex_target_map}.
The labels Ellipse, Hex1, Hex2, Hex3, and the elliptical Hex variants used in the figures and tables denote the corresponding BARC code subfamilies.
Fig.\ \ref{fig:Code_family_tree.drawio.png} shows the collection of codes being benchmarked and their grouping into families.

\begin{figure}[H]
    \centering
    \includegraphics[width=0.75\linewidth]{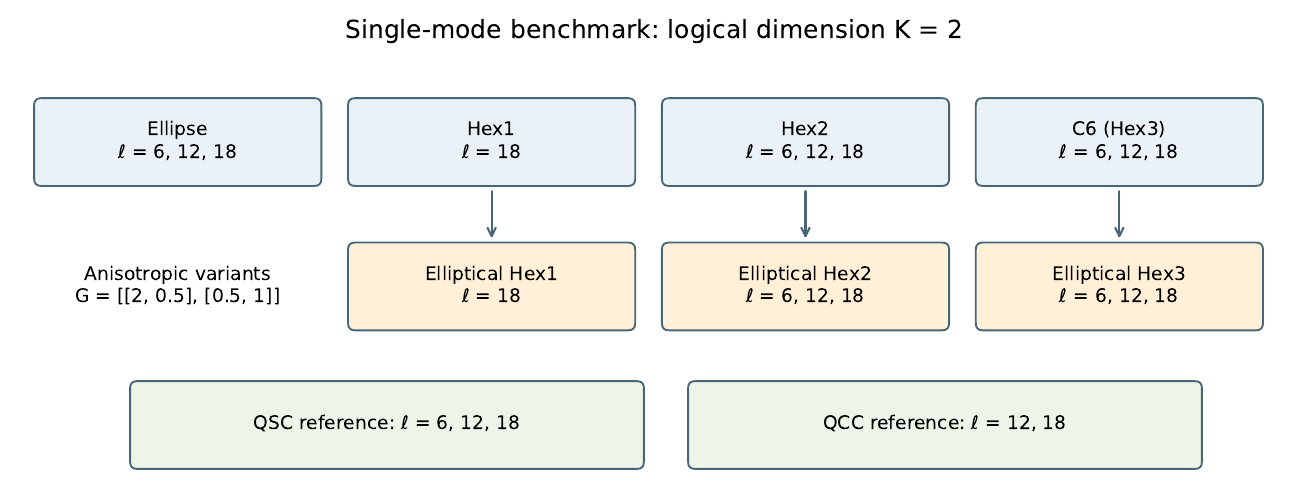}
    \caption{Benchmarked families and available constellation sizes. Arrows indicate the anisotropic hexagonal variants; references are the specific QSC and QCC constructions used here.} 
    \label{fig:Code_family_tree.drawio.png}
\end{figure}

In the remainder, we use $F^\star_C$ to mean the figure of merit defined in eqn.\ \eqref{eq:benchmark-recovery-sdp}, applied to a code family $C$ (the latter being inferred from context). In order to benchmark the {BARC code families} against corresponding QSCs and QCCs, we use the following methods of comparison.

\subsubsection{Equal-size comparisons at optimal amplitude}
\label{sec:comparison-method-1-desc}
For each geometrical configuration described in Table \ref{tab:equalell-geometry}, we pick a convenient scale-standardized constellation $\tilde{\mathcal C}$. A particular instance is then obtained by scaling this representative by a parameter $\alpha\in\mathbb R_+$ to yield a constellation $\mathcal C=\alpha\tilde{\mathcal C}$.

For every code family $C$ that we benchmark, we first independently optimize the scale parameter $\alpha$ over the range $\mathcal A_C:=\left[0.5,10\right]$ at reference loss $\gamma_0=0.095$, denoting the optimal value
\begin{equation}
 \alpha_C^\star=\underset{\alpha\in\mathcal A_C}{\operatorname{argmax}}\,
 F_{C}^\star\left(\alpha,\gamma_0\right),
 \label{eq:equalell-optimum}
\end{equation}
Then, with $\alpha$ fixed at this value, the loss parameter $\gamma$ is varied from [0.01, 0.5], and $F^\star$ is plotted as a function of $\gamma$. Note that we do not optimize $\alpha$ tailored to each distinct value of $\gamma$; this approach has precedent in the literature \cite{Yang2025}.

In order to quantify the relative performance of our new code to its reference counterpart, the latter is chosen to match the former's constellation size $\ell$ (i.e.,\ number of coherent states per code state) and logical dimension $K$. Following \cite{Yang2025}, we then define the metric 

For candidate $C$ and reference $B$ of equal $\ell$, we define  
\begin{equation}
 R_{C|B}\left(\gamma\right)=\frac{1-F_B^{\star}\left(\gamma\right)}{1-F_C^{\star}\left(\gamma\right)}.
 \label{eq:equalell-ratio}
\end{equation}

Where $R_{C|B}\left(\gamma\right)=1$ signifies candidate code performs same as reference code. $R_{C|B}\left(\gamma\right)>1$ signifies candidate code performs better than reference code at the corresponding value of $\gamma$. And $R_{C|B}\left(\gamma\right)<1$ signifies the candidate code performs worse than the reference code at the corresponding values of $\gamma$

This first method therefore compares each code at its own preferred sampled operating point, and its results are reported in Section~\ref{sec:method1-results}. The unequal photon budgets establish the operating-point baseline against which the candidate photon-budget results can be read.

\subsubsection{Equal-size comparisons under the candidate photon budget}
\label{sec:comparison-method-2-desc}
Rather than imposing equal energy universally, in this method we ask how well the reference can perform without exceeding the candidate's photon budget.
We retain each candidate's sampled optimum from
Eq.\ \eqref{eq:equalell-optimum}, but allow its reference to use less than the
candidate's mean photon number. This avoids forcing a reference away from
its independently optimized operating point merely to consume the full
candidate budget. The constellation size $\ell$ and logical dimension $K=2$ remain equal within each pair. QSC and QCC denote the particular constructions
in Table~\ref{tab:equalell-geometry}~\cite{Jain2023,Yang2025}.

The reference amplitude is selected according to the following procedure:
\begin{equation}
\begin{aligned}
E_C &\leftarrow \bar n_C\!\left(\alpha_C^\star\right),
&
\varepsilon_C &\leftarrow 10^{-9}\!\left(1+\abs{ E_C}\right),
\\
E_B &\leftarrow \bar n_B\!\left(\alpha_B^\star\right).
\end{aligned}
\end{equation}

\begin{equation}
\begin{aligned}
\textbf{if}\quad
E_B &\leq E_C+\varepsilon_C:
&
{\alpha^{\star}}_{B|C}
&\leftarrow \alpha_B^\star,
\\[1mm]
\textbf{else:}\quad
&
&
{\alpha^{\star}}_{B|C}
&\leftarrow
\text{the solution of }
\bar n_B\!\left({\alpha^{\star}}_{B|C}\right)=E_C.
\end{aligned}
\end{equation}

Thus the reference uses no more than the candidate's budget, up to the
energy tolerance. Its selected amplitude is fixed across the entire loss
sweep. The corresponding ratio is similar to the relation defined in \ref{eq:equalell-ratio}, after the energy budget choice.
\begin{equation}
 R^{\left(\leq E\right)}_{C|B}\left(\gamma\right)=
 \frac{1-F_{B}^\star\left(\alpha^{\star}_{B|C},\gamma\right)}
      {1-F_{C}^\star\left(\alpha_C^\star,\gamma\right)}.
 \label{eq:energy-budget-ratio}
\end{equation}

Here, $C$ denotes the candidate code and $B$ the equal-size reference code. The quantities
$F_{\mathrm{B},C}^\star$ and $F_{\mathrm{B},B}^\star$ are the corresponding
optimized entanglement fidelities at loss probability $\gamma$. Thus,
$R^{\left(\leq E\right)}_{C|B}$ is the ratio of reference to candidate infidelity, with
values greater than one favoring the candidate.

Thus, a cheaper independently optimized reference is retained; only a reference whose independent optimum exceeds the candidate budget is rescaled to an amplitude that matches the reference code's energy to the candidate code. Its results are reported in Section~\ref{sec:method2-results}.

The following two subsections present the results in the order of the methods defined in Sections~\ref{sec:comparison-method-1-desc} and~\ref{sec:comparison-method-2-desc}, first at independently optimized amplitudes and then under the candidate photon budget.

\subsection{Equal-size comparisons at optimal amplitude}
\label{sec:method1-results}

Unless otherwise stated, comparisons in this section use each code's independently optimized amplitude following \ref{sec:comparison-method-1-desc}, so photon budgets differ between candidate and reference. The selected amplitudes and corresponding photon numbers are listed in Table~\ref{tab:equalell-parameters} for all benchmarked codes. We show the Hex3, Elliptical Hex3 and Ellipse loss sweeps here; the Hex2, Hex1, Elliptical Hex2, and Elliptical Hex1 loss sweeps are collected in Appendix~\ref{sec:appendix-hex-loss-sweeps-method1}, and the relative-infidelity curves are collected in Appendix~\ref{sec:appendix-method1-extra-figures}. For referring to a code with code states consisting of  $\ell$ points with-in a family, the notation family-name-$\ell$ will be used (for example: Hex1-12 for code with 12 points per code state belonging to Hex1 family).

Under this method of comparison, QSC and QCC references generally have higher fidelity near the optimization point ($\gamma_0=0.095$), while the Hex3 and its Elliptical Hex3 consistently have higher fidelity compared to QCCs after a certain value of loss probability ($\gamma$).

\subsubsection{Elliptical Hex families}
\label{subsec:method1-elliptical-hex}

\begin{figure}[H]
\centering
\includegraphics[width=0.495\linewidth]{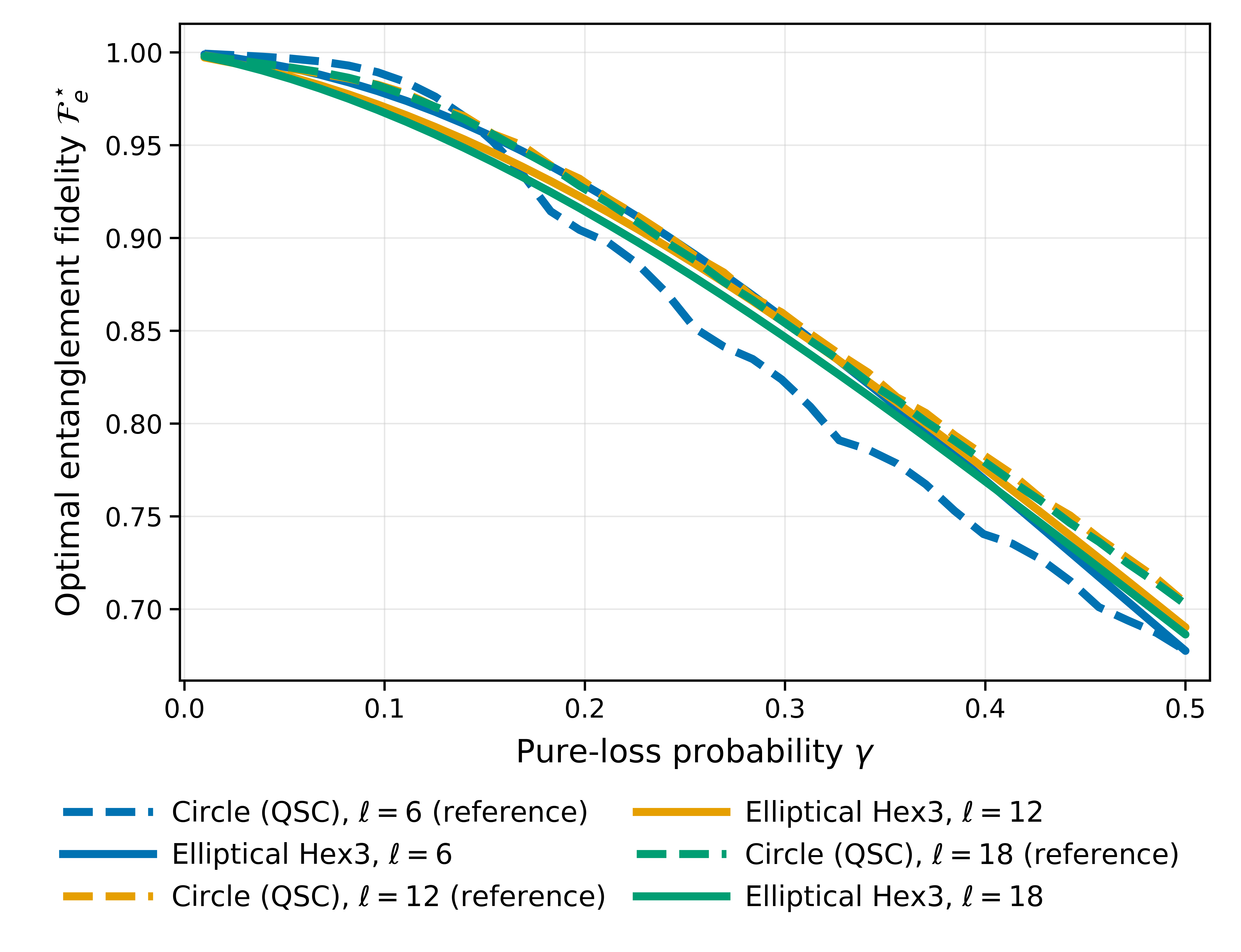}\hfill%
\includegraphics[width=0.495\linewidth]{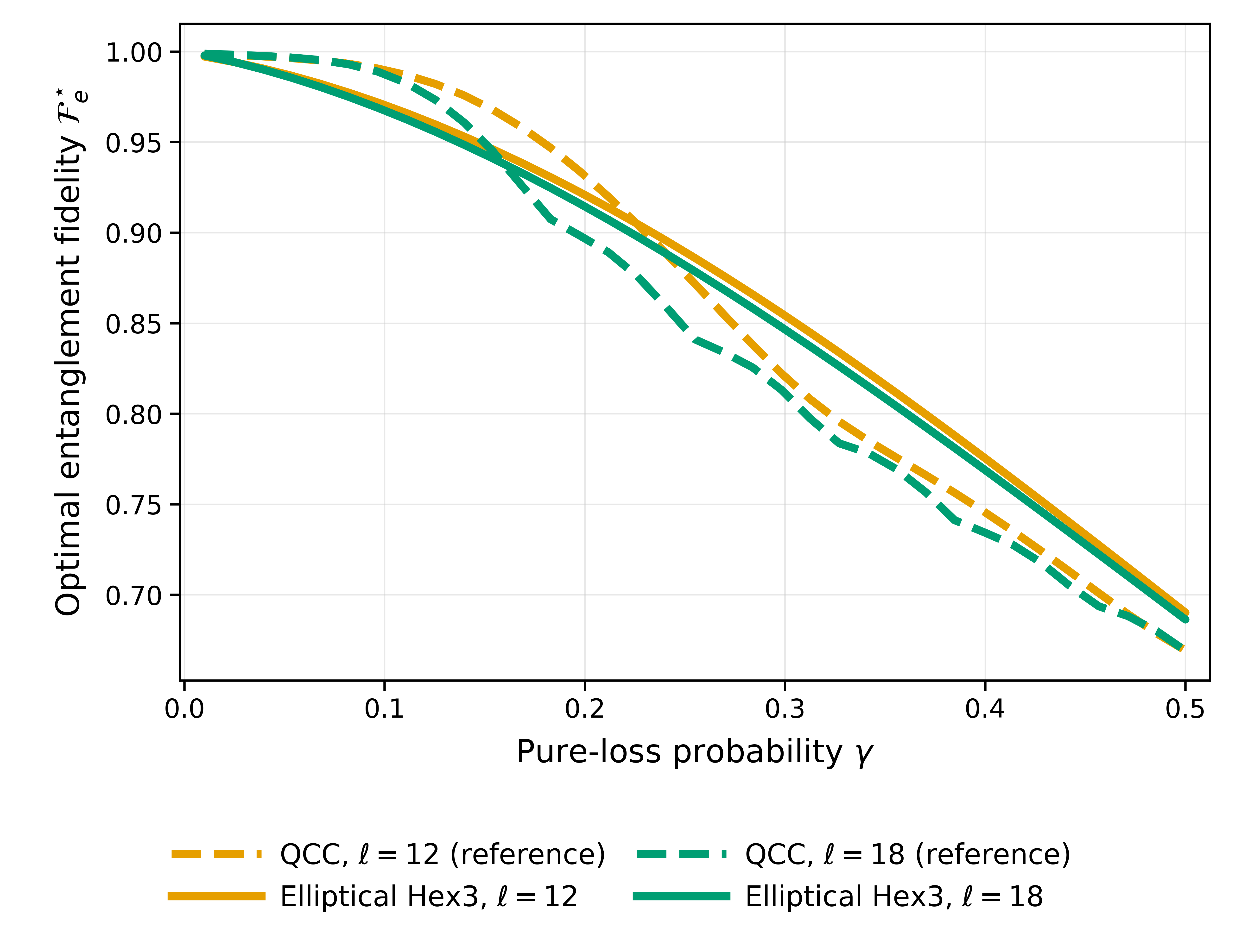}
\caption{Fixed-amplitude loss sweeps for Elliptical Hex3. The candidate is compared with QSC (left) and QCC (right). Each code uses its own sampled optimum at $\gamma_0=0.095$, held fixed across the loss sweep, while recovery is reoptimized at each loss probability. The Elliptical Hex2 and Elliptical Hex1 loss sweeps are provided in Section~\ref{sec:appendix-hex-loss-sweeps-method1}.}
\label{fig:method1-elliptical-hex3-loss}
\end{figure}

Elliptical Hex3-$6$ demonstrates the broadest QSC-relative gain in this group, with higher fidelity for approximately $\gamma=0.15$ onwards, while Elliptical Hex3-$12$ and -$18$ have closely comparable performance (Fig.\ \ref{fig:method1-elliptical-hex3-loss}). Elliptical Hex2-$6$ also has slightly higher fidelity compared to QSC-6 once loss is moderate ($\gamma \approx 0.25$), whereas larger Elliptical Hex2 and Elliptical Hex1 show no evident advantage over QSCs (Fig.\ \ref{fig:method1-elliptical-hex2-loss}) and  (Fig.\ \ref{fig:method1-elliptical-hex1-loss}). Every Elliptical Hex construction uses fewer photons than its QSC reference; Note that the Elliptical Hex2-$18$ and Elliptical Hex1 optima occur at the end of the sampled range for amplitude scale (Table~\ref{tab:equalell-parameters}).

Elliptical Hex3 produces the strongest QCC-relative result in this group, combining substantially lower photon use with higher fidelity at both available sizes for $\gamma\gtrsim0.23$ (Fig.\ \ref{fig:method1-elliptical-hex3-loss}; Table~\ref{tab:equalell-parameters}). Elliptical Hex2-$12$ and Elliptical Hex1 have a higher comparative fidelity over limited intermediate-loss regions, while QCC leads Elliptical Hex2-$18$ throughout the sweep (Figs.\ \ref{fig:method1-elliptical-hex2-loss} and~\ref{fig:method1-elliptical-hex1-loss}). The mean photon number for $12$-point Elliptical Hex2, the $18$-point Elliptical Hex2 and the selected Elliptical Hex1 at optimal amplitude is reported in Table~\ref{tab:equalell-parameters}.
\FloatBarrier

\subsubsection{Hex families}
\label{subsec:method1-hex}

\begin{figure}[H]
\centering
\includegraphics[width=0.495\linewidth]{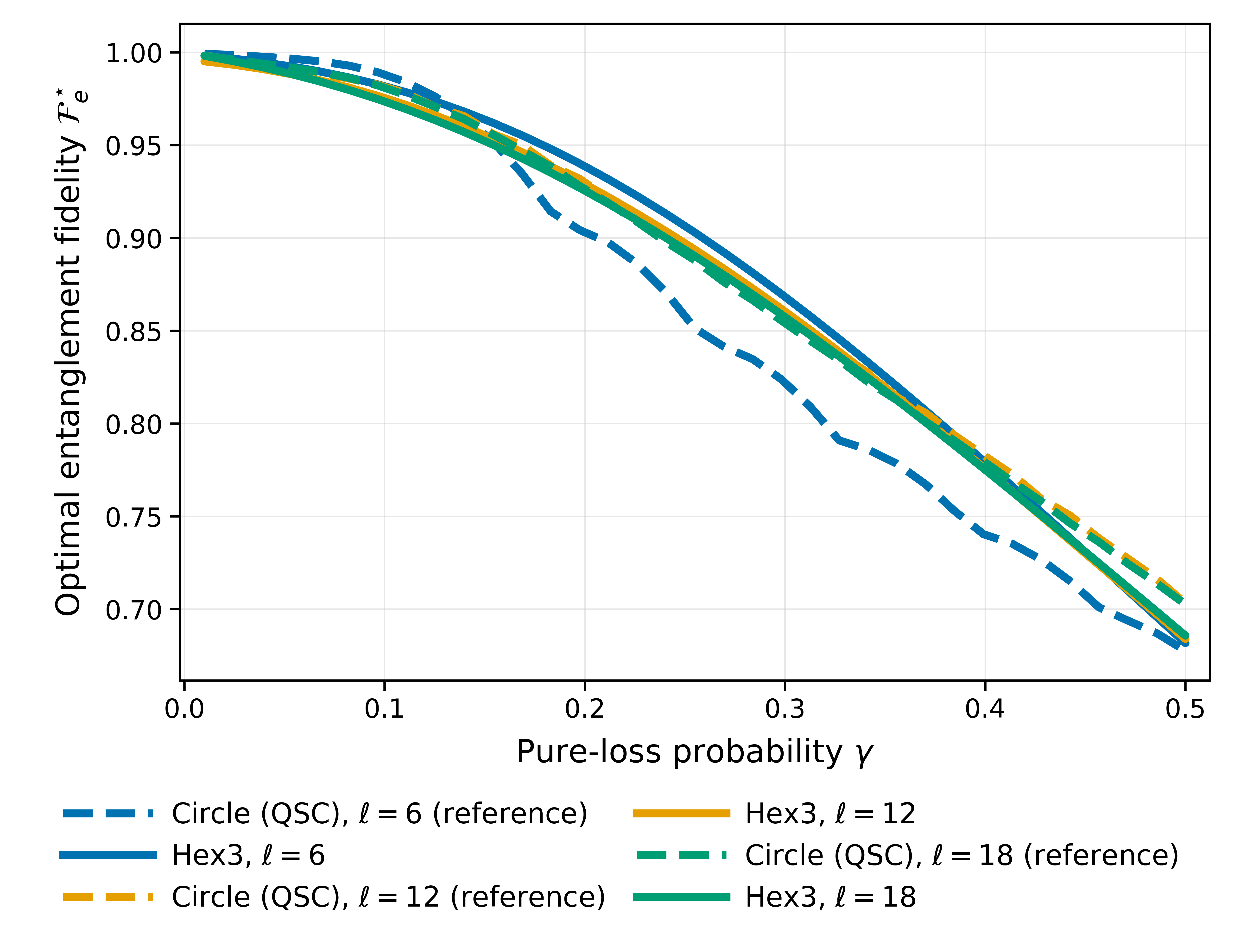}\hfill%
\includegraphics[width=0.495\linewidth]{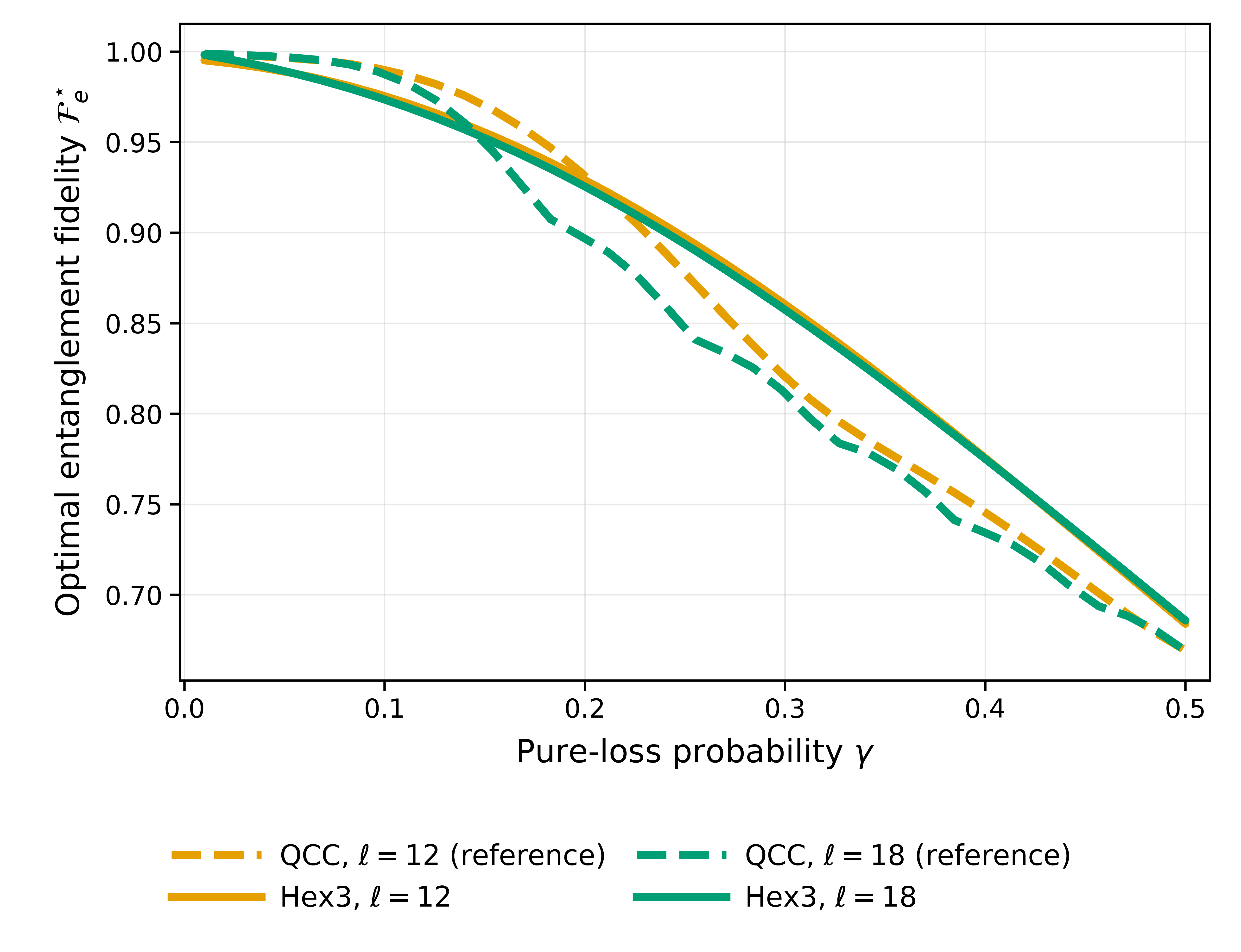}
\caption{Fixed-amplitude loss sweeps for Hex3. The candidate is compared with the equal-size QSC (left) and QCC (right). Candidate and reference amplitudes are selected independently at $\gamma_0=0.095$ and held fixed across the loss scan, while recovery is reoptimized at each loss probability. The Hex2 and Hex1 loss sweeps are provided in Section~\ref{sec:appendix-hex-loss-sweeps-method1}.}
\label{fig:method1-hex3-loss}
\end{figure}

Hex3-$6$ outperforms QSC for $\gamma\gtrsim0.14$ and remains better through the end of the sampled range. Hex3-$12$ and Hex3-$18$ outperform QSC only over narrow intermediate-loss regions (Fig.\ \ref{fig:method1-hex3-loss}). Hex2-$6$ performs better than QSC only at high loss, while QSC outperforms the larger Hex2 codes (Fig.\ \ref{fig:method1-hex2-loss}) and all Hex1 codes (Fig.\ \ref{fig:method1-hex1-loss}). At their independently optimized amplitudes, all Hex codes use fewer photons than the corresponding QSCs (Table~\ref{tab:equalell-parameters}).

Hex3-12 and -18 both outperform QCC for $\gamma\gtrsim0.21$, while using substantially fewer photons (Fig.\ \ref{fig:method1-hex3-loss}; Table~\ref{tab:equalell-parameters}). Hex2 outperforms QCC at intermediate loss, particularly for $\ell=18$, where the two codes use similar photon numbers (Fig.\ \ref{fig:method1-hex2-loss}; Table~\ref{tab:equalell-parameters}). Hex1 also outperforms QCC at intermediate loss. The two codes use nearly the same number of photons (Fig.\ \ref{fig:method1-hex1-loss}; Table~\ref{tab:equalell-parameters}).
\FloatBarrier

\subsubsection{Ellipse family}
\label{subsec:method1-ellipse}

The Ellipse family is obtained using $\left(\tau_x,\tau_y, \tau_{xy}\right)=\left(2,1,0.5\right)$ as described in section \ref{sec:ellipsoid_constellations}.

\begin{figure}[H]
\centering
\includegraphics[width=0.495\linewidth]{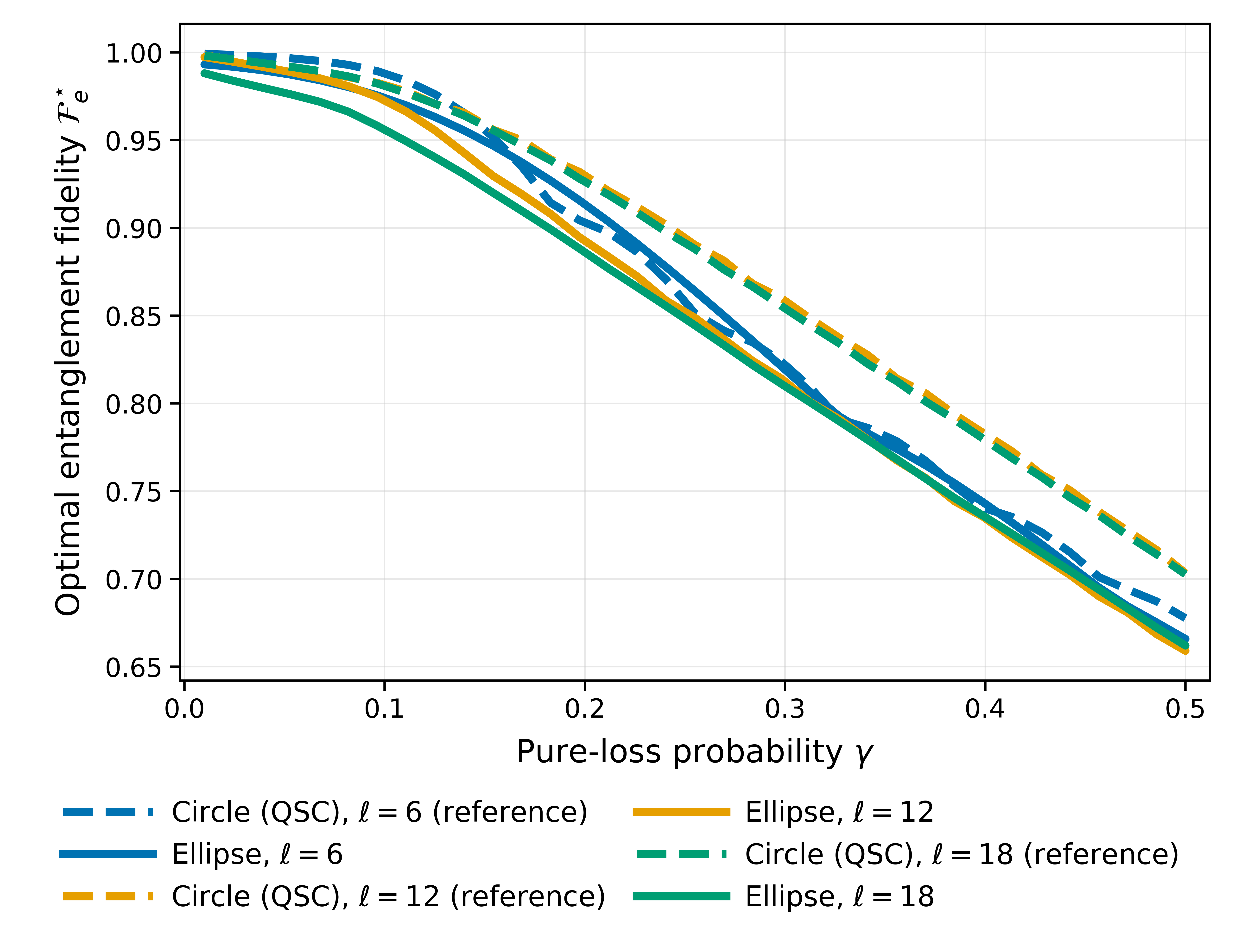}\hfill%
\includegraphics[width=0.495\linewidth]{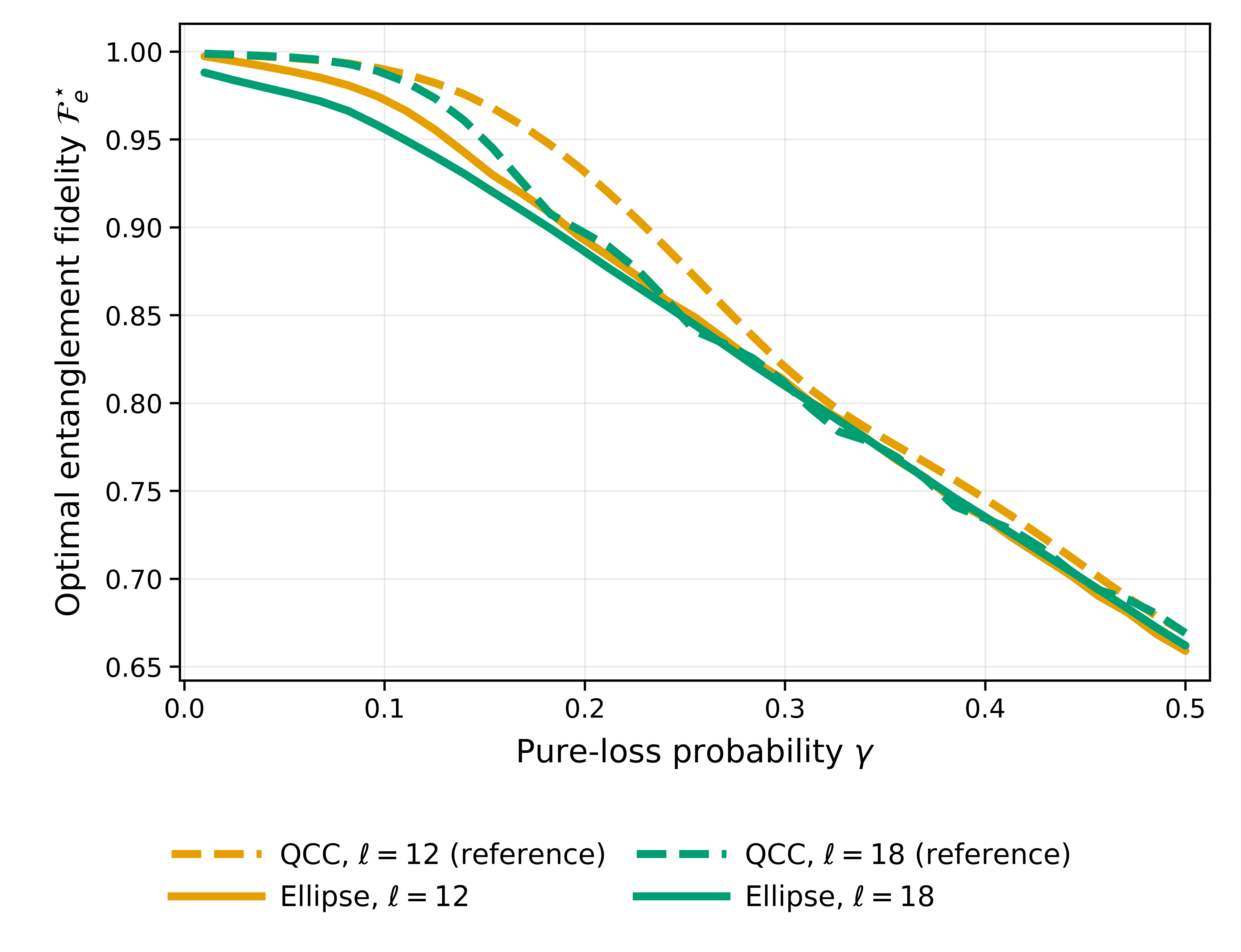}
\caption{Fixed-amplitude loss sweeps for the Ellipse family at equal point count. Left: Ellipse versus QSC for $\ell=6,12,18$. Right: Ellipse versus QCC for $\ell=12,18$. Each code uses its own best sampled amplitude at $\gamma_0=0.095$, held fixed as loss varies, while recovery is reoptimized at each sampled loss probability. Solid curves denote Ellipse codes and dashed curves the corresponding references.}
\label{fig:method1-ellipse-loss}
\end{figure}

Ellipse-$6$ exhibits a modest advantage over QSC-$6$ within limited intermediate-loss intervals. The larger Ellipse constellations do not outperform their corresponding QSC references over the sampled loss range, despite operating at broadly comparable mean photon numbers (Figs.\ \ref{fig:method1-ellipse-loss} and~\ref{fig:appendix-method1-ellipse-qsc}).

Against the QCC references, the Ellipse family exhibits no robust performance advantage over the sampled loss range.

\FloatBarrier

\subsection{Equal-size comparisons under the candidate photon budget}
\label{sec:method2-results}

Comparisons in this section retain each candidate at its independently optimized amplitude and select the reference under the candidate photon budget (\ref{eq:energy-match}) following the candidate photon-budget method defined in Section~\ref{sec:comparison-method-2-desc}. We present the Elliptical Hex3 and Hex3 sweeps here; the corresponding Elliptical Hex1, Elliptical Hex2, Hex1, and Hex2 sweeps are presented in Section~\ref{sec:appendix-hex-loss-sweeps-method2}. Hex3 and Elliptical Hex3 outperform the corresponding equal-size QSC and QCC reference throughout $0.01\leq\gamma\leq0.5$, across all sampled constellation sizes. 

\FloatBarrier

\subsubsection{Elliptical Hex families}
\label{subsec:method2-elliptical-hex}

\begin{figure}[H]
\centering
\includegraphics[width=0.495\linewidth]{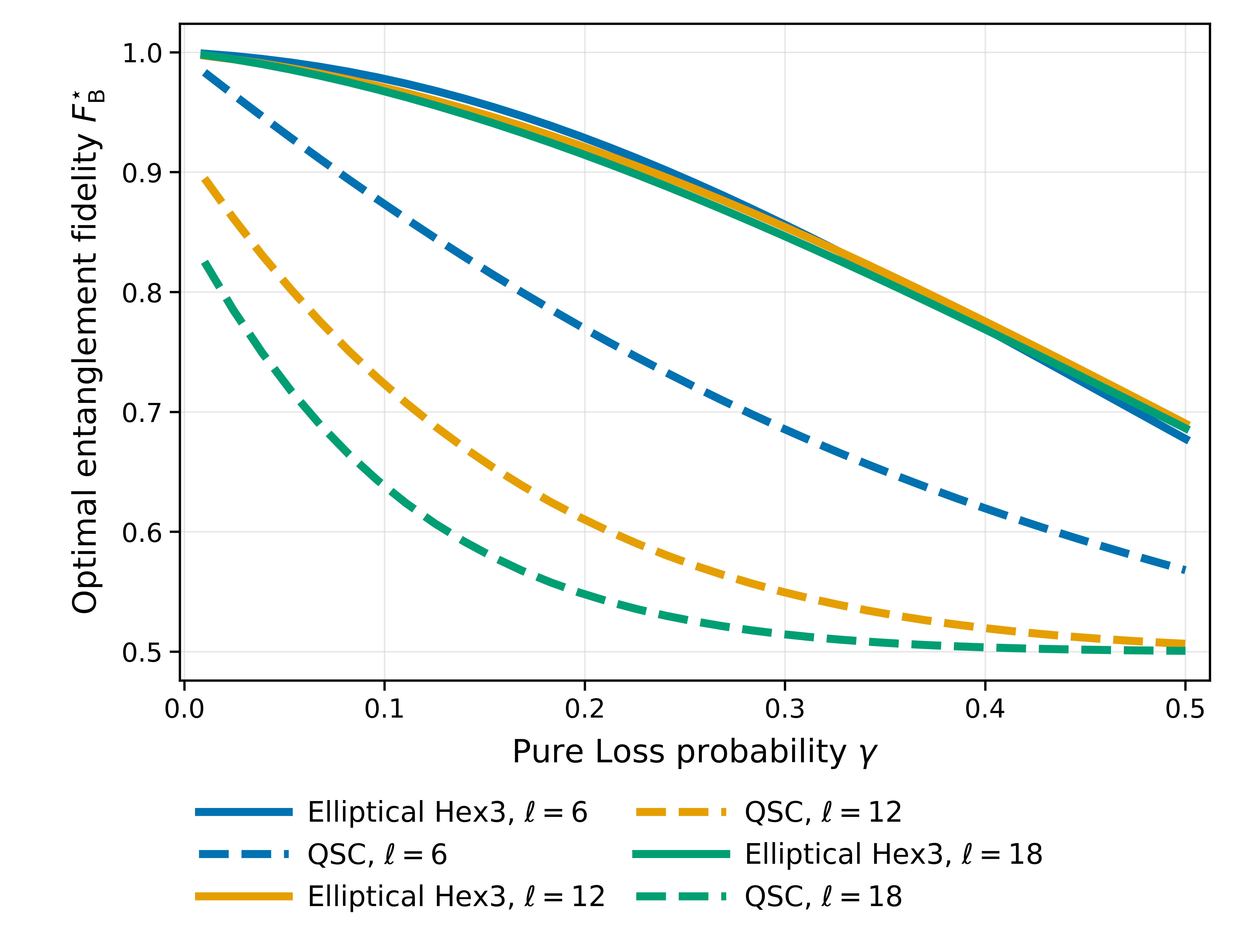}\hfill%
\includegraphics[width=0.495\linewidth]{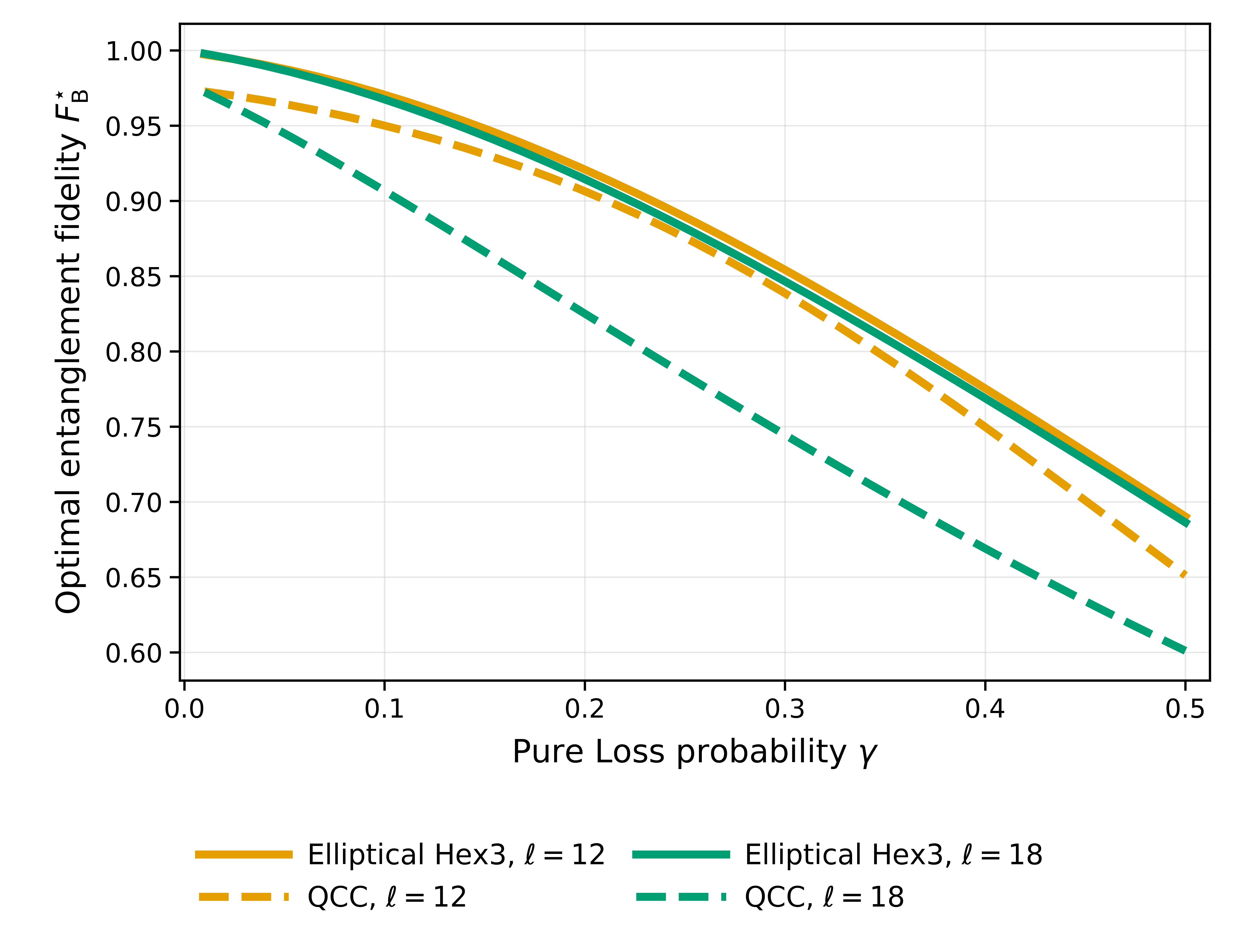}
\caption{Photon-budget fixed-amplitude loss sweeps for Elliptical Hex3. The candidate is compared with QSC (left) and QCC (right) at equal constellation size, with the reference selected under the candidate photon budget. Solid curves denote the candidate and dashed curves the selected reference. The Elliptical Hex2 and Elliptical Hex1 loss sweeps are provided in Section~\ref{sec:appendix-hex-loss-sweeps-method2}.}
\label{fig:method2-elliptical-hex3-loss}
\end{figure}

For Elliptical Hex3, following the comprison method in \ref{sec:comparison-method-2-desc} lowers the QSC mean photon number considerably in order to match that of the Elliptical Hex3 codes (Table~\ref{tab:equalell-parameters}). At these matched photon numbers, Elliptical Hex3 outperforms QSC at every size throughout $0.01\leq\gamma\leq0.5$ by a considerable margin (Fig.\ \ref{fig:method2-elliptical-hex3-loss}). Elliptical Hex2-$12$ and Elliptical Hex2-$18$ have higher fidelity  throughout the sweep, the six-point variant showcases advantage only at low loss (Fig.\ \ref{fig:method2-elliptical-hex2-loss}), and Elliptical Hex1 has higher fidelity throughout (Fig.\ \ref{fig:method2-elliptical-hex1-loss}). Thus, constraining the photon number for QSCs to match those of the new Elliptical Hex3 codes, highlights the performance advantage of the new codes.

For Elliptical Hex3, the QCC mean photon number was more than that of Elliptical Hex3 and was rescaled to match that of Elliptical Hex3 (Table~\ref{tab:equalell-parameters}). At this energy matched comparison Elliptical Hex3 outperforms QCC throughout the sampled range of $\gamma$ at both sizes (Fig.\ \ref{fig:method2-elliptical-hex3-loss}). The photon numbers for Elliptical Hex2 and Elliptical Hex1 are comparable to those of corresponding QCCs,  These two code families show regime-dependent advantage over QCC in some cases (Figs.\ \ref{fig:method2-elliptical-hex2-loss} and~\ref{fig:method2-elliptical-hex1-loss}). While Elliptical Hex2-$12$ has higher fidelity over an intermediate-loss region. 

\FloatBarrier

\subsubsection{Hex families}
\label{subsec:method2-hex}

\begin{figure}[H]
\centering
\includegraphics[width=0.495\linewidth]{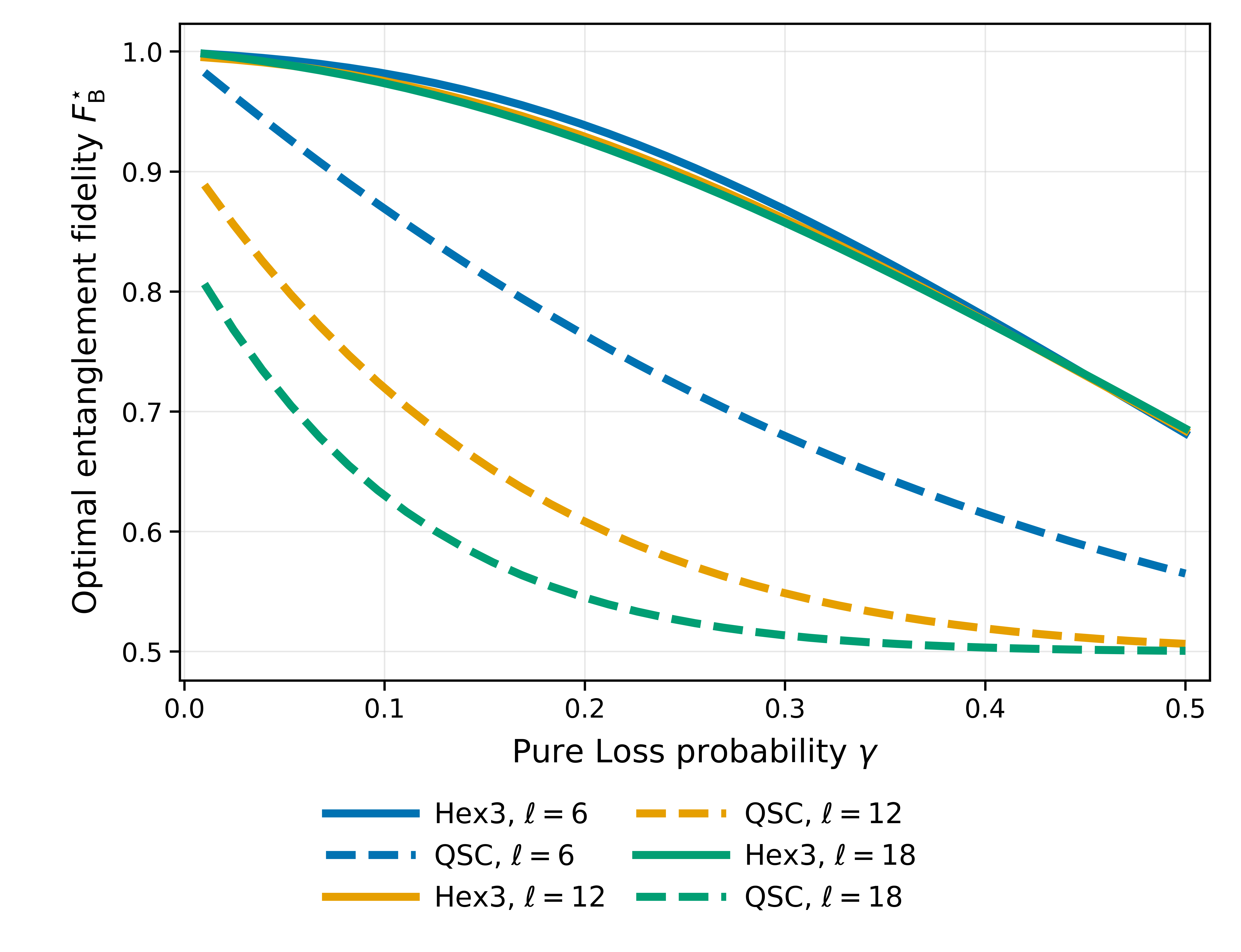}\hfill%
\includegraphics[width=0.495\linewidth]{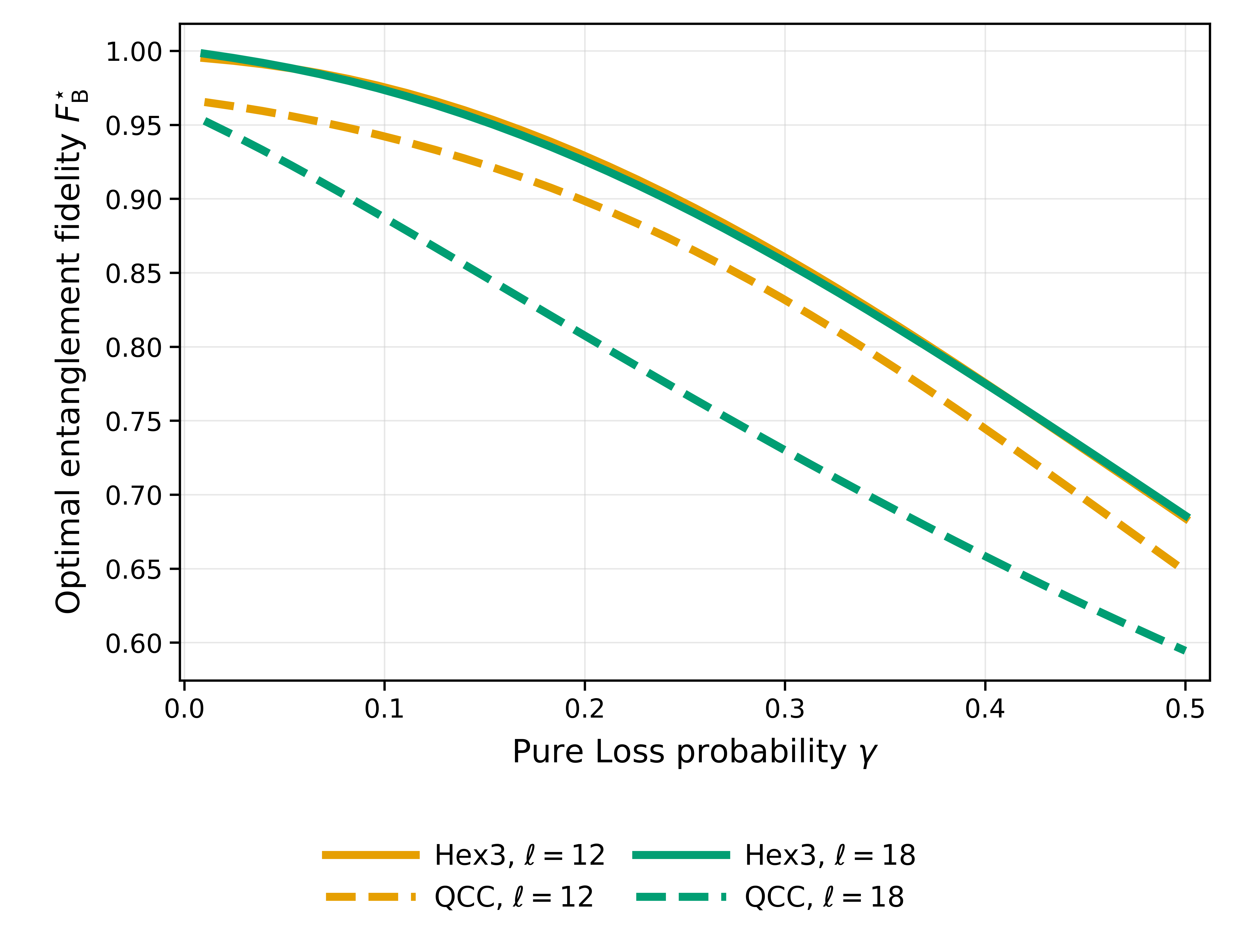}
\caption{Photon-budget fixed-amplitude entanglment fidelity under pure loss for Hex3. The candidate is compared with the equal-size QSC (left) and QCC (right); for these Hex3 pairs, the selected references are energy matched to the candidates. Each amplitude is held fixed throughout the loss sweep while recovery is reoptimized at every sampled loss probability. The Hex2 and Hex1 loss sweeps are provided in Section~\ref{sec:appendix-hex-loss-sweeps-method2}.}
\label{fig:method2-hex3-loss}
\end{figure}

All Hex family codes have lower mean photon number compared to QSCs, and on comparison following reduction of mean photon number of QSC: At these matched photon numbers, Hex3 outperforms QSC at all values of $\ell$ throughout the sampled loss range (Fig.\ \ref{fig:method2-hex3-loss}). Hex2 (Fig.\ \ref{fig:method2-hex2-loss}) and Hex1 (Fig.\ \ref{fig:method2-hex1-loss}) also outperform QSC throughout the sampled range at every available size. Constraining the QSC photon numbers to match those of the new Hex codes highlights their performance advantage across all three constructions.

For Hex3, the photon number is lower than that of QCC and the QCC is rescaled to match the photon number of Hex3 (Table~\ref{tab:equalell-parameters}). At this energy-matched comparison, Hex3 outperforms QCC throughout $0.01\leq\gamma\leq0.5$ at both sizes ($\ell=12,18$)(Fig.\ \ref{fig:method2-hex3-loss}). The photon numbers of Hex1 and QCC are already comparable; with Hex1 having comparably higher fidelity for a narrow intermediate noise regime (Fig.\ \ref{fig:method2-hex1-loss}). Hex2-$12$ and Hex2-$18$ have a higher average photon number than their corresponding QCC references: Hex2-$12$ shows advantage at high noise, while Hex2-$18$ has advantage at intermediate loss(Fig.\ \ref{fig:method2-hex2-loss}). Thus, energy matching reveals a full-range advantage for Hex3, while Hex1 and Hex2 retain regime-dependent advantages over QCC.
\FloatBarrier

\subsubsection{Ellipse family}
\label{subsec:method2-ellipse}

\begin{figure}[H]
\centering
\includegraphics[width=0.495\linewidth]{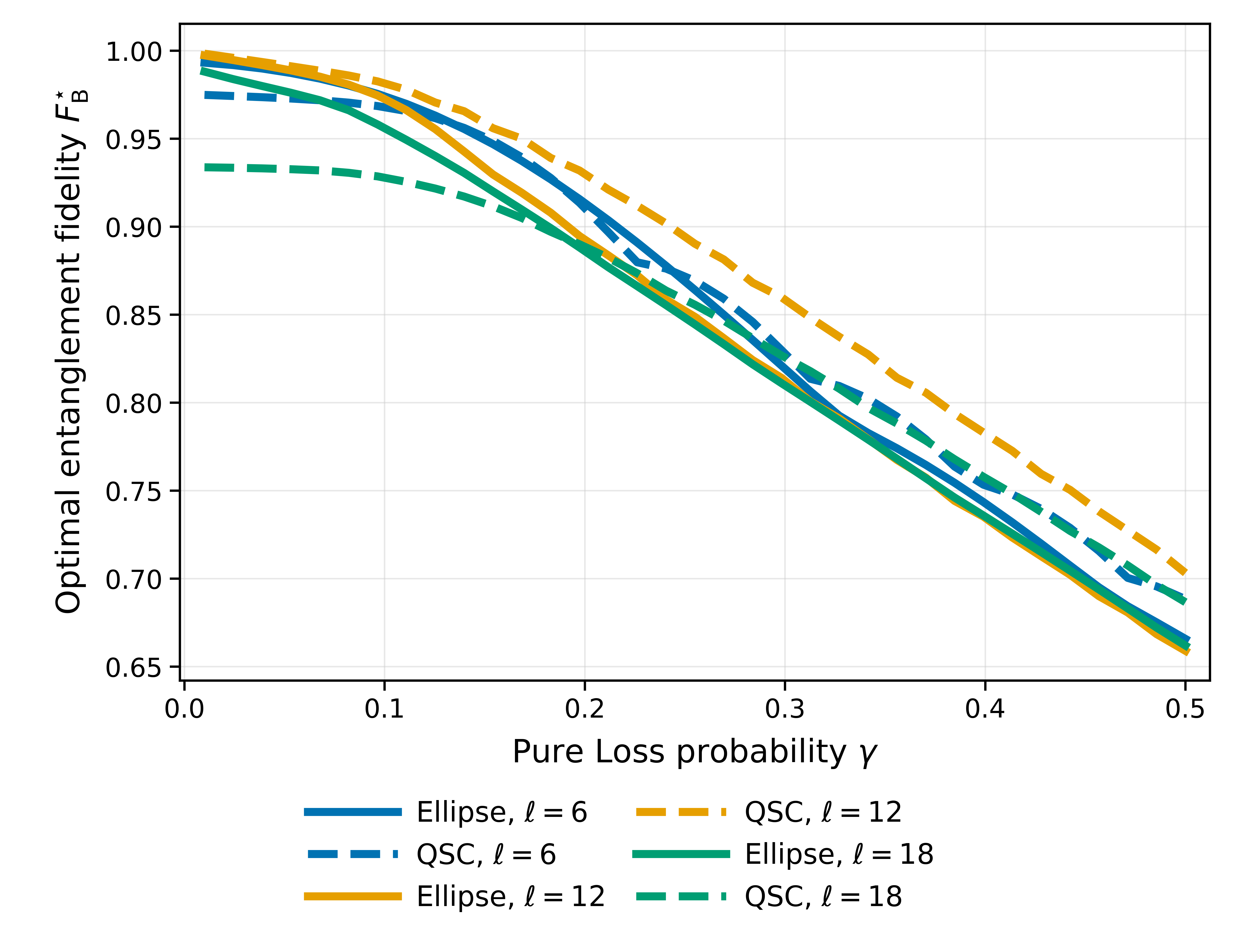}\hfill%
\includegraphics[width=0.495\linewidth]{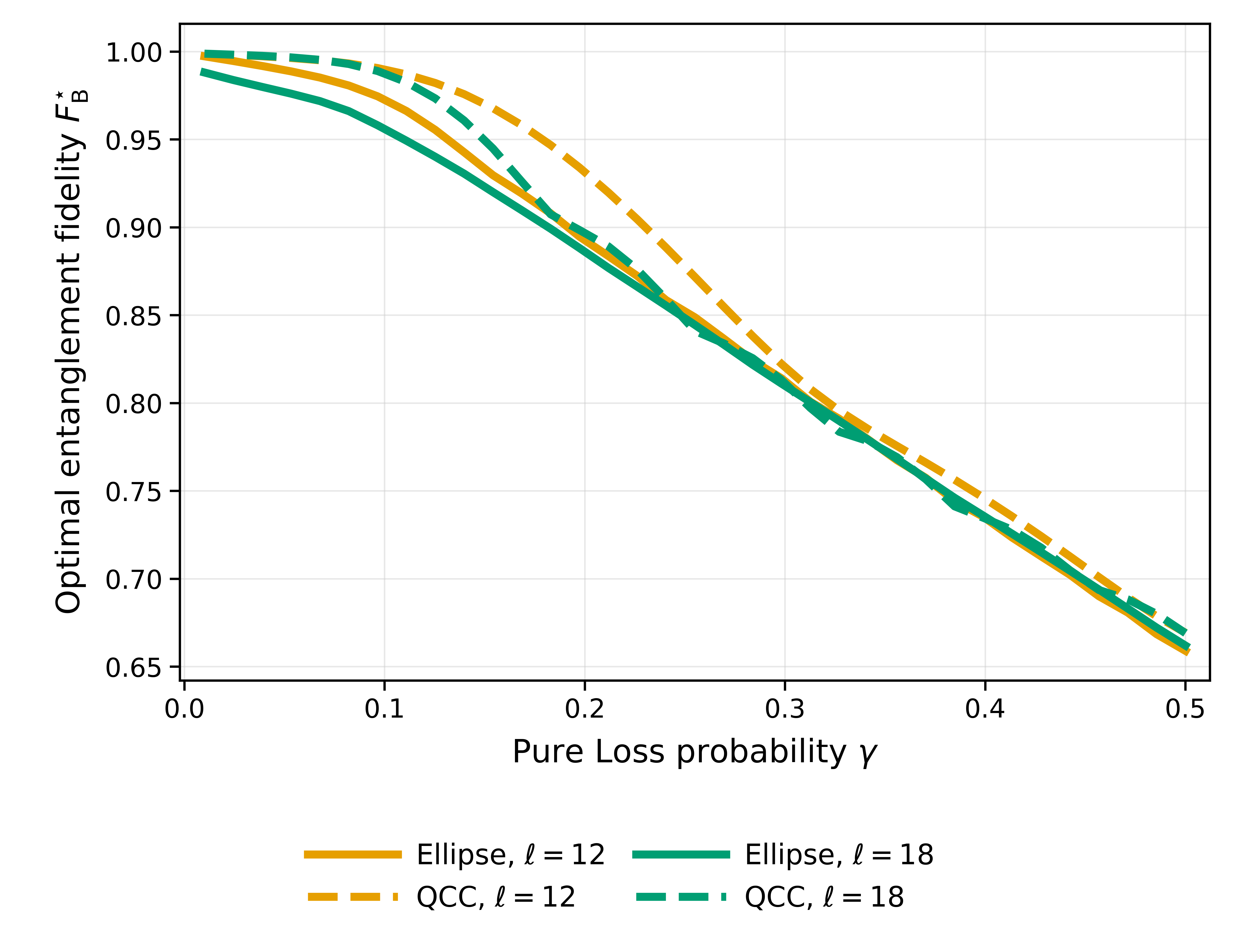}
\caption{Fixed-amplitude entanglement fidelity under pure loss for the Ellipse family under the candidate photon-budget rule. Left: comparison with QSC for $\ell=6,12,18$. Right: comparison with QCC for $\ell=12,18$. The QSC references at $\ell=6,18$ are energy matched, whereas QSC-$12$ and both QCC references retain their lower energy  independently optimized amplitudes. Candidate and reference amplitudes remain fixed across the loss scan while recovery is re-optimized at each sampled $\gamma$.}
\label{fig:method2-ellipse-loss}
\end{figure}

Ellipse codes have a slightly lower mean photon number compared to QSCs (Table~\ref{tab:equalell-parameters}). Ellipse-$6,18$ outperforms QSC only in the low-loss regime up to approximately $\gamma=0.18$. But have a lower fidelity compared to QSC for the rest of the sampled noise range. (Fig.\ \ref{fig:method2-ellipse-loss} and Fig.\ \ref{appendix-method2-ellipse-qsc}).

Both QCC references retain their independently optimized amplitudes as their mean photon number was less than that of the elliptical codes. (Fig.\ \ref{fig:method2-ellipse-loss} and Fig.\ \ref{appendix-method2-ellipse-qcc}). Because the photon-budget constraint does not alter either reference, these results preserve the independently optimized ordering as in \ref{fig:method1-ellipse-loss}
\FloatBarrier

\subsection{Key takeaways from the numerical simulations}

A concise overview of the results section is presented below as tables. 

\begin{table}[H]
\centering
\small
\setlength{\tabcolsep}{4pt}
\begin{tabular}{@{}l ccc ccc@{}}
\hline
\textbf{Code family} & \multicolumn{3}{c}{\textbf{(A) Independent optimization}} & \multicolumn{3}{c}{\textbf{(B) Mean photon budget}} \\
\cline{2-4} \cline{5-7}
 & \textbf{vs.\ QSC} & \textbf{vs.\ QCC} & \textbf{Fig.} & \textbf{vs.\ QSC} & \textbf{vs.\ QCC} & \textbf{Fig.} \\
\hline
Hex3
  & Partial adv. & Partial adv. & \ref{fig:method1-hex3-loss}
  & Strong adv. & Strong adv. & \ref{fig:method2-hex3-loss} \\
Hex2
  & No evident adv. & Partial adv. & \ref{fig:method1-hex2-loss}
  & Strong adv. & Partial adv. & \ref{fig:method2-hex2-loss} \\
Hex1
  & No evident adv. & Partial adv. & \ref{fig:method1-hex1-loss}
  & Strong adv. & Partial adv. & \ref{fig:method2-hex1-loss} \\
Elliptical-Hex3
  & Partial adv. & Partial adv. & \ref{fig:method1-elliptical-hex3-loss}
  & Strong adv. & Strong adv. & \ref{fig:method2-elliptical-hex3-loss} \\
Elliptical-Hex2
  & Partial adv. & No evident adv. & \ref{fig:method1-elliptical-hex2-loss}
  & Partial adv. & No evident adv. & \ref{fig:method2-elliptical-hex2-loss} \\
Elliptical-Hex1
  & No evident adv. & Partial adv. & \ref{fig:method1-elliptical-hex1-loss}
  & Strong adv. & No evident adv. & \ref{fig:method2-elliptical-hex1-loss} \\
Ellipse
  & No evident adv. & No evident adv. & \ref{fig:method1-ellipse-loss}
  & Partial adv. & No evident adv. & \ref{fig:method2-ellipse-loss} \\
\hline
\end{tabular}
\caption{Summary table for comparison of the proposed code families with QSC and QCC reference codes. (A) Independent optimization: code performance is compared after optimizing the amplitude of each code independently, as per \ref{sec:method1-results}. (B) Mean photon budget: code performance is compared at or below the mean photon number of the candidate code, as per \ref{sec:method2-results}. Where, `Strong adv.' means the code consistently has fidelity higher than the reference code throughout the sampled noise range, `Partial adv.' means either some codes in the family show advantage or the family has slightly- to moderately-higher fidelity compared to their reference counterparts in a narrow regime of loss probability $\gamma$, and `No evident adv.' means the family either performs comparable to or worse than the reference codes}
\label{tab:method_results_summary_table}
\end{table}

\section{Conclusions and future directions}\label{sec:conclusions}

In this work we developed a general algebraic framework for constructing bosonic algebraically-restricted constellation (BARC) codes whose logical code states are superpositions of coherent-state constellations over an arbitrary number of modes.
{By expressing the common constellation moments associated with photon loss and gain as coupled polynomial equations, we used equivalent polynomial ideals and moment symmetries to characterize their solution sets and construct the resulting BARC code families (Sections~\ref{sec:coherent_state_codes} and~\ref{sec:constellations}); exact finite-separation correctability additionally requires the coherent-state overlap terms in eqn.\ \eqref{eqn:weighted_KL_exact}.
}
Explicit code constructions, symmetry classifications, and low-order solutions were carried out in the single-mode setting, yielding the hexagonally-packed Hex1, Hex2, and
Hex3 families and the Ellipse family (Sections~\ref{sec:degree_two},~\ref{sec:ellipsoid_constellations},
and~\ref{sec:nonellipsoid_constellations}); the multi-mode formulation was developed
in parallel, with concrete multi-mode constellations left for future work.

A central structural result of this framework is a pair of universal resolution bounds (Section~\ref{sec:resolution_bounds}). 
Using a sphere-packing argument, we showed that any single-mode constellation with common atomic moments through degree $t$ has its resolution $d_E$ --- the minimum squared Euclidean separation between constellation points --- bounded above by an explicit function of the number of constellation points $N=Kl$ and the moments $c_{p,q}$ fixed by the polynomial constraints ({Remark}~\ref{thm:single_mode_resolution_bound}).
We extended this argument to the multi-mode setting by comparing volumes of higher-dimensional balls, giving an analogous universal bound (Theorem~\ref{thm:multimode_resolution_bound}). 
These bounds relate the number of coherent states, centered radial moments, and achievable resolution, independently of the particular geometric family. 
The relation between these moments and the exact constellation-state energy includes the overlap terms discussed in Section~\ref{sec:degree_two}.

Taken together, the sampled comparisons show that
noncircular lattice geometries can provide lower infidelity than the selected
references in well-defined loss regimes. Hex3 and Elliptical Hex3 give the
most consistent advantages across constellation sizes, loss probabilities,
and reference families, whereas the other constructions offer narrower,
regime-dependent improvements in some cases and no particular advantage in others.

Several directions remain open for future work. First, we have not yet constructed the logical Pauli operators for the general BARC code family.
Deriving closed-form logical operators, together with logical gates acting across code blocks, is a natural next step toward fault-tolerant quantum computation. 
{Second, the noise benchmarking in Section~\ref{sec:noise-benchmarking} considers only pure photon loss; extending it to finite-temperature reservoir noise and dephasing remains future work, with the thermal channel to be derived and validated independently of the retained pure-loss treatment in Appendices~\ref{sec:framework} and~\ref{sec:beam-splitter-derivation}.}
Third, Theorem~\ref{thm:multimode_resolution_bound} motivates a systematic search for explicit multi-mode BARC code families and their noise benchmarking, extending the single-mode classification of Section~\ref{sec:degree_two} and the weighted formulation of Section~\ref{sec:weighted_constellation_polynomials}.

Finally, an interesting long-term direction is concatenation with an outer qubit or qudit code, such as a surface code. 
This requires identifying logical operations on the concatenated code and developing fault-tolerant syndrome extraction, state preparation, measurement, and logical gate gadgets.
Their performance should be assessed under errors in the operations and measurements as well as residual errors after bosonic recovery, to determine the conditions under which concatenated BARC codes can support fault-tolerant quantum computation.

\section*{Acknowledgements}
AT is supported by the CQT PhD scholarship, the Google PhD fellowship program, and the CQT Young Researcher Career Development Grant. This project is supported by the National Research Foundation, Singapore, through the National Quantum Office, hosted in A*STAR, under the Quantum Engineering Programme 3.0 (W24Q3D0002), Hybrid Quantum-Classical Computing (HQCC) 1.0 (S24Q7D7001), and Advanced Quantum Algorithms and Solutions (AQAS) (S25Q9DA001) funding initiatives.

\section*{Code availability}

The code for the numerical simulations in this manuscript can be found in the following GitHub repository: \url{https://github.com/IAmPraneel/Noise_Simulations_for_BosonicQEC}

\onecolumngrid
\appendix

\maketitle
\bigskip

\section{Weighted multi-mode constellation polynomials}\label{sec:weighted_constellation_polynomials}

This appendix extends the {BARC code construction} of Section~\ref{sec:constellations} to the weighted multimode states of Section~\ref{sec:coherent_state_codes}.
We first separate the real and imaginary parts of the weighted ladder moments, retaining their explicit binomial expansions.
We then give an invertible transformation to weighted real monomial moments and prove equality of the resulting real polynomial ideals.
The constants $c_{p,q}$ here are normalized weighted moments, whereas the single-mode uniform-weight sums in Section~\ref{sec:constellations} are unnormalized.

Let $p_i=\left(p_{i,1},\dots,p_{i,n}\right)\in\Z_{\geq0}^n$ and $q_i=\left(q_{i,1},\dots,q_{i,n}\right)\in\Z_{\geq0}^n$, and a complex number $c_i\in\C$.
Fix $k\in\left[K\right]$ and $W_k>0$, impose $\sum_{j=1}^l\abs{w_j}^2=W_k$, and suppress the superscript $\left(k\right)$ on the weights and constellation points.
{The exact coherent-state normalization also includes overlap terms and is evaluated after the constellation is chosen.}
{For a weighted constellation $\mathcal{C}_k=\left\{\left(w_j,\alpha_j\right)\right\}_{j=1}^l$ as in Section~\ref{sec:weighted_KL}, write $w_j=r_j+\iota s_j$ and $\alpha_{j,m}=u_{j,m}+\iota v_{j,m}$ and define the polynomial $f_i\in\C\left[r,s,u,v\right]$ by}
\begin{equation}\label{eqn:multimode_ladder_polynomial}
    f_i\left(\mathcal{C}_k\right) = -c_i + \frac{1}{W_k} \sum_{j=1}^l \abs{w_j}^2 \prod_{m=1}^n \left(\alpha_{j,m}^*\right)^{p_{i,m}} \alpha_{j,m}^{q_{i,m}} .
\end{equation}
We will now show how each complex polynomials $f_i$ can be written in terms of its real part $g_i^\Re=\Re\left(f_i\right)$ and imaginary part $g_i^\Im=\Im\left(f_i\right)$.

\begin{proposition}\label{prop:real_polynomial_equations_multimode}
    Write $w_j=r_j+\iota s_j$, $\alpha_{j,m}=u_{j,m}+\iota v_{j,m}$, and $c_i=c_i^\Re+\iota c_i^\Im$ and consider multi-indices $a,b\in\Z_{\geq0}^n$.
    The complex polynomial in eqn.\ \eqref{eqn:multimode_ladder_polynomial} has a real part $g_i^\Re=\Re\left(f_i\right)$ polynomial and an imaginary part $g_i^\Im=\Im\left(f_i\right)$ polynomial of the form:
    \begin{equation}
    \begin{aligned}
        g_i^\Re&= -c_i^{\Re} + \frac{1}{W_k}\sum_{j=1}^l \left(r_j^2+s_j^2\right) \sum_{a,b : \nrm{a+b}_1\text{ even}}
            \left(\prod_{m=1}^n B_{m,a,b}^{\left(i\right)}\right) \left(-1\right)^{\nrm{a}_1+\frac{\nrm{a+b}_1}{2}} \\ &\qquad \left(\prod_{m=1}^n u_{j,m}^{p_{i,m}+q_{i,m}-a_m-b_m}v_{j,m}^{a_m+b_m}\right) \\
        &\text{and}\\
        g_i^\Im &= -c_i^{\Im} + \frac{1}{W_k}\sum_{j=1}^l \left(r_j^2+s_j^2\right) \sum_{a,b : \nrm{a+b}_1\text{ odd}} 
            \left(\prod_{m=1}^n B_{m,a,b}^{\left(i\right)}\right) \left(-1\right)^{\nrm{a}_1+\frac{\nrm{a+b}_1-1}{2}} \\ &\qquad \left(\prod_{m=1}^n u_{j,m}^{p_{i,m}+q_{i,m}-a_m-b_m}v_{j,m}^{a_m+b_m}\right) 
    \end{aligned}
    \end{equation}
    for $B_{m,a,b}^{\left(i\right)} = \binom{p_{i,m}}{a_m}\binom{q_{i,m}}{b_m}$ and where the sum over non-negative integer vectors $a=\left(a_1,\dots,a_n\right)$ and $b=\left(b_1,\dots,b_n\right)$ is from $0^n$ to $\left(p_{i,1},\dots,p_{i,n}\right)$ for $a$ and $0^n$ to $\left(q_{i,1},\dots,q_{i,n}\right)$ for $b$.
\end{proposition}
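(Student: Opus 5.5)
The plan is to reduce the statement to the single-mode computation in the proof of Proposition~\ref{prop:real_polynomial_equations_1mode}, applied mode by mode. The one new ingredient is the weight factor. Since $\abs{w_j}^2=r_j^2+s_j^2$ is real and $W_k>0$ is a real constant, taking real and imaginary parts commutes with multiplication by $\left(r_j^2+s_j^2\right)/W_k$. Hence
\begin{equation*}
    g_i^\Re=-c_i^\Re+\frac{1}{W_k}\sum_{j=1}^l\left(r_j^2+s_j^2\right)\Re\left(\prod_{m=1}^n\left(\alpha_{j,m}^*\right)^{p_{i,m}}\alpha_{j,m}^{q_{i,m}}\right),
\end{equation*}
and the analogous identity holds for $g_i^\Im$ with $\Im$ in place of $\Re$. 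This leaves only the real and imaginary parts of one multimode ladder monomial to compute.

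For a fixed $j$ and each mode $m$, I would expand with the binomial theorem exactly as in the single-mode proof:
\begin{equation*}
    \left(u_{j,m}-\iota v_{j,m}\right)^{p_{i,m}}\left(u_{j,m}+\iota v_{j,m}\right)^{q_{i,m}}=\sum_{a_m=0}^{p_{i,m}}\sum_{b_m=0}^{q_{i,m}}\binom{p_{i,m}}{a_m}\binom{q_{i,m}}{b_m}\left(-1\right)^{a_m}\iota^{a_m+b_m}u_{j,m}^{p_{i,m}+q_{i,m}-a_m-b_m}v_{j,m}^{a_m+b_m}.
\end{equation*}
Next I take the product over $m=1,\dots,n$ and distribute it into a single sum over multi-indices $0\leq a\leq p_i$ and $0\leq b\leq q_i$, componentwise. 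The scalar factors then combine as
\begin{equation*}
    \prod_m\left(-1\right)^{a_m}\iota^{a_m+b_m}=\left(-1\right)^{\nrm{a}_1}\iota^{\nrm{a+b}_1},
\end{equation*}
and the binomial coefficients combine into $\prod_m B_{m,a,b}^{(i)}$. All remaining factors are real monomials in $u$ and $v$.

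Finally I apply the facts already used in the single-mode proof: $\Re\left(\iota^t\right)=\left(-1\right)^{t/2}$ for even $t$ and $0$ otherwise, and $\Im\left(\iota^t\right)=\left(-1\right)^{(t-1)/2}$ for odd $t$ and $0$ otherwise. Taking $t=\nrm{a+b}_1$ splits the sum into the even-parity and odd-parity parts and produces the stated signs. Substituting these back into the expressions for $g_i^\Re$ and $g_i^\Im$ gives the claim.

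I expect no step to be a genuine obstacle. The only point needing care is the bookkeeping when the product of $n$ finite sums is converted into one sum over multi-index pairs $\left(a,b\right)$: the parity condition and the sign must depend only on the total degree $\nrm{a+b}_1$, not on the individual modes. This holds because the powers of $\iota$ multiply across modes, so their exponents add.
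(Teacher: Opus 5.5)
Your proposal is correct and follows essentially the same route as the paper: a mode-by-mode binomial expansion collected into one multi-index sum carrying the factor $(-1)^{\nrm{a}_1}\iota^{\nrm{a+b}_1}$, then the parity rules for $\Re(\iota^t)$ and $\Im(\iota^t)$ together with $\abs{w_j}^2=r_j^2+s_j^2$. You also state explicitly that the real factor $(r_j^2+s_j^2)/W_k$ commutes with taking real and imaginary parts, a step the paper leaves implicit.
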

\begin{proof}
    Let $\alpha=\left(\alpha_1,\dots,\alpha_n\right)$ and $\alpha_m = u_m+\iota v_m$ for some $u_m,v_m\in\R$.
    Then we have
    \begin{equation}
    \begin{aligned}
        \prod_{m=1}^n \left(\alpha_m^*\right)^{p_m} \alpha_m^{q_m} 
        &= \prod_{m=1}^n \left(u_m-\iota v_m\right)^{p_m} \left(u_m+\iota v_m\right)^{q_m} \\
        &= \prod_{m=1}^n \left(\sum_{a_m=0}^{p_m} \binom{p_m}{a_m} u_m^{p_m-a_m}\left(-\iota v_m\right)^{a_m} \right) \left(\sum_{b_m=0}^{q_m} \binom{q_m}{b_m} u_m^{q_m-b_m}\left(\iota v_m\right)^{b_m} \right) \\
        &= \prod_{m=1}^n \sum_{a_m=0}^{p_m} \sum_{b_m=0}^{q_m} \left(-1\right)^{a_m} \iota^{a_m+b_m} 
            \underbrace{ \binom{p_m}{a_m}\binom{q_m}{b_m} }_{=B_{m,a,b}} u_m^{p_m+q_m-a_m-b_m} v_m^{a_m+b_m} .
    \end{aligned}
    \end{equation}
    Now let us denote $a=\left(a_1,\dots,a_n\right)$ and $b=\left(b_1,\dots,b_n\right)$ so that we have
    \begin{equation}
    \begin{aligned}
        \prod_{m=1}^n \left(\alpha_m^*\right)^{p_m} \alpha_m^{q_m} 
        &= \sum_{\substack{a,b\in\Z_{\geq0}^n\\
        0\leq a\leq p,\;0\leq b\leq q}} \left(\prod_{m=1}^n B_{m,a,b}\right) \left(-1\right)^{\nrm{a}_1} \iota^{\nrm{a+b}_1} \left(\prod_{m=1}^n u_m^{p_m+q_m-a_m-b_m} v_m^{a_m+b_m}\right) .
    \end{aligned}
    \end{equation}
    Similar to the proof of Proposition~\ref{prop:real_polynomial_equations_1mode}, consider the fact that $\Re\left(\iota^t\right) = \left(-1\right)^{t/2}$ whenever $t$ is even and $\Re\left(\iota^t\right) = 0$ otherwise, whereas $\Im\left(\iota^t\right) = \left(-1\right)^{\left(t-1\right)/2}$ whenever $t$ is odd and $\Im\left(\iota^t\right) = 0$ otherwise.
    Also, since $w_j=r_j+\iota s_j$, we have $\abs{w_j}^2=r_j^2+s_j^2$.
    Therefore the real part $g_i^\Re=\Re\left(f_i\right)$ and imaginary part $g_i^\Im=\Im\left(f_i\right)$ of $f_i$ are given by
    \begin{equation}
    \begin{aligned}
        g_i^\Re&=
            -c_i^{\Re} + \frac{1}{W_k}\sum_{j=1}^l \left(r_j^2+s_j^2\right) \sum_{a,b : \nrm{a+b}_1\text{ even}}
            \left(\prod_{m=1}^n B_{m,a,b}^{\left(i\right)}\right) \left(-1\right)^{\nrm{a}_1+\frac{\nrm{a+b}_1}{2}} \\ &\qquad \left(\prod_{m=1}^n u_{j,m}^{p_{i,m}+q_{i,m}-a_m-b_m}v_{j,m}^{a_m+b_m}\right) \\
        &\text{and}\\
        g_i^\Im &=
            -c_i^{\Im} + \frac{1}{W_k}\sum_{j=1}^l \left(r_j^2+s_j^2\right) \sum_{a,b : \nrm{a+b}_1\text{ odd}} 
            \left(\prod_{m=1}^n B_{m,a,b}^{\left(i\right)}\right) \left(-1\right)^{\nrm{a}_1+\frac{\nrm{a+b}_1-1}{2}} \\ &\qquad \left(\prod_{m=1}^n u_{j,m}^{p_{i,m}+q_{i,m}-a_m-b_m}v_{j,m}^{a_m+b_m}\right) \\
        &\text{for} \\
        &B_{m,a,b}^{\left(i\right)} = \binom{p_{i,m}}{a_m}\binom{q_{i,m}}{b_m} 
    \end{aligned}
    \end{equation}
    and where the sum over non-negative integer vectors $a,b$ is from $0^n$ to $\left(p_{i,1},\dots,p_{i,n}\right)$ for $a$ and $0^n$ to $\left(q_{i,1},\dots,q_{i,n}\right)$ for $b$.
    Here, the normalization condition is $\sum_{j=1}^l\left(r_j^2+s_j^2\right)-W_k=0$.
\end{proof}

We next extend Lemma~\ref{lem:real_complex_polynomial_coordinate_tranform} from one mode to $n$ modes.
For this purpose, take all pairs $p,q\in\Z_{\geq0}^n$ with $\abs{p}+\abs{q}\leq t$ and write $f_{p,q}$ for eqn.\ \eqref{eqn:multimode_ladder_polynomial} with the corresponding prescribed constant $c_{p,q}$.
All degree bounds in this construction refer to the amplitude variables; when the weights are also variables, they contribute the quadratic factor $r_j^2+s_j^2$.
For multi-indices $a,b\in\Z_{\geq0}^n$, define the normalized weighted real moments and their constraint polynomials by
\begin{equation}\label{eqn:weighted_real_moment_polynomials}
\begin{gathered}
    h_{a,b}^{\left(W\right)}=S_{a,b}^{\left(W\right)}-\tau_{a,b}
    \quad\text{for}\quad
    S_{a,b}^{\left(W\right)}=\frac{1}{W_k}\sum_{j=1}^l\left(r_j^2+s_j^2\right)\prod_{m=1}^n u_{j,m}^{a_m}v_{j,m}^{b_m}.
\end{gathered}
\end{equation}
Since $W_k>0$ is fixed, these are polynomials in the real variables $r,s,u,v$.

For a fixed multi-index $d\in\Z_{\geq0}^n$, let $D_d=\prod_{m=1}^n\left(d_m+1\right)$ and define the $D_d\times D_d$ matrix $\mathsf A^d$ with its entries given by
\begin{equation}\label{eqn:multimode_moment_matrix}
    \mathsf A^d_{p,q;b}=\prod_{m=1}^n A^{d_m}_{p_m,q_m;b_m},
    \qquad p+q=d,\quad 0\leq b\leq d,
\end{equation}
where the coefficients $A^{d_m}_{p_m,q_m;b_m}$ are defined in eqn.\ \eqref{eqn:monomial_moment_constant_complex}.
Throughout the rest of the section, multi-index equalities and inequalities are componentwise.
Let $c^d=\left(c_{p,q}\right)_{p+q=d}$ and $\tau^d=\left(\tau_{d-b,b}\right)_{0\leq b\leq d}$ and order the rows and columns by the product of the single-mode orderings described as follows.
For each mode $m$, order the row indices $\left(p_m,q_m\right)$ with $p_m+q_m=d_m$ as in the single-mode matrix $A^{d_m}$, and order the column indices by $b_m=0,\dots,d_m$.
We order the multimode rows and columns by the corresponding Cartesian-product orderings.

\begin{lemma}[Weighted multimode moment-coordinate transformation]\label{lem:weighted_multimode_moment_transform}
    For every $d\in\Z_{\geq0}^n$, the matrix $\mathsf A^d$ is invertible and
    \begin{equation}\label{eqn:moment_transform_matrix_inverse}
        \mathsf A^d=\bigotimes_{m=1}^n A^{d_m},
        \qquad
        \left(\mathsf A^d\right)^{-1}=\bigotimes_{m=1}^n\left(A^{d_m}\right)^{-1}.
    \end{equation}
    Moreover, $\tau^d=\left(\mathsf A^d\right)^{-1}c^d$ is real if and only if $c_{p,q}=c_{q,p}^*$ for every $p+q=d$.
    With these target moments, the weighted constraint polynomials satisfy
    \begin{equation}\label{eqn:weighted_polynomial_coordinate_change}
        f_{p,q}=\sum_{0\leq b\leq d}\mathsf A^d_{p,q;b}h_{d-b,b}^{\left(W\right)},
        \qquad p+q=d.
    \end{equation}
\end{lemma}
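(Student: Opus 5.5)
The plan is to reduce everything to the single-mode Lemma~\ref{lem:real_complex_polynomial_coordinate_tranform} by factorizing over modes. First I will prove the per-point identity
\begin{equation*}
    \prod_{m=1}^n\left(u_m-\iota v_m\right)^{p_m}\left(u_m+\iota v_m\right)^{q_m}
    =\prod_{m=1}^n\sum_{b_m=0}^{d_m}A^{d_m}_{p_m,q_m;b_m}u_m^{d_m-b_m}v_m^{b_m}
    =\sum_{0\leq b\leq d}\mathsf A^d_{p,q;b}\prod_{m=1}^n u_m^{d_m-b_m}v_m^{b_m},
\end{equation*}
where $d=p+q$. The first equality is the single-mode expansion from the proof of Lemma~\ref{lem:real_complex_polynomial_coordinate_tranform}, applied mode by mode. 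The second equality comes from distributing the product and using definition~\eqref{eqn:multimode_moment_matrix}.

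With the Cartesian-product orderings, the entry formula $\mathsf A^d_{p,q;b}=\prod_m A^{d_m}_{p_m,q_m;b_m}$ is exactly the entry formula of the Kronecker product. This gives $\mathsf A^d=\bigotimes_m A^{d_m}$. Invertibility of each $A^{d_m}$ by Lemma~\ref{lem:real_complex_polynomial_coordinate_tranform}, together with the mixed-product rule $(X\otimes Y)(X'\otimes Y')=XX'\otimes YY'$, then yields the stated inverse.

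For the reality statement, conjugating the single-mode coefficients gives $(A^{d_m}_{p_m,q_m;b_m})^*=A^{d_m}_{q_m,p_m;b_m}$. Taking the product over $m$ gives $(\mathsf A^d_{p,q;b})^*=\mathsf A^d_{q,p;b}$; here the row index $(q,p)$ also satisfies $q+p=d$. From this the single-mode argument goes through verbatim. If $\tau^d$ is real, then $c^*_{q,p}=\sum_b(\mathsf A^d_{q,p;b})^*\tau_{d-b,b}=c_{p,q}$. Conversely, if $c_{p,q}=c_{q,p}^*$, then $\mathsf A^d(\tau^d)^*=c^d$, and invertibility forces $(\tau^d)^*=\tau^d$.

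Finally, for eqn.~\eqref{eqn:weighted_polynomial_coordinate_change}, I multiply the per-point identity by $\left(r_j^2+s_j^2\right)/W_k$ and sum over $j$. This gives $f_{p,q}+c_{p,q}=\sum_b\mathsf A^d_{p,q;b}S^{(W)}_{d-b,b}$. Substituting $c_{p,q}=\sum_b\mathsf A^d_{p,q;b}\tau_{d-b,b}$, which is the definition $\tau^d=(\mathsf A^d)^{-1}c^d$, the constants combine into $h^{(W)}$.

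I do not expect a genuine obstacle. The only delicate step is bookkeeping: checking that the index $b$ of $\tau_{d-b,b}$ and of $S^{(W)}_{d-b,b}$ matches the column ordering used in the Kronecker product. Everything else is a direct lift of the one-mode results.
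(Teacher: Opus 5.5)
Your proposal is correct and follows the paper's proof essentially step for step. It uses the same mode-by-mode single-mode expansion for the per-point identity, the same entrywise Kronecker-product identification (with the inverse from invertibility of each $A^{d_m}$), the same conjugation symmetry $(\mathsf A^d_{p,q;b})^*=\mathsf A^d_{q,p;b}$ for both directions of the reality claim, and the same weighted sum over $j$ followed by subtracting $c_{p,q}=\sum_b\mathsf A^d_{p,q;b}\tau_{d-b,b}$.
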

\begin{proof}
    For a fixed point $\alpha=\left(\alpha_m\right)_m$ with $\alpha_m=u_m+\iota v_m$, let $X_m=u_m-\iota v_m$ and $Y_m=u_m+\iota v_m$.
    The coordinate change is invertible in each mode, with
    \begin{equation}
        u_m=\frac{X_m+Y_m}{2},\qquad v_m=\frac{Y_m-X_m}{2\iota}.
    \end{equation}
    For $p+q=d$, apply the single-mode expansion in every mode:
    \begin{equation}
    \begin{aligned}
        \prod_{m=1}^n X_m^{p_m}Y_m^{q_m}
        &=\prod_{m=1}^n\left(\sum_{b_m=0}^{d_m}A^{d_m}_{p_m,q_m;b_m}u_m^{d_m-b_m}v_m^{b_m}\right)\\
        &=\sum_{0\leq b\leq d}\left(\prod_{m=1}^n A^{d_m}_{p_m,q_m;b_m}\right)\prod_{m=1}^n u_m^{d_m-b_m}v_m^{b_m}\\
        &=\sum_{0\leq b\leq d}\mathsf A^d_{p,q;b}\prod_{m=1}^n u_m^{d_m-b_m}v_m^{b_m}.
    \end{aligned}
    \end{equation}
    
    By the definition in eqn.~\eqref{eqn:multimode_moment_matrix}, the matrix entry with row index $\left(\left(p_1,q_1\right),\dots,\left(p_n,q_n\right)\right)$ and column index $\left(b_1,\dots,b_n\right)$ is
    \begin{equation}
        \mathsf A^d_{p,q;b}=\prod_{m=1}^n A^{d_m}_{p_m,q_m;b_m}.
    \end{equation}
    By the entrywise definition of the Kronecker product under these product orderings,
    \begin{equation}
        \left(\bigotimes_{m=1}^n A^{d_m}\right)_{p,q;b}=\prod_{m=1}^n A^{d_m}_{p_m,q_m;b_m}=\mathsf A^d_{p,q;b},
    \end{equation}
    and hence
    \begin{equation}
        \mathsf A^d=\bigotimes_{m=1}^n A^{d_m}.
    \end{equation}
    Each $A^{d_m}$ factor is invertible by Lemma~\ref{lem:real_complex_polynomial_coordinate_tranform}, so their tensor product is invertible with the stated inverse.

    Complex conjugation of the coefficients exchanges $p_m$ and $q_m$ in every mode, giving
    \begin{equation}
        \left(\mathsf A^d_{p,q;b}\right)^*=\mathsf A^d_{q,p;b}.
    \end{equation}
    If $\tau^d$ is real and $c^d=\mathsf A^d\tau^d$, then
    \begin{equation}
        c_{p,q}^*=\sum_{0\leq b\leq d}\left(\mathsf A^d_{p,q;b}\right)^*\tau_{d-b,b}
        =\sum_{0\leq b\leq d}\mathsf A^d_{q,p;b}\tau_{d-b,b}
        =c_{q,p}.
    \end{equation}
    Conversely, suppose $c_{p,q}=c_{q,p}^*$ and set $\tau^d=\left(\mathsf A^d\right)^{-1}c^d$.
    Then
    \begin{equation}
    \begin{aligned}
        \left(\mathsf A^d\left(\tau^d\right)^*\right)_{p,q}
        &=\sum_{0\leq b\leq d}\mathsf A^d_{p,q;b}\tau_{d-b,b}^*
        =\left(\sum_{0\leq b\leq d}\mathsf A^d_{q,p;b}\tau_{d-b,b}\right)^*
        =c_{q,p}^*=c_{p,q}.
    \end{aligned}
    \end{equation}
    Invertibility therefore gives $\left(\tau^d\right)^*=\tau^d$.

    Finally, multiply the monomial expansion for point $j$ by $\left(r_j^2+s_j^2\right)/W_k$ and sum over $j$:
    \begin{equation}
    \begin{aligned}
        f_{p,q}+c_{p,q}
        &=\frac{1}{W_k}\sum_{j=1}^l\left(r_j^2+s_j^2\right)\prod_{m=1}^n\left(u_{j,m}-\iota v_{j,m}\right)^{p_m}\left(u_{j,m}+\iota v_{j,m}\right)^{q_m}\\
        &=\sum_{0\leq b\leq d}\mathsf A^d_{p,q;b}S_{d-b,b}^{\left(W\right)}.
    \end{aligned}
    \end{equation}
    Subtracting $c_{p,q}=\sum_{0\leq b\leq d}\mathsf A^d_{p,q;b}\tau_{d-b,b}$ gives eqn.\ \eqref{eqn:weighted_polynomial_coordinate_change}.
    Thus the weights do not change the coordinate-transformation matrix.
\end{proof}
\begin{proposition}[Equivalent weighted real moment ideals]\label{prop:weighted_multimode_moment_ideal}
    Assume $c_{p,q}=c_{q,p}^*$ for $\abs{p}+\abs{q}\leq t$ and $c_{0,0}=1$, and obtain the real target moments from $\tau^d=\left(\mathsf A^d\right)^{-1}c^d$ for every $\abs{d}\leq t$.
    In $\R\left[r,s,u,v\right]$, the ideals
    \begin{equation}\label{eqn:weighted_multimode_ideals}
    \begin{gathered}
        I_{t,W}^{\left(n\right)}=\left\langle\Re\left(f_{p,q}\right),\Im\left(f_{p,q}\right):\abs{p}+\abs{q}\leq t\right\rangle,\\
        J_{t,W}^{\left(n\right)}=\left\langle h_{a,b}^{\left(W\right)}:\abs{a}+\abs{b}\leq t\right\rangle
    \end{gathered}
    \end{equation}
    are equal and hence have the same real solution set.
    Their degree-zero generator imposes the weight normalization.
\end{proposition}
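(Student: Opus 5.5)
The plan mirrors the proof of Proposition~\ref{prop:moment_poly_ideal_equals_original_ideal}, with Lemma~\ref{lem:weighted_multimode_moment_transform} taking the place of Lemma~\ref{lem:real_complex_polynomial_coordinate_tranform}. The two inclusions are shown degree block by degree block. Fix a multi-index $d$ with $\abs{d}\leq t$; since $\abs{p}+\abs{q}=\abs{d}$ whenever $p+q=d$, the index sets $\left\{(p,q):\abs{p}+\abs{q}\leq t\right\}$ and $\left\{(a,b):\abs{a}+\abs{b}\leq t\right\}$ are both partitioned by the blocks $p+q=d$ and $a+b=d$. It therefore suffices to relate the generators within each block.

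\textbf{Inclusion $I_{t,W}^{(n)}\subseteq J_{t,W}^{(n)}$.} Eqn.\ \eqref{eqn:weighted_polynomial_coordinate_change} gives $f_{p,q}=\sum_{b}\mathsf A^d_{p,q;b}h_{d-b,b}^{(W)}$ with constant complex coefficients. Each $h^{(W)}$ has real coefficients, so taking real and imaginary parts gives
\begin{equation}
\Re\left(f_{p,q}\right)=\sum_b\Re\left(\mathsf A^d_{p,q;b}\right)h^{(W)}_{d-b,b},\qquad
\Im\left(f_{p,q}\right)=\sum_b\Im\left(\mathsf A^d_{p,q;b}\right)h^{(W)}_{d-b,b}.
\end{equation}
These are real linear combinations of generators of $J$, which proves the inclusion.

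\textbf{Inclusion $J_{t,W}^{(n)}\subseteq I_{t,W}^{(n)}$.} Invert within the block using $(\mathsf A^d)^{-1}=\bigotimes_m(A^{d_m})^{-1}$, writing $\Lambda=(\mathsf A^d)^{-1}$. This gives $h^{(W)}_{d-b,b}=\sum_{p+q=d}\Lambda_{b;(p,q)}f_{p,q}$ as an identity in $\C[r,s,u,v]$. Reality of $\tau^d$, which holds by the conjugacy hypothesis and the lemma, is what makes the constants on both sides match. Since the left side is real, taking real parts gives
\begin{equation}
h^{(W)}_{d-b,b}=\sum_{p+q=d}\left(\Re\Lambda_{b;(p,q)}\,\Re f_{p,q}-\Im\Lambda_{b;(p,q)}\,\Im f_{p,q}\right),
\end{equation}
which lies in $I$. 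Equality of the ideals then gives equality of the real varieties immediately.

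\textbf{Degree-zero generator.} For $d=0$ we have $\mathsf A^0=[1]$, so $\tau_{0,0}=c_{0,0}=1$ and $h^{(W)}_{0,0}=W_k^{-1}\sum_j(r_j^2+s_j^2)-1$. Its vanishing is exactly the normalization $\sum_j\abs{w_j}^2=W_k$; this needs no separate argument beyond reading off $d=0$.

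\textbf{Main obstacle.} The only delicate point is bookkeeping rather than any genuine difficulty: the weights enter multiplicatively and uniformly in every monomial sum, so the transformation matrix is independent of them, as Lemma~\ref{lem:weighted_multimode_moment_transform} already establishes. Beyond that, the real-part step needs $\Lambda f_{p,q}$ to be real, which is guaranteed because it equals $h^{(W)}$.
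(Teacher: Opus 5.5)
Your proposal is correct and follows the paper's proof essentially line for line: you use eqn.~\eqref{eqn:weighted_polynomial_coordinate_change} blockwise and take real and imaginary parts for $I\subseteq J$, invert with $(\mathsf A^d)^{-1}$ and take real parts for $J\subseteq I$, and read off $d=0$ for the normalization. One small correction: the constants on the two sides match by the definition $\tau^d=(\mathsf A^d)^{-1}c^d$ alone, not because $\tau^d$ is real; reality of $\tau^d$ is what places $h^{(W)}$ in $\R[r,s,u,v]$, so that taking real parts returns $h^{(W)}$ itself.
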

\begin{proof}
    For $p+q=d$, eqn.\ \eqref{eqn:weighted_polynomial_coordinate_change} and reality of the $h_{a,b}^{\left(W\right)}$ give
    \begin{equation}
    \begin{aligned}
        \Re\left(f_{p,q}\right)&=\sum_{0\leq b\leq d}\Re\left(\mathsf A^d_{p,q;b}\right)h_{d-b,b}^{\left(W\right)},\\
        \Im\left(f_{p,q}\right)&=\sum_{0\leq b\leq d}\Im\left(\mathsf A^d_{p,q;b}\right)h_{d-b,b}^{\left(W\right)}.
    \end{aligned}
    \end{equation}
    These are real linear combinations, so every generator of $I_{t,W}^{\left(n\right)}$ belongs to $J_{t,W}^{\left(n\right)}$.

    Conversely, write $C^d=\left(\mathsf A^d\right)^{-1}$.
    Inverting the same polynomial identity and taking real parts yields
    \begin{equation}
    \begin{aligned}
        h_{d-b,b}^{\left(W\right)}
        &=\sum_{p+q=d}C^d_{b;p,q}f_{p,q}\\
        &=\sum_{p+q=d}\left[\Re\left(C^d_{b;p,q}\right)\Re\left(f_{p,q}\right)-\Im\left(C^d_{b;p,q}\right)\Im\left(f_{p,q}\right)\right].
    \end{aligned}
    \end{equation}
    Hence every generator of $J_{t,W}^{\left(n\right)}$ belongs to $I_{t,W}^{\left(n\right)}$.
    Applying the two inclusions for all $\abs{d}\leq t$ proves equality of the ideals.

    For $d=0$, the transformation matrix is $\left(1\right)$, so $\tau_{0,0}=c_{0,0}=1$ and
    \begin{equation}
        h_{0,0}^{\left(W\right)}=\Re\left(f_{0,0}\right)=\frac{1}{W_k}\left(\sum_{j=1}^l\left(r_j^2+s_j^2\right)-W_k\right),\qquad \Im\left(f_{0,0}\right)=0.
    \end{equation}
    Since $W_k>0$ is fixed, this generator is equivalent to the stated weight-normalization equation.
\end{proof}

The polynomial system depends on the weights through $\abs{w_j}^2$; their phases also enter the exact overlap sums in eqn.\ \eqref{eqn:weighted_KL_exact}.

\section{Spherical-design and cubature constructions of coherent-state codes}\label{sec:qsc_qcc_review}

We review the moment conditions underlying QSCs and QCCs~\cite{Jain2023,Yang2025}, keeping the coherent-state overlap corrections explicit.

\subsection{Spherical designs and QSCs}

A single-mode QSC is specified by pairwise disjoint constellations
$\left\{\mathcal C_j\right\}_j$ on the unit circle and an amplitude scale
$\sqrt N$. Its normalized constellation states are
\begin{equation}
\begin{gathered}
    \ket{\mathcal C_j}
    =\frac{1}{\mathcal N_j\sqrt{\abs{\mathcal C_j}}}
      \sum_{\alpha\in\mathcal C_j}\ket{\sqrt N\alpha} ,\qquad
    \mathcal N_j^2
    =\frac{1}{\abs{\mathcal C_j}}
      \sum_{\alpha,\beta\in\mathcal C_j}
      \braket{\sqrt N\beta}{\sqrt N\alpha} .
\end{gathered}
\end{equation}
For $p,q\in\Z_{\geq0}$, define
$L_{p,q}\left(a^\dag,a\right)=\left(a^\dag\right)^pa^q$.
By eqn.\ \eqref{eqn:KL_Gram}, exact detectability requires
\begin{equation}\label{eqn:knill_laflamme}
    \brakets{\mathcal C_i}{L_{p,q}\left(a^\dag,a\right)}{\mathcal C_j}
    =\lambda_{p,q}\braket{\mathcal C_i}{\mathcal C_j}
    \quad\forall i,j .
\end{equation}
The coherent-state eigenvalue relation gives the exact matrix element
\begin{equation}
    \brakets{\mathcal C_i}{L_{p,q}\left(a^\dag,a\right)}{\mathcal C_j}
    =\frac{N^{\left(p+q\right)/2}}
      {\mathcal N_i\mathcal N_j\sqrt{\abs{\mathcal C_i}\abs{\mathcal C_j}}}
      \sum_{\substack{\beta\in\mathcal C_i\\\alpha\in\mathcal C_j}}
      \left(\beta^*\right)^p\alpha^q\braket{\sqrt N\beta}{\sqrt N\alpha} .
\end{equation}
Let $\mathcal C=\bigcup_j\mathcal C_j$ and $\epsilon_N=\exp\left(-\frac N2 \min_{\alpha,\beta\in\mathcal C : \alpha\neq\beta} \abs{\alpha-\beta}^2\right)$.
For fixed constellations, $\mathcal N_j^2=1+O\left(\epsilon_N\right)$.
Separating equal-amplitude terms from the other terms therefore gives
\begin{equation}
    \brakets{\mathcal C_i}{L_{p,q}\left(a^\dag,a\right)}{\mathcal C_j}
    =\frac{N^{\left(p+q\right)/2}}{\abs{\mathcal C_j}}
      \sum_{\alpha\in\mathcal C_j}\left(\alpha^*\right)^p\alpha^q\,
      \delta_{i,j}
      +O\!\left(N^{\left(p+q\right)/2}\epsilon_N\right) .
\end{equation}
A spherical $t$-design satisfies
\begin{equation}\label{eqn:spherical_t_design}
    \frac{1}{\abs{\mathcal C_j}}
    \sum_{\alpha\in\mathcal C_j}f\left(\alpha^*,\alpha\right)
    =\int_{\S_1}f\left(\alpha^*,\alpha\right)\,d\nu\left(\alpha\right)
\end{equation}
for every polynomial of total degree at most $t$, where $\nu$ is
the uniform probability measure on the unit circle $\S_1$.
Thus spherical designs give the same diagonal ladder moments
for all logical constellations~\cite{Jain2023}. The residual in
eqn.\ \eqref{eqn:knill_laflamme} is
$O\left(N^{\left(p+q\right)/2}\epsilon_N\right)$; exact finite-separation detectability
requires the remaining overlap terms to satisfy that equation.
\subsection{Positive cubature formulas and QCCs}

The QCC construction uses positive cubature formulas~\cite{Yang2025}.
Let $\sigma$ be a probability measure on a common domain $\Omega\subseteq\C^n$, identified with a domain in $\R^{2n}$, and suppose the moments through degree $t$ are finite.
For each logical constellation, choose positive weights $v_{j,\alpha}$ satisfying
\begin{equation}\label{eqn:positive_cubature}
    \sum_{\alpha\in\mathcal C_j}v_{j,\alpha}f\left(\alpha^*,\alpha\right)=\int_\Omega f\left(\alpha^*,\alpha\right)\,d\sigma\left(\alpha\right)
\end{equation}
for every polynomial of total degree at most $t$.
Taking $f=1$ gives $\sum_{\alpha\in\mathcal C_j}v_{j,\alpha}=1$.
The corresponding normalized coherent-state superposition is
\begin{equation}
\begin{gathered}
    \ket{\mathcal C_j}=\frac{1}{\mathcal N_j}\sum_{\alpha\in\mathcal C_j}\sqrt{v_{j,\alpha}}\ket{\alpha},\\
    \mathcal N_j^2=\sum_{\alpha,\beta\in\mathcal C_j}\sqrt{v_{j,\beta}v_{j,\alpha}}\braket{\beta}{\alpha} .
\end{gathered}
\end{equation}
These cubature weights correspond to $v_{j,\alpha}=\abs{w_\alpha}^2/W_j$ in Section~\ref{sec:coherent_state_codes}, with real positive superposition coefficients chosen here.
Applying eqn.\ \eqref{eqn:positive_cubature} to $\prod_{m=1}^n\left(\alpha_m^*\right)^{p_m}\alpha_m^{q_m}$ gives the same weighted moments for every logical state whenever $\abs{p}+\abs{q}\leq t$.
For disjoint supports in the large-separation approximation, this implies $d_\updownarrow\geq t+1$ and $t_\downarrow\geq\left\lfloor t/2\right\rfloor+1$ under the threshold convention of Section~\ref{sec:coherent_state_codes}.
Exact finite-separation protection still requires eqn.\ \eqref{eqn:weighted_KL_exact}.
The spherical-design construction above is recovered with the uniform measure on one sphere and equal weights; QCCs also permit nonuniform weights and multiple energy shells~\cite{Yang2025}.

\section{Quantum-jump framework and Dyson expansion for deriving Kraus operators for bosonic noise channels}
\label{sec:framework}

For the pure-loss channel benchmarked in
Section~\ref{sec:noise-benchmarking}, we specialize the general
quantum-jump framework below to $\bar{n}=0$, so that only the
emission jump contributes.

\subsection{Lindblad master equation and its splitting}

The quantum-jump (or Monte Carlo wavefunction) method was introduced independently by
Dalibard, Castin and M\o{}lmer~\cite{dalibard1992} and by Dum, Zoller and
Ritsch~\cite{dum1992}, and given a systematic operator-theoretic foundation by
Carmichael~\cite{carmichael1993}. The approach unravels the Lindblad master equation into
individual quantum trajectories, each realising a stochastic sequence of instantaneous
``jumps'' separated by smooth non-Hermitian evolution. Summing over all trajectories
recovers the exact density-matrix evolution. A comprehensive treatment of the formalism and
its physical consequences is given in Carmichael's two-volume text~\cite{carmichael_book1,carmichael_book2}
and in Wiseman and Milburn~\cite{wiseman_milburn}; the present note follows those conventions.

The damped harmonic oscillator coupled to a thermal reservoir of mean occupation $\bar{n}$
obeys the Lindblad master equation
\begin{equation}
\dot{\hat{\rho}} = \gamma\left(\bar{n}+1\right)\!\left(\hat{a}\hat{\rho}\hat{a}^{\dagger} - \tfrac{1}{2}\hat{a}^{\dagger}\hat{a}\hat{\rho} - \tfrac{1}{2}\hat{\rho}\hat{a}^{\dagger}\hat{a}\right)
+\gamma\bar{n}\!\left(\hat{a}^{\dagger}\hat{\rho}\hat{a} - \tfrac{1}{2}\hat{a}\hat{a}^{\dagger}\hat{\rho} - \tfrac{1}{2}\hat{\rho}\hat{a}\hat{a}^{\dagger}\right)
\equiv \mathcal{L}\hat{\rho}.
\label{eq:Lindblad}
\end{equation}
The two jump operators are
\begin{equation}
\hat{L}_{-}=\sqrt{\gamma\left(\bar{n}+1\right)}\,\hat{a} \quad\left(\text{emission, energy loss}\right),\qquad
\hat{L}_{+}=\sqrt{\gamma\bar{n}}\,\hat{a}^{\dagger} \quad\left(\text{absorption, energy gain}\right).
\end{equation}
The key idea of the quantum-jump (or Monte Carlo wavefunction) approach is to split the
Lindbladian into two parts:
\begin{equation}
\mathcal{L}\hat{\rho} = \underbrace{-i\hat{H}_{\mathrm{eff}}\hat{\rho} + i\hat{\rho}\hat{H}_{\mathrm{eff}}^{\dagger}}_{\text{non-Hermitian free evolution}}
+\underbrace{\sum_{s\in\left\{+,-\right\}}\hat{L}_s\hat{\rho}\hat{L}_s^{\dagger}}_{\text{quantum jumps}},
\label{eq:split}
\end{equation}
where the effective non-Hermitian Hamiltonian is
\begin{equation}
\hat{H}_{\mathrm{eff}} = -\frac{i}{2}\sum_s\hat{L}_s^{\dagger}\hat{L}_s
= -\frac{i\gamma}{2}\left[\left(2\bar{n}+1\right)\hat{a}^{\dagger}\hat{a} + \bar{n}\right].
\label{eq:Heff}
\end{equation}
The splitting \eqref{eq:split} has a transparent physical interpretation: in any infinitesimal
interval $\left[t,t+dt\right]$, the system either undergoes \emph{no} jump (probability $1 - \nrm{\hat{L}_+\psi}^2 dt
- \nrm{\hat{L}_-\psi}^2 dt$) and evolves under the non-Hermitian $\hat{H}_{\mathrm{eff}}$, or it undergoes
\emph{exactly one} jump of type $s$ (probability $\nrm{\hat{L}_s\psi}^2 dt$) and the state is
updated by $\psi\mapsto\hat{L}_s\psi/\nrm{\hat{L}_s\psi}$.

\subsection{No-jump propagator}

The non-Hermitian part of the evolution generates the \emph{no-jump propagator}
\begin{equation}
\hat{G}\left(t\right) \equiv e^{-i\hat{H}_{\mathrm{eff}} t} = e^{-\gamma\bar{n}t/2}\,e^{-\Gamma t\,\hat{a}^{\dagger}\hat{a}},
\qquad \Gamma\equiv\frac{\gamma\left(2\bar{n}+1\right)}{2}.
\label{eq:Gprop}
\end{equation}
The two factors arise from the two terms in \eqref{eq:Heff}: the $\bar{n}$ piece contributes
the overall scalar decay $e^{-\gamma\bar{n}t/2}$, while the $\hat{a}^{\dagger}\hat{a}$ piece gives the
number-dependent damping $e^{-\Gamma t\hat{a}^{\dagger}\hat{a}}$.
Note that $\hat{G}\left(t\right)$ is \emph{not} unitary but contractive,
$\nrm{\hat{G}\left(t\right)\psi}^2\leq\nrm{\psi}^2$; for normalised $\psi$ the left-hand side is the probability of remaining in the
no-jump sector. Equality can hold: at $\bar n=0$ the vacuum is a dark state, $\hat{G}\left(t\right)\ket{0}=\ket{0}$.

The fundamental BCH commutation identity that will be used throughout is
\begin{equation}
e^{-s\hat{a}^{\dagger}\hat{a}}\,\hat{a} = e^{+s}\,\hat{a}\,e^{-s\hat{a}^{\dagger}\hat{a}},\qquad
e^{-s\hat{a}^{\dagger}\hat{a}}\,\hat{a}^{\dagger} = e^{-s}\,\hat{a}^{\dagger}\,e^{-s\hat{a}^{\dagger}\hat{a}},
\label{eq:BCH}
\end{equation}
which follows from $\left[\hat{a}^{\dagger}\hat{a},\hat{a}\right]=-\hat{a}$ and $\left[\hat{a}^{\dagger}\hat{a},\hat{a}^{\dagger}\right]=+\hat{a}^{\dagger}$. Applied $l$ times to move
$\hat{a}^l$ through a propagator:
\begin{equation}
\hat{a}^l\,\hat{G}\left(t\right) = e^{-\Gamma lt}\,\hat{G}\left(t\right)\,\hat{a}^l,\qquad
\hat{G}\left(t\right)\,\left(\hat{a}^{\dagger}\right)^k = e^{-\Gamma kt}\,\left(\hat{a}^{\dagger}\right)^k\,\hat{G}\left(t\right). \label{eq:BCHiter}
\end{equation}

\subsection{Dyson expansion over ordered trajectories}

Treating the jump part of \eqref{eq:split} as a perturbation on top of no-jump evolution,
the time-evolution superoperator $e^{\mathcal{L} t}$ admits the Dyson series
\begin{equation}
\begin{aligned}
\hat{\rho}\left(t\right) &= \sum_{j=0}^{\infty}\;\sum_{s_1,\dots,s_j\in\left\{+,-\right\}}
\int_0^t\!d\tau_j\int_0^{\tau_j}\!d\tau_{j-1}\cdots\int_0^{\tau_2}\!d\tau_1\;
\hat{A}_{\boldsymbol{s},\boldsymbol{\tau}}\left(t\right)\,\hat{\rho}\left(0\right)\,\hat{A}_{\boldsymbol{s},\boldsymbol{\tau}}^{\dagger}\left(t\right),\\
\hat{A}_{\boldsymbol{s},\boldsymbol{\tau}}\left(t\right)&=\hat{G}\left(t-\tau_j\right)\hat{L}_{s_j}\hat{G}\left(\tau_j-\tau_{j-1}\right)\hat{L}_{s_{j-1}}\cdots\hat{G}\left(\tau_2-\tau_1\right)\hat{L}_{s_1}\hat{G}\left(\tau_1\right).
\end{aligned}
\label{eq:Dyson}
\end{equation}
For $j=0$ the product is $\hat{G}\left(t\right)$, with no integrals and no jump-type sum.

This has a direct physical reading: the $j$-th term of the outer sum collects all histories
in which exactly $j$ quantum jumps occurred, with types $s_1,\dots,s_j$ at times
$0<\tau_1<\cdots<\tau_j<t$. Between consecutive jumps the system evolves freely under
$\hat{G}$. The $j=0$ term is the \emph{no-jump} contribution $\hat{G}\left(t\right)\hat{\rho}\left(0\right)\hat{G}^{\dagger}\left(t\right)$.

Crucially, summing \eqref{eq:Dyson} over all $j$ and all jump types recovers the \emph{exact}
solution of \eqref{eq:Lindblad}: no approximation is made. This is not a perturbative expansion
in jump probability; it is a reorganisation of the exact propagator into a sum over trajectory
sectors.

\paragraph{Goal.} We aim to rewrite \eqref{eq:Dyson} as
\begin{equation}
\hat{\rho}\left(t\right) = \sum_{k,l\ge0}\hat{K}_{kl}\left(t\right)\,\hat{\rho}\left(0\right)\,\hat{K}_{kl}^{\dagger}\left(t\right),
\label{eq:KrausDecomp}
\end{equation}
where $k$ labels the photons removed and $l$ the photons added, so that $\hat K_{kl}$ changes the photon number by $l-k$.
At $\bar n=0$ (Sec.~\ref{sec:T0}) all histories with the same number of emissions have \emph{proportional} ordered products, so their
integrals collapse to a scalar and each jump-number sector yields exactly one Kraus operator.
For $\bar n>0$ this is no longer true: already for one absorption and one emission the ordered words
$\hat a\hat a^{\dagger}=\hat n+1$ and $\hat a^{\dagger}\hat a=\hat n$ are not proportional, and the incoherent sum
$\sum_w\hat A_w\hat\rho\hat A_w^{\dagger}$ over such histories cannot be replaced by the outer product
$\left(\sum_w\hat A_w\right)\hat\rho\left(\sum_w\hat A_w\right)^{\dagger}$ of their coherent sum, which contains spurious cross terms.
The finite-temperature family in Sec.~\ref{sec:finiteT} is therefore obtained by a different route, and its index $\left(k,l\right)$ is not a
jump-number sector of the unravelling \eqref{eq:Dyson}.
Completeness of the Kraus family,
$\sum_{k,l}\hat{K}_{kl}^{\dagger}\hat{K}_{kl}=\mathbb{1}$, is equivalent to the trace-preservation
of the channel.

\subsection{Zero-temperature amplitude damping (\texorpdfstring{$T=0$}{T=0})}
\label{sec:T0}

At $T=0$ one has $\bar{n}=0$, so $\hat{L}_+=0$ and only the emission jump $\hat{L}_-=\sqrt{\gamma}\,\hat{a}$
survives. The propagator simplifies to $\hat{G}_0\left(t\right)=e^{-\gamma t\,\hat{a}^{\dagger}\hat{a}/2}$ and $\Gamma=\gamma/2$.

We write $\eta_t=e^{-\gamma t}$ for the transmissivity.
Here $\gamma$ is a decay rate; the loss probability used in
Section~\ref{sec:noise-benchmarking} is $1-\eta_t$.

\subsubsection{Derivation of the Kraus operators}

For a trajectory with exactly $l$ emission jumps at ordered times $0<\tau_1<\cdots<\tau_l<t$,
the integrand in \eqref{eq:Dyson} on the ket side is
\begin{equation}
\hat{G}_0\left(t-\tau_l\right)\,\hat{a}\,\hat{G}_0\left(\tau_l-\tau_{l-1}\right)\,\hat{a}\cdots\hat{a}\,\hat{G}_0\left(\tau_1\right).
\end{equation}
Using \eqref{eq:BCHiter} in the form $\hat a\,\hat G_0\left(\Delta\right)=e^{-\gamma\Delta/2}\hat G_0\left(\Delta\right)\,\hat a$, we move each $\hat a$ to the
right through all propagators standing to its right. The $r$-th jump passes propagators of total duration $\tau_r$ and therefore picks up
$e^{-\gamma\tau_r/2}$, while the propagators merge into $\hat G_0\left(t\right)$. The ket-side product is thus
$\left[\prod_{r}e^{-\gamma\tau_r/2}\right]\hat G_0\left(t\right)\hat a^l$, and the combined scalar from ket and bra sides is
\begin{equation}
\prod_{r=1}^l e^{-\gamma\tau_r}.
\end{equation}
The resulting contribution to $\hat{\rho}\left(t\right)$ is
\begin{equation}
\hat{\rho}^{\left(l\right)}\left(t\right) = \gamma^l\,\left[
\int_0^t\!d\tau_l\cdots\int_0^{\tau_2}\!d\tau_1\,\prod_{r=1}^l e^{-\gamma\tau_r}
\right]\hat{G}_0\left(t\right)\,\hat{a}^l\,\hat{\rho}\left(0\right)\,\left(\hat{a}^{\dagger}\right)^l\,\hat{G}_0^{\dagger}\left(t\right).
\end{equation}

The nested scalar integral is evaluated by induction. Define
$I_l\left(t\right)=\int_0^t\!d\tau_l\cdots\int_0^{\tau_2}\!d\tau_1\prod_r e^{-\gamma\tau_r}$.
For $l=1$: $I_1=\int_0^t e^{-\gamma\tau}d\tau=\left(1-e^{-\gamma t}\right)/\gamma$.
Assuming the result for $l-1$, the substitution $u=1-e^{-\gamma\tau_l}$ gives
\begin{equation}
I_l\left(t\right) = \frac{1}{\left(l-1\right)!\,\gamma^{l-1}}\int_0^t e^{-\gamma\tau_l}\left(1-e^{-\gamma\tau_l}\right)^{l-1}d\tau_l
= \frac{\left(1-e^{-\gamma t}\right)^l}{l!\,\gamma^l}.
\label{eq:Il}
\end{equation}
Substituting \eqref{eq:Il}:
\begin{equation}
\hat{\rho}^{\left(l\right)}\left(t\right) = \frac{\left(1-e^{-\gamma t}\right)^l}{l!}\;\hat{G}_0\left(t\right)\,\hat{a}^l\,\hat{\rho}\left(0\right)\,\left(\hat{a}^{\dagger}\right)^l\,\hat{G}_0^{\dagger}\left(t\right).
\end{equation}

Reading off the Kraus operators gives
\begin{equation}
\tcboxmath{
K_l^{\left(0\right)}\left(t\right) = \sqrt{\frac{\left(1-e^{-\gamma t}\right)^l}{l!}}\;e^{-\gamma t\,\hat{a}^{\dagger}\hat{a}/2}\,\hat{a}^l,
\qquad l=0,1,2,\dots
}
\label{eq:KrausT0}
\end{equation}

\subsubsection{Completeness}

Acting on a Fock state $\ket{n}$: $\hat{a}^l\ket{n}=\sqrt{n!/\left(n-l\right)!}\,\ket{n-l}$ for $l\le n$
(zero otherwise), then $e^{-\gamma t\hat{a}^{\dagger}\hat{a}/2}\ket{n-l}=e^{-\gamma t\left(n-l\right)/2}\ket{n-l}$. Hence
\begin{equation}
K_l^{\left(0\right)}\left(t\right)\ket{n} = \sqrt{\binom{n}{l}}\left(1-e^{-\gamma t}\right)^{l/2}e^{-\gamma t\left(n-l\right)/2}\ket{n-l},\qquad 0\le l\le n.
\end{equation}
The completeness sum over all $l$ then gives
\begin{equation}
\sum_{l=0}^{\infty}\brakets{n}{K_l^{\left(0\right)\dagger}K_l^{\left(0\right)}}{n}
= \sum_{l=0}^{n}\binom{n}{l}\left(1-e^{-\gamma t}\right)^l e^{-\gamma t\left(n-l\right)}
= \left(1-e^{-\gamma t}+e^{-\gamma t}\right)^n = 1,
\end{equation}
so $\sum_l K_l^{\left(0\right)\dagger}K_l^{\left(0\right)}=\mathbb{1}$.

\subsection{Phase-damping channel}
\label{sec:phase}

\subsubsection{Setup and no-jump propagator}

For a harmonic oscillator coupled to a dephasing reservoir the single jump operator is
$\hat{L}=\sqrt{\kappa}\,\hat{n}$, giving
\begin{equation}
\dot{\hat{\rho}} = \kappa\!\left(\hat{n}\hat{\rho}\hat{n} - \tfrac{1}{2}\hat{n}^2\hat{\rho} - \tfrac{1}{2}\hat{\rho}\hat{n}^2\right).
\end{equation}
The effective Hamiltonian is $\hat{H}_{\mathrm{eff}}^{\left(d\right)}=-i\left(\kappa/2\right)\hat{n}^2$, so the no-jump propagator is
\begin{equation}
S\left(\tau\right) = e^{-i\hat{H}_{\mathrm{eff}}^{\left(d\right)}\tau} = e^{-\kappa\tau\hat{n}^2/2},
\end{equation}
and the $k$-jump sector of the Dyson series \eqref{eq:Dyson} (with $j$ replaced by $k$ for
dephasing) is
\begin{equation}
\hat{\rho}^{\left(k\right)}\left(t\right)=\kappa^k\!\int_0^t\!dt_k\cdots\int_0^{t_2}\!dt_1\;
e^{-\kappa\left(t-t_k\right)\hat{n}^2/2}\,\hat{n}\,e^{-\kappa\left(t_k-t_{k-1}\right)\hat{n}^2/2}\,\hat{n}\cdots\hat{n}\,e^{-\kappa t_1\hat{n}^2/2}
\,\hat{\rho}\left(0\right)\,\left[\cdots\right]^{\dagger}.
\end{equation}

\subsubsection{Collapse of the time integral}

The key algebraic fact is
\begin{equation}
\left[\hat{n},\,\hat{n}^2\right]=0,
\label{eq:ncommute}
\end{equation}
which means the jump operator $\hat{n}$ commutes through every no-jump factor
$e^{-\kappa\tau\hat{n}^2/2}$ freely. As a result:
\begin{enumerate}[leftmargin=1.5em]
\item All $k+1$ no-jump factors of total duration $t$ merge into a single $e^{-\kappa t\hat{n}^2/2}$
on each side of $\hat{\rho}\left(0\right)$.
\item The $k$ jump operators $\hat{n}$ simply pile up adjacently.
\item The time-ordered integral collapses to the volume of the $k$-simplex,
$\int_0^t\!dt_k\cdots\int_0^{t_2}\!dt_1=t^k/k!$.
\end{enumerate}
Therefore
\begin{equation}
\hat{\rho}^{\left(k\right)}\left(t\right) = \frac{\left(\kappa t\right)^k}{k!}\;e^{-\kappa t\hat{n}^2/2}\,\hat{n}^k\,\hat{\rho}\left(0\right)\,\hat{n}^k\,e^{-\kappa t\hat{n}^2/2}.
\end{equation}
Summing over $k=0,1,2,\dots$ gives the Kraus operators
\begin{equation}
\tcboxmath{
\hat{D}_k\left(t\right) = \sqrt{\frac{\left(\kappa t\right)^k}{k!}}\;e^{-\kappa t\hat{n}^2/2}\,\hat{n}^k,\qquad k=0,1,2,\dots
}
\label{eq:KrausPhase}
\end{equation}

For an input Fock state $\ket{n}$ the jump count $k$ is \emph{Poisson} distributed with mean $\kappa t\,n^2$; for a general input it is the corresponding mixture of Poisson distributions, with mean $\kappa t\left\langle\hat n^2\right\rangle$. Completeness follows immediately because $\left[\hat{n},e^{-\kappa t\hat{n}^2}\right]=0$:
\begin{equation}
\sum_{k\ge0}\hat{D}_k^{\dagger}\hat{D}_k = e^{-\kappa t\hat{n}^2}\sum_{k\ge0}\frac{\left(\kappa t\hat{n}^2\right)^k}{k!}
= e^{-\kappa t\hat{n}^2}\cdot e^{\kappa t\hat{n}^2}=\mathbb{1}.
\end{equation}

\paragraph{Contrast with amplitude damping.} The reason for the simplification is the
commutativity \eqref{eq:ncommute}: the jump operator and the non-Hermitian Hamiltonian are
functions of the same operator $\hat{n}$, so they share eigenstates and commute. For amplitude
damping, $\left[\hat{a},\hat{a}^{\dagger}\hat{a}\right]=\hat{a}\ne0$, so $\hat{a}$ does not commute through $e^{-\Gamma\tau\hat{a}^{\dagger}\hat{a}}$;
instead, commuting $\hat{a}$ through the propagator produces a time-dependent scalar factor
$e^{\Gamma\tau}$ (see \eqref{eq:BCH}), and the time-ordered integrals contribute non-trivially.
That is why the amplitude-damping calculation requires the induction \eqref{eq:Il} and yields
binomial rather than Poisson statistics.

\subsection{Finite-temperature amplitude damping}
\label{sec:finiteT}

At finite temperature both jump operators $\hat L_\pm$ are active, and the trajectory sectors of \eqref{eq:Dyson} with fixed numbers of
absorptions and emissions are not each generated by a single operator. We therefore obtain an exact Kraus family from a factorisation of the channel,
built from two one-jump-type channels whose Kraus operators follow from the quantum-jump calculation of Sec.~\ref{sec:T0}.

\subsubsection{Two single-jump-type channels}

\paragraph{Pure loss.} By Sec.~\ref{sec:T0}, evolution under $\gamma'\mathcal D\left[\hat a\right]$ for a time $s$, with transmissivity $\tau=e^{-\gamma' s}$, has Kraus operators
\begin{equation}
\hat B_k\left(\tau\right)=\sqrt{\frac{\left(1-\tau\right)^k}{k!}}\;\tau^{\hat n/2}\,\hat a^k,\qquad k\ge0 .
\label{eq:Bk}
\end{equation}

\paragraph{Pure gain (quantum-limited amplifier).} For $\dot{\hat\rho}=\gamma_g\mathcal D\left[\hat a^{\dagger}\right]\hat\rho$ the only jump operator is $\sqrt{\gamma_g}\,\hat a^{\dagger}$, and
$\hat H_{\rm eff}=-\tfrac{i\gamma_g}{2}\hat a\hat a^{\dagger}=-\tfrac{i\gamma_g}{2}\left(\hat n+1\right)$, so $\hat G_g\left(\Delta\right)=e^{-\gamma_g\Delta\left(\hat n+1\right)/2}$.
Since there is only one jump type, all histories with $l$ jumps again have proportional ordered products. Using
$\hat G_g\left(\Delta\right)\hat a^{\dagger}=e^{-\gamma_g\Delta/2}\,\hat a^{\dagger}\hat G_g\left(\Delta\right)$, moving each $\hat a^{\dagger}$ to the left gives the ket-side product
$\left[\prod_r e^{-\gamma_g\left(s-\tau_r\right)/2}\right]\left(\hat a^{\dagger}\right)^l\hat G_g\left(s\right)$, and the time-ordered integral evaluates as in \eqref{eq:Il}:
$\gamma_g^l\int_{0<\tau_1<\cdots<\tau_l<s}\prod_r e^{-\gamma_g\left(s-\tau_r\right)}=\left(1-e^{-\gamma_g s}\right)^l/l!$. With gain $G\equiv e^{\gamma_g s}$ this gives
\begin{equation}
\hat A_l\left(G\right)=\sqrt{\frac{\left(1-G^{-1}\right)^l}{l!\,G}}\;\left(\hat a^{\dagger}\right)^l\,G^{-\hat n/2},\qquad l\ge0 .
\label{eq:Al}
\end{equation}
Completeness follows from $\sum_l\binom{n+l}{l}x^l=\left(1-x\right)^{-\left(n+1\right)}$ with $x=1-G^{-1}$:
$\sum_l\bra{n}\hat A_l^{\dagger}\hat A_l\ket{n}=G^{-\left(n+1\right)}\sum_l\binom{n+l}{l}\left(1-G^{-1}\right)^l=1$.

\subsubsection{Factorisation of the thermal channel}

Let $\chi_\rho\left(\xi\right)=\mathrm{Tr}\left[\hat\rho\,e^{\xi\hat a^{\dagger}-\xi^*\hat a}\right]$ be the symmetric characteristic function, which determines $\hat\rho$ uniquely.
For $\dot{\hat\rho}=\kappa_-\mathcal D\left[\hat a\right]\hat\rho+\kappa_+\mathcal D\left[\hat a^{\dagger}\right]\hat\rho$, the master equation becomes the first-order equation
$\partial_t\chi=r\,\left(\xi\partial_\xi+\xi^*\partial_{\xi^*}\right)\chi-\tfrac{\kappa_++\kappa_-}{2}\abs{\xi}^2\chi$ with $r=\left(\kappa_+-\kappa_-\right)/2$~\cite{wiseman_milburn,carmichael_book1},
whose solution is
$\chi_t\left(\xi\right)=\chi_0\left(e^{rt}\xi\right)\exp\!\left[-\tfrac{\kappa_++\kappa_-}{4r}\left(e^{2rt}-1\right)\abs{\xi}^2\right]$. This gives:
\begin{itemize}[leftmargin=1.5em]
\item thermal damping \eqref{eq:Lindblad} ($\kappa_-=\gamma\left(\bar n+1\right)$, $\kappa_+=\gamma\bar n$):\quad
$\chi_0\left(\xi\right)\mapsto\chi_0\left(\sqrt{\eta_t}\,\xi\right)\,e^{-\left(1-\eta_t\right)\left(\bar n+\frac12\right)\abs{\xi}^2}$;
\item pure loss $\tau$:\quad $\chi_0\left(\xi\right)\mapsto\chi_0\left(\sqrt{\tau}\,\xi\right)\,e^{-\left(1-\tau\right)\abs{\xi}^2/2}$;
\item pure gain $G$:\quad $\chi_0\left(\xi\right)\mapsto\chi_0\left(\sqrt{G}\,\xi\right)\,e^{-\left(G-1\right)\abs{\xi}^2/2}$.
\end{itemize}
Composing loss followed by gain gives $\chi_0\left(\sqrt{G\tau}\,\xi\right)\exp\left\{-\left[G\left(1-\tau\right)+G-1\right]\abs{\xi}^2/2\right\}$. This equals the thermal map if and only if
$G\tau=\eta_t$ and $2G-1-\eta_t=\left(1-\eta_t\right)\left(2\bar n+1\right)$, that is,
\begin{equation}
G=G_t\equiv1+\left(1-\eta_t\right)\bar n,\qquad \tau=\eta_t/G_t,\qquad
e^{\mathcal L t}=\mathcal A_{G_t}\circ\mathcal B_{\eta_t/G_t},
\label{eq:factorise}
\end{equation}
where $\mathcal B$ and $\mathcal A$ denote the channels with Kraus operators \eqref{eq:Bk} and \eqref{eq:Al}.

\subsection{Exact finite-temperature Kraus family}

The Kraus operators of a composition of channels are the products of their Kraus operators: $\hat K_{kl}=\hat A_l\left(G_t\right)\hat B_k\left(\eta_t/G_t\right)$.
Using $G_t^{-\hat n/2}\left(\eta_t/G_t\right)^{\hat n/2}=\left(\eta_t/G_t^2\right)^{\hat n/2}$, $1-\eta_t/G_t=\left(1-\eta_t\right)\left(\bar n+1\right)/G_t$ and $1-G_t^{-1}=\left(1-\eta_t\right)\bar n/G_t$, we obtain
\begin{equation}
\tcboxmath{
\hat{K}_{kl}\left(t\right) = C_{kl}\;\left(\hat{a}^{\dagger}\right)^l\,e^{-A\left(t\right)\hat{a}^{\dagger}\hat{a}/2}\,\hat{a}^k,
\qquad k,l=0,1,2,\dots
}
\label{eq:KrausFiniteT}
\end{equation}
with
\begin{equation}
C_{kl}\left(t\right) = \sqrt{\frac{N_c^k\,N_s^l}{k!\;l!\;G_t}},
\qquad N_c\equiv\frac{\left(1-\eta_t\right)\left(\bar{n}+1\right)}{G_t},
\quad N_s\equiv\frac{\left(1-\eta_t\right)\bar{n}}{G_t},
\qquad
e^{-A\left(t\right)} = \frac{\eta_t}{G_t^2}.
\label{eq:Ckl}
\end{equation}
Here $k$ counts photons removed (weight $N_c$) and $l$ photons added (weight $N_s$); $\hat K_{kl}$ changes the photon number by $l-k$.
Equivalently,
\begin{equation}
A\left(t\right)=\gamma t+2\ln G_t,\qquad \dot A\left(0\right)=\gamma\left(2\bar n+1\right)=2\Gamma,
\label{eq:A}
\end{equation}
so that at short times $e^{-A\hat n/2}$ reproduces the number-dependent factor $e^{-\Gamma t\hat n}$ of the no-jump propagator \eqref{eq:Gprop}.

\paragraph{$T=0$ limit.} Setting $\bar{n}=0$ gives $G_t=1$, $N_s=0$, $N_c=1-\eta_t$,
and $e^{-A}=e^{-\gamma t}$. Only the $l=0$ operators survive, and \eqref{eq:KrausFiniteT}
reduces exactly to \eqref{eq:KrausT0} (with $k$ in place of $l$).

\paragraph{Long-time thermalisation.} As $t\to\infty$, $\eta_t\to0$, $G_t\to1+\bar{n}$, $N_c\to1$, $N_s\to\bar n/\left(1+\bar n\right)$,
and $e^{-A\hat n/2}\to\ket0\!\bra0$. Hence $\hat K_{kl}\to C_{kl}\sqrt{k!\,l!}\,\ket{l}\!\bra{k}$, and
\begin{equation}
\lim_{t\to\infty}\sum_{k,l}\hat{K}_{kl}\left(t\right)\,\hat{\rho}\left(0\right)\,\hat{K}_{kl}^{\dagger}\left(t\right)
= \sum_{n=0}^{\infty}\frac{\bar{n}^n}{\left(1+\bar{n}\right)^{n+1}}\,\ket{n}\bra{n}
\equiv\hat\rho_{\mathrm{th}},
\end{equation}
the Bose--Einstein thermal state, independently of $\hat{\rho}\left(0\right)$.

\paragraph{Completeness.} Since both factors are complete,
$\sum_{k,l}\hat K_{kl}^{\dagger}\hat K_{kl}=\sum_k\hat B_k^{\dagger}\left(\sum_l\hat A_l^{\dagger}\hat A_l\right)\hat B_k=\sum_k\hat B_k^{\dagger}\hat B_k=\mathbb 1$.

\subsection{Combined amplitude damping and dephasing}
\label{sec:combined}

\subsubsection{Commutativity of the two Lindbladians}

\begin{theorem}
\label{thm:commute}
For every $\bar{n},\gamma,\kappa\ge0$, the superoperators
$\mathcal{L}_a^{\left(T\right)}\hat{\rho}=\gamma\left(\bar{n}+1\right)\mathcal{D}\left[\hat{a}\right]\hat{\rho}+\gamma\bar{n}\mathcal{D}\left[\hat{a}^{\dagger}\right]\hat{\rho}$ and
$\mathcal{L}_d\hat{\rho}=\kappa\mathcal{D}\left[\hat{n}\right]\hat{\rho}$ satisfy $\left[\mathcal{L}_a^{\left(T\right)},\mathcal{L}_d\right]=0$ on operators with finite Fock support (in particular on all Fock dyads $\ket m\!\bra n$), on which both unbounded generators are defined.
\end{theorem}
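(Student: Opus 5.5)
The plan is to verify the commutator directly on Fock dyads $\ket m\!\bra n$. Both superoperators are linear, and operators with finite Fock support are finite linear combinations of dyads, so checking the dyads is enough. The key structural fact is that $\mathcal L_d$ is diagonal in the dyad basis:
\begin{equation}
\mathcal L_d\left(\ket m\!\bra n\right)=\kappa\left(mn-\tfrac12 m^2-\tfrac12 n^2\right)\ket m\!\bra n=-\tfrac{\kappa}{2}\left(m-n\right)^2\ket m\!\bra n .
\end{equation}
So $\mathcal L_d$ acts on each dyad as multiplication by a function of the difference $m-n$ alone.

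Next, I would show that $\mathcal L_a^{\left(T\right)}$ maps the span of dyads with fixed difference $d=m-n$ into itself. Each piece can be checked separately:
\begin{itemize}
\item $\hat a\ket m\!\bra n\hat a^\dagger=\sqrt{mn}\,\ket{m-1}\!\bra{n-1}$, which keeps $d$ fixed.
\item $\hat a^\dagger\ket m\!\bra n\hat a=\sqrt{(m+1)(n+1)}\,\ket{m+1}\!\bra{n+1}$, which also keeps $d$ fixed.
\item The anticommutator terms with $\hat n$ or $\hat n+1$ multiply the dyad by scalars, so they leave it in place.
\end{itemize}
Thus $\mathcal L_a^{\left(T\right)}\left(\ket m\!\bra n\right)$ is a finite combination of dyads $\ket{m'}\!\bra{n'}$ with $m'-n'=m-n$. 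Every such dyad is an eigenvector of $\mathcal L_d$ with the same eigenvalue $-\kappa d^2/2$.

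Combining the two facts gives
\begin{equation}
\mathcal L_d\mathcal L_a^{\left(T\right)}\left(\ket m\!\bra n\right)=-\tfrac{\kappa}{2}d^2\,\mathcal L_a^{\left(T\right)}\left(\ket m\!\bra n\right)=\mathcal L_a^{\left(T\right)}\mathcal L_d\left(\ket m\!\bra n\right),
\end{equation}
which proves the commutation. A cleaner way to phrase the same argument is through phase covariance: $\mathcal L_d$ is a function of the generator $\mathcal X=\left[\hat n,\cdot\right]$, namely $\mathcal L_d=-\tfrac{\kappa}{2}\mathcal X^2$. Meanwhile $\mathcal L_a^{\left(T\right)}$ commutes with $\mathcal X$, because it is invariant under $\hat a\mapsto e^{\iota\phi}\hat a$.

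The only delicate point is domain bookkeeping. We must confirm that finite-support operators are mapped to finite-support operators, so that both compositions are defined. This follows from the explicit formulas above, which raise or lower both indices by at most one. I do not expect any step to be a genuine obstacle; the main care is in correctly handling the boundary cases $m=0$ or $n=0$, where the lowering term vanishes consistently.
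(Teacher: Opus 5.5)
Your proposal is correct and takes essentially the same route as the paper. Both arguments reduce to Fock dyads, compute $\mathcal{L}_d(\ket m\!\bra n)=-\tfrac{\kappa}{2}(m-n)^2\ket m\!\bra n$, and note that $\mathcal{L}_a^{(T)}$ sends each dyad to a combination of dyads with the same difference $m-n$, so the two superoperators commute on finite-support operators; your phase-covariance rephrasing via $\mathcal{L}_d=-\tfrac{\kappa}{2}[\hat n,[\hat n,\cdot\,]]$ is an equivalent repackaging of that observation.
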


\begin{proof}
It suffices to show $\left[\mathcal{L}_a^{\left(T\right)},\mathcal{L}_d\right]\left(\ket{m}\bra{n}\right)=0$ for all $m,n$. Direct computation on Fock dyads gives
\begin{align}
\mathcal{L}_d\left(\ket{m}\bra{n}\right) &= -\tfrac{\kappa}{2}\left(m-n\right)^2\,\ket{m}\bra{n},\label{eq:Ldeig}\\
\mathcal{L}_a^{\left(T\right)}\left(\ket{m}\bra{n}\right) &=
\gamma\left(\bar{n}+1\right)\!\sqrt{mn}\,\ket{m{-}1}\bra{n{-}1}
+\gamma\bar{n}\!\sqrt{\left(m{+}1\right)\left(n{+}1\right)}\,\ket{m{+}1}\bra{n{+}1}
+c_0\left(m,n\right)\,\ket{m}\bra{n}.\label{eq:Laact}
\end{align}
Equation~\eqref{eq:Ldeig} shows $\mathcal{L}_d$ is diagonal with eigenvalue
$\mu_d\left(q\right)=-\frac{\kappa}{2}q^2$ depending only on $q\equiv m-n$.
Equation~\eqref{eq:Laact} shows $\mathcal{L}_a^{\left(T\right)}$ maps $\ket{m}\bra{n}$ to three dyads
all sharing the same $q$: $\left(m-1\right)-\left(n-1\right)=m-n=q$ and $\left(m+1\right)-\left(n+1\right)=q$.
Applying $\mathcal{L}_d$ to \eqref{eq:Laact} therefore multiplies every output term by the same
factor $\mu_d\left(q\right)$. Hence $\mathcal{L}_d\mathcal{L}_a^{\left(T\right)}=\mu_d\left(q\right)\mathcal{L}_a^{\left(T\right)}=\mathcal{L}_a^{\left(T\right)}\mathcal{L}_d$.
\end{proof}

\subsubsection{Channel factorisation and product Kraus operators}

Theorem~\ref{thm:commute} with $e^{A+B}=e^Ae^B$ (commuting generators) gives
$\mathcal{E}_t=e^{\left(\mathcal{L}_a^{\left(T\right)}+\mathcal{L}_d\right)t}=\mathcal{E}_t^{\left(a,T\right)}\circ\mathcal{E}_t^{\left(d\right)}\,=\mathcal{E}_t^{\left(d\right)}\circ\mathcal{E}_t^{\left(a,T\right)}$.
Using the second ordering, the Kraus operators of the combined channel are
$\hat{M}_{kl,j}\equiv\hat{D}_j\hat{K}_{kl}$, and
$\sum_{k,l,j}\hat{M}_{kl,j}^\dagger\hat{M}_{kl,j}=\sum_{k,l}\hat{K}_{kl}^\dagger\left(\sum_{j}\hat{D}_j^\dagger\hat{D}_j\right)\hat{K}_{kl}=\mathbb{1}$.

\subsubsection{Complete table of Kraus operators}

Table~\ref{tab:Kraus} collects the Kraus families derived above. The triple $\left(k,l,j\right)$ counts photons removed ($k$), added ($l$;
zero at $T=0$), and phase kicks from the dephasing reservoir ($j$).
The scalar prefactors $C$ of the product Kraus operators factorise into three Poisson-like weights. The outcome probabilities
$\mathrm{Tr}\,\hat M_{kl,j}\hat\rho\hat M_{kl,j}^{\dagger}$ do not factorise in this way in general, since they depend on the photon-number distribution of the input.
As explained in Sec.~\ref{sec:finiteT}, $\left(k,l\right)$ labels the loss and gain stages of the factorisation \eqref{eq:factorise}, not counts of emission and absorption jumps along a trajectory.

\subsection{Zero-temperature amplitude damping as dominant noise source in circuit QED platforms}
\label{loss_channel}

Notation: in this subsection only, $\kappa$ denotes the amplitude-damping rate (called $\gamma$ elsewhere in this appendix) and $\Gamma_\varphi$ the pure-dephasing rate; in Sec.~\ref{sec:phase}, $\kappa$ is the dephasing rate.

In high-quality-factor superconducting cavities, the pure dephasing rate
$\Gamma_\varphi$ is typically orders of magnitude smaller than the
amplitude-damping rate $\kappa$, so $T_2$ approaches its loss-limited maximum,
$T_2 \approx 2T_1$ \cite{Reagor2016,Reagor2013}. This asymmetry arises because
dephasing requires a coupling that shifts the mode's own frequency, probing
the noise spectrum at zero frequency, whereas amplitude damping is an
energy-exchanging process that probes the bath at the mode's finite frequency
$\omega_0$. A bare linear cavity has no low-frequency ``handle'' for the
environment: unlike a Josephson qubit, whose frequency depends on a material
parameter (charge or flux) sensitive to $1/f$ noise \cite{Rigetti2012}, a
cavity's frequency is fixed purely by its geometry, so residual losses couple
only to the field operator itself, giving $\kappa = \omega_0/Q$. For example,
Milul \textit{et al.} measured a single-photon relaxation time
$T_1 = 25.6 \pm 0.2$~ms in a niobium 3D cavity, while bounding the cavity's
own intrinsic dephasing time to $T_\varphi > 0.5$~s --- over an order of
magnitude longer, confirming that essentially all residual decoherence
($T_2 = 34 \pm 1$~ms) came from amplitude damping and transmon-induced
heating rather than intrinsic dephasing \cite{Milul2023}.
Although circuit-QED experiments operate at dilution-refrigerator temperatures, bosonic modes can exhibit higher effective temperatures due to imperfect thermalization, residual radiation, and coupling to ancillary modes. Finite-temperature amplitude damping is therefore described by
\begin{equation}
\dot{\rho}
=
\kappa\left(\bar{n}_{\mathrm{th}}+1\right)\mathcal{D}\left[a\right]\rho
+
\kappa\bar{n}_{\mathrm{th}}\mathcal{D}\left[a^\dagger\right]\rho,
\end{equation}
with
$\Gamma_{\uparrow}/\Gamma_{\downarrow}
=
\bar{n}_{\mathrm{th}}/\left(1+\bar{n}_{\mathrm{th}}\right)
\simeq
\bar{n}_{\mathrm{th}}$.
Experimentally, residual thermal populations of
$10^{-3}$--$10^{-2}$ are realistic, with microwave resonators exhibiting
values ranging from $\sim 10^{-1}$ to $10^{-3}$~\cite{Wang2019}.
High-$Q$ storage cavities have reached
$\bar{n}_{\mathrm{th}}\sim10^{-4}$, whereas multimode cavity memories
show occupations of $0.2\%$--$2\%$, corresponding to
$\Gamma_{\uparrow}/\Gamma_{\downarrow}
\sim2\times10^{-3}$--$2\times10^{-2}$.
In the cat-code experiment of Ofek \textit{et al.}, thermal single-photon
gain events occurred approximately once every $6\,\mathrm{ms}$, while
ordinary photon loss remained the dominant error channel~\cite{Ofek2016}.
Thus, zero-temperature amplitude damping provides an excellent
leading-order description, although thermal photon gain can become
relevant for high-fidelity bosonic error correction.

\begin{table}[ht]
\centering
\renewcommand{\arraystretch}{2.3}
\caption{All Kraus operator families for the bosonic channels considered in this work.}
\label{tab:Kraus}
\small
\begin{tabular}{>{\raggedright\arraybackslash}p{2.6cm} p{7.0cm} p{2.9cm} p{1.5cm}}
\hline
\textbf{Channel} & \textbf{Kraus operator} & \textbf{Coefficient} $\abs{C}^2$ & \textbf{Indices} \\
\hline
$T=0$ amplitude damping
 & $\displaystyle K_l^{\left(0\right)}=\sqrt{\frac{\left(1-\eta_t\right)^l}{l!}}\;e^{-\gamma t\hat{a}^{\dagger}\hat{a}/2}\,\hat{a}^l$
 & $\displaystyle\frac{\left(1-\eta_t\right)^l}{l!}$ & $l\ge0$ \\
Pure dephasing
 & $\displaystyle\hat{D}_j=\sqrt{\frac{\left(\kappa t\right)^j}{j!}}\;e^{-\kappa t\hat{n}^2/2}\,\hat{n}^j$
 & $\displaystyle\frac{\left(\kappa t\right)^j}{j!}$ & $j\ge0$ \\
Finite-$T$ amp.\ damp.
 & $\displaystyle\hat{K}_{kl}=\sqrt{\frac{N_c^k N_s^l}{k!\,l!\,G_t}}\;\left(\hat{a}^{\dagger}\right)^l e^{-A\hat{a}^{\dagger}\hat{a}/2}\,\hat{a}^k$
 & $\displaystyle\frac{N_c^k N_s^l}{k!\,l!\,G_t}$ & $k,l\ge0$ \\
$T=0$ amp.\ + deph.
 & \small $\displaystyle\hat{M}_{l,j}^{\left(0\right)}=\sqrt{\frac{\left(1-\eta_t\right)^l\left(\kappa t\right)^j}{l!\,j!}}\;e^{-\kappa t\hat{n}^2/2}\hat{n}^j e^{-\gamma t\hat{n}/2}\hat{a}^l$
 & $\displaystyle\frac{\left(1-\eta_t\right)^l\left(\kappa t\right)^j}{l!\,j!}$ & $l,j\ge0$ \\
Fin.-$T$ amp.\ + deph.
 & \scriptsize $\displaystyle\hat{M}_{kl,j}=\sqrt{\frac{N_c^k N_s^l\left(\kappa t\right)^j}{k!\,l!\,j!\,G_t}}\;e^{-\kappa t\hat{n}^2/2}\hat{n}^j\left(\hat{a}^{\dagger}\right)^l e^{-A\hat{n}/2}\hat{a}^k$
 & $\displaystyle\frac{N_c^k N_s^l\left(\kappa t\right)^j}{k!\,l!\,j!\,G_t}$ & $k,l,j\ge0$ \\
\hline
\end{tabular}
\end{table}

\subsection{Monomial expansions}
\label{sec:monomial}

The closed-form Kraus operators in Table~\ref{tab:Kraus} all contain factors of
$e^{-\beta\hat{n}}$ (linear exponential of the number operator) or $e^{-\beta\hat{n}^2/2}$
(Gaussian in $\hat{n}$). In many applications\textemdash{}error-correction syndrome analysis, Wigner-function
computation, representation-theory decompositions\textemdash{}it is useful to have each Kraus operator
expressed as a convergent series of pure ladder monomials $\left(\hat{a}^{\dagger}\right)^p\hat{a}^q$. We derive these
expansions systematically.

\subsection{Key algebraic tools}

\begin{lemma}[Normal-ordered exponential of $\hat{n}$]
\label{lem:normalorder}
For every complex $\beta$,
\begin{equation}
e^{-\beta\hat{a}^{\dagger}\hat{a}} = \sum_{m=0}^{\infty}\frac{\left(e^{-\beta}-1\right)^m}{m!}\left(\hat{a}^{\dagger}\right)^m\hat{a}^m.
\label{eq:normalexp}
\end{equation}
\end{lemma}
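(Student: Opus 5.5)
The plan is to verify the identity on the Fock basis. Both sides act diagonally in Fock space, so it suffices to compare their actions on each $\ket{n}$, since every Fock state is an eigenvector of $\hat a^\dagger\hat a$. The left-hand side gives $e^{-\beta\hat a^\dagger\hat a}\ket{n}=e^{-\beta n}\ket{n}$ directly.

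For the right-hand side, we use $\hat a^m\ket{n}=\sqrt{n!/(n-m)!}\,\ket{n-m}$ for $m\le n$ (and $0$ otherwise). Then $(\hat a^\dagger)^m\ket{n-m}=\sqrt{n!/(n-m)!}\,\ket{n}$. Together these give $(\hat a^\dagger)^m\hat a^m\ket{n}=\frac{n!}{(n-m)!}\ket{n}$. The series applied to $\ket{n}$ therefore truncates at $m=n$ and equals
\begin{equation}
\sum_{m=0}^{n}\binom{n}{m}\left(e^{-\beta}-1\right)^m\ket{n}=\left(1+e^{-\beta}-1\right)^n\ket{n}=e^{-\beta n}\ket{n}
\end{equation}
by the binomial theorem. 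This matches the left-hand side.

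The only point needing care is the sense of convergence, since the operators are unbounded and $\beta$ is an arbitrary complex number. I would state the identity on the dense domain of finite Fock superpositions, where each term vanishes beyond a finite $m$ and the sum is finite, so no analytic issue arises. Both sides then agree on all Fock dyads, matching the convention already used in Theorem~\ref{thm:commute}. For $\Re\beta\ge0$ the left side is bounded, and the equality extends by continuity on that domain. I expect this domain remark, rather than the algebra, to be the only real obstacle. An alternative route is to differentiate in $\beta$ and use $\hat a^\dagger\hat a\,(\hat a^\dagger)^m\hat a^m=(\hat a^\dagger)^{m+1}\hat a^{m+1}+m(\hat a^\dagger)^m\hat a^m$, but the Fock-basis check is shorter.
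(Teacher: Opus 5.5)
Your proof is correct and is the same as the paper's: both sides are diagonal in the Fock basis, $(\hat{a}^{\dagger})^m\hat{a}^m\ket{n}=\frac{n!}{(n-m)!}\ket{n}$ truncates the series at $m=n$, and the binomial theorem then gives $e^{-\beta n}$. Your domain remark is a reasonable addition; the paper makes the analogous finite-Fock-support observation in the remark after the lemma. Read ``extends by continuity'' as extending the terminating series from finite Fock superpositions to the bounded operator $e^{-\beta\hat{n}}$, not as convergence of the series on all of Fock space, which fails because the partial-sum error on $\ket{n}$ grows with $n$.
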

\begin{proof}
Both sides are diagonal in the Fock basis and it suffices to compare them on $\ket{n}$.
The left side gives $e^{-\beta n}$. For the right side, $\left(\hat{a}^{\dagger}\right)^m\hat{a}^m\ket{n}
=\frac{n!}{\left(n-m\right)!}\ket{n}$ for $m\le n$ (zero for $m>n$), so
\begin{equation}
\mathrm{RHS}\,\ket{n} = \sum_{m=0}^n\binom{n}{m}\left(e^{-\beta}-1\right)^m\cdot1^{n-m}\ket{n}
= \left(1+\left(e^{-\beta}-1\right)\right)^n\ket{n} = e^{-\beta n}\ket{n}.
\qquad\square
\end{equation}
\end{proof}

\begin{remark}
Lemma~\ref{lem:normalorder} holds for all complex $\beta$. For our applications $\beta>0$
so $e^{-\beta}\in\left(0,1\right)$ and $\abs{e^{-\beta}-1}<1$. On any state of finite Fock support the series terminates, since $\left(\hat{a}^{\dagger}\right)^m\hat{a}^m\ket{n}=0$ for $m>n$. Each term $\left(\hat{a}^{\dagger}\right)^m\hat{a}^m$ preserves the photon number,
so at $T=0$ the expansion index $m$ can be interpreted as counting virtual self-contractions produced by
the no-jump propagator between the physical jump events.
\end{remark}

For the dephasing Gaussian factor we use two standard identities:
\begin{equation}
e^{-\kappa t\hat{n}^2/2} = \sum_{\rho=0}^{\infty}\frac{\left(-\kappa t/2\right)^\rho}{\rho!}\,\hat{n}^{2\rho},
\label{eq:dephasexp}
\end{equation}
and the normal-ordering of powers of $\hat{n}$,
\begin{equation}
\hat{n}^r = \sum_{p=0}^r S\left(r,p\right)\,\left(\hat{a}^{\dagger}\right)^p\hat{a}^p,
\label{eq:Stirling}
\end{equation}
where $S\left(r,p\right)$ are Stirling numbers of the second kind, defined by
$x^r=\sum_{p=0}^r S\left(r,p\right)\,x\left(x-1\right)\cdots\left(x-p+1\right)$ (falling-factorial expansion). Both
\eqref{eq:dephasexp} and \eqref{eq:Stirling} are verified by evaluating on $\ket{n}$.

We also use the \emph{shift identity}: for any formal power series $f$,
\begin{equation}
f\left(\hat{n}\right)\,\hat{a}^k = \hat{a}^k\,f\left(\hat{n}-k\right),\qquad \left(\hat{a}^{\dagger}\right)^k\,f\left(\hat{n}\right) = f\left(\hat{n}-k\right)\,\left(\hat{a}^{\dagger}\right)^k,
\label{eq:shift}
\end{equation}
following from $\left[\hat{n},\hat{a}\right]=-\hat{a}$ and $\left[\hat{n},\hat{a}^{\dagger}\right]=+\hat{a}^{\dagger}$ applied repeatedly.

Finally, the inclusion-exclusion (finite-difference) representation of falling-factorial
coefficients: if $f\left(n\right)=\sum_{p\ge0}d_p\,n^{\left(p\right)}$ where $n^{\left(p\right)}=n\left(n-1\right)\cdots\left(n-p+1\right)$, then
\begin{equation}
d_p = \frac{1}{p!}\Delta^p f\left(0\right) = \frac{1}{p!}\sum_{q=0}^p\left(-1\right)^{p-q}\binom{p}{q}f\left(q\right).
\label{eq:finDiff}
\end{equation}

\subsubsection{Amplitude damping at \texorpdfstring{$T=0$}{T=0}}

Set $\lambda_0=e^{-\gamma t/2}\in\left(0,1\right)$, so $e^{-\beta}-1=\lambda_0-1$. Applying
Lemma~\ref{lem:normalorder} with $\beta=\gamma t/2$ to the exponential in \eqref{eq:KrausT0}
and absorbing $\hat{a}^l$ into the series:
\begin{equation}
K_l^{\left(0\right)}\left(t\right) = C_l^{\left(0\right)}\sum_{m=0}^{\infty}\frac{\left(\lambda_0-1\right)^m}{m!}\,\left(\hat{a}^{\dagger}\right)^m\hat{a}^{m+l},
\label{eq:mono_T0}
\end{equation}
where $C_l^{\left(0\right)}=\sqrt{\left(1-\eta_t\right)^l/l!}$. Every monomial $\left(\hat{a}^{\dagger}\right)^m\hat{a}^{m+l}$ has net
photon-number shift $m-\left(m+l\right)=-l$, independent of the summation index $m$. This is the
expected result: the Kraus operator $K_l^{\left(0\right)}$ removes exactly $l$ photons on net, regardless
of the virtual self-contractions encoded by $m$.

On a Fock state, $\left(\hat a^{\dagger}\right)^m\hat a^{m+l}\ket n=\sqrt{n!/\left(n-l\right)!}\,\frac{\left(n-l\right)!}{\left(n-l-m\right)!}\ket{n-l}$, so the $1/m!$ is cancelled by a falling factorial and
\begin{equation}
K_l^{\left(0\right)}\ket n=C_l^{\left(0\right)}\sqrt{\tfrac{n!}{\left(n-l\right)!}}\;\sum_{m=0}^{n-l}\binom{n-l}{m}\mu_0^m\,\ket{n-l},\qquad \mu_0=\lambda_0-1 .
\end{equation}
The series therefore terminates on any state of finite Fock support, and it is exact once $m_{\max}\geq n_{\max}-l$. For smaller $m_{\max}$ the truncation error on $\ket n$
is $C_l^{\left(0\right)}\sqrt{n!/\left(n-l\right)!}\,\abs{\sum_{m>m_{\max}}\binom{n-l}{m}\mu_0^m}$, which grows with $n$. Consequently, no truncation bound that is independent of the Fock-space dimension exists; for instance, at $\gamma t=0.7$ and $m_{\max}=4$ the truncation error on $\ket{10}$ is $0.45$, whereas $\left(1-\lambda_0\right)^{m_{\max}+1}/\left(m_{\max}+1\right)!\approx2\times10^{-5}$.
In numerical work we therefore evaluate the closed form \eqref{eq:KrausT0} on the truncated Fock space, and use the series only with input-specific bounds.

\paragraph{Explicit low-order expansions.} Defining $\mu_0\equiv\lambda_0-1=e^{-\gamma t/2}-1\in\left(-1,0\right)$:
\begin{align}
K_0^{\left(0\right)}\left(t\right) &= \mathbb{1} + \mu_0\,\hat{a}^{\dagger}\hat{a} + \tfrac{\mu_0^2}{2}\,\left(\hat{a}^{\dagger}\right)^2\hat{a}^2
+ \tfrac{\mu_0^3}{6}\,\left(\hat{a}^{\dagger}\right)^3\hat{a}^3+\cdots \label{eq:K0exp}\\
K_1^{\left(0\right)}\left(t\right) &= \sqrt{1-\eta_t}\left[\hat{a} + \mu_0\,\hat{a}^{\dagger}\hat{a}^2 + \tfrac{\mu_0^2}{2}\,\left(\hat{a}^{\dagger}\right)^2\hat{a}^3
+\cdots\right]\label{eq:K1exp}\\
K_2^{\left(0\right)}\left(t\right) &= \sqrt{\tfrac{\left(1-\eta_t\right)^2}{2}}\left[\hat{a}^2 + \mu_0\,\hat{a}^{\dagger}\hat{a}^3
+ \tfrac{\mu_0^2}{2}\,\left(\hat{a}^{\dagger}\right)^2\hat{a}^4+\cdots\right]\label{eq:K2exp}\\
K_3^{\left(0\right)}\left(t\right) &= \sqrt{\tfrac{\left(1-\eta_t\right)^3}{6}}\left[\hat{a}^3 + \mu_0\,\hat{a}^{\dagger}\hat{a}^4+\cdots\right]
\label{eq:K3exp}
\end{align}
The $l=0$ operator \eqref{eq:K0exp} is a diagonal (photon-number-preserving) series that
converges to the identity at $\gamma t=0$ ($\mu_0=0$) and approaches the projector onto
$\ket{0}$ as $\gamma t\to\infty$ ($\mu_0\to-1$). Each $l\ge1$ operator consists of $l$ net
lowering steps dressed by virtual raising-and-lowering pairs.

\subsubsection{Finite-temperature amplitude damping}

Set $\lambda=e^{-A\left(t\right)/2}=\sqrt{\eta_t}/G_t\in\left(0,1\right)$ and $\mu\equiv\lambda-1$. Applying
Lemma~\ref{lem:normalorder} with $\beta=A\left(t\right)/2$ to \eqref{eq:KrausFiniteT} gives
\begin{equation}
\hat{K}_{kl}\left(t\right) = C_{kl}\sum_{m=0}^{\infty}\frac{\mu^m}{m!}\,\left(\hat{a}^{\dagger}\right)^{l+m}\hat{a}^{k+m},
\label{eq:mono_finT}
\end{equation}
where $C_{kl}=\sqrt{N_c^k N_s^l/\left(k!\,l!\,G_t\right)}$. Every monomial has net shift
$\left(l+m\right)-\left(k+m\right)=l-k$, constant for all $m$, consistent with $k$ photons removed and $l$ added
on net.

\paragraph{Explicit low-order expansions.}
\begin{align}
\hat{K}_{00}\left(t\right) &= \frac{1}{\sqrt{G_t}}\left[\mathbb{1}+\mu\,\hat{a}^{\dagger}\hat{a}+\tfrac{\mu^2}{2}\,\left(\hat{a}^{\dagger}\right)^2\hat{a}^2
+\tfrac{\mu^3}{6}\,\left(\hat{a}^{\dagger}\right)^3\hat{a}^3+\cdots\right]\label{eq:K00}\\
\hat{K}_{10}\left(t\right) &= \frac{\sqrt{N_c}}{\sqrt{G_t}}\left[\hat{a}+\mu\,\hat{a}^{\dagger}\hat{a}^2+\tfrac{\mu^2}{2}\,\left(\hat{a}^{\dagger}\right)^2\hat{a}^3+\cdots\right]\label{eq:K10}\\
\hat{K}_{01}\left(t\right) &= \frac{\sqrt{N_s}}{\sqrt{G_t}}\left[\hat{a}^{\dagger}+\mu\,\left(\hat{a}^{\dagger}\right)^2\hat{a}+\tfrac{\mu^2}{2}\,\left(\hat{a}^{\dagger}\right)^3\hat{a}^2+\cdots\right]\label{eq:K01}\\
\hat{K}_{11}\left(t\right) &= \frac{\sqrt{N_cN_s}}{\sqrt{G_t}}\left[\hat{a}^{\dagger}\hat{a}+\mu\,\left(\hat{a}^{\dagger}\right)^2\hat{a}^2+\tfrac{\mu^2}{2}\,\left(\hat{a}^{\dagger}\right)^3\hat{a}^3+\cdots\right]\label{eq:K11}\\
\hat{K}_{20}\left(t\right) &= \frac{\sqrt{N_c^2/2}}{\sqrt{G_t}}\left[\hat{a}^2+\mu\,\hat{a}^{\dagger}\hat{a}^3+\cdots\right]\label{eq:K20}\\
\hat{K}_{02}\left(t\right) &= \frac{\sqrt{N_s^2/2}}{\sqrt{G_t}}\left[\left(\hat{a}^{\dagger}\right)^2+\mu\,\left(\hat{a}^{\dagger}\right)^3\hat{a}+\cdots\right]\label{eq:K02}
\end{align}

At $\bar{n}=0$: $G_t=1$, $\mu\to\mu_0$, $N_s=0$, $N_c=1-\eta_t$, and only the $l=0$
operators survive, reproducing \eqref{eq:K0exp}--\eqref{eq:K3exp}.

\paragraph{Interpretation of $m$.} The index $m$ labels the normal-ordering corrections of $e^{-A\hat n/2}$, each a number-preserving pair $\left(\hat a^{\dagger}\right)^m\hat a^m$. At finite temperature $e^{-A\hat n/2}$ is not the no-jump propagator, so $m$ has no direct trajectory meaning.
As at $T=0$, the factor $1/m!$ is cancelled on Fock states by a falling factorial: on photon numbers $n$ the $m$-th correction scales as $\binom{n}{m}\abs{\mu}^m$.
The leading monomial $\left(\hat a^{\dagger}\right)^l\hat a^k$ therefore dominates only when $n\abs{\mu}$ is small on the photon numbers occupied by the input, which requires short times ($\mu\to0$ as $t\to0$); a low temperature alone does not make $\abs{\mu}$ small.

\subsubsection{Pure dephasing}

The Kraus operator \eqref{eq:KrausPhase} contains only $\hat{n}$, hence all monomials must be
diagonal. Expanding $e^{-\kappa t\hat{n}^2/2}$ via \eqref{eq:dephasexp} and normal-ordering each
$\hat{n}^{j+2\rho}$ via \eqref{eq:Stirling}:
\begin{equation}
\hat{D}_j\left(t\right) = \sqrt{\frac{\left(\kappa t\right)^j}{j!}}\sum_{p=0}^{\infty}
\underbrace{\left[\sum_{\rho=0}^{\infty}\frac{\left(-\kappa t/2\right)^\rho}{\rho!}S\left(j+2\rho,p\right)\right]}_{\displaystyle d_p^{\left(j\right)}}
\left(\hat{a}^{\dagger}\right)^p\hat{a}^p.
\label{eq:DhatExpand}
\end{equation}
Every monomial $\left(\hat{a}^{\dagger}\right)^p\hat{a}^p$ is diagonal (net shift zero), confirming that dephasing
preserves photon-number populations and destroys only off-diagonal coherences.

The double sum in \eqref{eq:DhatExpand} can be compressed using \eqref{eq:finDiff}. Define
$\vartheta_q\equiv e^{-q^2\kappa t/2}$ (the no-jump weight for photon number $q$). Then
\begin{equation}
d_p^{\left(j\right)} = \frac{1}{p!}\sum_{q=0}^p\left(-1\right)^{p-q}\binom{p}{q}q^j\,\vartheta_q.
\label{eq:dp}
\end{equation}
This finite sum terminates at $q=p$, making it convenient for numerical evaluation.

\paragraph{Explicit low-order expansions.} Set $\lambda_d=e^{-\kappa t/2}$, so that
$\vartheta_q=\lambda_d^{q^2}$ (note the quadratic exponent). For $j\geq1$ one has $d_0^{\left(j\right)}=0$, since $q^j$ vanishes at $q=0$. The first few operators are:
\begin{align}
\hat{D}_0\left(t\right) &= \mathbb{1} + \left(\lambda_d-1\right)\,\hat{a}^{\dagger}\hat{a}
+\tfrac{1}{2}\left(\lambda_d^4-2\lambda_d+1\right)\,\left(\hat{a}^{\dagger}\right)^2\hat{a}^2
+\tfrac{1}{6}\left(\lambda_d^9-3\lambda_d^4+3\lambda_d-1\right)\,\left(\hat{a}^{\dagger}\right)^3\hat{a}^3+\cdots
\label{eq:D0}\\
\hat{D}_1\left(t\right) &= \sqrt{\kappa t}\left[
\lambda_d\,\hat{a}^{\dagger}\hat{a} + \left(\lambda_d^4-\lambda_d\right)\,\left(\hat{a}^{\dagger}\right)^2\hat{a}^2
+\tfrac{1}{2}\left(\lambda_d^9-2\lambda_d^4+\lambda_d\right)\,\left(\hat{a}^{\dagger}\right)^3\hat{a}^3+\cdots\right]
\label{eq:D1}\\
\hat{D}_2\left(t\right) &= \frac{\kappa t}{\sqrt{2}}\left[
\lambda_d\,\hat{a}^{\dagger}\hat{a} + \left(2\lambda_d^4-\lambda_d\right)\,\left(\hat{a}^{\dagger}\right)^2\hat{a}^2
+\tfrac{1}{2}\left(3\lambda_d^9-4\lambda_d^4+\lambda_d\right)\,\left(\hat{a}^{\dagger}\right)^3\hat{a}^3+\cdots\right]
\label{eq:D2}\\
\hat{D}_3\left(t\right) &= \sqrt{\frac{\left(\kappa t\right)^3}{6}}\left[
\lambda_d\,\hat{a}^{\dagger}\hat{a} + \left(4\lambda_d^4-\lambda_d\right)\,\left(\hat{a}^{\dagger}\right)^2\hat{a}^2+\cdots\right]
\label{eq:D3}
\end{align}
where $\lambda_d^{q^2}\equiv e^{-q^2\kappa t/2}$ (for example $\lambda_d^4=e^{-2\kappa t}$,
$\lambda_d^9=e^{-9\kappa t/2}$). Note that $\hat{D}_0$ converges to $\mathbb{1}$ at $\kappa t=0$
($\lambda_d=1$, all coefficients vanish except $p=0$) and approaches the projector onto the
vacuum for large $\kappa t$.

\paragraph{Comparison with $T=0$ amplitude damping.} In \eqref{eq:K0exp} the expansion
parameter is $\mu_0=\lambda_0-1$ with a \emph{linear} exponent in $t$. In \eqref{eq:D0}
the effective expansion parameter at the $p$-th order is $\vartheta_p-1=\lambda_d^{p^2}-1$
with a \emph{quadratic} exponent $p^2\kappa t/2$. This Gaussian suppression in $p$ is the
direct signature of the quadratic $\hat{n}^2$ in the dephasing Hamiltonian, and means that
dephasing decorrelates higher Fock components faster than lower ones.

\subsubsection{Combined \texorpdfstring{$T=0$}{T=0} amplitude damping and dephasing}

The Kraus operator is $\hat{M}_{l,j}^{\left(0\right)}=C_{l,j}^{\left(0\right)}\,e^{-\kappa t\hat{n}^2/2}\hat{n}^j e^{-\gamma t\hat{n}/2}\hat{a}^l$
with $C_{l,j}^{\left(0\right)}=\sqrt{\left(1-\eta_t\right)^l\left(\kappa t\right)^j/\left(l!\,j!\right)}$.

\paragraph{Step 1: merge commuting exponentials.} Both $e^{-\gamma t\hat{n}/2}$ and $e^{-\kappa t\hat{n}^2/2}$
are functions of $\hat{n}$ and commute:
\begin{equation}
e^{-\gamma t\hat{n}/2}e^{-\kappa t\hat{n}^2/2} = e^{-\gamma t\hat{n}/2-\kappa t\hat{n}^2/2}\equiv F\left(\hat{n}\right),
\qquad F\left(n\right)=e^{-\gamma tn/2-\kappa tn^2/2}.
\end{equation}

\paragraph{Step 2: normal-order the diagonal factor.} Write $\hat M^{\left(0\right)}_{l,j}=C^{\left(0\right)}_{l,j}\,g_j\left(\hat n\right)\,\hat a^l$ with $g_j\left(n\right)=n^jF\left(n\right)$. By \eqref{eq:finDiff},
$g_j\left(\hat n\right)=\sum_p d_p^{\left(j\right)}\left(\gamma t,\kappa t\right)\left(\hat a^{\dagger}\right)^p\hat a^p$ with
\begin{equation}
d_p^{\left(j\right)}\left(\gamma t,\kappa t\right) = \frac{1}{p!}\sum_{q=0}^p\left(-1\right)^{p-q}\binom{p}{q}\,q^j
\,e^{-\gamma tq/2-\kappa tq^2/2}.
\label{eq:dpjl}
\end{equation}
Right-multiplying by $\hat a^l$ keeps each monomial normal-ordered:
\begin{equation}
\hat{M}_{l,j}^{\left(0\right)}\left(t\right) = C_{l,j}^{\left(0\right)}\sum_{p=0}^{\infty}d_p^{\left(j\right)}\left(\gamma t,\kappa t\right)\,\left(\hat{a}^{\dagger}\right)^p\hat{a}^{p+l},
\label{eq:mono_T0dep}
\end{equation}
with net shift $-l$. The coefficients do not depend on $l$.

\paragraph{Explicit low-order expansions.} With $\lambda_0=e^{-\gamma t/2}$ and $\lambda_d=e^{-\kappa t/2}$, so that $F\left(q\right)=\lambda_0^{q}\lambda_d^{q^2}$:
\begin{align}
\hat{M}_{0,0}^{\left(0\right)} &= \mathbb{1} + \left(\lambda_0\lambda_d-1\right)\,\hat{a}^{\dagger}\hat{a}
+\tfrac{1}{2}\left(\lambda_0^2\lambda_d^4-2\lambda_0\lambda_d+1\right)\,\left(\hat{a}^{\dagger}\right)^2\hat{a}^2+\cdots
\label{eq:M00}\\
\hat{M}_{1,0}^{\left(0\right)} &= \sqrt{1-\eta_t}\,\left[\hat{a}
+\left(\lambda_0\lambda_d-1\right)\,\hat{a}^{\dagger}\hat{a}^2+\cdots\right]\\
\hat{M}_{0,1}^{\left(0\right)} &= \sqrt{\kappa t}\,\left[\lambda_0\lambda_d\,\hat{a}^{\dagger}\hat{a}
+\left(\lambda_0^2\lambda_d^4-\lambda_0\lambda_d\right)\,\left(\hat{a}^{\dagger}\right)^2\hat{a}^2+\cdots\right]\\
\hat{M}_{1,1}^{\left(0\right)} &= \sqrt{\left(1-\eta_t\right)\kappa t}\,\left[\lambda_0\lambda_d\,\hat{a}^{\dagger}\hat{a}^2
+\left(\lambda_0^2\lambda_d^4-\lambda_0\lambda_d\right)\,\left(\hat{a}^{\dagger}\right)^2\hat{a}^3+\cdots\right]
\end{align}
Setting $\kappa=0$ and $j=0$ in \eqref{eq:dpjl} gives $d_p^{\left(0\right)}=\left(\lambda_0-1\right)^p/p!$, so \eqref{eq:mono_T0dep} reduces to \eqref{eq:mono_T0}.
All displayed expansions have been verified against direct evaluation of the operators on a truncated Fock space.

\subsubsection{Combined finite-\texorpdfstring{$T$}{T} amplitude damping and dephasing}

By \eqref{eq:KrausFiniteT} and \eqref{eq:KrausPhase}, $\hat{M}_{kl,j}=\hat D_j\hat K_{kl}=C_{kl,j}\,e^{-\kappa t\hat{n}^2/2}\hat{n}^j\left(\hat{a}^{\dagger}\right)^l e^{-A\hat{n}/2}\hat{a}^k$, with
$C_{kl,j}=\sqrt{N_c^kN_s^l\left(\kappa t\right)^j/\left(k!\,l!\,j!\,G_t\right)}$. Moving the diagonal factor to the right of $\left(\hat a^{\dagger}\right)^l$ with the shift identity \eqref{eq:shift},
$f\left(\hat n\right)\left(\hat a^{\dagger}\right)^l=\left(\hat a^{\dagger}\right)^lf\left(\hat n+l\right)$, gives
$\hat M_{kl,j}=C_{kl,j}\,\left(\hat a^{\dagger}\right)^l\,h_{l,j}\left(\hat n\right)\,\hat a^k$ with $h_{l,j}\left(n\right)=\left(n+l\right)^j\,\lambda_d^{\left(n+l\right)^2}\lambda^{n}$, where
$\lambda=e^{-A/2}=\sqrt{\eta_t}/G_t$ and $\lambda_d=e^{-\kappa t/2}$. Normal-ordering $h_{l,j}\left(\hat n\right)$ with \eqref{eq:finDiff} gives the single-sum expansion
\begin{equation}
\hat{M}_{kl,j}\left(t\right) = C_{kl,j}\sum_{p=0}^{\infty}c_p^{\left(l,j\right)}\,\left(\hat{a}^{\dagger}\right)^{l+p}\hat{a}^{k+p},
\qquad
c_p^{\left(l,j\right)}=\frac{1}{p!}\sum_{q=0}^{p}\left(-1\right)^{p-q}\binom{p}{q}\,\left(q+l\right)^j\,\lambda_d^{\left(q+l\right)^2}\lambda^{q},
\label{eq:Mfull}
\end{equation}
with net shift $l-k$ for every monomial. The coefficients depend on $l$ and $j$ but not on $k$.

\paragraph{Explicit low-order cases.}
\begin{align}
\hat{M}_{00,0}\left(t\right) &= \frac{1}{\sqrt{G_t}}\left[\mathbb{1}+\left(\lambda\lambda_d-1\right)\,\hat{a}^{\dagger}\hat{a}
+\tfrac{1}{2}\left(\lambda^2\lambda_d^4-2\lambda\lambda_d+1\right)\,\left(\hat{a}^{\dagger}\right)^2\hat{a}^2+\cdots\right]
\label{eq:M000}\\
\hat{M}_{10,0}\left(t\right) &= \frac{\sqrt{N_c}}{\sqrt{G_t}}\left[\hat{a}
+\left(\lambda\lambda_d-1\right)\,\hat{a}^{\dagger}\hat{a}^2+\cdots\right]
\label{eq:M100}\\
\hat{M}_{01,0}\left(t\right) &= \frac{\sqrt{N_s}}{\sqrt{G_t}}\left[\lambda_d\,\hat{a}^{\dagger}
+\left(\lambda\lambda_d^4-\lambda_d\right)\,\left(\hat{a}^{\dagger}\right)^2\hat{a}+\cdots\right]
\label{eq:M010}\\
\hat{M}_{11,0}\left(t\right) &= \frac{\sqrt{N_cN_s}}{\sqrt{G_t}}\left[\lambda_d\,\hat{a}^{\dagger}\hat{a}+\left(\lambda\lambda_d^4-\lambda_d\right)\left(\hat{a}^{\dagger}\right)^2\hat{a}^2+\cdots\right]
\label{eq:M110}\\
\hat{M}_{00,1}\left(t\right) &= \frac{\sqrt{\kappa t}}{\sqrt{G_t}}\left[\lambda\lambda_d\,\hat{a}^{\dagger}\hat{a}
+\left(\lambda^2\lambda_d^4-\lambda\lambda_d\right)\,\left(\hat{a}^{\dagger}\right)^2\hat{a}^2+\cdots\right]
\label{eq:M001}\\
\hat{M}_{10,1}\left(t\right) &= \frac{\sqrt{\kappa t N_c}}{\sqrt{G_t}}\left[\lambda\lambda_d\,\hat{a}^{\dagger}\hat{a}^2
+\left(\lambda^2\lambda_d^4-\lambda\lambda_d\right)\,\left(\hat{a}^{\dagger}\right)^2\hat{a}^3+\cdots\right]
\label{eq:M101}\\
\hat{M}_{01,1}\left(t\right) &= \frac{\sqrt{\kappa t N_s}}{\sqrt{G_t}}\left[\lambda_d\,\hat{a}^{\dagger}
+\left(2\lambda\lambda_d^4-\lambda_d\right)\,\left(\hat{a}^{\dagger}\right)^2\hat{a}+\cdots\right]
\end{align}
Setting $\kappa=0$ ($\lambda_d\to1$) in \eqref{eq:M000}--\eqref{eq:M010} recovers
\eqref{eq:K00}--\eqref{eq:K01} with $\mu=\lambda-1$. Setting $\bar{n}=0$ additionally
gives $G_t=1$, $\mu\to\mu_0$, and only the $l=0$ operators survive, recovering \eqref{eq:K0exp}--\eqref{eq:K1exp}.
All displayed expansions have been verified against direct evaluation of the operators on a truncated Fock space.

\begin{table}[ht]
\centering
\renewcommand{\arraystretch}{1.7}
\caption{Normal-ordered monomial structure. $\mu=e^{-A/2}-1$; $S\left(r,p\right)$: Stirling numbers of
the 2nd kind; $d_p^{\left(j\right)}$ and $c_p^{\left(l,j\right)}$ as in Eqs.~\eqref{eq:dpjl} and \eqref{eq:Mfull}.
Net shift = (total creation) $-$ (total annihilation) powers.}
\label{tab:monomial}
\small
\begin{tabular}{>{\raggedright\arraybackslash}p{3.0cm}p{5.5cm}cc}
\toprule
\textbf{Channel} & \textbf{Monomial expansion} & \textbf{Net shift} & \textbf{Type}\\
\midrule
$T=0$ amp. damp. & $C_l^{\left(0\right)}\sum_m\frac{\mu_0^m}{m!}\left(\hat{a}^{\dagger}\right)^m\hat{a}^{m+l}$ & $-l$ & off-diag.\\
Fin.-$T$ amp. damp. & $C_{kl}\sum_m\frac{\mu^m}{m!}\left(\hat{a}^{\dagger}\right)^{l+m}\hat{a}^{k+m}$ & $l-k$ & general\\
Pure dephasing & $C_j^{\left(d\right)}\sum_p d_p^{\left(j\right)}\left(\hat{a}^{\dagger}\right)^p\hat{a}^p$ & $0$ & diagonal\\
$T=0$ amp. + deph. & $C_{l,j}^{\left(0\right)}\sum_p d_p^{\left(j\right)}\left(\hat{a}^{\dagger}\right)^p\hat{a}^{p+l}$ & $-l$ & off-diag.\\
Fin.-$T$ amp. + deph. & $C_{kl,j}\sum_{p}c_p^{\left(l,j\right)}\left(\hat{a}^{\dagger}\right)^{l+p}\hat{a}^{k+p}$ & $l-k$ & general\\
\bottomrule
\end{tabular}
\end{table}

\noindent The key structural feature visible in Table~\ref{tab:monomial} is that the net photon-number
shift $l-k$ (or $-l$ at $T=0$) is the same for \emph{every} monomial in the expansion,
regardless of the internal summation indices. This is a consequence of the form $\left(\hat a^{\dagger}\right)^lf\left(\hat n\right)\hat a^k$ of every Kraus operator: the Kraus operator $\hat{K}_{kl}$ is an eigenoperator of the
photon-number-shift superoperator $\hat{\rho}\mapsto\left[\hat{n},\hat{\rho}\right]$ with eigenvalue $l-k$.

\subsection{Coefficient scaling, Stirling asymptotics, and 2-D maps}
\label{sec:scaling}

\subsubsection{Factorised Poisson structure}

The squared coefficient of \eqref{eq:KrausFiniteT} factorises as
\begin{equation}
\abs{C_{kl}}^2 = \frac{1}{G_t}\cdot\frac{N_c^k}{k!}\cdot\frac{N_s^l}{l!},
\label{eq:Cfact}
\end{equation}
i.e.\ a product of two independent Poisson-like kernels in $k$ (loss) and $l$ (gain).
For the combined channel with dephasing, an additional independent Poisson factor
$\left(\kappa t\right)^j/j!$ appears:
\begin{equation}
\abs{C_{kl,j}}^2 = \frac{1}{G_t}\cdot\frac{N_c^k}{k!}\cdot\frac{N_s^l}{l!}\cdot\frac{\left(\kappa t\right)^j}{j!}.
\label{eq:Cfact3}
\end{equation}
The three factors are exactly independent because $\left[\mathcal{L}_a^{\left(T\right)},\mathcal{L}_d\right]=0$
(Theorem~\ref{thm:commute}): the combined channel is a composition, so its Kraus operators are products.
These are coefficient magnitudes, not outcome probabilities. The probability of outcome $\left(k,l,j\right)$ is $\mathrm{Tr}\,\hat M_{kl,j}\hat\rho\hat M_{kl,j}^{\dagger}$, which depends on the input through falling factorials of $\hat n$. On $\ket n$, for example, the loss index is binomial, $k\sim\mathrm{Bin}\left(n,N_c\right)$.

\subsubsection{Stirling asymptotics}

Using $\ln p!\approx p\ln p-p+\frac{1}{2}\ln\left(2\pi p\right)$ for large $p$, the coefficients
$\abs{C_{kl}}$ and $\abs{C_{kl,j}}$ are approximated by:
\begin{align}
\abs{C_l^{\left(0\right)}}\left(t\right) &\approx \frac{1}{\left(2\pi l\right)^{1/4}}\left[\frac{e\left(1-\eta_t\right)}{l}\right]^{l/2}
\label{eq:StirT0}\\
\abs{C_{kl}\left(t\right)} &\approx \frac{1}{\sqrt{G_t}\left(2\pi k\right)^{1/4}\left(2\pi l\right)^{1/4}}
\left[\frac{eN_c}{k}\right]^{k/2}\!\!\left[\frac{eN_s}{l}\right]^{l/2}
\label{eq:StirFinT}\\
\abs{C_j^{\left(d\right)}}\left(t\right) &\approx \frac{1}{\left(2\pi j\right)^{1/4}}\left[\frac{e\kappa t}{j}\right]^{j/2}
\label{eq:StirDeph}\\
\abs{C_{kl,j}\left(t\right)} &\approx \frac{1}{\sqrt{G_t}\left(2\pi k\right)^{1/4}\left(2\pi l\right)^{1/4}\left(2\pi j\right)^{1/4}}
\left[\frac{eN_c}{k}\right]^{k/2}\!\!\left[\frac{eN_s}{l}\right]^{l/2}\!\!\left[\frac{e\kappa t}{j}\right]^{j/2}
\label{eq:StirComb}
\end{align}
Once an index exceeds $e$ times the corresponding parameter (e.g.\ $l>e\left(1-\eta_t\right)$), the factor
$\left[e\left(1-\eta_t\right)/l\right]^{l/2}$ decays \emph{super-exponentially} as $l^{-l/2}$\textemdash{}faster than any
exponential. For $\abs{C}\propto\left(p!\right)^{-1/2}$ the Stirling approximation is accurate to about $2\%$ at index $2$ and to better than $1\%$ from index $5$.

\subsection{Experimentally relevant regime (Ofek et al.\ 2016)}
\label{ssec:ofek}

Superconducting 3D storage cavities used in bosonic-code experiments operate at
$\omega_c/2\pi\approx5\,\mathrm{GHz}$, base temperature $\approx15\,\mathrm{mK}$.
The equilibrium Bose--Einstein population at these conditions is
$\bar{n}_\mathrm{BE}=\left[e^{\hbar\omega_c/k_BT}-1\right]^{-1}\approx1.1\times10^{-7}$, but
residual microwave heating and spurious photon sources raise the effective thermal
occupation to $\bar{n}_\mathrm{eff}\sim10^{-2}$--$10^{-1}$. We use $\bar{n}=0.05$ as a
representative baseline.
Table~\ref{tab:device} records the key parameters from Ofek et al.~\cite{Ofek2016}.

\begin{table}[ht]
\centering
\caption{Device parameters from Ofek et al., Nature (2016).}
\label{tab:device}
\small
\begin{tabular}{llll}
\toprule
\textbf{Quantity} & \textbf{Symbol} & \textbf{Value} & \textbf{Source}\\
\midrule
Resonator amplitude-damping time & $\tau_c$ & $250\,\mu$s & paper, p.~2\\
Resonator coherence time & $T_{2c}$ & $330\,\mu$s & paper, p.~2\\
Pure dephasing time (derived) & $T_\phi$ & $970.6\,\mu$s & $1/T_\phi=1/T_{2c}-1/\left(2\tau_c\right)$\\
Dephasing-to-loss ratio & $\kappa_\phi/\gamma$ & $0.258$ & derived\\
Effective thermal occupation & $\bar{n}_\mathrm{eff}$ & $\approx0.05$ & estimated\\
Optimal syndrome wait time & $t_w$ & $20\,\mu$s & Table~1 of \cite{Ofek2016}\\
Full QEC run time & $t_\mathrm{QEC}$ & $110\,\mu$s & Figure~3 of \cite{Ofek2016}\\
\bottomrule
\end{tabular}
\end{table}

The pure dephasing time $T_\phi=970.6\,\mu$s is derived from the standard relation
$T_{2c}^{-1}=\left(2\tau_c\right)^{-1}+T_\phi^{-1}$: dephasing in this resonator is approximately
four times \emph{slower} than amplitude damping. We write $\gamma=1/\tau_c$ and $\kappa_\phi\equiv1/T_\phi$.
For the dephasing Lindbladian $\kappa\mathcal D\left[\hat n\right]$ of Sec.~\ref{sec:phase}, Eq.~\eqref{eq:Ldeig} shows that the coherence between $\ket0$ and $\ket1$ decays as $e^{-\kappa t/2}$.
The rate entering the Kraus operators $\hat D_j$ is therefore $\kappa=2\kappa_\phi$, i.e.\ $\kappa t=2t/T_\phi$ and $\kappa/\gamma=2\tau_c/T_\phi\approx0.515$.
Table~\ref{tab:weights} evaluates the
weights at key experimental time points.

\begin{table}[ht]
\centering
\caption{Per-jump weights at experimental time points ($\bar{n}=0.05$, $\kappa_\phi/\gamma=0.258$), with $N_c$, $N_s$, $G_t$ and $e^{-A}=\eta_t/G_t^2$ from Eq.~\eqref{eq:Ckl} and the dephasing parameter $\kappa t=2\kappa_\phi t$.}
\label{tab:weights}
\small
\begin{tabular}{cccccccc}
\toprule
$t$ ($\mu$s) & $1-\eta_t$ & $N_c$ (loss) & $N_s$ (gain) & $\kappa_\phi t=t/T_\phi$ & $\kappa t$ (deph.) & $G_t$ & $e^{-A}$\\
\midrule
$20\ \left(t_w\right)$  & $0.077$ & $0.080$ & $3.83\times10^{-3}$ & $0.021$ & $0.041$ & $1.004$ & $0.916$\\
$50$     & $0.181$ & $0.189$ & $8.98\times10^{-3}$ & $0.052$ & $0.103$ & $1.009$ & $0.804$\\
$110\ \left(t_\mathrm{QEC}\right)$ & $0.356$ & $0.367$ & $1.75\times10^{-2}$ & $0.113$ & $0.227$ & $1.018$ & $0.622$\\
$250\ \left(\tau_c\right)$ & $0.632$ & $0.643$ & $3.06\times10^{-2}$ & $0.258$ & $0.515$ & $1.032$ & $0.346$\\
\bottomrule
\end{tabular}
\end{table}

\subsection{Support hierarchy and practical truncation}

From Table~\ref{tab:weights}, the three parameters obey a clear ordering throughout the
QEC window:
\begin{equation}
N_s \;<\; \kappa t \;\lesssim\; N_c
\quad\text{for all }t\le\tau_c .
\end{equation}
At $t_w=20\,\mu$s these are $N_s\approx 3.8$$\,\times10^{-3}$, $\kappa t\approx0.041$,
$N_c\approx 0.080$, spanning more than an order of magnitude. The resulting hierarchy of
single-jump coefficient magnitudes is
\begin{equation}
\abs{\mathcal{C}_{1,0,0}} = 0.283 \;>\; \abs{\mathcal{C}^{\left(d\right)}_{0,0,1}} = 0.203 \;>\;
\abs{\mathcal{C}_{0,1,0}} = 0.062 \;>\; \abs{\mathcal{C}_{2,0,0}} = 0.057.
\label{eq:hierarchy}
\end{equation}
The dominant correctable error is single-photon loss ($k=1$), as the cat code is designed
to correct. Single dephasing ($j=1$) is weaker: the ratio of squared coefficients is $\kappa t/N_c\approx0.51$ at $t_w$, which approaches $2\kappa_\phi/\gamma$ at short times;
thermal gain ($l=1$) is weaker still by $\sqrt{N_s/N_c}=\sqrt{\bar{n}/\left(\bar{n}+1\right)}
\approx0.22$. These are coefficient magnitudes; the corresponding error probabilities depend on the code state. For example, the probability associated with $\hat K_{10}$ is approximately $N_c\left\langle\hat n\right\rangle$ at short times.

Because the weight carried by the discarded Kraus operators depends on the input state, a truncation of the Kraus families must be justified for the states of interest. Input-specific bounds follow from the outcome statistics on Fock states, together with $\Pr\left(X\ge r\right)\le\mathbb E\binom{X}{r}$ for any non-negative integer random variable $X$.
On $\ket n$ the loss index is binomial, $k\sim\mathrm{Bin}\left(n,N_c\right)$. Given $m$ photons after the loss stage, the gain index has
$\Pr\left(l\right)=\binom{m+l}{l}N_s^lG_t^{-\left(m+1\right)}$, and the dephasing index is Poisson with mean $\kappa t\,n^2$. Averaging over the input $\hat\rho$ gives
\begin{equation}
\Pr\left(k\ge r\right)\le\frac{N_c^r}{r!}\left\langle\left(\hat a^{\dagger}\right)^r\hat a^r\right\rangle,\qquad
\Pr\left(l\ge r\right)\le\left\langle\tbinom{\hat n+r}{r}\right\rangle\left[\left(1-\eta_t\right)\bar n\right]^r,\qquad
\Pr\left(j\ge r\right)\le\frac{\left(\kappa t\right)^r}{r!}\left\langle\hat n^{2r}\right\rangle .
\label{eq:trunc}
\end{equation}
Choosing $k_{\max},l_{\max},j_{\max}$ so that the sum of these bounds (with $r=k_{\max}+1$, etc.) is below a target tolerance guarantees that the discarded Kraus operators carry at most that weight on the given input.
For pure loss this is the input-specific tail bound used with the closed-form operators \eqref{eq:KrausT0}.

\subsection{Two-dimensional contour maps}
\label{ssec:contours}

The function $\ln\abs{C_{kl,j}}$ is a sum of independent $k$-, $l$-, and $j$-parts,
$\ln\abs{C_{kl,j}}=f_c\left(k\right)+f_s\left(l\right)+f_d\left(j\right)-\tfrac12\ln G_t$ with $f_N\left(x\right)=\tfrac{x}{2}\ln N-\tfrac12\ln x!$.
Because of the factorial terms each part is concave rather than linear, so the level sets are curved, as Fig.~\ref{fig:contours} shows.

Figure~\ref{fig:contours} shows the full $\log_{10}\abs{C_{kl}}$ map in the $\left(k,l\right)$ plane
(top row, finite-$T$ amplitude damping) and the $\log_{10}\abs{C_{kl=0,j}}$ map in the $\left(k,j\right)$
plane (bottom row, combined channel with $l=0$), at three experimental time points. Several
features stand out:

\begin{itemize}[leftmargin=1.5em,itemsep=2pt]
\item \textbf{Separable contours.} The contours of the separable function $f_c\left(k\right)+f_s\left(l\right)$ are curved. Near the origin one step in $l$ costs a factor $\sqrt{N_s}$ and one step in $k$ a factor $\sqrt{N_c}$, so one gain step is worth about $\ln N_s/\ln N_c\approx2$ loss steps at $\bar n=0.05$. This asymmetry is visible in the top panels as a steeper colour
gradient in the $l$ direction.
\item \textbf{Growth with time.} Moving from $t=20\,\mu$s to $t=250\,\mu$s, the parameters
$N_c$ and $\kappa t$ increase and the high-weight region expands toward larger
$k$ and $j$. At $t=250\,\mu$s ($=\tau_c$), $N_c\approx 0.64$, so coefficients of order unity extend to
$k=1$ and appreciable weight extends to $k\approx4$--5.
\item \textbf{Dephasing axis.} In the $\left(k,j\right)$ bottom row, the dephasing parameter
$\kappa t=2\kappa_\phi t$ is intermediate: at $t=250\,\mu$s it reaches $0.515$, so the $j=1$ term
carries squared-coefficient weight $0.515$ relative to $j=0$.
\item \textbf{Parameter markers.} The white horizontal line marks $k=N_c$ and the cyan vertical line marks $l=N_s$ in the top
panels; the yellow line marks $j=\kappa t$ in the bottom panels. Beyond $e$ times these values the coefficients fall off super-exponentially.
\end{itemize}

\begin{figure}[ht]
\centering
\includegraphics[width=\textwidth]{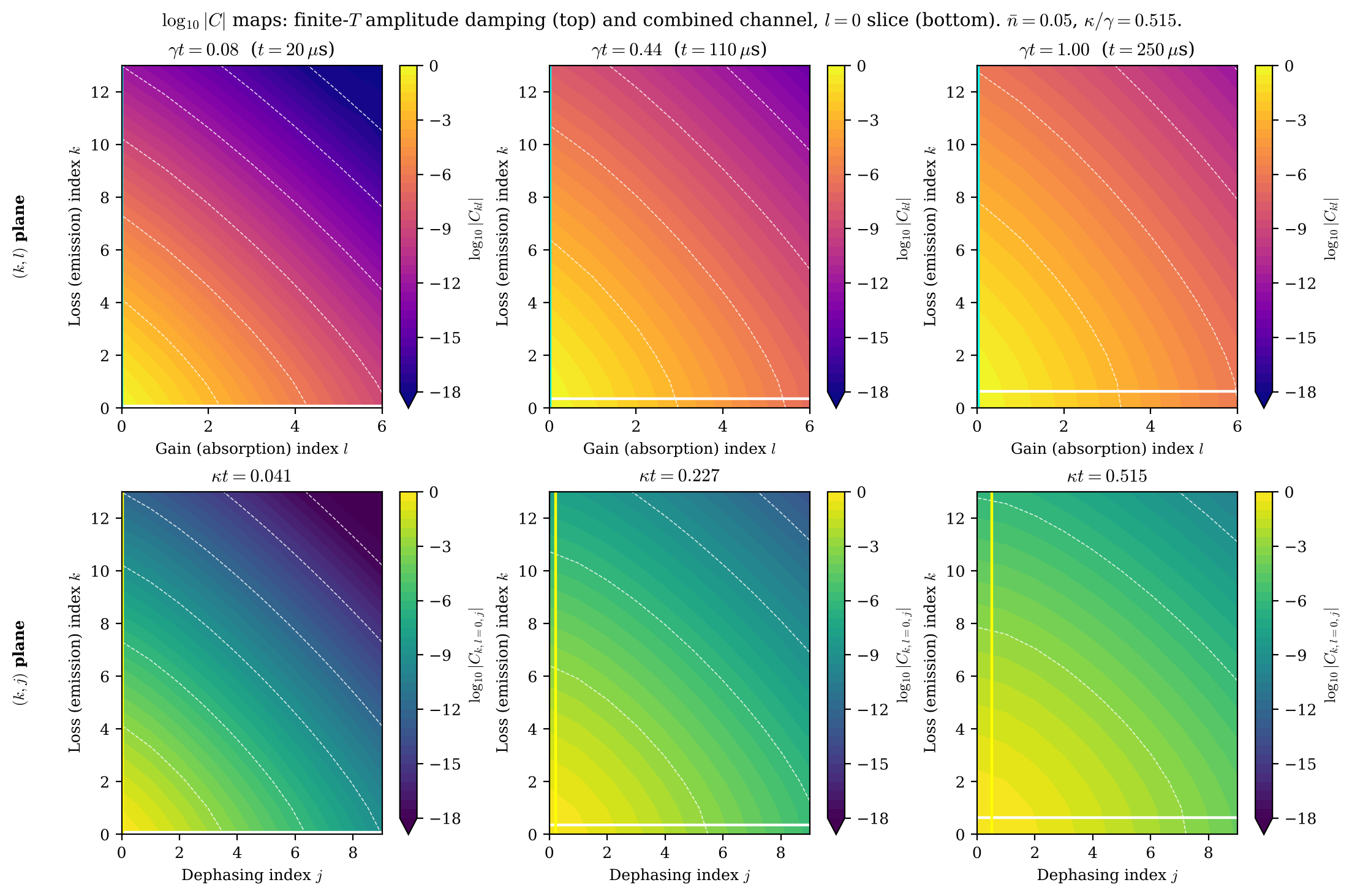}
\caption{$\log_{10}\abs{C}$ coefficient maps, computed from Eqs.~\eqref{eq:KrausPhase} and \eqref{eq:Ckl}, at $\bar{n}=0.05$, $\kappa/\gamma=2\kappa_\phi/\gamma=0.515$.
\textbf{Top row:} finite-$T$ amplitude damping in the $\left(k,l\right)$ plane.
The white horizontal line marks $k=N_c$;
the cyan vertical line marks $l=N_s$.
\textbf{Bottom row:} combined channel ($l=0$ slice) in the $\left(k,j\right)$ plane;
white marks $k=N_c$, yellow marks $j= \kappa t$.
Dashed contour lines are spaced by 3~decades. The contour
structure reflects the additive separability of $\ln\abs{C_{kl,j}}$.}
\label{fig:contours}
\end{figure}

\subsection{Coefficient slices and the loss--dephasing--gain hierarchy}

Figure~\ref{fig:slices} shows three one-dimensional slices through the coefficient array,
providing a complementary view of the super-exponential decay.

\begin{figure}[ht]
\centering
\includegraphics[width=\textwidth]{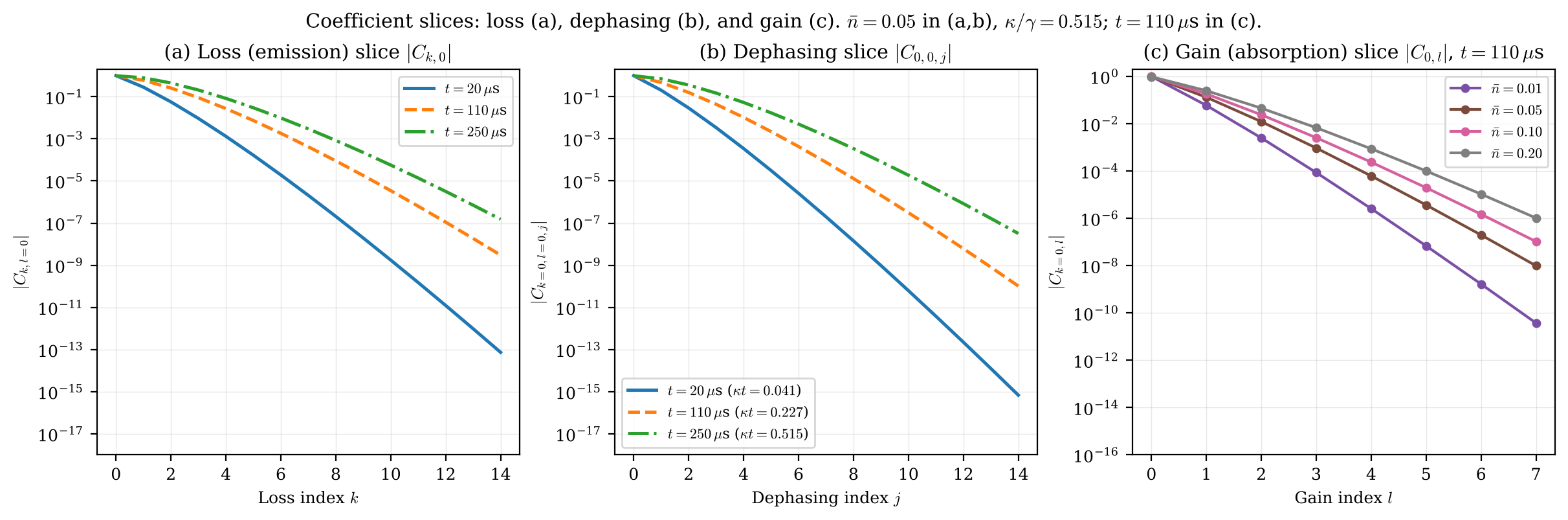}
\caption{Coefficient slices on a log scale ($\bar{n}=0.05$ unless noted, $\kappa/\gamma=0.515$).
\textbf{(a)} Loss slice $\abs{C_{k,l=0}}$ vs $k$ at three times: the curves flatten
(slower decay) as time increases and $N_c$ grows.
\textbf{(b)} Dephasing slice $\abs{C_{k=0,l=0,j}}$ vs $j$: since $\kappa t<N_c$, this slice decays faster than (a) at the same time.
\textbf{(c)} Gain slice $\abs{C_{k=0,l}}$ vs $l$ at $t=110\,\mu$s for four values of
$\bar{n}$: even at $\bar{n}=0.20$ the gain coefficients decay considerably faster
than the loss coefficients, so that thermal gain is a sub-leading error
channel at mK temperatures.}
\label{fig:slices}
\end{figure}

\paragraph{Panel (a): loss.} The parameter $N_c=0.080$ at $t_w$ grows to
$N_c=0.643$ at $t=\tau_c$. Each curve decays super-exponentially: the Stirling envelope
$\left[eN_c/k\right]^{k/2}$ predicts, for example, $\abs{C_{5,0}}/\abs{C_{0,0}}\approx10^{-3.8}$ at
$t_w$, consistent with the plot. The loss coefficients dominate throughout.

\paragraph{Panel (b): dephasing.} The smaller parameter $\kappa t\le0.515$
causes faster super-exponential decay: $\abs{C_{0,0,3}}/\abs{C_{0,0,0}}\approx10^{-0.8}$ at
$t=250\,\mu$s, compared to $\abs{C_{3,0,0}}/\abs{C_{0,0,0}}\approx10^{-0.7}$ for loss at the
same time. Whether dephasing indices $j\geq3$ may be neglected for a given code state follows from the bound \eqref{eq:trunc}, which involves $\left\langle\hat n^{2r}\right\rangle$.

\paragraph{Panel (c): gain.} At $t=110\,\mu$s ($t_\mathrm{QEC}$), the gain
coefficient for $l=1$ is $\abs{C_{0,1}}\approx0.25$ at $\bar{n}=0.20$ and $\approx0.13$
at $\bar{n}=0.05$ (at $t_w$ it is $0.062$, cf.\ \eqref{eq:hierarchy}). The strong $\bar{n}$ sensitivity
($N_s\propto\bar{n}$) makes the gain sector a sensitive diagnostic of residual thermal
occupation. We note that $l$ labels the gain stage of the factorisation \eqref{eq:factorise}, not a count of absorption jumps along a measured trajectory, so a trajectory experiment would need to be modelled with the unravelling \eqref{eq:Dyson}.

\section{Action of pure-loss channel on superposition of coherent states}
\label{sec:beam-splitter-derivation}

Here we outline how one can work out this action in a significantly smaller Hilbert space, compared to Fock subspaces with high truncation numbers. The core observation is that, under the loss channel, coherent states remain coherent. In what follows, we outline in greater detail how one can form an orthogonal basis out of these coherent states.

A beam-splitter channel with
transmissivity \(\eta=1-\gamma\), so that
\(\mathcal{N}_\gamma=\mathcal{L}_{1-\gamma}\). In the following derivation we
work with \(\mathcal{L}_\eta\), with the understanding that \(\eta=1-\gamma\).

Let \(\mathcal{L}_\eta\) denote the pure-loss channel with transmissivity
\(\eta\in\left[0,1\right]\). It can be modeled by a beam splitter with an environmental
vacuum mode,
\begin{equation}
    \mathcal{L}_\eta\left(\rho\right)
    =
    \operatorname{Tr}_{\mathrm{env}}
    \!\left[
        U_\eta\left(\rho\otimes\ket{0}\bra{0}\right)U_\eta^\dagger
    \right],
\end{equation}
where, up to an irrelevant phase,
\begin{equation}
    U_\eta \ket{\xi}\ket{0}
    =
    \ket{\sqrt{\eta}\xi}
    \ket{\sqrt{1-\eta}\xi}.
\end{equation}

The interpretation of the action of the pure-loss channel in phase space on a
superposition of coherent states is as follows:
\begin{equation}
    \begin{aligned}
        \ket{\psi}:=\left(a\ket{\alpha}+b\ket{\beta}\right)\ket{0}
        &\xmapsto{\mathcal{L}_{\eta}}
        \ket{\psi'}=
        a\ket{\sqrt{\eta}\alpha}\ket{\sqrt{1-\eta}\alpha}
        + b\ket{\sqrt{\eta}\beta}\ket{\sqrt{1-\eta}\beta}.
    \end{aligned}
\end{equation}

Calculating the density operators before and after the action of the pure-loss
channel:
\begin{equation}
    \begin{aligned}
        \rho= \ket{\psi}\bra{\psi}
        ={}& \left(\abs{a}^2\ket{\alpha}\bra{\alpha}
        + a\bar{b}\ket{\alpha}\bra{\beta}
        + \bar{a}b\ket{\beta}\bra{\alpha}
        + \abs{b}^2\ket{\beta}\bra{\beta}\right)\otimes\ket0\bra0.
    \end{aligned}
\end{equation}
\begin{equation}
    \begin{aligned}
        \rho' = \ket{\psi'}\bra{\psi'}
        ={}& \abs{a}^2
        \ket{\sqrt{\eta}\alpha}\bra{\sqrt{\eta}\alpha}
        \otimes
        \ket{\sqrt{1-\eta}\alpha}\bra{\sqrt{1-\eta}\alpha}
        \\
        &+ a\bar{b}
        \ket{\sqrt{\eta}\alpha}\bra{\sqrt{\eta}\beta}
        \otimes
        \ket{\sqrt{1-\eta}\alpha}\bra{\sqrt{1-\eta}\beta}
        \\
        &+ \bar{a}b
        \ket{\sqrt{\eta}\beta}\bra{\sqrt{\eta}\alpha}
        \otimes
        \ket{\sqrt{1-\eta}\beta}\bra{\sqrt{1-\eta}\alpha}
        \\
        &+ \abs{b}^2
        \ket{\sqrt{\eta}\beta}\bra{\sqrt{\eta}\beta}
        \otimes
        \ket{\sqrt{1-\eta}\beta}\bra{\sqrt{1-\eta}\beta}.
    \end{aligned}
\end{equation}

We now perform a Gram--Schmidt orthogonalization for system 2 (environment), with an orthonormal basis $\left\{\ket u, \ket v\right\}$ constructed so that
\begin{align}
    \ket{\sqrt{1-\eta}\alpha}
        &= \ket{u},\nonumber\\
     \ket{\sqrt{1-\eta}\beta}
        &= A_{\alpha\beta}\ket{u}
        + \sqrt{1-\abs{A_{\alpha\beta}}^2}\ket{v}\textnormal{ with}\nonumber\\
     A_{\alpha\beta}
        &= \braket{\sqrt{1-\eta}\alpha }{ \sqrt{1-\eta}\beta}
        = \exp\left[
        -\left(1-\eta\right)
        \left(
        \frac{\abs{\alpha}^2+\abs{\beta}^2}{2}
        - \bar{\alpha}\beta
        \right)
        \right].   
\end{align}
Substituting into \(\rho'\),
\begin{equation}
    \begin{aligned}
        \rho'
        = \ket{\psi'}\bra{\psi'}
        ={}&
        \abs{a}^2
        \ket{\sqrt{\eta}\alpha}
        \bra{\sqrt{\eta}\alpha}
        \otimes
        \ket{u}\bra{u}
        \\
        &+ a\bar{b}
        \ket{\sqrt{\eta}\alpha}
        \bra{\sqrt{\eta}\beta}
        \otimes
        \ket{u}
        \left(
            A_{\alpha\beta}\ket{u}
            + \sqrt{1-\abs{A_{\alpha\beta}}^2}\ket{v}
        \right)^\dagger
        \\
        &+ \bar{a}b
        \ket{\sqrt{\eta}\beta}
        \bra{\sqrt{\eta}\alpha}
        \otimes
        \left(
            A_{\alpha\beta}\ket{u}
            + \sqrt{1-\abs{A_{\alpha\beta}}^2}\ket{v}
        \right)
        \bra{u}
        \\
        &+ \abs{b}^2
        \ket{\sqrt{\eta}\beta}
        \bra{\sqrt{\eta}\beta}
        \otimes
        \left(
            A_{\alpha\beta}\ket{u}
            + \sqrt{1-\abs{A_{\alpha\beta}}^2}\ket{v}
        \right)
        \\
        &\hspace{4.8cm}\times
        \left(
            A_{\alpha\beta}\ket{u}
            + \sqrt{1-\abs{A_{\alpha\beta}}^2}\ket{v}
        \right)^\dagger.
    \end{aligned}
\end{equation}

Expanding the adjoints,
\begin{equation}
    \begin{aligned}
        ={}& \abs{a}^2
        \ket{\sqrt{\eta}\alpha}
        \bra{\sqrt{\eta}\alpha}
        \otimes
        \ket{u}\bra{u}
\\
        &+ a\bar{b}
        \ket{\sqrt{\eta}\alpha}
        \bra{\sqrt{\eta}\beta}
        \otimes
        \ket{u}
        \left(
            \bar{A}_{\alpha\beta}\bra{u}
            + \sqrt{1-\abs{A_{\alpha\beta}}^2}\bra{v}
        \right)
        \\
        &+ \bar{a}b
        \ket{\sqrt{\eta}\beta}
        \bra{\sqrt{\eta}\alpha}
        \otimes
        \left(
            A_{\alpha\beta}\ket{u}
            + \sqrt{1-\abs{A_{\alpha\beta}}^2}\ket{v}
        \right)
        \bra{u}
        \\
        &+ \abs{b}^2
        \ket{\sqrt{\eta}\beta}
        \bra{\sqrt{\eta}\beta}
        \otimes
        \left(
            A_{\alpha\beta}\ket{u}
            + \sqrt{1-\abs{A_{\alpha\beta}}^2}\ket{v}
        \right)
        \\
        &\hspace{4.8cm}\times
        \left(
            \bar{A}_{\alpha\beta}\bra{u}
            + \sqrt{1-\abs{A_{\alpha\beta}}^2}\bra{v}
        \right).
    \end{aligned}
\end{equation}

{Taking the partial trace over system 2 removes the terms
proportional to $\ket{u}\bra{v}$ and $\ket{v}\bra{u}$ in that
system because their traces vanish for orthogonal $u,v$, giving}
\begin{equation}
    \begin{aligned}
        ={}&
        \abs{a}^2
        \ket{\sqrt{\eta}\alpha}
        \bra{\sqrt{\eta}\alpha}
        + a\bar{b}\bar{A}_{\alpha\beta}
        \ket{\sqrt{\eta}\alpha}
        \bra{\sqrt{\eta}\beta}
        \\
        &+ \bar{a}bA_{\alpha\beta}
        \ket{\sqrt{\eta}\beta}
        \bra{\sqrt{\eta}\alpha}
        + \abs{b}^2
        \ket{\sqrt{\eta}\beta}
        \bra{\sqrt{\eta}\beta}.
    \end{aligned}
\end{equation}

Allowing us to model the action of the pure-loss channel on a superposition of
coherent states as
\begin{equation}
    C
    \xmapsto{\mathcal{L}_{\eta}}
    C'
    \quad\text{and basis change from}\quad
    \ket{\alpha_i}\bra{\alpha_j}
    \xmapsto{\mathcal{L}_{\eta}}
    \ket{\sqrt{\eta}\alpha_i}\bra{\sqrt{\eta}\alpha_j},
\end{equation}
where
\begin{equation}
    C =
    \begin{pmatrix}
        \abs{a}^2 & a\bar{b} \\
        \bar{a}b & \abs{b}^2
    \end{pmatrix},
    \qquad
    C' =
    \begin{pmatrix}
        \abs{a}^2 & a\bar{b}\bar{A}_{\alpha\beta} \\
        \bar{a}bA_{\alpha\beta} & \abs{b}^2
    \end{pmatrix}.
\end{equation}
Equivalently, elementwise,
\begin{equation}
    C'_{ij}
    =
    C_{ij}\braket{e_j}{e_i},
    \qquad
    \ket{e_i}=\ket{\sqrt{1-\eta}\alpha_i}.
\end{equation}

\section{Algebraic geometry}

\subsection{Polynomial rings and ideals}

\subsubsection{Polynomial ring}

The set of polynomials in $x_1,\dots,x_n$ with coefficients in a field $\F$
\begin{equation}
    \F\left[x_1,\dots,x_n\right] = \left\{ \sum_\alpha a_\alpha x^\alpha : x^\alpha = \prod_{i=1}^n x_i^{\alpha_i} ,\, a_\alpha\in\F ,\, \alpha\in\Z_{\geq0}^n \right\}
\end{equation}
is a commutative ring, as it satisfies:
\begin{enumerate}
    \item Associativity: $f+\left(g+h\right) = \left(f+g\right)+h$ for all $f,g,h\in\F\left[x_1,\dots,x_n\right]$.

    \item Commutativity: $f+g=g+f$ and $f\cdot g = g\cdot f$ for all $f,g\in\F\left[x_1,\dots,x_n\right]$.

    \item Distributivity: $f\cdot\left(g+h\right) = f\cdot g + f\cdot h$ for all $f,g,h\in\F\left[x_1,\dots,x_n\right]$.

    \item Identities: There exist $0,1\in\F\left[x_1,\dots,x_n\right]$ such that $f+0=f$ and $f\cdot1=f$ for all $f\in\F\left[x_1,\dots,x_n\right]$.
    
    Here, $0$ is the polynomial $\sum_\alpha a_\alpha x^\alpha$ with $\forall\alpha, a_\alpha=0$ and $1$ is the polynomial $a_{\left(0,\dots,0\right)} x^{\left(0,\dots,0\right)}$ with $a_{\left(0,\dots,0\right)}=1$.

    \item Additive inverses: For al $f\in\F\left[x_1,\dots,x_n\right]$, there exists $g\in\F\left[x_1,\dots,x_n\right]$ such that $f+g=0$.

    For $f = \sum_\alpha a_\alpha x^\alpha$, its additive inverse is $g = \sum_\alpha b_\alpha x^\alpha$ where $b_\alpha$ is the additive inverse of $a_\alpha$ in $\F$.
\end{enumerate}
We call $\F\left[x_1,\dots,x_n\right]$ a \textit{polynomial ring}.

\subsubsection{Varieties}

\begin{definition}
    For polynomials $f_1,\dots,f_s\in\F\left[x_1,\dots,x_n\right]$, the set
    \begin{equation}
        \mathbf{V}\left(f_1,\dots,f_s\right) = \left\{ \left(a_1,\dots,a_n\right)\in\F^n : \forall i , f_i\left(a_1,\dots,a_n\right) = 0 \right\}
    \end{equation}
    is an \textit{affine variety} defined by $f_1,\dots,f_s$.
\end{definition}

In words, affine variety $\mathbf{V}\left(f_1,\dots,f_s\right)$ is the set of solutions in $\F^n$ to polynomial system of equations $f_1,\dots,f_s\in\F\left[x_1,\dots,x_n\right]$.
It can be shown (see~[Lemma 2 of Chapter 1.3]\cite{cox2015ideals}) that for any two affine varieties $V,W$, both $V\cup W$ and $V\cap W$ are also affine varieties.

\begin{definition}[{Rational functions~\cite[Chapter 1.3]{cox2015ideals}}]
    For a field $\F$, the set of \textit{rational functions} in $t_1,\dots,t_m$ is defined as
    \begin{equation}
        \F\left(t_1,\dots,t_m\right) = \left\{ f/g : f,g\in\F\left[x_1,\dots,x_m\right] ,\, g\neq0 \right\} .
    \end{equation}
    We say that $f/g$ is equal to $f'/g'$ if $fg'=f'g$.
\end{definition}

\begin{lemma}
    A set of rational function $\F\left(t_1,\dots,t_m\right)$ is a field.
\end{lemma}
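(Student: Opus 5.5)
The plan is to verify the field axioms for $\F\left(t_1,\dots,t_m\right)$ directly, using the fact that $\F\left[t_1,\dots,t_m\right]$ is a commutative ring (shown above) and, crucially, that it is an integral domain. I would first record the integral-domain property: if $g,g'\neq0$ then $gg'\neq0$. This follows by fixing a monomial order (e.g.\ $\lex$). The leading term of $gg'$ is then the product of the leading terms of $g$ and $g'$. Its coefficient is a product of two nonzero field elements, hence nonzero. Without this property the operations below would not even produce admissible fractions, since we need nonzero denominators.

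Next I would check that the stated equality $f/g=f'/g'\iff fg'=f'g$ is an equivalence relation. Reflexivity and symmetry are immediate. For transitivity, assume $fg'=f'g$ and $f'g''=f''g'$. Multiplying, we get $fg'g''=f'gg''=f''gg'$. Hence $g'\left(fg''-f''g\right)=0$, and since $g'\neq0$ the integral-domain property gives $fg''=f''g$. This is the step I expect to be the main obstacle, mild as it is, because it is where the absence of zero divisors is genuinely used.

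Then I would define
\begin{equation}
    \frac{f}{g}+\frac{f'}{g'}=\frac{fg'+f'g}{gg'} ,\qquad \frac{f}{g}\cdot\frac{f'}{g'}=\frac{ff'}{gg'} ,
\end{equation}
which is legitimate since $gg'\neq0$. I would show these are well defined on equivalence classes. If $f/g=\tilde f/\tilde g$, a direct cross-multiplication shows that $\left(fg'+f'g\right)\tilde g g'=\left(\tilde f g'+f'\tilde g\right)gg'$, and similarly for the product. Associativity, commutativity and distributivity then reduce, after clearing denominators, to the corresponding identities in the polynomial ring. The additive identity is $0/1$ and the multiplicative identity is $1/1$. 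The additive inverse of $f/g$ is $\left(-f\right)/g$. Finally, a class $f/g$ is nonzero exactly when $f\neq0$ (because $f\cdot1=0\cdot g$ iff $f=0$). For such a class, $g/f$ is a valid rational function and $\left(f/g\right)\left(g/f\right)=fg/gf=1/1$. We also have $0/1\neq1/1$, so $\F\left(t_1,\dots,t_m\right)$ is a field.
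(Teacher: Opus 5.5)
The paper does not prove this lemma: it is stated without proof in the algebraic-geometry appendix, next to definitions taken from Cox--Little--O'Shea, and treated as a standard fact. So there is no paper argument to compare against beyond that citation.

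Your argument is the standard construction of the field of fractions of the integral domain $\F[t_1,\dots,t_m]$, and it is correct and complete. The points that actually need checking are the ones you check:
\begin{itemize}
\item transitivity of the cross-multiplication relation;
\item well-definedness of the operations, with distributivity holding only up to that relation, which your ``after clearing denominators'' covers;
\item invertibility of every class $f/g\neq 0/1$ via $g/f$.
\end{itemize}

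What your version adds over the paper's bare citation is that it isolates the single non-formal ingredient: the absence of zero divisors in $\F[t_1,\dots,t_m]$. The paper also uses this tacitly later, when it asserts $\muldeg(fg)=\muldeg f+\muldeg g$.

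When you write that step out, say why the leading terms simply multiply. Take nonzero terms of $g$ and $g'$ with exponents $\alpha,\beta$ and $(\alpha,\beta)\neq(\muldeg g,\muldeg g')$. The compatibility axiom of the monomial order ($\alpha>\alpha'$ implies $\alpha+\gamma>\alpha'+\gamma$) gives $\alpha+\beta<\muldeg g+\muldeg g'$. Hence the coefficient of $t^{\muldeg g+\muldeg g'}$ in $gg'$ is the single product $\lc g\cdot\lc g'\neq0$.
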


\begin{definition}[{Parametric representation of an affine variety~\cite[Chapter 1.3]{cox2015ideals}}]
    For an affine variety $V=\mathbf{V}\left(f_1,\dots,f_s\right) \subseteq \F^n$, a \textit{rational parametric representation} of $V$ is rational functions $r_1,\dots,r_n \in k\left(t_1,\dots,t_m\right)$ such that
    \begin{equation}
        x_1=r_1\left(t_1,\dots,t_m\right) ,\, \dots ,\, x_n=r_n\left(t_1,\dots,t_m\right)
    \end{equation}
    is in $V$.
    A variety $V$ with a rational parametric representation is called \textit{unirational}.
    If $r_1,\dots,r_n$ are polynomials, then it is a \textit{polynomial parametric representation} of $V$.
\end{definition}

Essentially, given a rational parametric representation $r_1,\dots,r_n$ of $V$, rational function $r_j$ describes the points in the ``$j$-th dimension'' of field $\F^n$ that belongs to $V$ in terms of parameters $t_1,\dots,t_m$.
The polynomials $f_1,\dots,f_s$ that defines $V$ is called an \textit{implicit representation} of $V$.
All $V$ has an implicit representation, but not all $V$ has a rational parametric representation. 

\begin{example}[{Exercise 3 of~\cite[Chapter 1.3]{cox2015ideals}}]
    For monomial $f\in\F\left[x\right]$, we can express it as $f\left(x\right)=a x^\alpha$.
    So $\mathbf{V}\left(y-f\left(x\right)\right)$ can be parameterized by $t$ as
    \begin{equation}
    \begin{gathered}
        y = r_1\left(t\right) = at^\alpha ,\;
        x = r_2\left(t\right) = t
    \end{gathered}
    \end{equation}
    since $r_1$ and $r_2$ are polynomials, then $r_1,r_2$ is a polynomial parametric representation of $V$.
\end{example}

\subsubsection{Ideals}

\begin{definition}
    An \textit{ideal} of polynomial ring $R=\F\left[x_1,\dots,x_n\right]$ is a subset $I\subseteq R$ such that
    \begin{enumerate}
        \item $0\in I$
        \item $f,g\in I \Rightarrow f+g\in I$
        \item $f\in R ,\, g\in I \Rightarrow fg\in I$
    \end{enumerate}
    are satisfied.
\end{definition}

For any $f_1,\dots,f_k \in R = \F\left[x_1,\dots,x_n\right]$,
\begin{equation}
    \left\langle f_1,\dots,f_k\right\rangle := \left\{ \sum_{i=1}^k h_if_i : h_i\in R \right\}
\end{equation}
is an ideal \textit{generated} by $f_1,\dots,f_k$.
The set $\left\langle f_1,\dots,f_k\right\rangle$ is indeed an ideal of $R$ (see~\cite[Chapter 1.4, Lemma 3]{cox2015ideals}) since:
\begin{enumerate}
    \item For $0\in R$, $\sum_{i=1}^k 0 f_i = 0$ is in $\left\langle f_1,\dots,f_k\right\rangle$ by definition.
    \item Consider 
    \begin{equation}
        g = \sum_{i=1}^k h_if_i + \sum_{j=1}^k h_j' f_j = \sum_{i=1}^k \left(h_i+h_i'\right)f_i
    \end{equation}
    for some arbitrary $\left\{h_i,h_i'\right\}_i \subseteq R$.
    Polynomial $g$ is in $\left\langle f_1,\dots,f_k\right\rangle$ since $h_i+h_i'\in R$ for all $i$.
    \item Consider $h\in R$ and $g\in \left\langle f_1,\dots,f_k\right\rangle$.
    It holds that $hg = \sum_{i=1}^k \left(hh_i\right)f_i$ belongs to $\left\langle f_1,\dots,f_k\right\rangle$ as $hh_i\in R$.
\end{enumerate}

Conversely, for any given ideal $I$ of $R$ there exists some $f_1,\dots,f_k \in R = \F\left[x_1,\dots,x_n\right]$ for some finite $k$ such that $I=\left\langle f_1,\dots,f_k\right\rangle$ by the Hilbert basis theorem (see~\cite[Chapter 2.5, Theorem 4]{cox2015ideals}).
In this case we call $f_1,\dots,f_k$ a \textit{basis} of ideal $I$.

Any two bases of the same ideal defines the same affine variety.
Namely for $I=\left\langle f_1,\dots,f_s\right\rangle=\left\langle g_1,\dots,g_t\right\rangle$ (bases of ideals $I$), it holds that $\mathbf{V}\left(f_1,\dots,f_s\right)=\mathbf{V}\left(g_1,\dots,g_t\right)$~\cite[Proposition 4, Chapter 1.4]{cox2015ideals}.
This is due to $g_j\left(a_1,\dots,a_n\right) = \sum_i h_if_i\left(a_1,\dots,a_n\right) = 0$ for $\left(a_1,\dots,a_n\right)\in\mathbf{V}\left(f_1,\dots,f_s\right)$ for any $j\in\left[1,t\right]$, which implies that $\mathbf{V}\left(f_1,\dots,f_s\right)\subseteq\mathbf{V}\left(g_1,\dots,g_t\right)$.
Conversely, $f_j\left(b_1,\dots,b_n\right) = \sum_i h_ig_i\left(b_1,\dots,b_n\right)=0$ for $\left(b_1,\dots,b_n\right)\in\mathbf{V}\left(f_1,\dots,f_s\right)$ for any $j\in\left[1,s\right]$, which implies $\mathbf{V}\left(f_1,\dots,f_s\right)\supseteq\mathbf{V}\left(g_1,\dots,g_t\right)$.

For an affine variety $V$, the set of all polynomials $f\in R=\F\left[x_1,\dots,x_n\right]$ such that $\forall \left(a_1,\dots,a_n\right)\in V ,\, f\left(a_1,\dots,a_n\right)=0$, denoted as
\begin{equation}
    \mathbf{I}\left(V\right) = \left\{f\in R : \forall \left(a_1,\dots,a_n\right)\in V ,\, f\left(a_1,\dots,a_n\right)=0 \right\}
\end{equation}
is the \textit{ideal of} $V$.
Indeed $\mathbf{I}\left(V\right)$ is an ideal~\cite[Lemma 6, Chapter 1.4]{cox2015ideals}.

\begin{proposition}[{\cite[Lemma 7, Chapter 1]{cox2015ideals}}]
    For polynomials $f_1,\dots,f_s\in\F\left[x_1,\dots,x_n\right]$ it holds that $\left\langle f_1,\dots,f_s\right\rangle \subseteq \mathbf{I}\left(\mathbf{V}\left(f_1,\dots,f_s\right)\right)$.
    Equality does not hold in general.
    Namely there exist $f_1,\dots,f_s\in\F\left[x_1,\dots,x_n\right]$ such that $\left\langle f_1,\dots,f_s\right\rangle \subsetneq \mathbf{I}\left(\mathbf{V}\left(f_1,\dots,f_s\right)\right)$.
\end{proposition}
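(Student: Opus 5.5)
The plan is to argue by giving a counterexample for strict inclusion, after first proving the inclusion itself. For the inclusion $\left\langle f_1,\dots,f_s\right\rangle \subseteq \mathbf{I}\left(\mathbf{V}\left(f_1,\dots,f_s\right)\right)$, take an arbitrary element $f=\sum_{i=1}^s h_if_i$ with $h_i\in\F\left[x_1,\dots,x_n\right]$. For any point $\left(a_1,\dots,a_n\right)\in\mathbf{V}\left(f_1,\dots,f_s\right)$, every $f_i$ vanishes there by the definition of an affine variety, so $f\left(a\right)=\sum_i h_i\left(a\right)f_i\left(a\right)=0$. Since the point was arbitrary, $f$ vanishes on all of $\mathbf{V}\left(f_1,\dots,f_s\right)$, hence $f\in\mathbf{I}\left(\mathbf{V}\left(f_1,\dots,f_s\right)\right)$ by the definition of the ideal of a variety. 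This step is routine and mirrors the argument already given above for two bases defining the same variety.

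For the failure of equality, the natural choice is a single generator that is a nonreduced power. Take $n=1$, $s=1$ and $f_1=x^2\in\F\left[x\right]$. Then $\mathbf{V}\left(x^2\right)=\left\{0\right\}$ because a field has no nonzero nilpotents, so $x$ vanishes on this variety and $x\in\mathbf{I}\left(\mathbf{V}\left(x^2\right)\right)$. On the other hand, every element of $\left\langle x^2\right\rangle$ has the form $hx^2$. Any nonzero such polynomial therefore has degree at least $2$, and since $x$ is nonzero of degree $1$, we get $x\notin\left\langle x^2\right\rangle$. This yields the strict inclusion $\left\langle x^2\right\rangle\subsetneq\mathbf{I}\left(\mathbf{V}\left(x^2\right)\right)$.

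The only point needing a little care is the degree argument. It uses the fact that $\deg\left(hx^2\right)=\deg h+2$ for $h\neq0$, which holds because $\F\left[x\right]$ is an integral domain. I would state this explicitly rather than leave it implicit.

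I also considered using $\R$ and $x^2+1$, where the variety is empty and the ideal of the empty set is the whole ring. That example is field-dependent, so I would prefer the $x^2$ example, which works over every field $\F$ as the statement requires.
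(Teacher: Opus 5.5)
Your proof is correct and follows the same argument as the cited source; the paper gives no proof of its own and defers to Cox--Little--O'Shea. There too the inclusion comes from evaluating $\sum_i h_if_i$ on the variety, and strictness from a nonreduced ideal: the reference uses $\langle x^2,y^2\rangle$, where $\mathbf{I}(\mathbf{V}(x^2,y^2))=\langle x,y\rangle$ and $x\notin\langle x^2,y^2\rangle$ by the same degree count you give. Your one-variable example $x^2$ works over every field, as you note, and carries over to $\F[x_1,\dots,x_n]$ for any $n\geq1$ via $x_1^2$.
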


The following proposition in useful in solving a system of polynomial equations, as it shows that the affine variety of a set of polynomials $f_1,\dots,f_s$ is equal to the affine variety of an ideal generated by them $\left\langle f_1,\dots,f_s\right\rangle$.

\begin{proposition}[{\cite[Proposition 9, Chapter 2.5]{cox2015ideals}}]
    For an ideal $I\subseteq\F\left[x_1,\dots,x_n\right]$ consider the affine variety of ideal $I$:
    \begin{equation}
        \mathbf{V}\left(I\right) = \left\{\left(a_1,\dots,a_n\right)\in\F^n : f\left(a_1,\dots,a_n\right)=0 ,\; \forall f\in I\right\} .
    \end{equation}
    If $I=\left\langle f_1,\dots,f_s\right\rangle$ for some polynomials $f_1,\dots,f_s\in\F\left[x_1,\dots,x_n\right]$ then $\mathbf{V}\left(I\right) = \mathbf{V}\left(f_1,\dots,f_s\right)$.
\end{proposition}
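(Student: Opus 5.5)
The plan is to prove the two inclusions $\mathbf{V}\left(I\right)\subseteq\mathbf{V}\left(f_1,\dots,f_s\right)$ and $\mathbf{V}\left(f_1,\dots,f_s\right)\subseteq\mathbf{V}\left(I\right)$ directly from the definitions of the variety of an ideal and of the ideal generated by $f_1,\dots,f_s$. No earlier machinery such as Gr\"obner bases or the Hilbert basis theorem is needed. The argument only uses that evaluation at a point $\left(a_1,\dots,a_n\right)\in\F^n$ is a ring homomorphism $\F\left[x_1,\dots,x_n\right]\to\F$.

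For the first inclusion, I will note that each generator satisfies $f_i\in I$. Indeed, $f_i=\sum_j h_jf_j$ with $h_i=1$ and $h_j=0$ for $j\neq i$, which is an element of $\left\langle f_1,\dots,f_s\right\rangle$ by the description of the generated ideal given above. Hence any point at which every $f\in I$ vanishes is in particular a common zero of $f_1,\dots,f_s$.

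For the reverse inclusion, I will take $\left(a_1,\dots,a_n\right)\in\mathbf{V}\left(f_1,\dots,f_s\right)$ and an arbitrary $f\in I$. By definition of the generated ideal, write $f=\sum_{i=1}^s h_if_i$ with $h_i\in\F\left[x_1,\dots,x_n\right]$. Evaluating, and using that evaluation respects sums and products, gives
\begin{equation}
    f\left(a_1,\dots,a_n\right)=\sum_{i=1}^s h_i\left(a_1,\dots,a_n\right)f_i\left(a_1,\dots,a_n\right)=0 .
\end{equation}
This is exactly the computation already used in the text for two bases of the same ideal. So the point lies in $\mathbf{V}\left(I\right)$.

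There is no real obstacle here. The only point deserving care is to state explicitly that evaluation is a homomorphism, so that a polynomial identity $f=\sum_ih_if_i$ in the ring yields the corresponding numerical identity at every point. This holds because polynomial ring operations are defined coefficientwise and the ring $\F\left[x_1,\dots,x_n\right]$ is commutative, as listed above. Combining the two inclusions gives the claimed equality.
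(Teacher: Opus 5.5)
Your proof is correct and matches the paper's approach. The paper gives no separate proof and instead cites Cox--Little--O'Shea, whose argument is the same two-inclusion one you give: each $f_i\in I$ yields $\mathbf{V}(I)\subseteq\mathbf{V}(f_1,\dots,f_s)$, and evaluating $f=\sum_i h_if_i$ at a common zero yields the reverse inclusion. The paper's preceding remark on two bases of the same ideal uses this same evaluation computation.
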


\subsection{Polynomial division and monomial ordering}

\subsubsection{Single-variable polynomials}

\begin{proposition}[{\cite[Chapter 1.5, Proposition 2]{cox2015ideals}}]
    For any $f,g\in\F\left[x\right]$ such that $g\neq 0$, there exists a unique pair $q,r\in\F\left[x\right]$ such that
    \begin{equation}
        f = gq + r
    \end{equation}
    and either $r=0$ or $\deg r < \deg g$.
    Whenever $r=0$, we say that $g$ \textit{divides} $f$.
\end{proposition}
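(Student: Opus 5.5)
The statement in question is the single-variable division algorithm quoted from Cox--Little--O'Shea: for $f,g\in\F\left[x\right]$ with $g\neq0$ there are unique $q,r$ with $f=gq+r$ and either $r=0$ or $\deg r<\deg g$. The plan is to prove existence and uniqueness separately. Existence comes from an explicit long-division procedure, justified by induction on $\deg f$; uniqueness comes from a degree comparison. Throughout I use that $\F$ is a field, so the leading coefficient of $g$ is invertible. I also use that $\deg\left(ab\right)=\deg a+\deg b$ for nonzero $a,b$, since $\F\left[x\right]$ has no zero divisors.

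For existence, write $m=\deg g$ and let $\lc g$ denote its leading coefficient.
\begin{itemize}
\item If $f=0$ or $\deg f<m$, take $q=0$ and $r=f$.
\item Otherwise set $n=\deg f\geq m$ with leading coefficient $\lc f$, and define
\begin{equation}
f_1=f-\frac{\lc f}{\lc g}\,x^{n-m}g .
\end{equation}
The degree-$n$ terms cancel by construction, so either $f_1=0$ or $\deg f_1<n$.
\end{itemize}
By strong induction on $\deg f$ (taking the zero polynomial as the base case), $f_1=gq_1+r$ with $r=0$ or $\deg r<m$. Then
\begin{equation}
q=q_1+\frac{\lc f}{\lc g}\,x^{n-m}
\end{equation}
together with the same $r$ works for $f$. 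Equivalently, one can run the iterative algorithm directly and note that the degree of the running remainder strictly decreases, so it must terminate.

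For uniqueness, suppose $f=gq+r=gq'+r'$, where both remainders satisfy the degree condition. Then $g\left(q-q'\right)=r'-r$. If $q\neq q'$, the left side is nonzero with degree $m+\deg\left(q-q'\right)\geq m$. The right side, however, is either $0$ or has degree less than $m$, since both $r$ and $r'$ do. This is a contradiction, so $q=q'$, and then $r=r'$.

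The only delicate point is bookkeeping in the induction: the case $f_1=0$ must be handled alongside the degree drop. I also need to state explicitly that invertibility of $\lc g$ requires the coefficient ring to be a field, which is where the hypothesis on $\F$ enters. The rest is routine.
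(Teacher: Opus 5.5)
Your proof is correct and is essentially the standard argument from the Cox--Little--O'Shea reference the paper cites: existence comes from repeatedly cancelling leading terms, and your induction on $\deg f$ is the recursive form of the iterative procedure the paper runs in its worked example right after the statement. Uniqueness comes from comparing $\deg\left(g\left(q-q'\right)\right)\geq\deg g$ with $\deg\left(r'-r\right)<\deg g$; the paper itself gives no proof beyond that citation and example, and nothing is missing from your argument.
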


\begin{definition}
    Polynomial $h\in\F\left[x\right]$ is the greatest common divisor of $f_1,\dots,f_s\in \F\left[x\right]$, denoted as $\mathrm{gcd}\left(f_1,\dots,f_s\right)$ if
    \begin{enumerate}
        \item $h$ divides $f_j$ for all $j$
        \item If $g\neq h$ divides $f_1,\dots,f_s$, then $g$ divides $h$.
    \end{enumerate}
\end{definition}

\begin{proposition}
    For polynomials $f_1,\dots,f_s\in R=\F\left[x\right]$, it holds that
    \begin{enumerate}
        \item There exists a unique $g\in R$ (up to multiplication by $a\in\F$) such that $\mathrm{gcd}\left(f_1,\dots,f_s\right)$
        \item $\left\langle f_1,\dots,f_s\right\rangle = \left\langle g\right\rangle$
        \item $\mathrm{gcd}\left(f_1,\dots,f_s\right) = \mathrm{gcd}\left(f_1,\mathrm{gcd}\left(f_2,\dots,f_s\right)\right)$
    \end{enumerate}
\end{proposition}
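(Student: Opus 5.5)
The plan is to prove that $\F\left[x\right]$ is a principal ideal domain using the division algorithm stated just above (\cite[Chapter 1.5, Proposition 2]{cox2015ideals}). The generator of $\left\langle f_1,\dots,f_s\right\rangle$ will then turn out to be the greatest common divisor. I will prove statement 2 first, because statements 1 and 3 follow from it.

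\emph{Step 1: the ideal is principal.} Let $I=\left\langle f_1,\dots,f_s\right\rangle$. If every $f_j=0$, then $I=\left\{0\right\}=\left\langle 0\right\rangle$ and $g=0$ works trivially. Otherwise $I$ contains nonzero polynomials. By well-ordering of degrees, I choose a nonzero $h\in I$ of minimal degree. Clearly $\left\langle h\right\rangle\subseteq I$. For the reverse inclusion, take any $f\in I$ and divide: $f=hq+r$ with $r=0$ or $\deg r<\deg h$. Since $r=f-hq\in I$, minimality of $\deg h$ forces $r=0$. Hence $f\in\left\langle h\right\rangle$ and $I=\left\langle h\right\rangle$.

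\emph{Step 2: $h$ is a gcd, and it is unique up to a scalar.} Each $f_j\in I=\left\langle h\right\rangle$, so $h$ divides every $f_j$. Since $h\in I$, I can write $h=\sum_i a_if_i$ with $a_i\in\F\left[x\right]$. Any common divisor $g$ of $f_1,\dots,f_s$ then divides the right-hand side, hence divides $h$. So $h$ satisfies both clauses of the definition; the clause ``$g\neq h$'' is harmless here. For uniqueness, suppose $h$ and $h'$ are both gcds. Then each divides the other. Comparing degrees in $h=h'p$ and $h'=hp'$ gives $\deg p=\deg p'=0$, so $h'=ah$ for some nonzero $a\in\F$. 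The case where both are zero is trivial. This proves statement 1, and statement 2 now reads $\left\langle f_1,\dots,f_s\right\rangle=\left\langle\mathrm{gcd}\left(f_1,\dots,f_s\right)\right\rangle$.

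\emph{Step 3: the recursion.} Let $d=\mathrm{gcd}\left(f_2,\dots,f_s\right)$. Statement 2 gives $\left\langle f_2,\dots,f_s\right\rangle=\left\langle d\right\rangle$. Therefore
\begin{equation}
\left\langle f_1,d\right\rangle=\left\langle f_1\right\rangle+\left\langle f_2,\dots,f_s\right\rangle=\left\langle f_1,\dots,f_s\right\rangle .
\end{equation}
Applying statement 2 to the pair $\left(f_1,d\right)$, its gcd generates this same ideal. By Step 2, any generator of this ideal is a gcd of $f_1,\dots,f_s$. Uniqueness up to scalars then gives $\mathrm{gcd}\left(f_1,d\right)=\mathrm{gcd}\left(f_1,\dots,f_s\right)$, up to a nonzero constant.

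The main obstacle is not depth but bookkeeping. I must treat the zero-polynomial cases separately, because the division algorithm needs a nonzero divisor. I must also read ``unique'' in statements 1 and 3 as ``unique up to a nonzero scalar.'' If one prefers a canonical choice, I would normalize to the monic generator.
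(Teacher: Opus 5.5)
The paper states this proposition without proof, as background from \cite[Chapter 1.5]{cox2015ideals}, so there is no argument of its own to compare against. Your proof is correct and is the standard textbook argument: a minimal-degree element of $\langle f_1,\dots,f_s\rangle$ generates the ideal by the division algorithm; any generator is a gcd, since it divides each $f_j$ and is an $\F[x]$-combination of them; and the recursion follows from $\langle f_1,d\rangle=\langle f_1,\dots,f_s\rangle$. Two small points to tighten: in the uniqueness step, $h=0$ together with $h\mid h'$ already forces $h'=0$, so the only zero case is the one where both vanish; and statement 3 tacitly requires $s\geq 2$.
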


\begin{example}[{Checking membership in a principal ideal~\cite[Chapter 2.1, Exercise 1b]{cox2015ideals}}]
    For $g\in R:=\F\left[x\right]$ and principal ideal $\left\langle g\right\rangle$, we could check whether $f\in\left\langle g\right\rangle$ by dividing $f$ by $g$:
    \begin{equation}
        f = gq + r
    \end{equation}
    for $q,r\in R$ and then check whether $r=0$ or $\deg r <\deg g$ holds.
    If $r=0$ then $f\in\left\langle g\right\rangle$, otherwise $f\notin\left\langle g\right\rangle$.

    For $g\left(x\right)=x^3-x^2+x$ and $f\left(x\right)=x^5-4x+1$, we use the polynomial division algorithm:
    \begin{enumerate}
        \item Set $q_0:=0$ and $r_0:=f$.
        \item Subtract $\frac{x^5}{x^3}g\left(x\right) = x^2 g\left(x\right)$ from $r_0\left(x\right)=f\left(x\right)$, which gives $r_1\left(x\right) := r_0\left(x\right) - x^2 g\left(x\right) = x^4 - x^3 - 4x + 1$.
        Set $q_1 := q_0+ x^2 = x^2$.
        \item Consider $xg\left(x\right) = x^4-x^3+x^2$. 
        Set $r_2\left(x\right) := r_1\left(x\right) - xg\left(x\right) = -x^2 -4x+1$.
        Set $q_2 := q_1 + x = x^2+x$.
        \item We obtain $q:= q_2\left(x\right) = x^2+x$ and $r:= r_2\left(x\right) = -x^2 -4x+1$.
        We could check that $f=gq+r$ holds.
        Since $r\neq0$, then we conclude that $f\notin\left\langle g\right\rangle$.
    \end{enumerate}
\end{example}

\subsubsection{Monomial ordering}

\begin{definition}[{Monomial ordering~\cite[Chapter 2.2, Definition 1]{cox2015ideals}}]\label{def:monomial_ordering}
    A \textit{monomial ordering} $>$ on $R=\F\left[x_1,\dots,x_n\right]$ is an ordering $>$ on monomials $x^\alpha=x_1^{\alpha_1}\dots x_n^{\alpha_n}$ where $x^\alpha>x^\beta$ if $\alpha>\beta$ for $\alpha,\beta\in\Z_{\geq0}^n$ satisfying:
    \begin{enumerate}
        \item $>$ is a total ordering on $\Z_{\geq0}^n$, i.e. (1) $\alpha>\beta \Rightarrow \alpha\neq\beta$, (2) $\alpha>\beta \Rightarrow \neg\alpha<\beta$, (3) $\left(\alpha>\gamma\right) \wedge \left(\gamma>\beta\right) \Rightarrow \alpha>\beta$, and (4) $\alpha\neq\beta \Rightarrow \left(\alpha>\beta\right)\vee\left(\alpha<\beta\right)$.
        
        \item $\alpha>\beta$ implies $\alpha+\gamma>\beta+\gamma$.
        
        \item $>$ is well-ordering on $\Z_{\geq0}^n$: For any non-empty $A\subseteq\Z_{\geq0}^n$, there exists $\alpha\in A$ such that $\beta>\alpha$ for all $\beta\neq\alpha$.
        This is equivalent to decreasing sequence $\alpha_1>\alpha_2>\dots$ eventually terminates~\cite[Chapter 2.2, Lemma 2]{cox2015ideals}~\footnote{Proof: For $>$ that is not a well-ordering, $\exists S\subseteq\Z_{\geq0}^n$ with no least element, so one can construct $\alpha_1>\alpha_2>\dots$ that goes on forever.
        On the other hand, if there is a non-terminating $\alpha_1>\alpha_2>\dots$, then subset $\left\{\alpha_1,\alpha_2,\dots\right\}\subseteq\Z_{\geq0}^n$ has no least element.}.
    \end{enumerate}
\end{definition}

We could order monomials of polynomials $\F\left[x_1,\dots,x_n\right]$ by using lexicographic ordering on the powers $\alpha=\left(\alpha_1,\dots,\alpha_n\right)\in\Z_{\geq0}^n$, i.e. on the $x^\alpha=x_1^{\alpha_1}\dots x_n^{\alpha_n}$ terms of polynomial $a=\sum_\alpha a_\alpha x^\alpha \in\F\left[x_1,\dots,x_n\right]$.

\begin{definition}[Lexicographic ordering]
    For $\alpha,\beta\in\Z_{\geq0}^n$, $\alpha >_\lex \beta$ if for $j=\min\left\{i\in\left[n\right] : \left(\alpha-\beta\right)_i\neq0\right\}$ (left-most nonzero element of $\alpha-\beta$) we have $\left(\alpha-\beta\right)_j>0$.
    For monomials $x^\alpha,x^\beta$ we have $x^\alpha>_\lex x^\beta$ iff $\alpha>_\lex\beta$.
\end{definition}

It is shown in~\cite[Chapter 2.2, Proposition 4]{cox2015ideals} that lexicographic ordering $>_\lex$ satisfy all conditions in Definition~\ref{def:monomial_ordering}.

We could also define monomial ordering that considers the total degree of the monomial, i.e. $\abs{\alpha}=\sum_{i=1}^n\alpha_i$ for $x^\alpha$.

\begin{definition}
    Graded lex order: $\alpha>_\grlex\beta$ if either:
    \begin{enumerate}
        \item $\abs{\alpha}>\abs{\beta}$, or
        \item $\abs{\alpha}=\abs{\beta}$ and $\alpha>_\lex\beta$.
    \end{enumerate}
\end{definition}
\begin{definition}
    Graded reverse lex order: $\alpha>_\grevlex\beta$ if either:
    \begin{enumerate}
        \item $\abs{\alpha}>\abs{\beta}$, or
        \item $\abs{\alpha}=\abs{\beta}$ and for $j=\max\left\{i : \left(\alpha-\beta\right)_i\neq0\right\}$ (right-most nonzero entry of $\alpha-\beta$) it holds that $\left(\alpha-\beta\right)_j<0$.
    \end{enumerate}
\end{definition}

\begin{example}
\begin{enumerate}
    \item For any $\gamma\in\Z_{\geq0}$, $x_i^\gamma >_\lex x_j^\gamma$ whenever $i<j$.
    \item $x_1^2x_2^2x_3x_4^3 >_\grlex x_1^2x_2x_3^2x_4^3$ and $x_1^2x_2^2x_3x_4^3 >_\grevlex x_1^2x_2x_3^2x_4^3$ since left-most nonzero difference is $1$ and right-most nonzero difference is $-1$.
    \item $x_1^2x_2x_3^2 >_\grlex x_1x_2^3x_3$ and $x_1x_2^3x_3 >_\grevlex x_1^2x_2x_3^2$.
\end{enumerate}
\end{example}

\begin{definition}
    For a monomial ordering $>$ and for $f=\sum_\alpha a_\alpha x^\alpha \in \F\left[x_1,\dots,x_n\right]$:
    \begin{enumerate}
        \item Multidegree of $f$:
        \begin{equation}
            \muldeg f := \max\left\{\alpha : a_\alpha \neq 0 \right\}
        \end{equation}
        i.e. largest degree of monomial with nonzero coefficient in $f$ w.r.t monomial order $>$.

        \item Leading coefficient $\lc f := a_{\muldeg f}$.

        \item Leading monomial $\lm f := x^{\muldeg f}$.

        \item Leading term $\lt f := a_{\muldeg f} x^{\muldeg f}$.
    \end{enumerate}
\end{definition}

\begin{proposition}[Properties of multidegree]
    For $f,g\in\F\left[x_1,\dots,x_n\right]$ and monomial ordering $>$, it holds that:
    \begin{enumerate}
        \item $\muldeg\left(fg\right) = \left(\muldeg f\right)+\left(\muldeg g\right)$.
        \item $\muldeg\left(f+g\right) \leq \max\left\{\muldeg f, \muldeg g\right\}$ whenever $f+g\neq0$.
        Equality holds if $\muldeg f \neq \muldeg g$.
    \end{enumerate}
\end{proposition}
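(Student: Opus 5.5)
The statement is the multidegree proposition: for nonzero $f,g$, $\muldeg(fg)=\muldeg f+\muldeg g$, and $\muldeg(f+g)\leq\max\{\muldeg f,\muldeg g\}$ whenever $f+g\neq0$, with equality when $\muldeg f\neq\muldeg g$. The plan is to work purely with exponent vectors in $\Z_{\geq0}^n$ and use only the axioms of Definition~\ref{def:monomial_ordering}: totality and translation compatibility ($\alpha>\beta\Rightarrow\alpha+\gamma>\beta+\gamma$). Well-ordering is not needed, since every polynomial has finitely many terms and so the maximum in the definition of $\muldeg$ exists by totality alone.

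For the product, write $f=\sum_\alpha a_\alpha x^\alpha$ and $g=\sum_\beta b_\beta x^\beta$, and set $\alpha_0=\muldeg f$, $\beta_0=\muldeg g$.
\begin{itemize}
\item First expand $fg=\sum_{\alpha,\beta}a_\alpha b_\beta x^{\alpha+\beta}$.
\item Next show that $(\alpha_0,\beta_0)$ is the unique pair in the supports with $\alpha+\beta=\alpha_0+\beta_0$. If $\alpha\leq\alpha_0$ and $\beta\leq\beta_0$ with one inequality strict, say $\alpha<\alpha_0$, then translation compatibility applied twice gives
\begin{equation}
\alpha+\beta<\alpha_0+\beta\leq\alpha_0+\beta_0 ,
\end{equation}
with the symmetric argument when $\beta<\beta_0$.
\item Hence the coefficient of $x^{\alpha_0+\beta_0}$ in $fg$ is exactly $a_{\alpha_0}b_{\beta_0}$. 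This is nonzero because $\F$ is a field and so has no zero divisors.
\item The same inequality shows every other exponent $\alpha+\beta$ that appears is strictly smaller than $\alpha_0+\beta_0$.
\end{itemize}
Together these give $\muldeg(fg)=\alpha_0+\beta_0$. The main care point is the uniqueness step, since both the translation axiom and the no-zero-divisor property are needed there.

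For the sum, every monomial with nonzero coefficient in $f+g$ has nonzero coefficient in $f$ or in $g$. Such an exponent is therefore $\leq\muldeg f$ or $\leq\muldeg g$, and in either case $\leq\max\{\muldeg f,\muldeg g\}$; this gives the inequality when $f+g\neq0$.

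For equality, suppose $\muldeg f\neq\muldeg g$, and without loss of generality $\muldeg f>\muldeg g$. Then $g$ has zero coefficient at $x^{\muldeg f}$, because that exponent exceeds every exponent in the support of $g$. So the coefficient of $x^{\muldeg f}$ in $f+g$ equals $\lc f\neq0$, which gives $\muldeg(f+g)=\muldeg f$, the maximum. In this case $f+g$ is automatically nonzero.
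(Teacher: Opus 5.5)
Your proof is correct. The product argument (the top pair $(\muldeg f,\muldeg g)$ is the unique pair reaching $\muldeg f+\muldeg g$, using translation compatibility and the absence of zero divisors in $\F$) and the support-containment argument for the sum are the standard proof of this multidegree lemma from Cox--Little--O'Shea, which the paper quotes without a proof of its own. Your added assumption that $f$ and $g$ are nonzero is the correct reading, since $\muldeg 0$ is undefined, and you are right that only totality and translation compatibility are needed, not well-ordering.
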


\begin{example}
    For $f = 3x_1^2x_2x_3^2 + 2x_1x_2^3x_3$
    \begin{enumerate}
        \item w.r.t $>_\grlex$ we have $\muldeg f = \left(2,1,2\right)$ and $\lt f = 3x_1^2x_2x_3^2$.
        
        \item w.r.t $>_\grevlex$ we have $\muldeg f = \left(1,3,1\right)$ and $\lt f = 2x_1x_2^3x_3$.
    \end{enumerate}    
\end{example}

\subsubsection{Multi-variable polynomials}

\begin{example}[{Example 2 and 4 of~\cite[Chapter 2.4]{cox2015ideals}}]
    Divide $f = x^2y+xy^2+y^2$ by $f_1 = y^2-1$ and $f_2 = xy-1$.
\end{example}

\subsubsection{Monomial ideals}

\begin{definition}\label{def:monomial_ideal}
    For $A\subseteq \Z_{\geq0}^n$, an ideal $I\subseteq R$ is a \textit{monomial ideal} if it contains all $f\in R$ with finite terms of the form $f=\sum_{\alpha\in A} h_\alpha x^\alpha$ for $h_\alpha\in R$.
\end{definition}

\begin{proposition}[{\cite[Chapter 2.4, Lemma 2]{cox2015ideals}}]\label{prop:monomial_ideal_element_divisibility}
    For monomial ideal $I=\left\langle x^\alpha:\alpha\in A\right\rangle$, it holds that $x^\beta\in I$ iff there exists $\alpha\in A$ such that $x^\beta$ is divisible by $x^\alpha$.
\end{proposition}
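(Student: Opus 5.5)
We prove the two directions separately, following the cited Lemma 2 of Cox--Little--O'Shea. Since $I=\left\langle x^\alpha:\alpha\in A\right\rangle$, every element of $I$ is a finite sum $f=\sum_{i=1}^s h_i x^{\alpha(i)}$ with $h_i\in R$ and $\alpha(i)\in A$, as in Definition~\ref{def:monomial_ideal}.

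The \emph{if} direction is immediate. Suppose $x^\alpha$ divides $x^\beta$ for some $\alpha\in A$. Then $x^\beta=x^\gamma x^\alpha$ with $\gamma=\beta-\alpha\in\Z_{\geq0}^n$. By the third ideal axiom, multiplying the generator $x^\alpha\in I$ by $x^\gamma\in R$ stays in $I$, so $x^\beta\in I$.

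For the \emph{only if} direction, suppose $x^\beta\in I$ and write $x^\beta=\sum_{i=1}^s h_i x^{\alpha(i)}$. We expand each $h_i$ into its terms, $h_i=\sum_{\delta}a_{i,\delta}x^\delta$. The right-hand side then becomes a $\F$-linear combination of monomials $x^{\delta+\alpha(i)}$, each of which is divisible by some $x^{\alpha(i)}$ with $\alpha(i)\in A$.

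We now compare coefficients, using that distinct monomials form an $\F$-basis of $R$. The monomial $x^\beta$ has coefficient $1$ on the left, so it must appear with nonzero total coefficient on the right. Hence $\beta=\delta+\alpha(i)$ for some $i$ and some $\delta\in\Z_{\geq0}^n$, which says exactly that $x^{\alpha(i)}$ divides $x^\beta$.

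The only step needing care is this coefficient comparison. It relies on the uniqueness of the expansion of a polynomial in monomials, which is the definition of $\F\left[x_1,\dots,x_n\right]$ as a space with basis $\left\{x^\alpha\right\}$. More generally, the same argument shows that every term of every $f\in I$ is divisible by some $x^\alpha$ with $\alpha\in A$; specializing to $f=x^\beta$ gives the claim. No monomial ordering is needed.
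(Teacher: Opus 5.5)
Your proof is correct and follows the argument the paper relies on. The paper gives no proof of its own: it cites Lemma 2 of Cox--Little--O'Shea, whose proof is your two steps, namely closure of $I$ under multiplication by $x^\gamma$ for the easy direction, and, for the converse, expanding $x^\beta=\sum_i h_i x^{\alpha(i)}$ into terms and comparing monomials.
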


\begin{proposition}[{\cite[Lemma 3]{cox2015ideals}}]
    For monomial ideal $I=\left\langle x^\alpha:\alpha\in A\right\rangle$, the following statements are equivalent:
    \begin{enumerate}
        \item $f\in I$.
        \item For $f = \sum_\alpha f_\alpha x^\alpha$, each $x^\alpha$ is in $I$
        \item $f = \sum_\alpha f_\alpha x^\alpha$ for $x^\alpha\in I$.
    \end{enumerate}
    it holds that $f\in I$ implies that each term of $f$ is in $I$.
\end{proposition}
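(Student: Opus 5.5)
The plan is to prove the cycle of implications $1\Rightarrow2\Rightarrow3\Rightarrow1$, using Proposition~\ref{prop:monomial_ideal_element_divisibility} as the only nontrivial input. That proposition says a monomial $x^\beta$ lies in $I=\left\langle x^\alpha:\alpha\in A\right\rangle$ if and only if some generator $x^\alpha$ with $\alpha\in A$ divides it.

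For $1\Rightarrow2$, suppose $f\in I$. By the definition of the generated ideal, $f=\sum_{i=1}^k h_i x^{\alpha\left(i\right)}$ for finitely many $\alpha\left(i\right)\in A$ and $h_i\in R$. I will expand each $h_i=\sum_\gamma b_{i,\gamma}x^\gamma$, which gives
\begin{equation}
    f=\sum_{i=1}^k\sum_\gamma b_{i,\gamma}\,x^{\gamma+\alpha\left(i\right)} .
\end{equation}
Collecting like monomials, the coefficient $f_\beta$ of $x^\beta$ in $f$ equals $\sum_{\left(i,\gamma\right):\gamma+\alpha\left(i\right)=\beta}b_{i,\gamma}$. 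If $f_\beta\neq0$, this sum cannot be empty, so there exist $i,\gamma$ with $\beta=\gamma+\alpha\left(i\right)$, that is, $x^{\alpha\left(i\right)}$ divides $x^\beta$. Proposition~\ref{prop:monomial_ideal_element_divisibility} then gives $x^\beta\in I$.

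This step is the only delicate point. Terms may cancel in the expansion, so one must argue from nonzero coefficients of $f$ back to at least one contributing summand, rather than claiming every expanded summand survives.

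For $2\Rightarrow3$, the claim is immediate: write $f=\sum_\beta f_\beta x^\beta$, and by hypothesis each $x^\beta$ with $f_\beta\neq0$ lies in $I$.

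For $3\Rightarrow1$, if $f=\sum_\beta f_\beta x^\beta$ with every $x^\beta\in I$, then each $f_\beta x^\beta\in I$ by the absorption property of an ideal, since constants lie in $R$. The finite sum is then in $I$ by closure under addition.
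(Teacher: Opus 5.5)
Your proof is correct and follows essentially the paper's route: the cycle $1\Rightarrow2\Rightarrow3\Rightarrow1$, with the same expansion of $\sum_i h_i x^{\alpha(i)}$ into monomials $x^{\gamma+\alpha(i)}$ for the first step. Two small differences. In $1\Rightarrow2$ you explicitly handle cancellation by tracing each nonzero coefficient of $f$ back to a contributing summand, whereas the paper only notes that each expanded summand $a_{\alpha,\beta}x^{\beta+\alpha}$ lies in $I$. In $3\Rightarrow1$ you use closure of $I$ under multiplication and addition directly, whereas the paper factors each $x^\alpha=x^{\gamma_\alpha}x^{\beta_\alpha}$ with $\beta_\alpha\in A$ and regroups $f$ into the form $\sum_{\beta\in A}h_\beta x^\beta$.
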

\begin{proof}
    ($1\Rightarrow2$) Since $f\in I$, we have $f=\sum_{\alpha\in A} h_\alpha x^\alpha$.
    Thus for each $\alpha$, we have
    \begin{equation}
        h_\alpha x^\alpha = \left( \sum_\beta a_{\alpha,\beta} x^\beta \right) x^\alpha = \sum_\beta a_{\alpha,\beta} x^{\beta+\alpha} .
    \end{equation}
    Hence we can write $f= \sum_{\alpha,\beta} a_{\alpha,\beta} x^{\beta+\alpha}$.
    Each $\alpha,\beta$ term $a_{\alpha,\beta} x^{\beta+\alpha}$ is in $I$ since it has the form $h_\alpha x^\alpha$.

    ($2\Rightarrow3$) Assume that $f = \sum_\alpha f_\alpha x^\alpha$ for $x^\alpha\in I$.
    Then monomials in such $f$ must be in $I$.

    ($3\Rightarrow1$) Let $f = \sum_\alpha f_\alpha x^\alpha$ for $x^\alpha\in I$.
    Then each $x^\alpha$ must be of the form $x^\alpha = x^{\gamma_\alpha} x^{\beta_\alpha}$ for some $\beta_\alpha\in A$ and some $x^{\gamma_\alpha} \in R$.
    So, 
    \begin{equation}
        f = \sum_{\beta\in A} \left( \sum_{\alpha:\alpha=\gamma_\alpha+\beta} f_\alpha x^{\gamma_\alpha} \right) x^\beta .
    \end{equation}
    Therefore $f = \sum_{\beta\in A} h_\beta x^\beta$ for some $h_\beta\in R$, which implies that $f\in I$.
\end{proof}

\begin{proposition}[{Dickson's lemma~\cite[Chapter 2.5, Theorem 5]{cox2015ideals}}]
    For a monomial ideal $I=\left\langle x^\alpha : \alpha\in A\right\rangle$, there exists $\alpha^{\left(1\right)},\dots,\alpha^{\left(s\right)}\in A$ such that $I = \left\langle x^{\alpha^{\left(1\right)}},\dots,x^{\alpha^{\left(s\right)}}\right\rangle$.
    Namely, any monomial ideal is finitely generated.
\end{proposition}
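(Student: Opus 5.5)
We plan to prove the statement (Dickson's lemma) by induction on the number of variables $n$. We use two earlier facts throughout. The first is Proposition~\ref{prop:monomial_ideal_element_divisibility}: a monomial $x^\beta$ lies in $\left\langle x^\alpha:\alpha\in A\right\rangle$ if and only if some $x^\alpha$ with $\alpha\in A$ divides it. The second is the subsequent equivalence lemma: $f$ lies in a monomial ideal if and only if every monomial of $f$ does. A consequence we will need is that two monomial ideals are equal as soon as they contain the same monomials.

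For $n=1$, the monomials in $I$ are $x_1^a$ with $a\in A$. Let $a_0=\min A$, which exists by well-ordering of $\Z_{\geq0}$. Then $x_1^{a_0}$ divides every $x_1^a$ with $a\in A$, so $I=\left\langle x_1^{a_0}\right\rangle$ with $a_0\in A$.

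For the inductive step, write monomials in $\F\left[x_1,\dots,x_n\right]$ as $x^\alpha y^m$, where $\alpha\in\Z_{\geq0}^{n-1}$ and $y=x_n$.
\begin{enumerate}
\item Let $J\subseteq\F\left[x_1,\dots,x_{n-1}\right]$ be the monomial ideal generated by all $x^\alpha$ such that $x^\alpha y^m\in I$ for some $m$. By the induction hypothesis, $J=\left\langle x^{\alpha(1)},\dots,x^{\alpha(s)}\right\rangle$ for finitely many such generators. For each $i$, fix $m_i$ with $x^{\alpha(i)}y^{m_i}\in I$, and set $m=\max_i m_i$.
\item For each $0\leq k\leq m-1$, let $J_k$ be the monomial ideal generated by all $x^\beta$ with $x^\beta y^k\in I$. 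By the induction hypothesis, each $J_k$ is generated by finitely many monomials $x^{\beta_k(1)},\dots,x^{\beta_k(s_k)}$.
\item Let $I'$ be the ideal generated by the finite list consisting of the $x^{\alpha(i)}y^m$ and the $x^{\beta_k(j)}y^k$. Every element of this list lies in $I$, so $I'\subseteq I$.
\item For the reverse inclusion, take any monomial $x^\alpha y^p\in I$.
\begin{itemize}
\item If $p\geq m$, then $x^\alpha\in J$. By Proposition~\ref{prop:monomial_ideal_element_divisibility}, some $x^{\alpha(i)}$ divides $x^\alpha$, so $x^{\alpha(i)}y^m$ divides $x^\alpha y^p$.
\item If $p<m$, then $x^\alpha\in J_p$. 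Hence some $x^{\beta_p(j)}$ divides $x^\alpha$, so $x^{\beta_p(j)}y^p$ divides $x^\alpha y^p$.
\end{itemize}
In both cases $x^\alpha y^p\in I'$. Thus $I$ and $I'$ contain the same monomials, and the consequence of the equivalence lemma gives $I=I'$.
\end{enumerate}

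It remains to make the generators come from $A$ itself. Each of the finitely many generators of $I'$ lies in $I=\left\langle x^\alpha:\alpha\in A\right\rangle$. By Proposition~\ref{prop:monomial_ideal_element_divisibility}, each generator is divisible by some $x^{\alpha^{(j)}}$ with $\alpha^{(j)}\in A$. The ideal $\left\langle x^{\alpha^{(1)}},\dots,x^{\alpha^{(s)}}\right\rangle$ then contains every generator of $I'$, so it contains $I$. It is also contained in $I$, since each $\alpha^{(j)}\in A$. Hence the two are equal.

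We expect the main obstacle to be the bookkeeping in the inductive step. The key point is that the cutoff $m$ cleanly separates the two cases: high powers of $y$ are handled by $J$, and each of the finitely many low powers $k<m$ is handled by its own slice ideal $J_k$. A smaller subtlety is checking that the equivalence lemma really yields "same monomials implies equal ideals". This follows by applying the lemma's criterion to every $f$ in one ideal to place it in the other.
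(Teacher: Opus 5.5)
Your proof is correct and follows essentially the same route as the paper's source, Cox--Little--O'Shea, since the paper cites the result without reproducing a proof: induction on the number of variables via the projection ideal $J$ and the finitely many slice ideals $J_k$ with $k<m$, then the divisibility criterion to replace the monomial generators by ones indexed by $A$. The only blemish is wording in the base case: the $x_1^a$ with $a\in A$ are the generators of $I$, not all of its monomials, and this does not affect the argument.
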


\subsection{Gr\"obner basis}

Throughout, we denote $R=\F\left[x_1,\dots,x_n\right]$.

\begin{definition}[{\cite[Chapter 2.5, Definition 1]{cox2015ideals}}]\label{def:nonzero_leading_term_ideal}
    For an ideal $I\subseteq R$ over polynomial ring $R$ with a monomial ordering $<$, the \textit{set of leading term of non-zero elements} of $I$ is defined as
    \begin{equation}
        \lt I = \left\{ cx^\alpha : \exists f\in I\setminus\left\{0\right\} ,\; \lt f =cx^\alpha \right\} .
    \end{equation}
\end{definition}

For ideal $I\neq\left\{0\right\}$, it is shown in~\cite[Chapter 2.5, Proposition 3]{cox2015ideals} that $\left\langle\lt I\right\rangle$ is a monomial ideal (c.f. Definition~\ref{def:monomial_ideal}) and there exists $g_1,\dots,g_t\in I$ such that 
\begin{equation}
    \left\langle\lt I\right\rangle = \left\langle\lt g_1,\dots, \lt g_t\right\rangle .
\end{equation}
Namely, an ideal generated by the leading term of non-zero elements $\lt I$ is finitely generated.

\begin{definition}[{Gr\"obner basis~\cite[Chapter 2.5, Definition 5]{cox2015ideals}}]\label{def:grobner_basis}
    For a nonzero ideal $I\subseteq R$ over polynomial ring $R$ with monomial ordering $<$, a subset $G=\left\{g_1,\dots,g_t\right\}\subseteq I$ is a \textit{Gr\"obner basis} of $I$ whenever
    \begin{equation}
        \left\langle\lt g_1 , \dots, \lt g_t\right\rangle = \left\langle\lt I\right\rangle .
    \end{equation}
\end{definition}

To show that $g_1,\dots,g_t\in I$ is a Gr\"obner basis of $I=\left\langle g_1,\dots,g_t\right\rangle$, we must check whether $\lt f\in \left\langle\lt g_1,\dots, \lt g_t\right\rangle$ for all nonzero $f\in I$.
By Proposition~\ref{prop:monomial_ideal_element_divisibility} we can do this by checking whether $\lt f$ is divisible by some $\lt g_i$.

\section{Supplementary loss sweeps for BARC Hex codes}
\label{sec:appendix-hex-loss-sweeps}

This section contains the plots for Hex2, Hex1, Elliptical Hex1 and Elliptical Hex2 families corresponding to results presented in \ref{sec:method1-results} and \ref{sec:method2-results}.  The main text retains the representative Hex3 and Elliptical Hex3
plots. Each figure contains one family-wise pair: the left panel
compares the candidate with QSC and the right panel compares it with QCC. 

\subsection{Independently optimized amplitudes}
\label{sec:appendix-hex-loss-sweeps-method1}

Candidate and reference amplitudes use their own sampled optima at $\gamma_0=0.095$ and remain fixed across the loss sweep, while recovery is reoptimized at each loss probability.

\begin{figure}[H]
\centering
\includegraphics[width=0.475\linewidth]{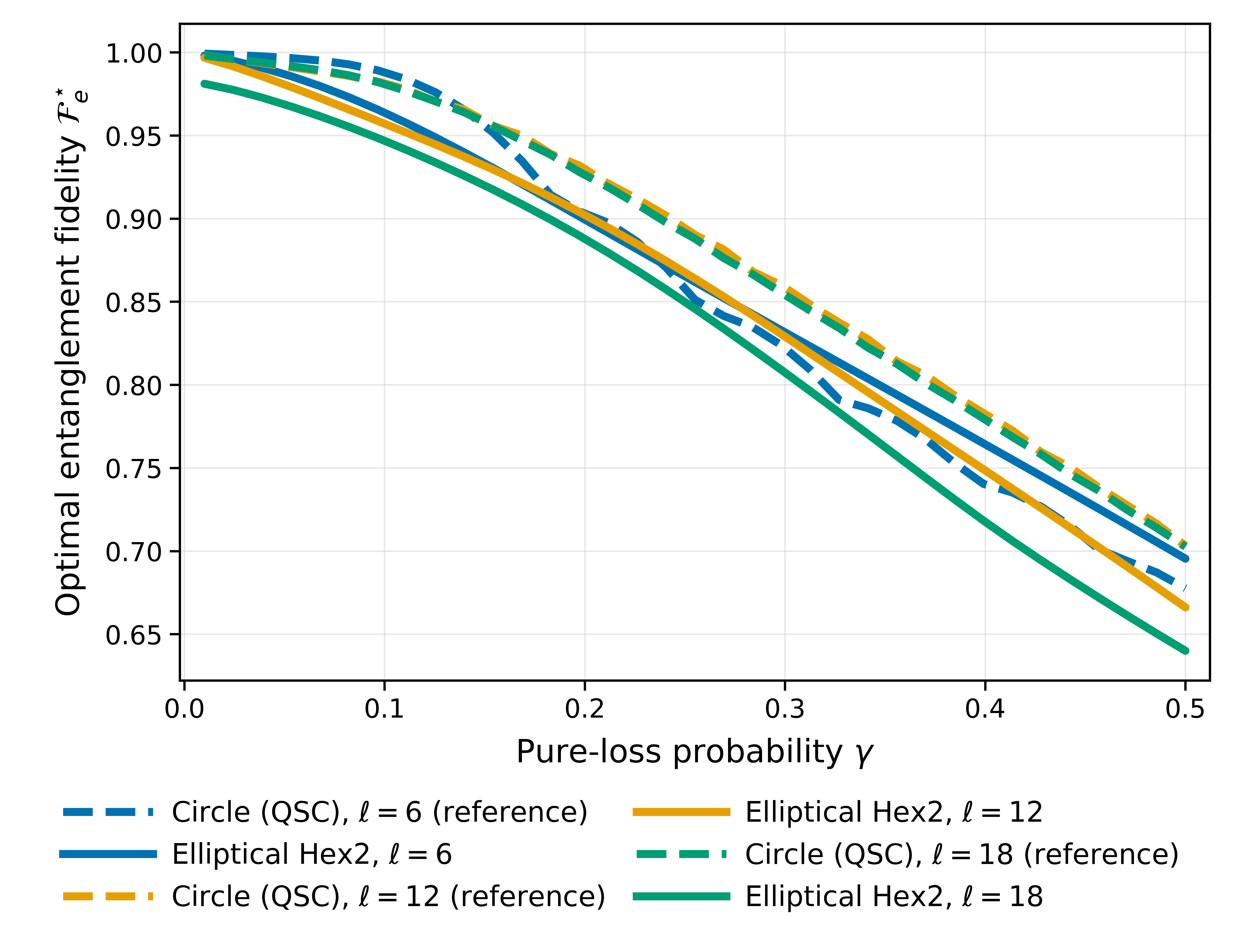}\hfill%
\includegraphics[width=0.475\linewidth]{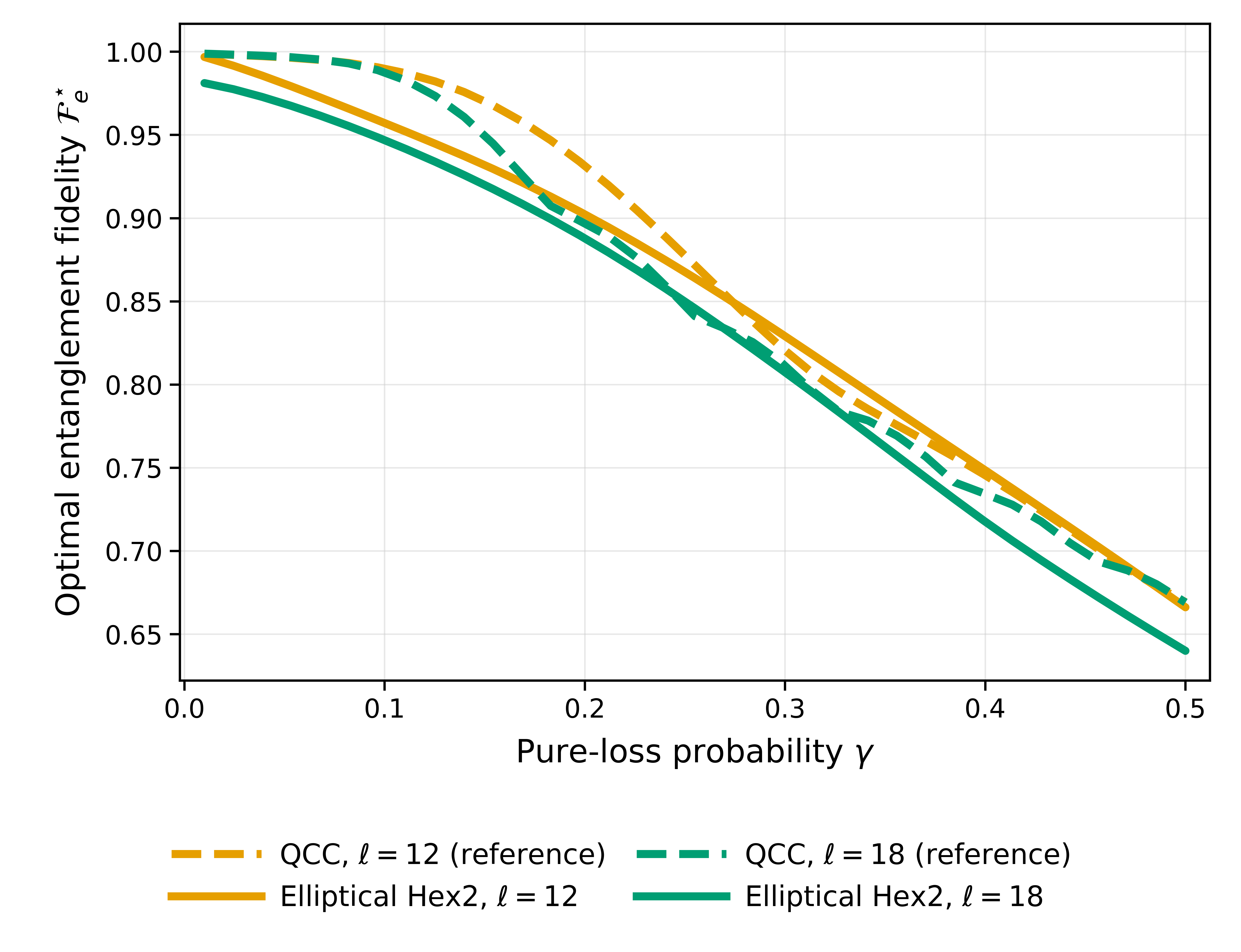}
\caption{Elliptical Hex2}
\label{fig:method1-elliptical-hex2-loss}
\end{figure}

\begin{figure}[H]
\centering
\includegraphics[width=0.475\linewidth]{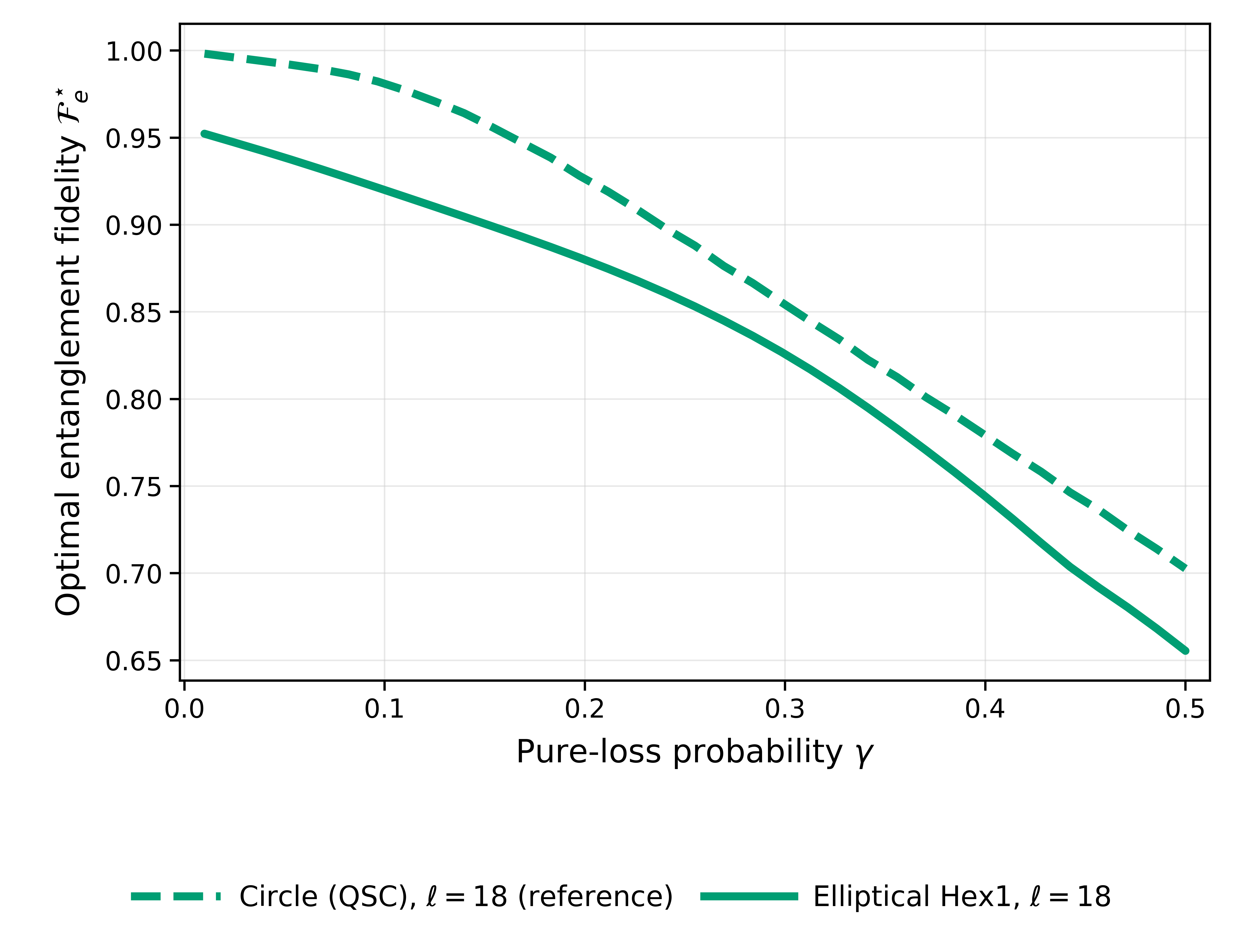}\hfill%
\includegraphics[width=0.475\linewidth]{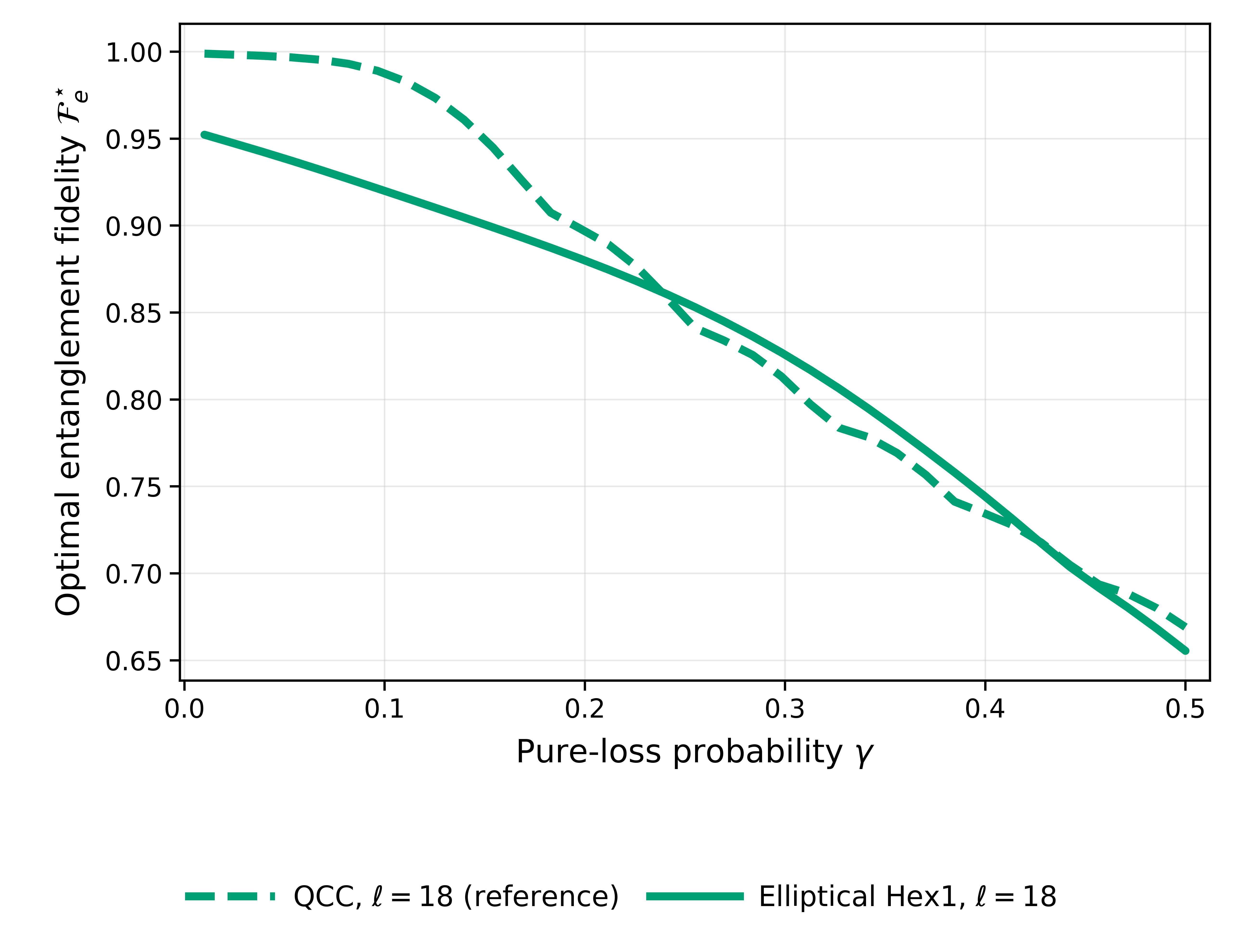}
\caption{Elliptical Hex1}
\label{fig:method1-elliptical-hex1-loss}
\end{figure}

\begin{figure}[H]
\centering
\includegraphics[width=0.475\linewidth]{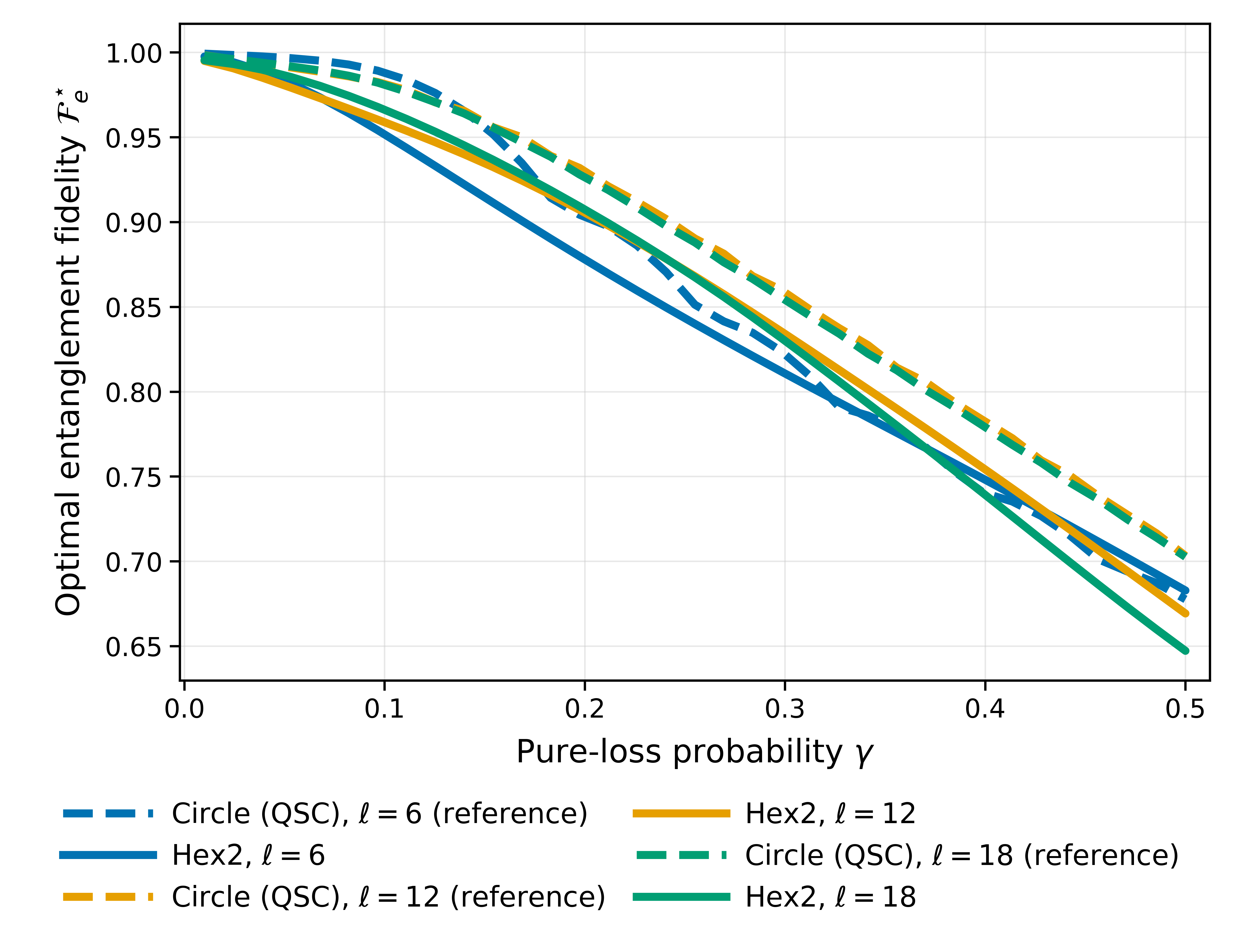}\hfill%
\includegraphics[width=0.475\linewidth]{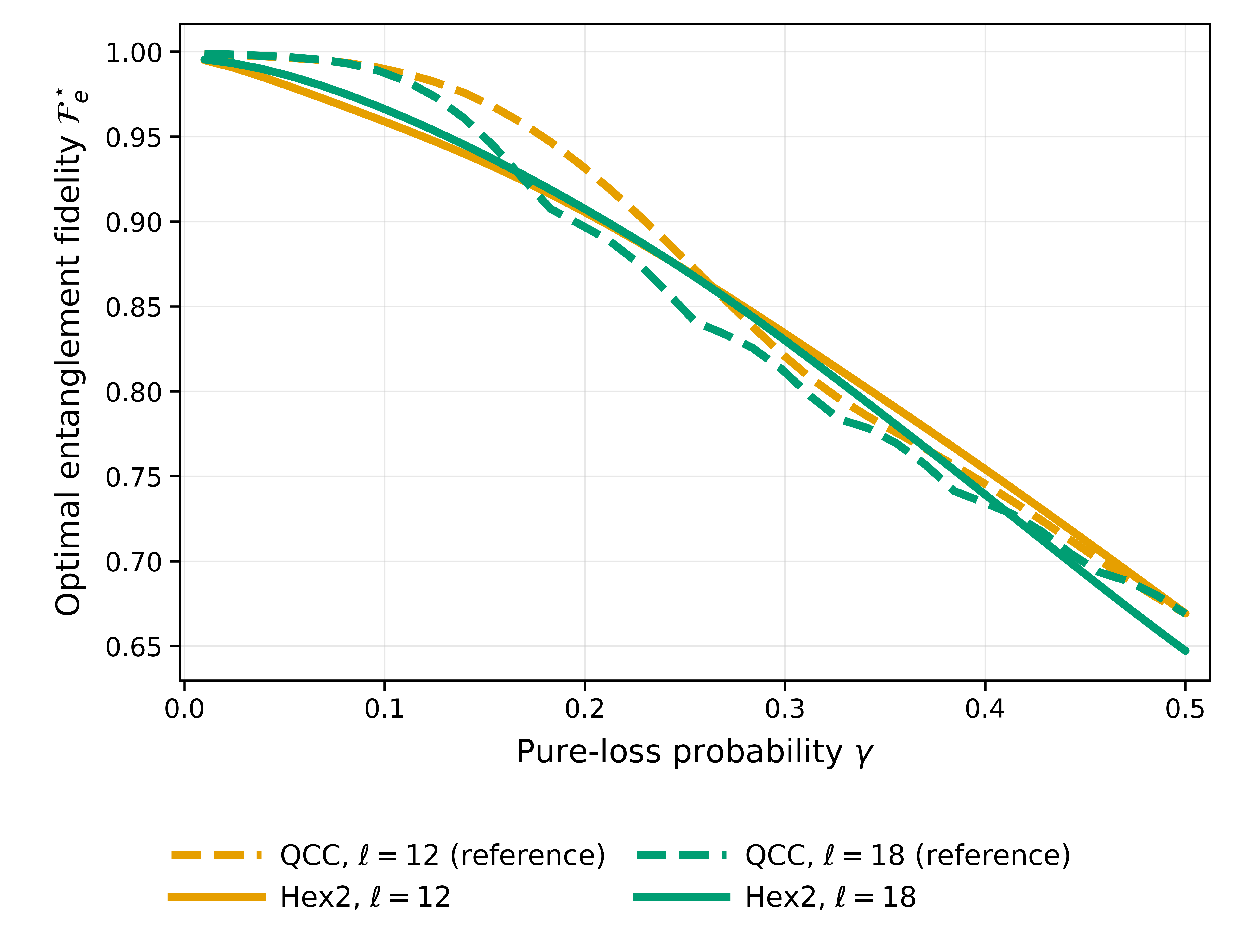}
\caption{Hex2}
\label{fig:method1-hex2-loss}
\end{figure}

\begin{figure}[H]
\centering
\includegraphics[width=0.475\linewidth]{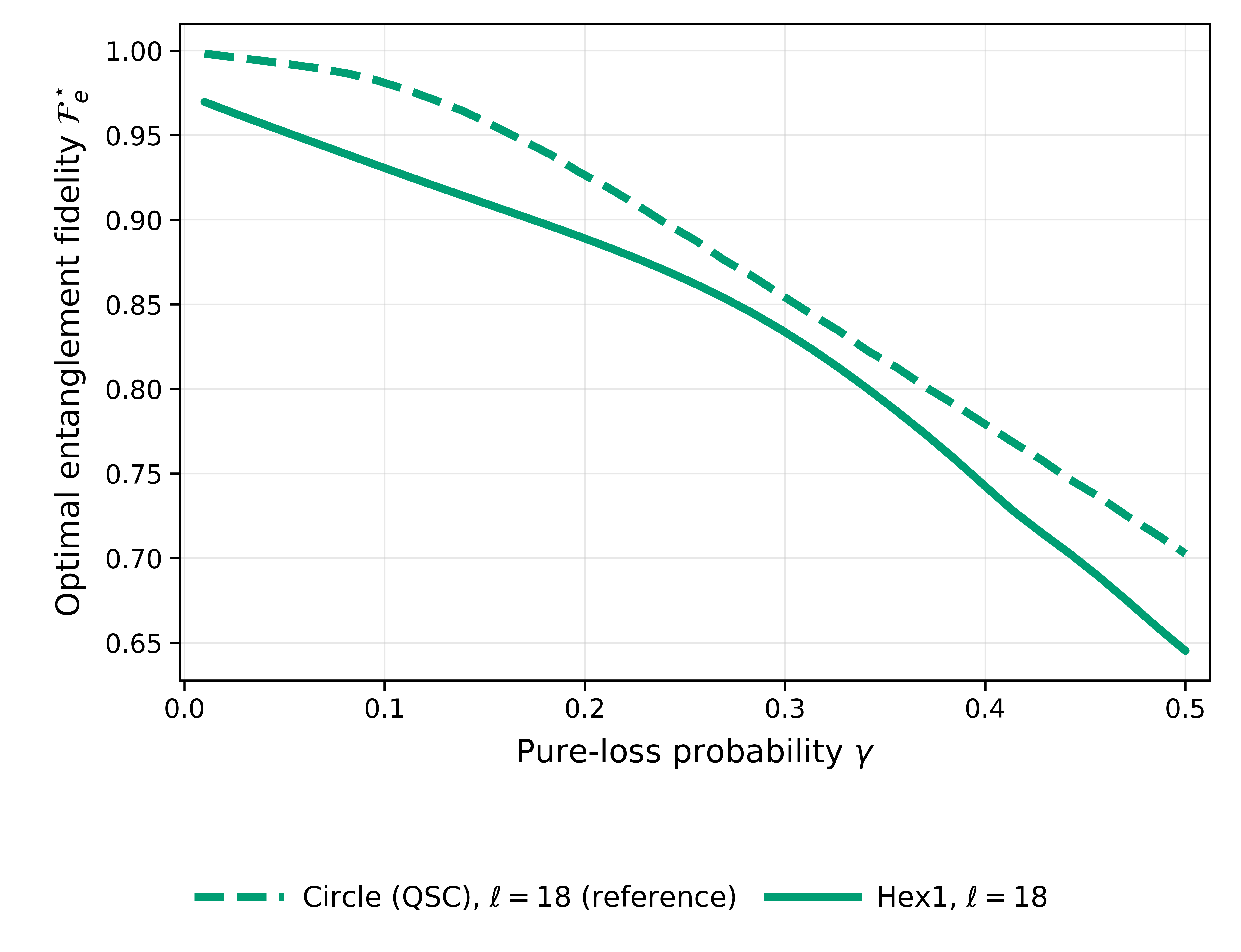}\hfill%
\includegraphics[width=0.475\linewidth]{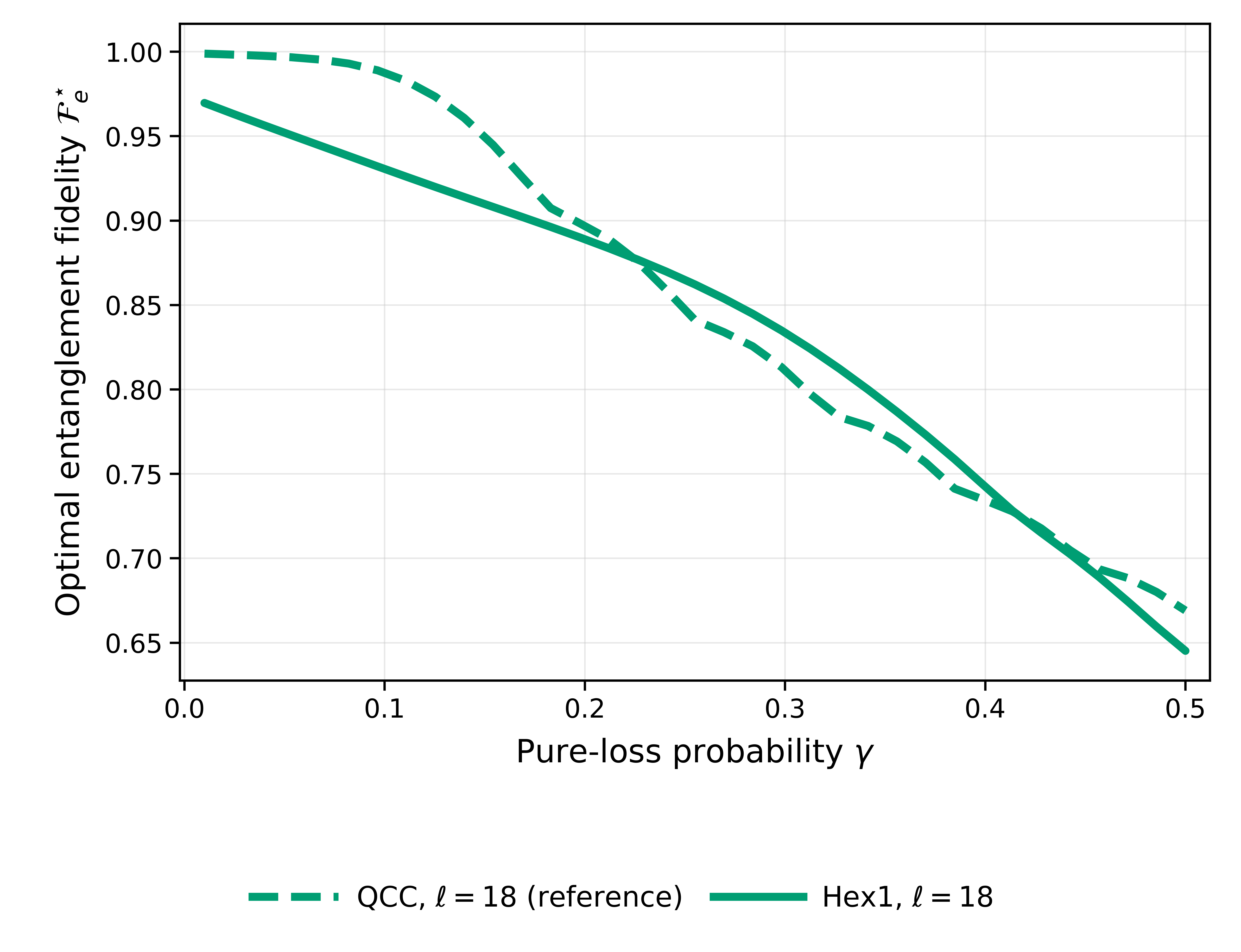}
\caption{Hex1}
\label{fig:method1-hex1-loss}
\end{figure}
\FloatBarrier

\subsection{Energy-matched amplitudes}
\label{sec:appendix-hex-loss-sweeps-method2}

The
candidate is compared with QSC (left) and QCC (right) at equal constellation
size. Each reference is selected under the candidate photon budget, and
recovery is reoptimized at each sampled loss probability.

\begin{figure}[H]
\centering
\includegraphics[width=0.475\linewidth]{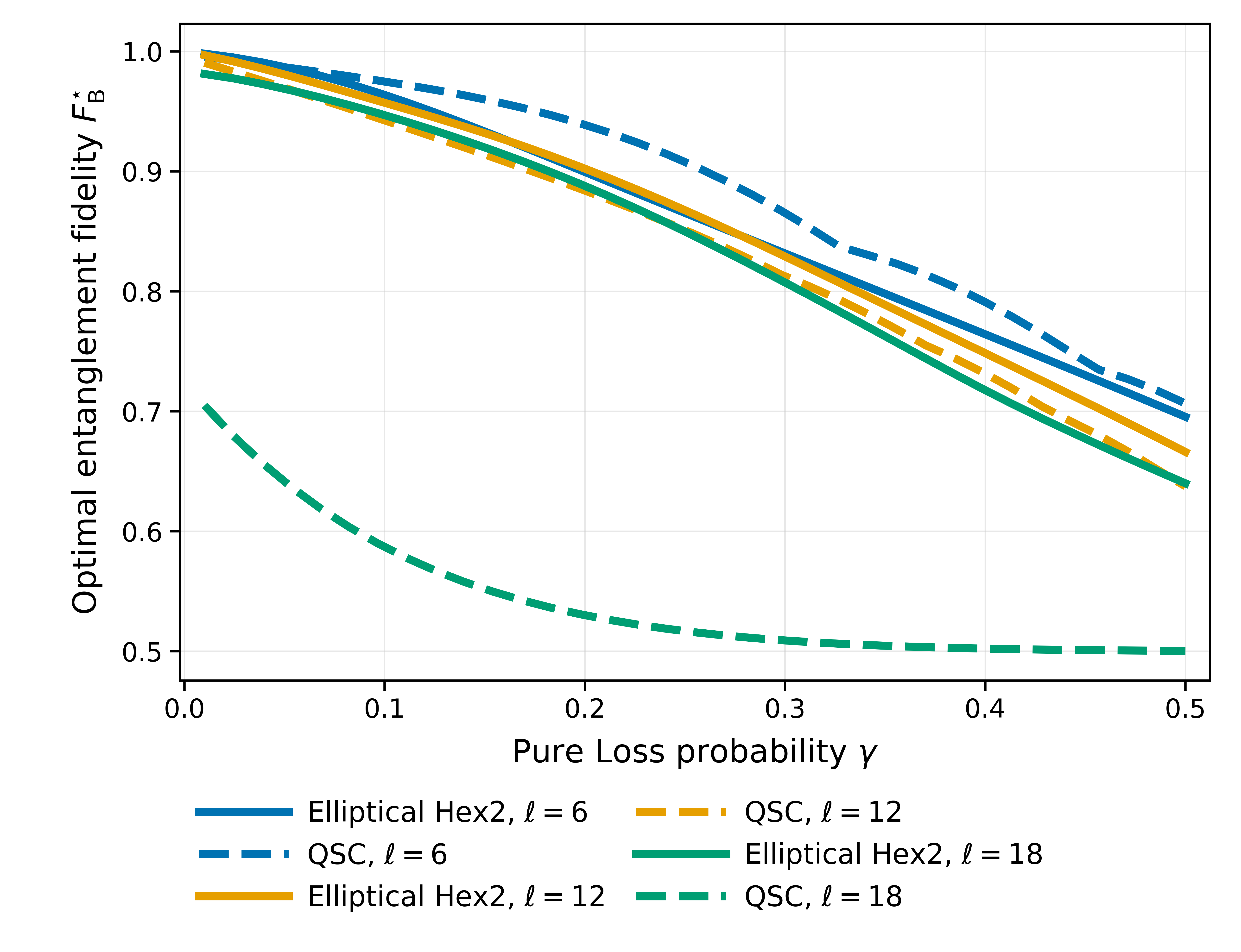}\hfill%
\includegraphics[width=0.475\linewidth]{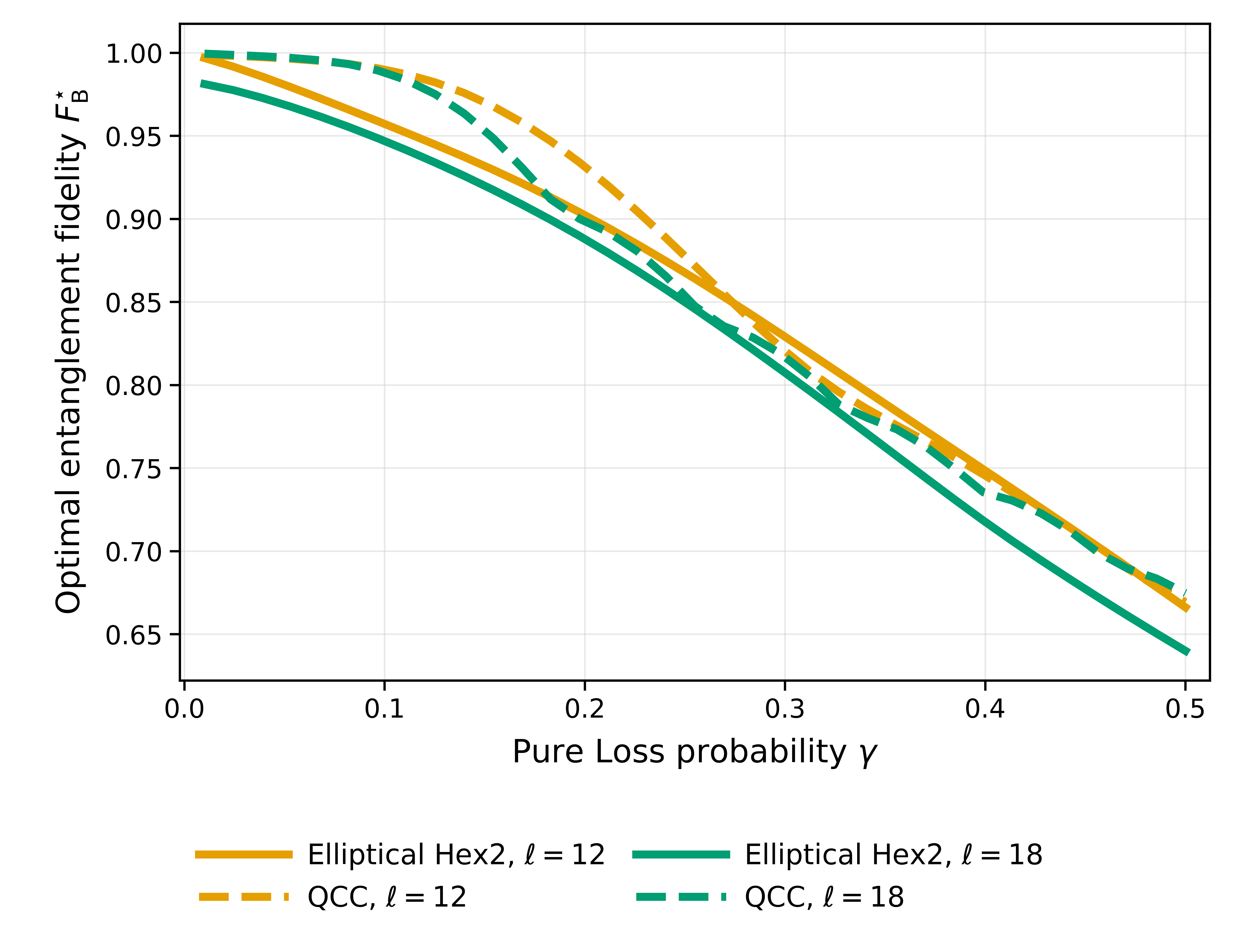}
\caption{ Elliptical Hex2. }
\label{fig:method2-elliptical-hex2-loss}
\end{figure}

\begin{figure}[H]
\centering
\includegraphics[width=0.475\linewidth]{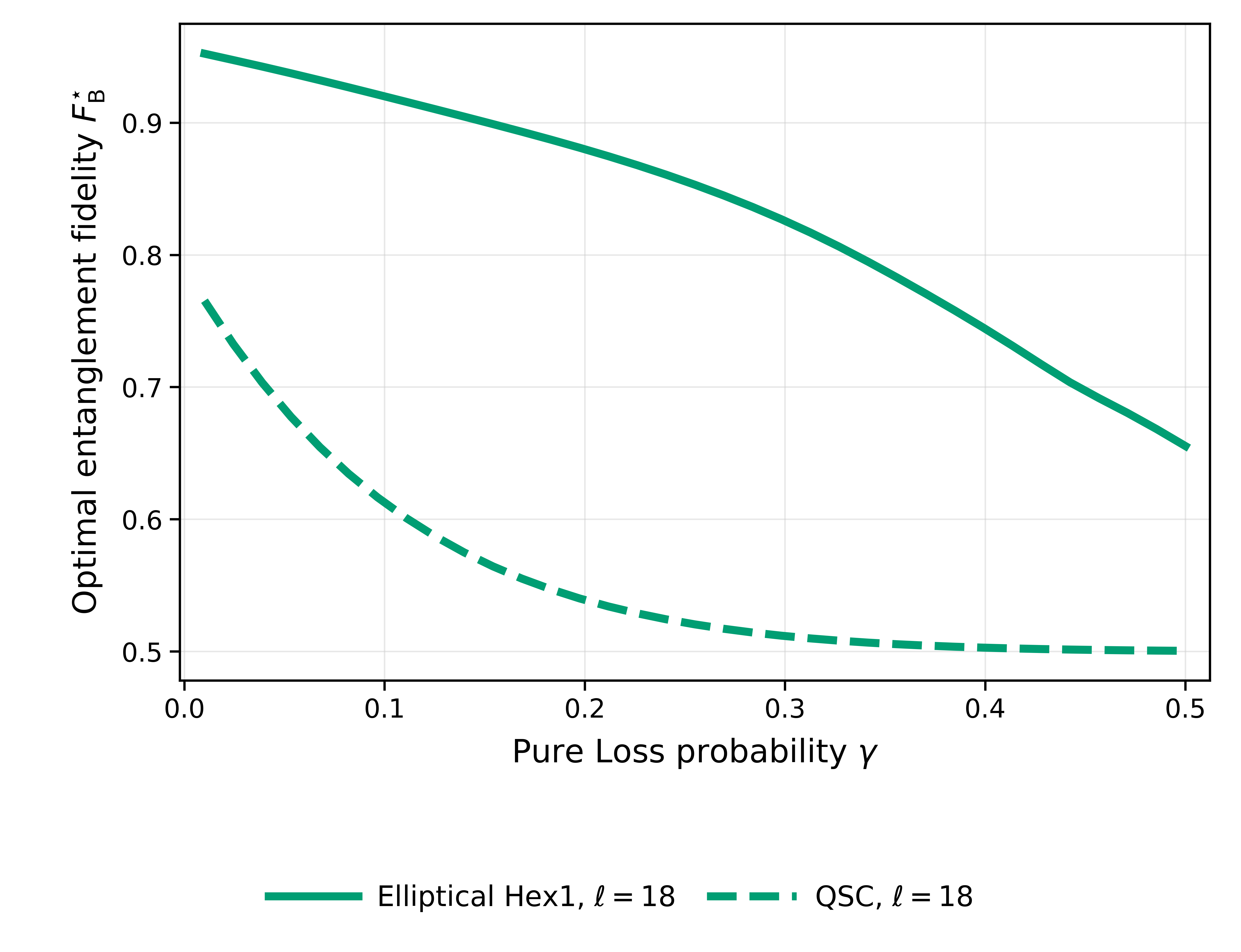}\hfill%
\includegraphics[width=0.475\linewidth]{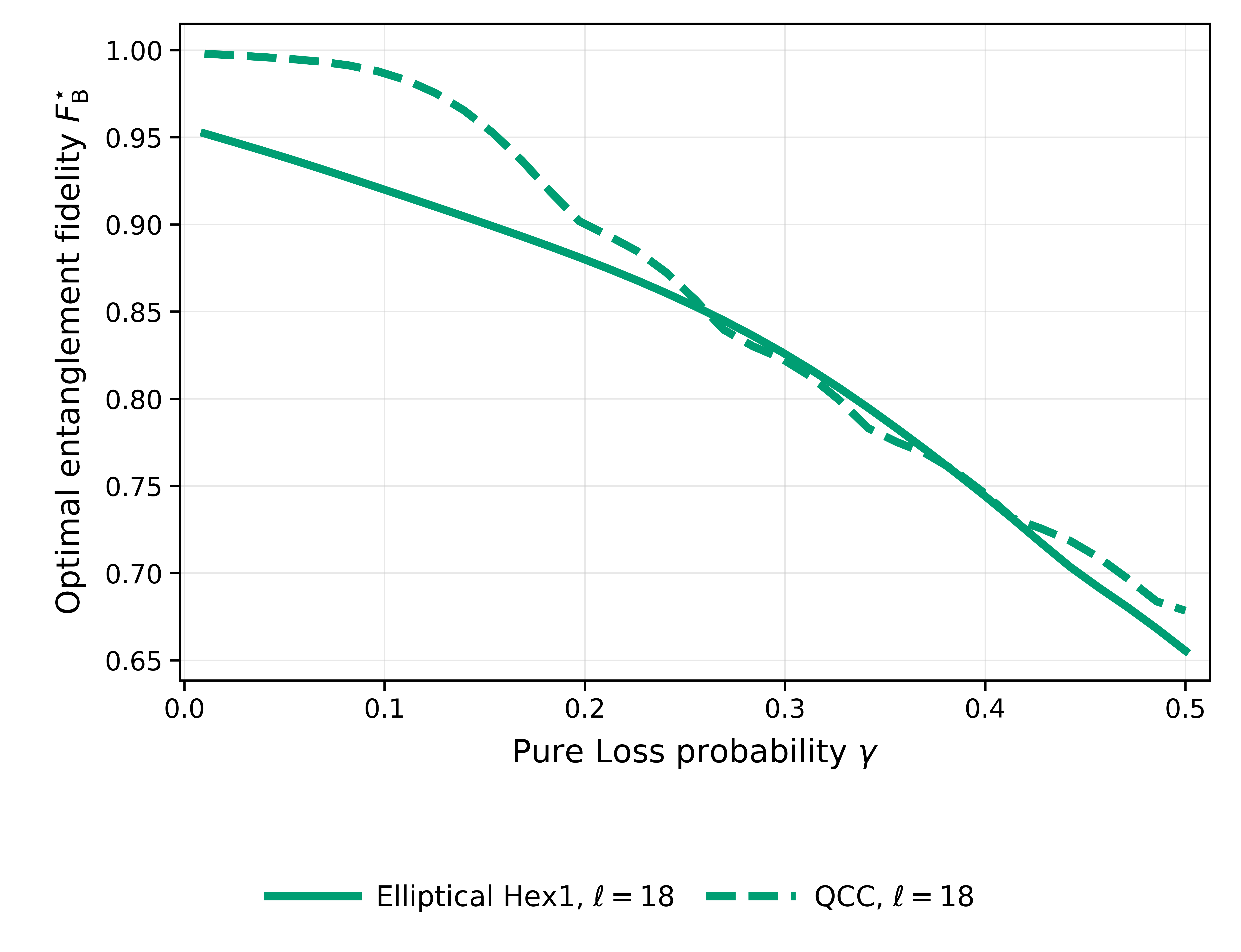}
\caption{ Elliptical Hex1. }
\label{fig:method2-elliptical-hex1-loss}
\end{figure}

\begin{figure}[H]
\centering
\includegraphics[width=0.475\linewidth]{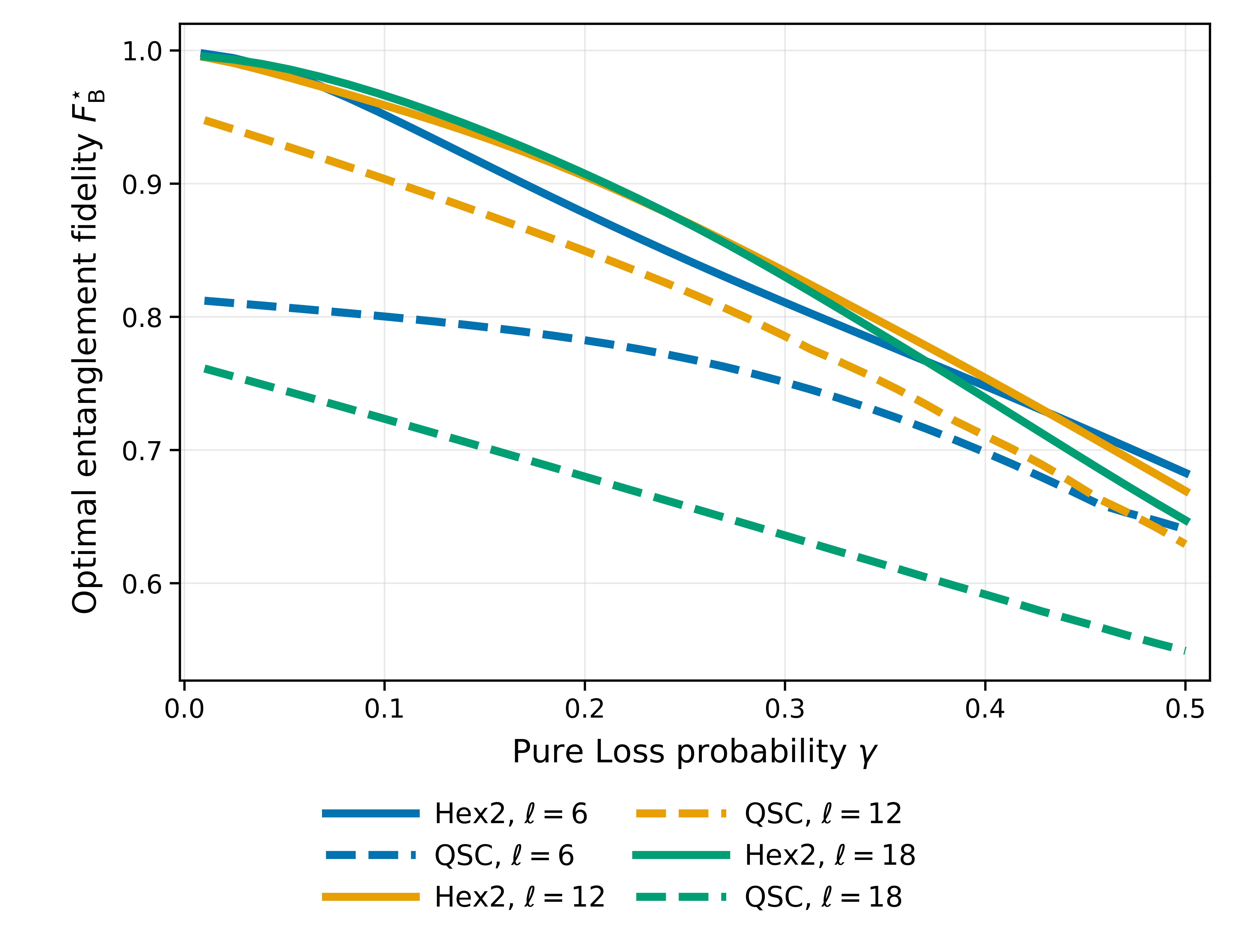}\hfill%
\includegraphics[width=0.475\linewidth]{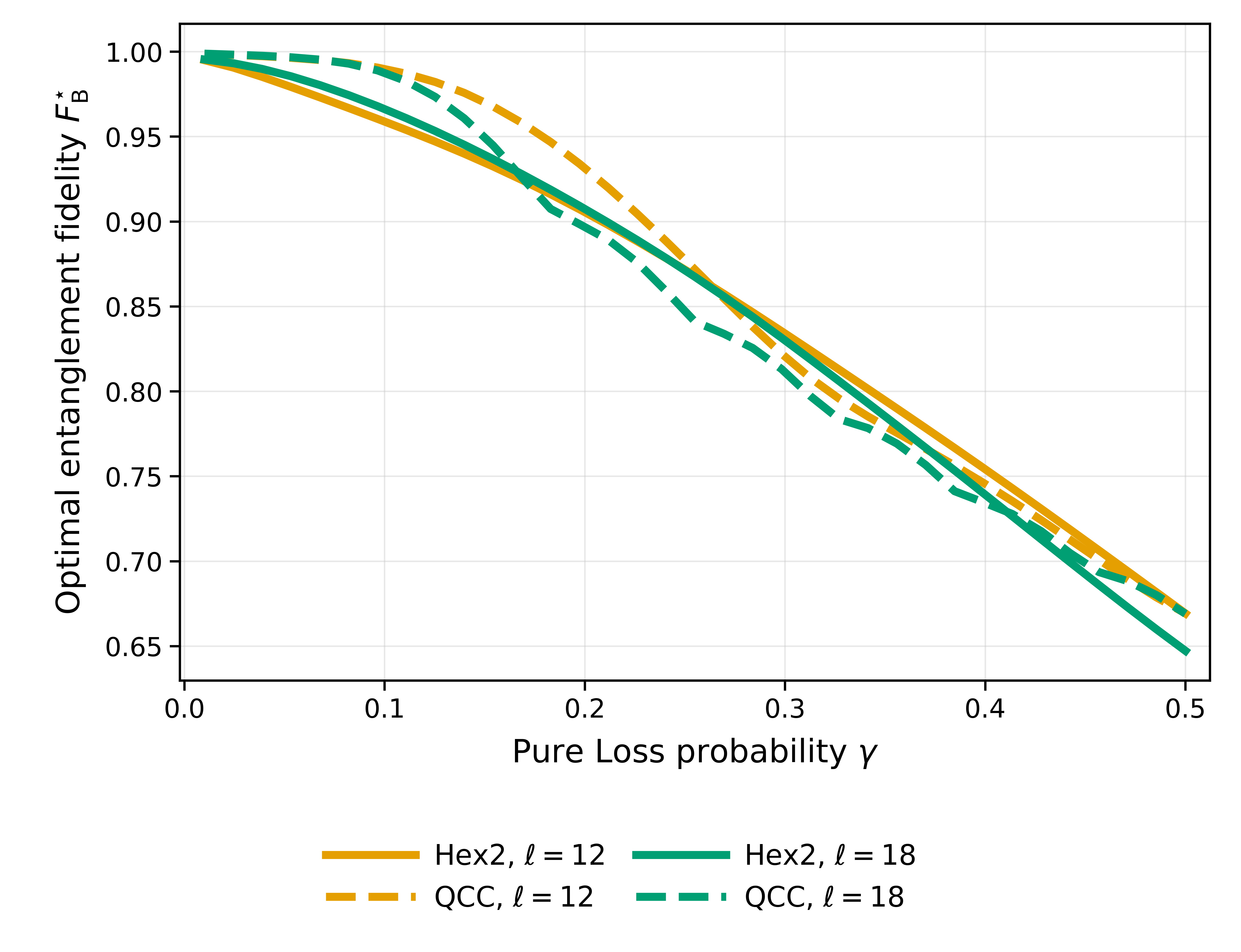}
\caption{ Hex2. }
\label{fig:method2-hex2-loss}
\end{figure}

\begin{figure}[H]
\centering
\includegraphics[width=0.475\linewidth]{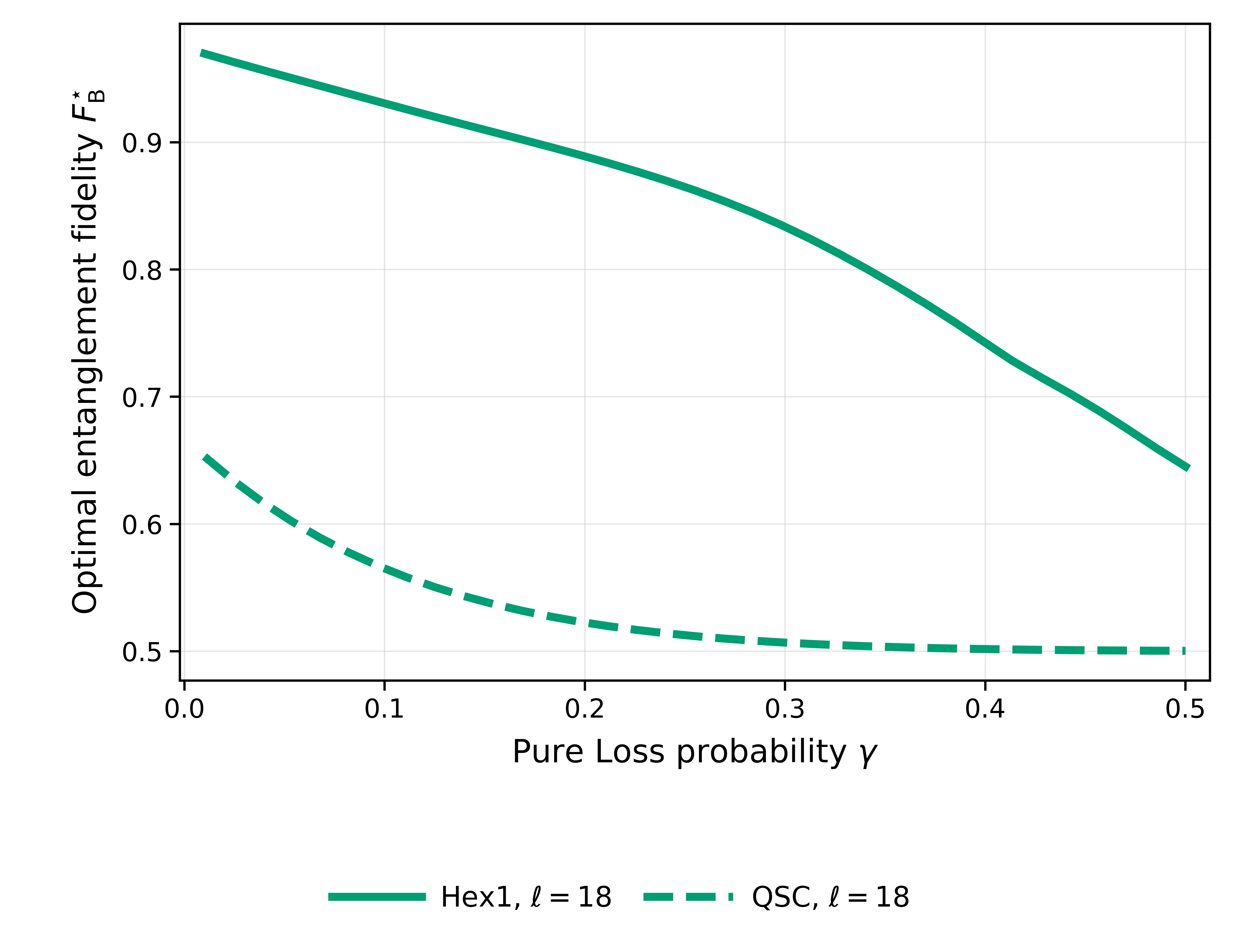}\hfill%
\includegraphics[width=0.475\linewidth]{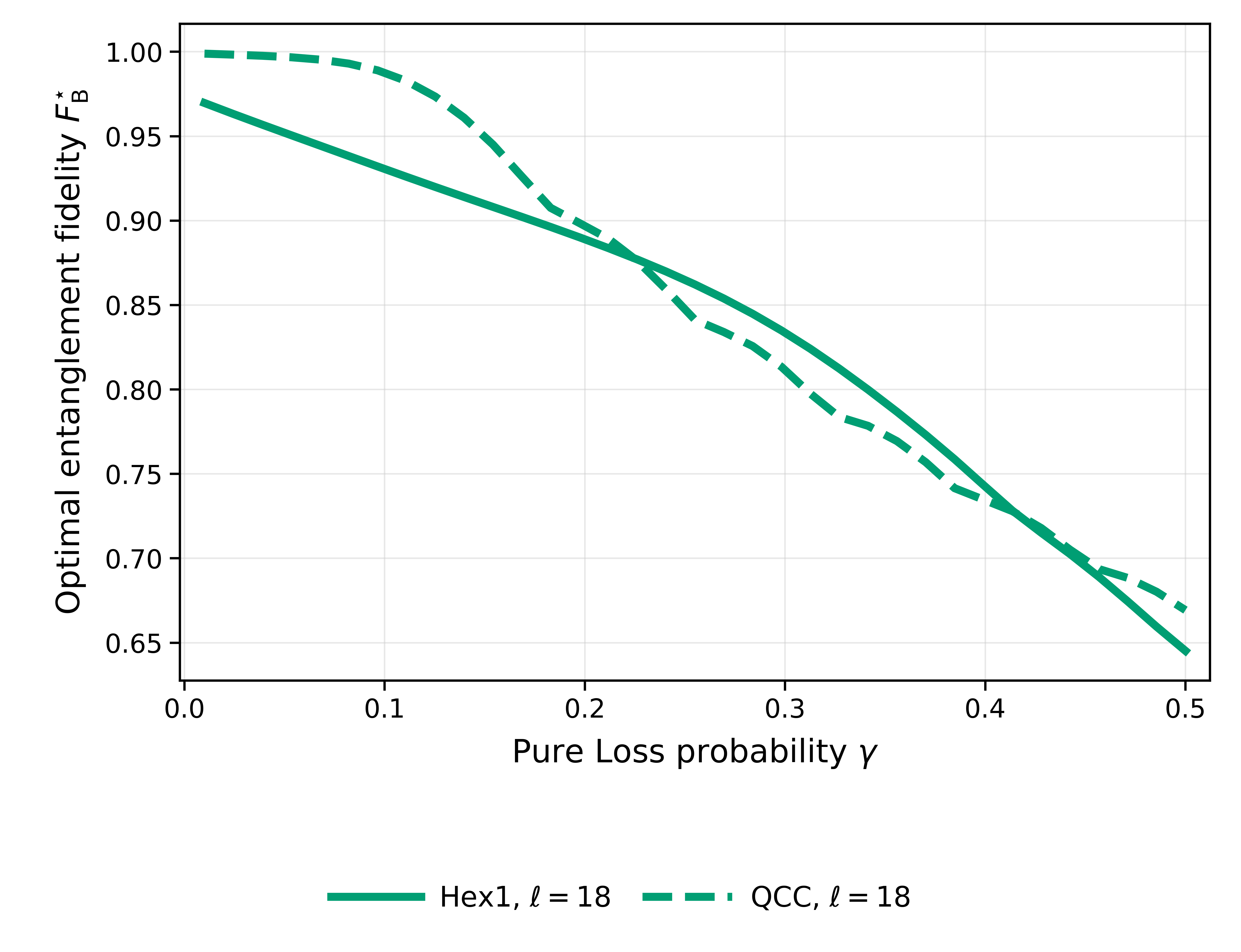}
\caption{ Hex1. }
\label{fig:method2-hex1-loss}
\end{figure}
\FloatBarrier

\section{Supplementary figures for size matched comparison}
\label{sec:appendix-method1-extra-figures}

This section collects the relative-infidelity curves referenced in
Section~\ref{sec:method1-results}. The selected amplitudes and corresponding
photon numbers are reported in Table~\ref{tab:equalell-parameters}. Each figure
contains one family-wise pair, with the QSC comparison on the left and the QCC
comparison on the right.

\begin{figure}[H]
\centering
\includegraphics[width=0.495\linewidth]{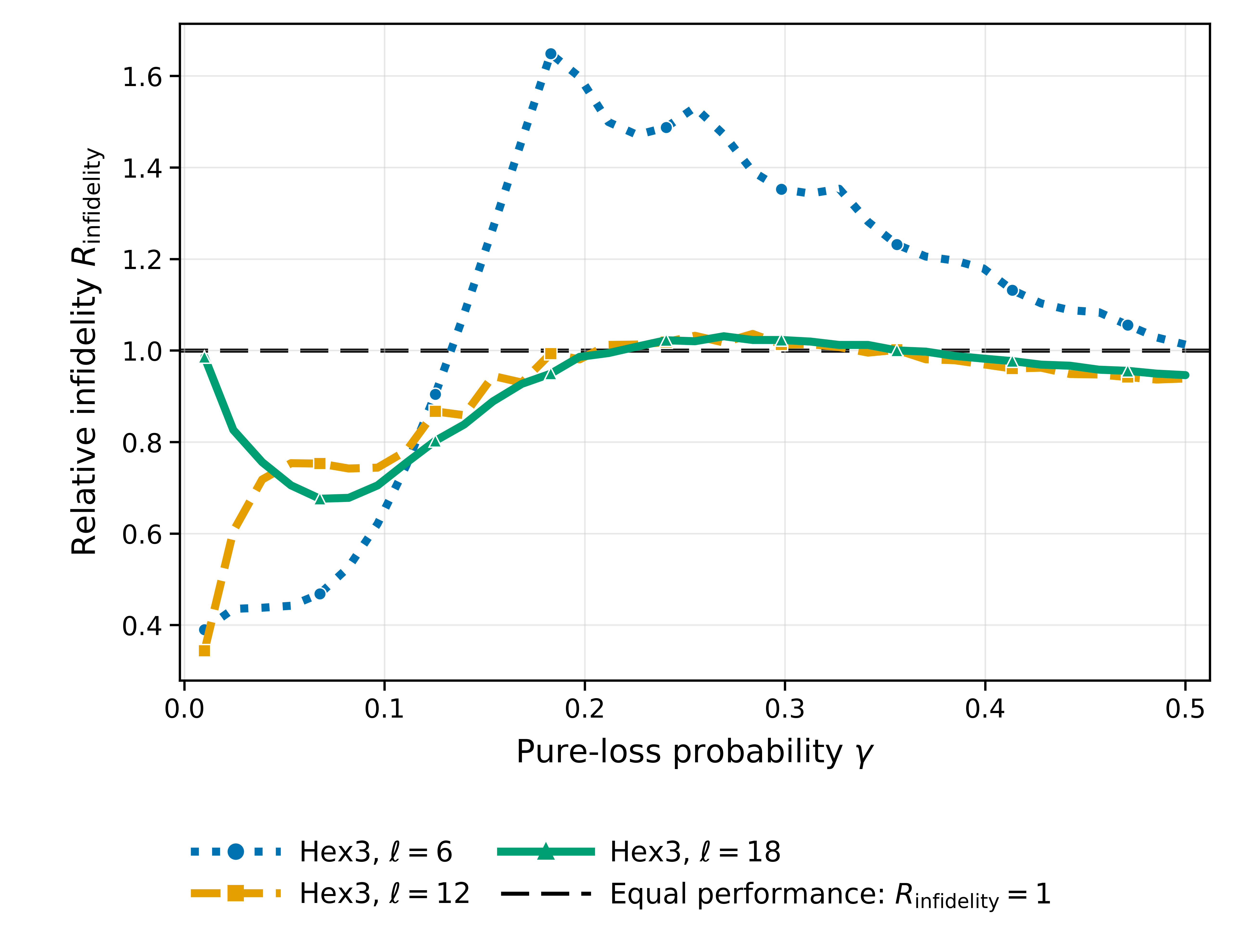}\hfill%
\includegraphics[width=0.495\linewidth]{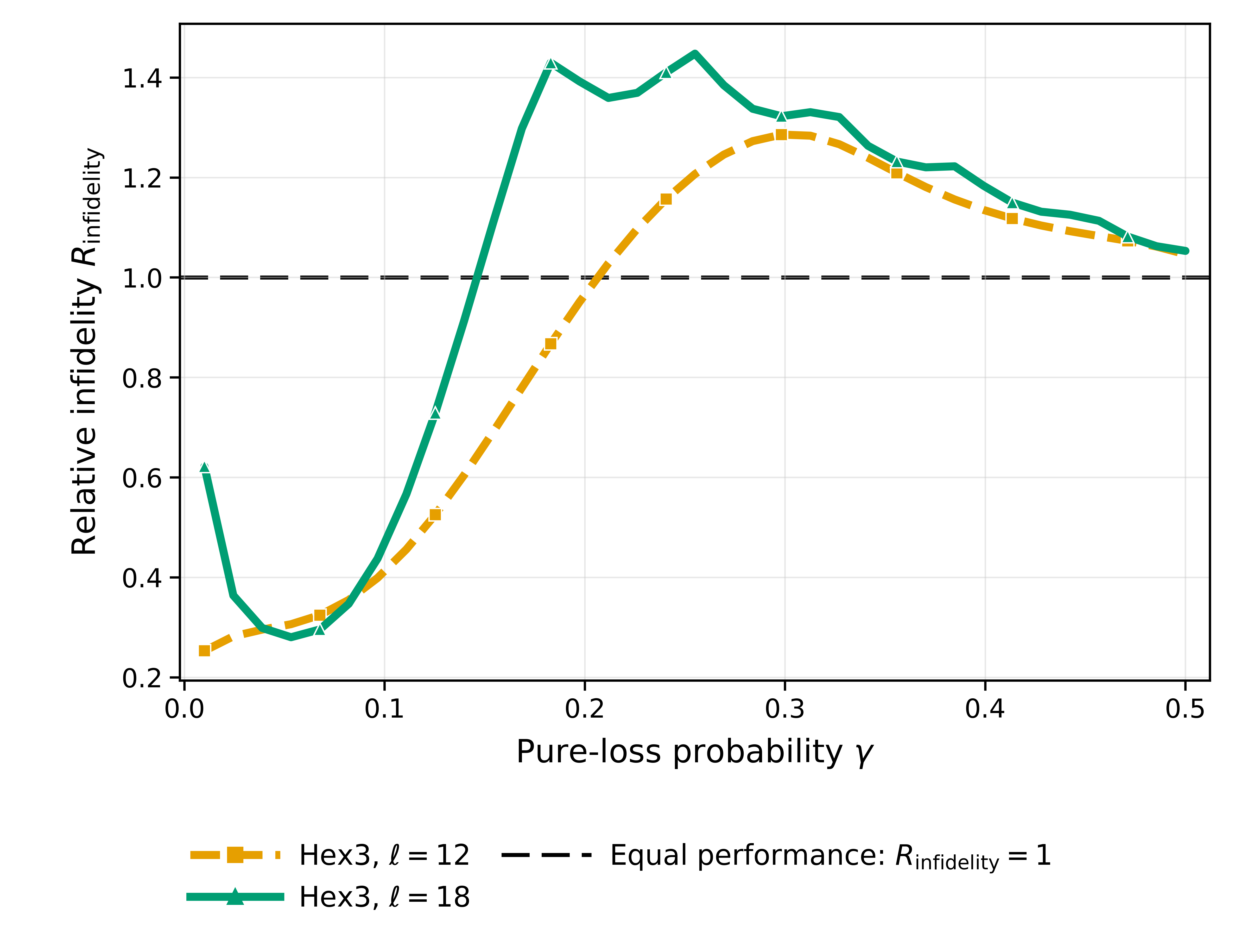}
\caption{ Hex3 }
\label{fig:appendix-method1-grid}
\label{fig:appendix-method1-hex3-qsc-c}
\label{fig:appendix-method1-hex3-qcc-c}
\label{fig:equalell-hex3-qsc}
\label{fig:equalell-hex3-qcc}
\end{figure}

\begin{figure}[H]
\centering
\includegraphics[width=0.495\linewidth]{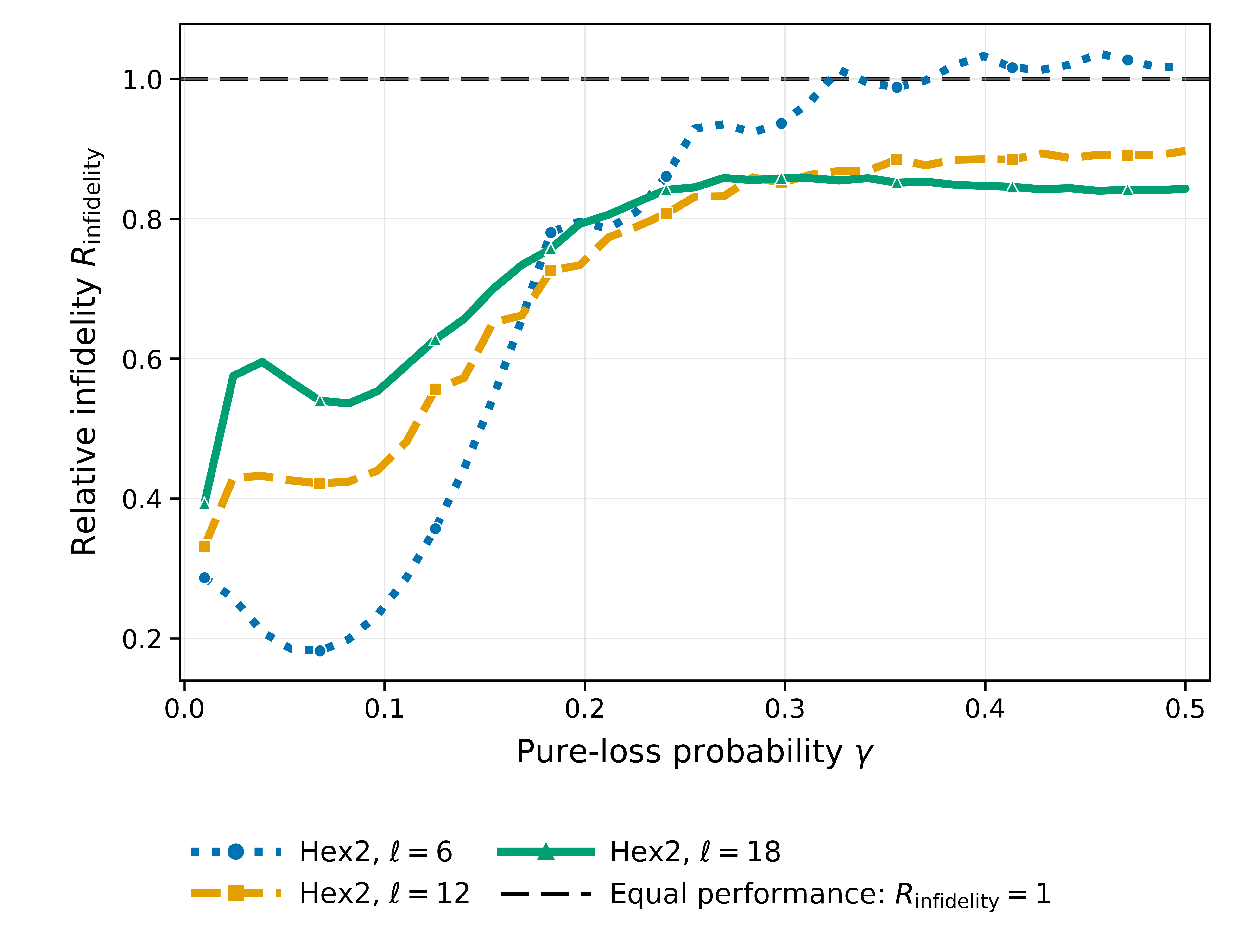}\hfill%
\includegraphics[width=0.495\linewidth]{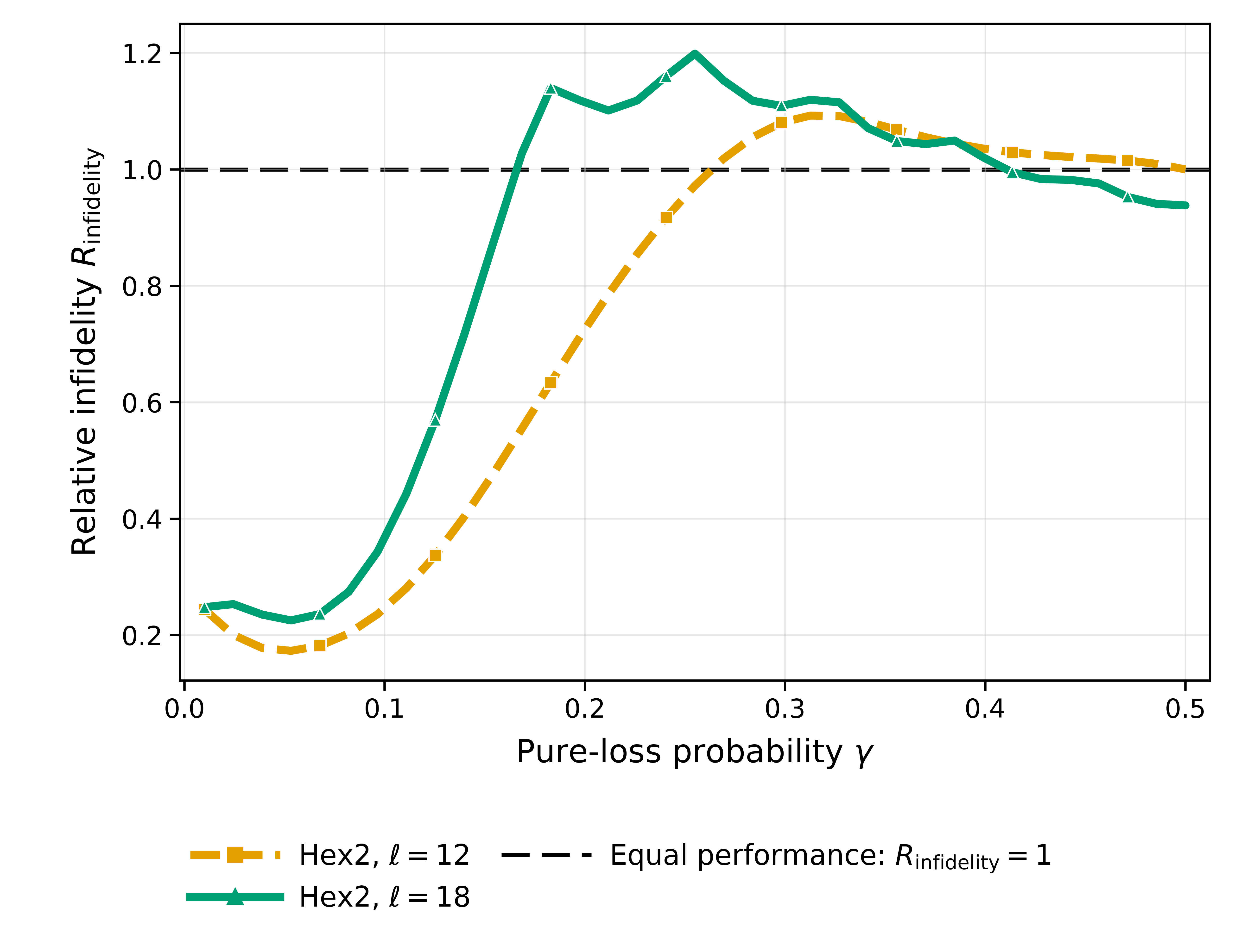}
\caption{ Hex2 }
\label{fig:appendix-method1-hex2-qsc-c}
\label{fig:appendix-method1-hex2-qcc-c}
\label{fig:appendix-method1-hex2-qsc}
\label{fig:appendix-method1-hex2-qcc}
\end{figure}

\begin{figure}[H]
\centering
\includegraphics[width=0.495\linewidth]{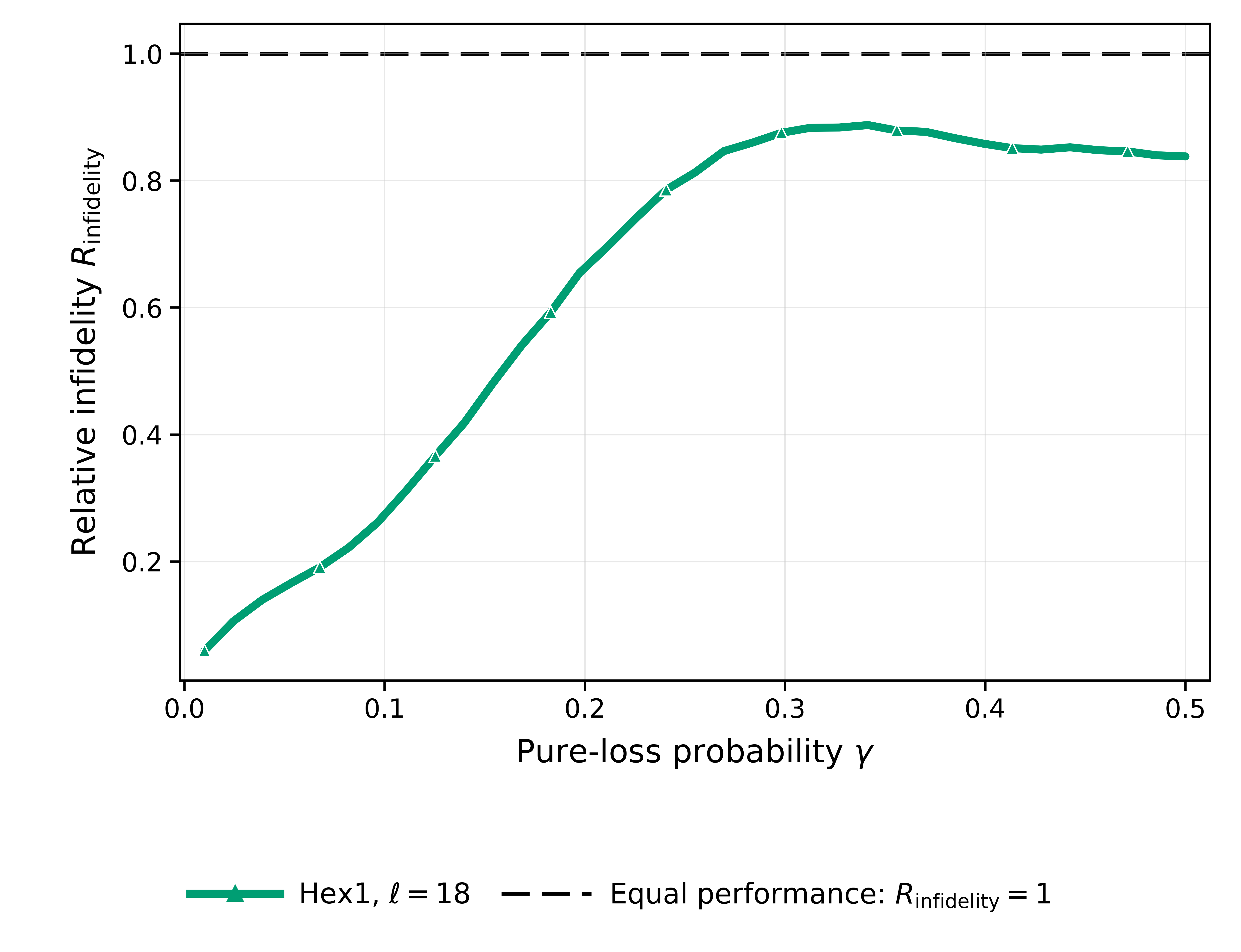}\hfill%
\includegraphics[width=0.495\linewidth]{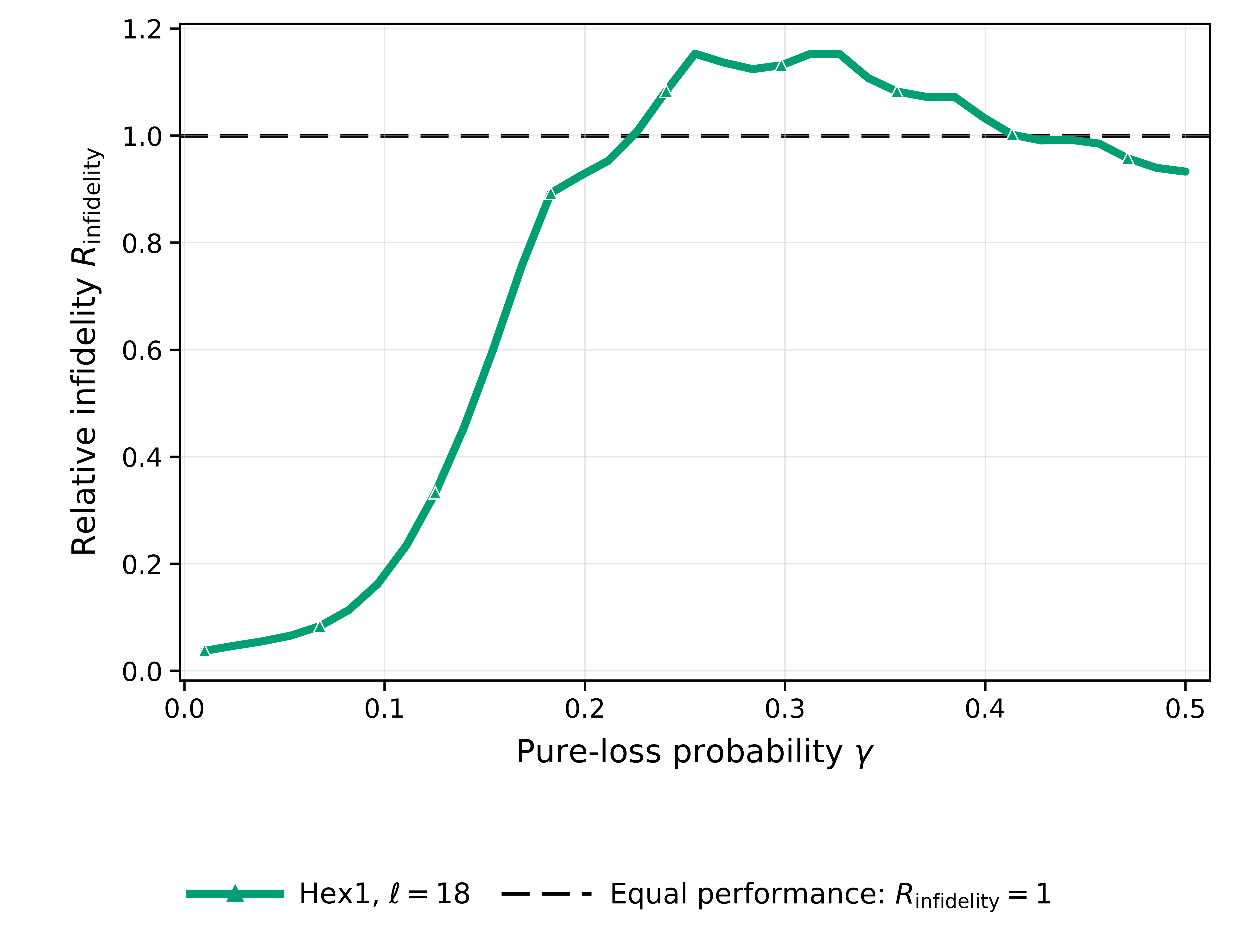}
\caption{ Hex1 }
\label{fig:appendix-method1-hex1-qsc-c}
\label{fig:appendix-method1-hex1-qcc-c}
\label{fig:appendix-method1-hex1-qsc}
\label{fig:appendix-method1-hex1-qcc}
\end{figure}

\begin{figure}[H]
\centering
\includegraphics[width=0.495\linewidth]{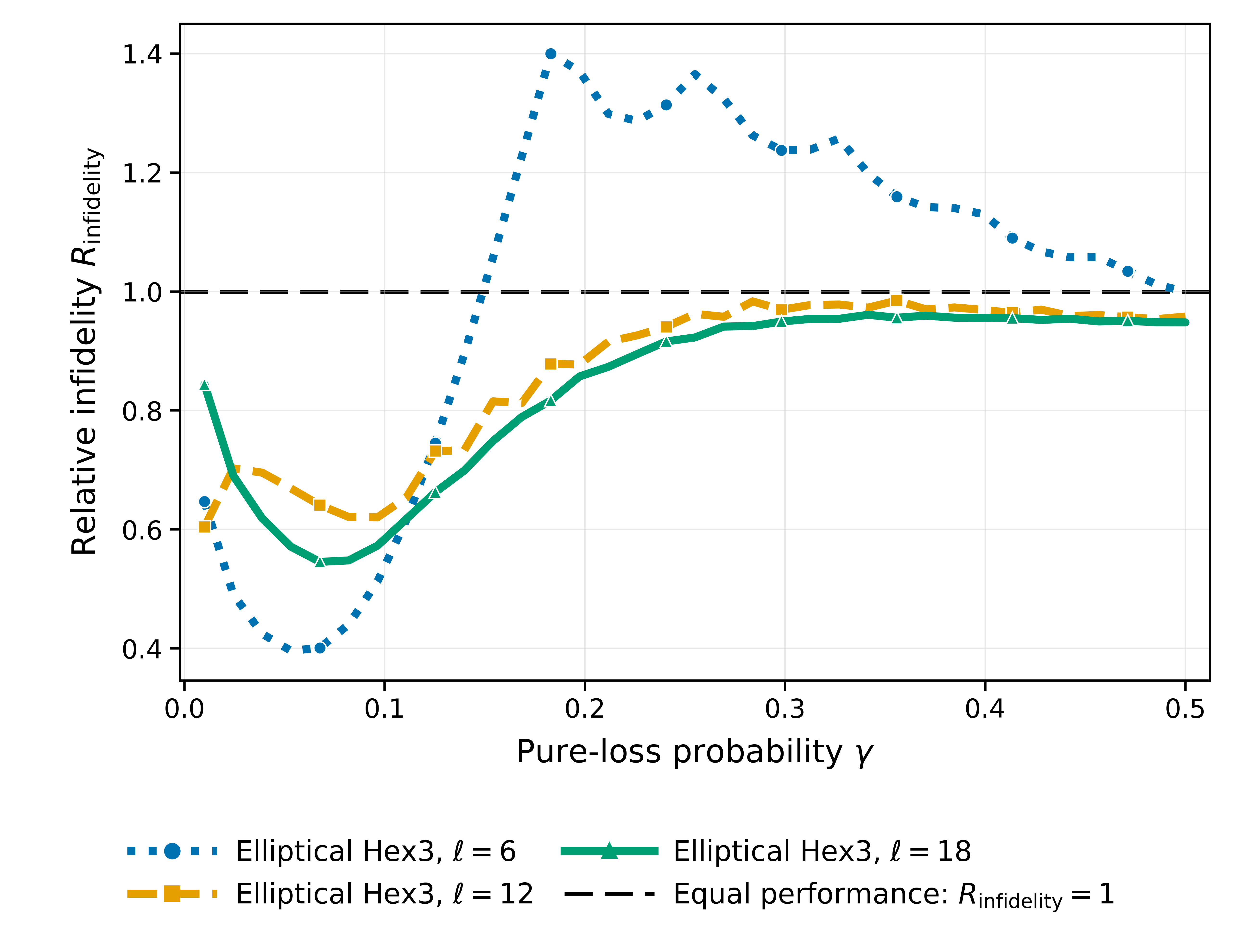}\hfill%
\includegraphics[width=0.495\linewidth]{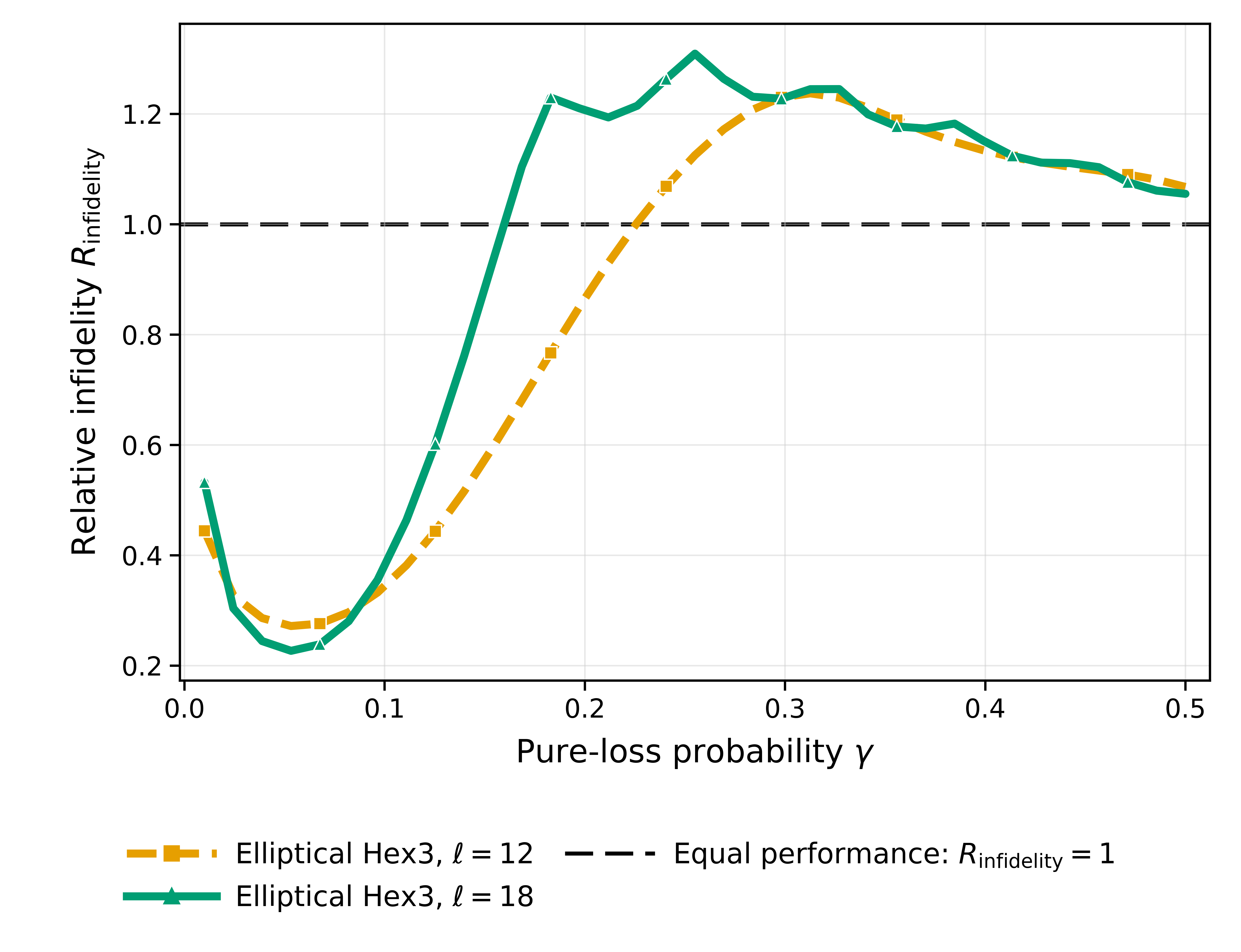}
\caption{ Elliptical Hex3 }
\label{fig:appendix-method1-elliptical-hex3-qsc-c}
\label{fig:appendix-method1-elliptical-hex3-qcc-c}
\end{figure}

\begin{figure}[H]
\centering
\includegraphics[width=0.495\linewidth]{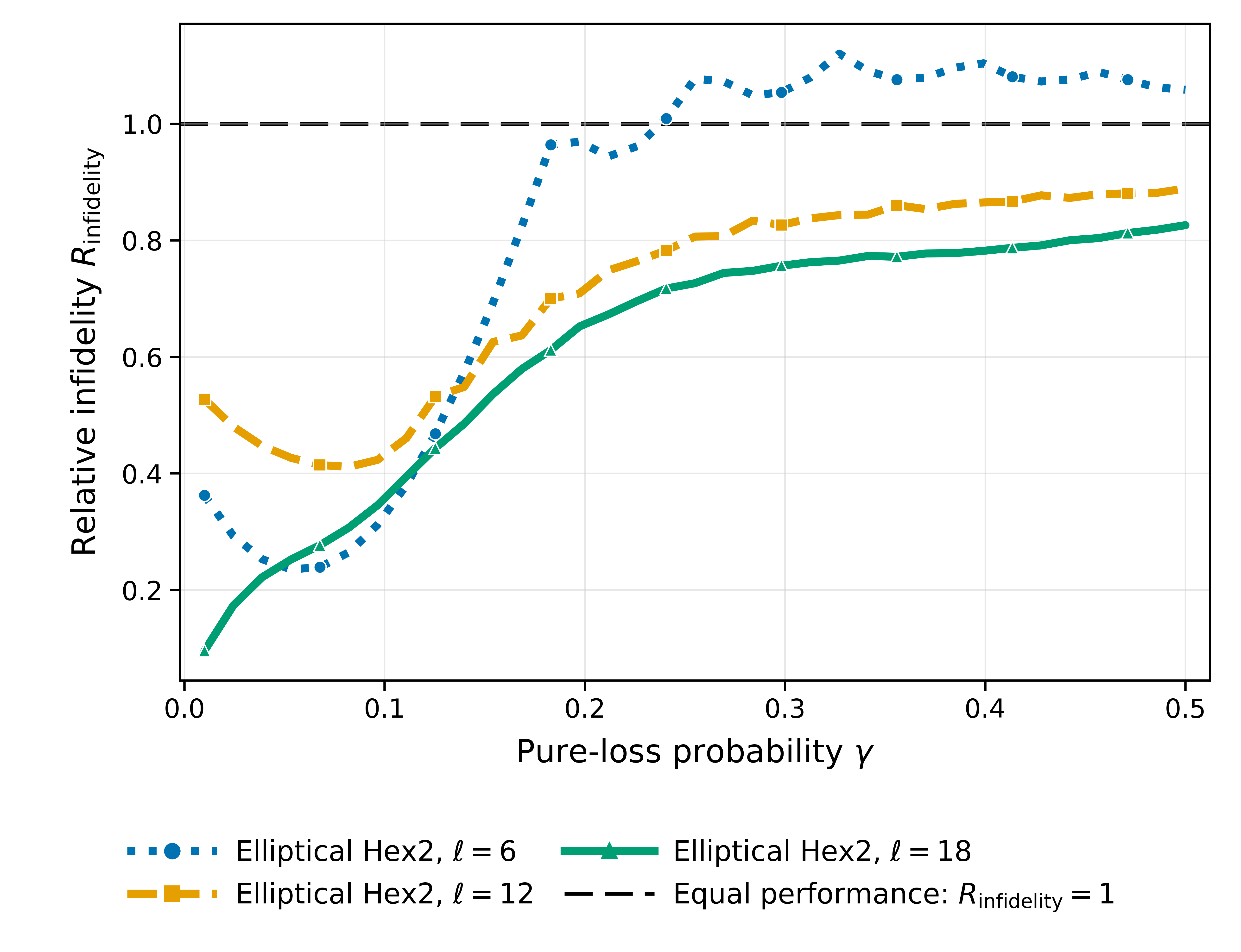}\hfill%
\includegraphics[width=0.495\linewidth]{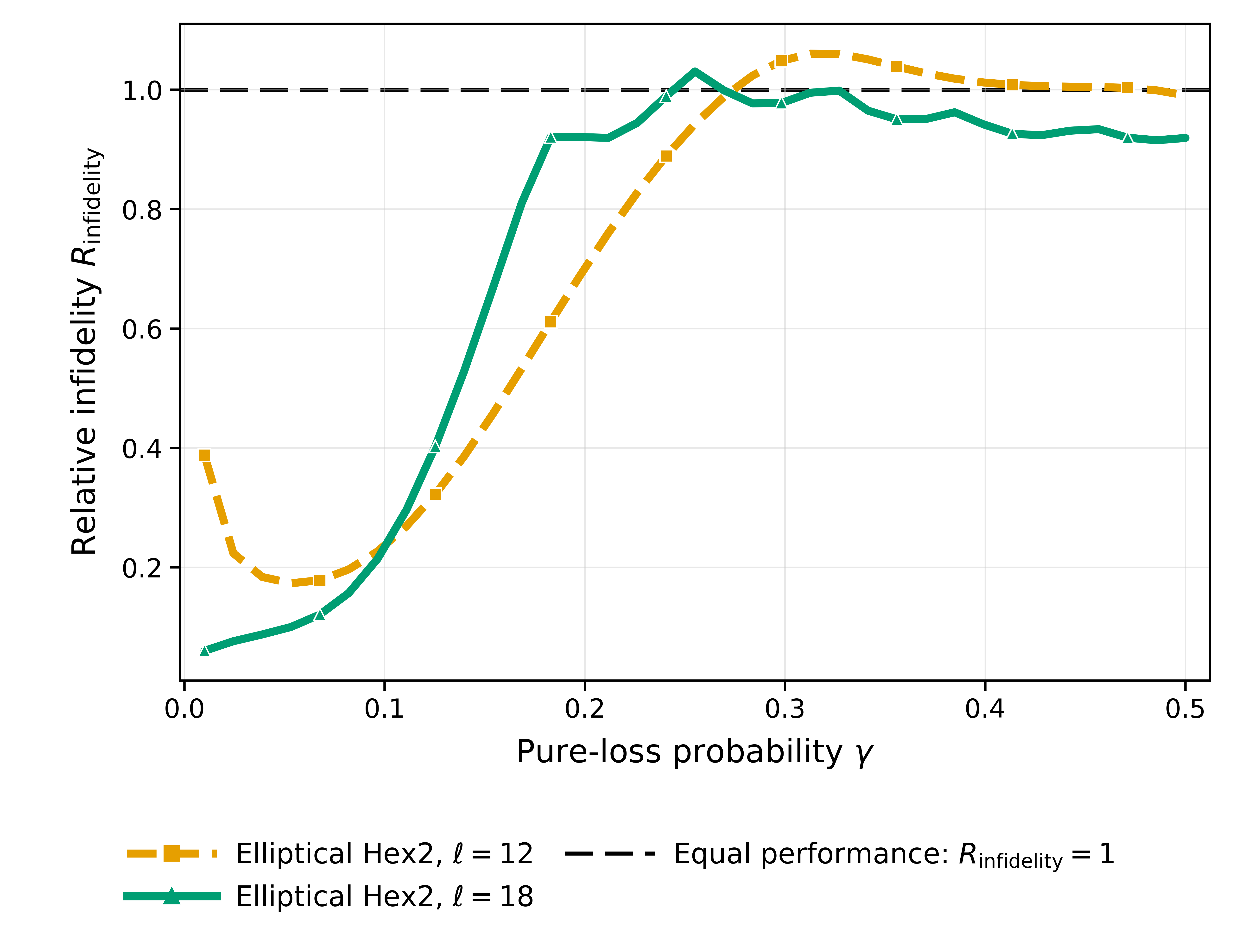}
\caption{Elliptical Hex2 }
\label{fig:appendix-method1-elliptical-hex2-qsc-c}
\label{fig:appendix-method1-elliptical-hex2-qcc-c}
\end{figure}

\begin{figure}[H]
\centering
\includegraphics[width=0.495\linewidth]{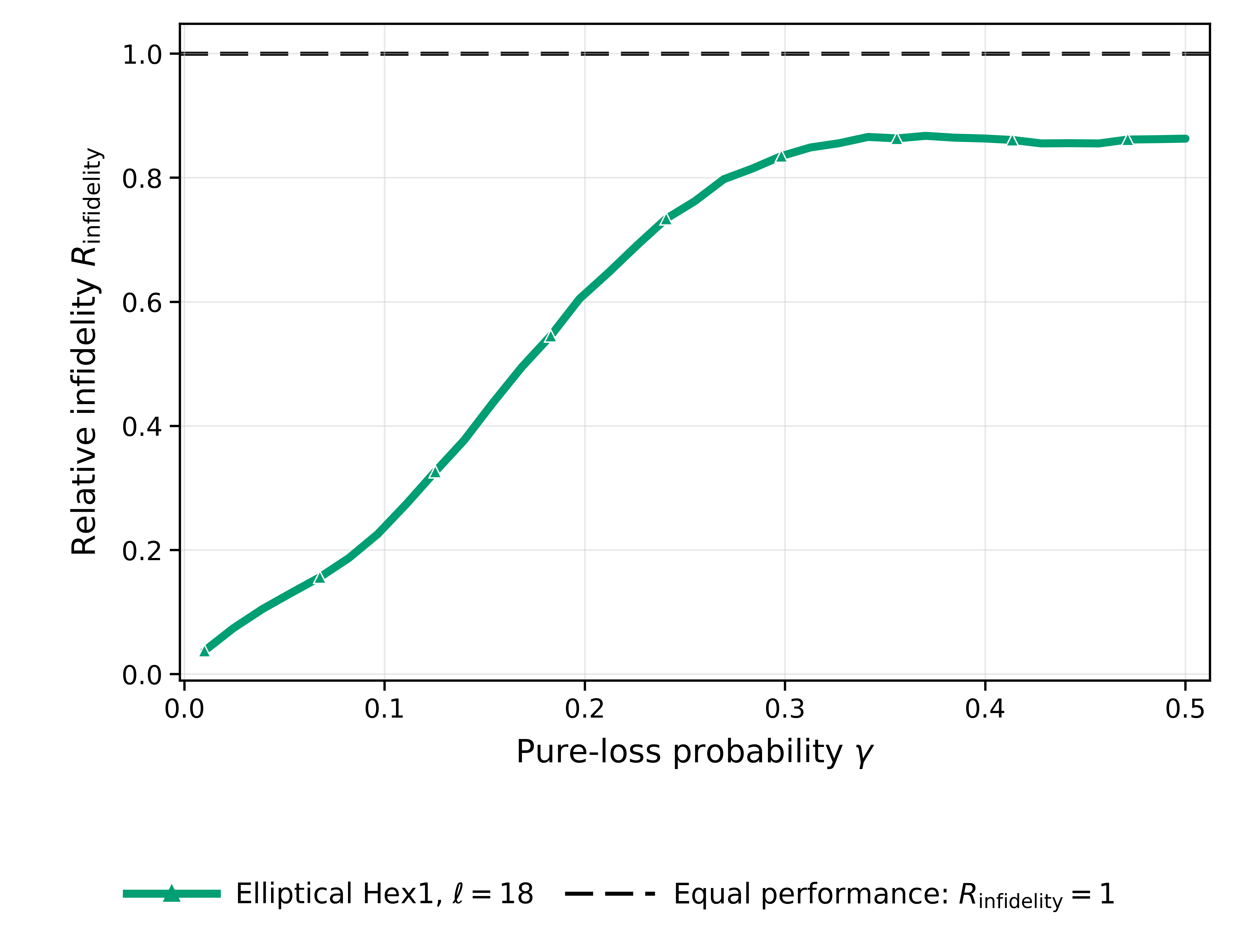}\hfill%
\includegraphics[width=0.495\linewidth]{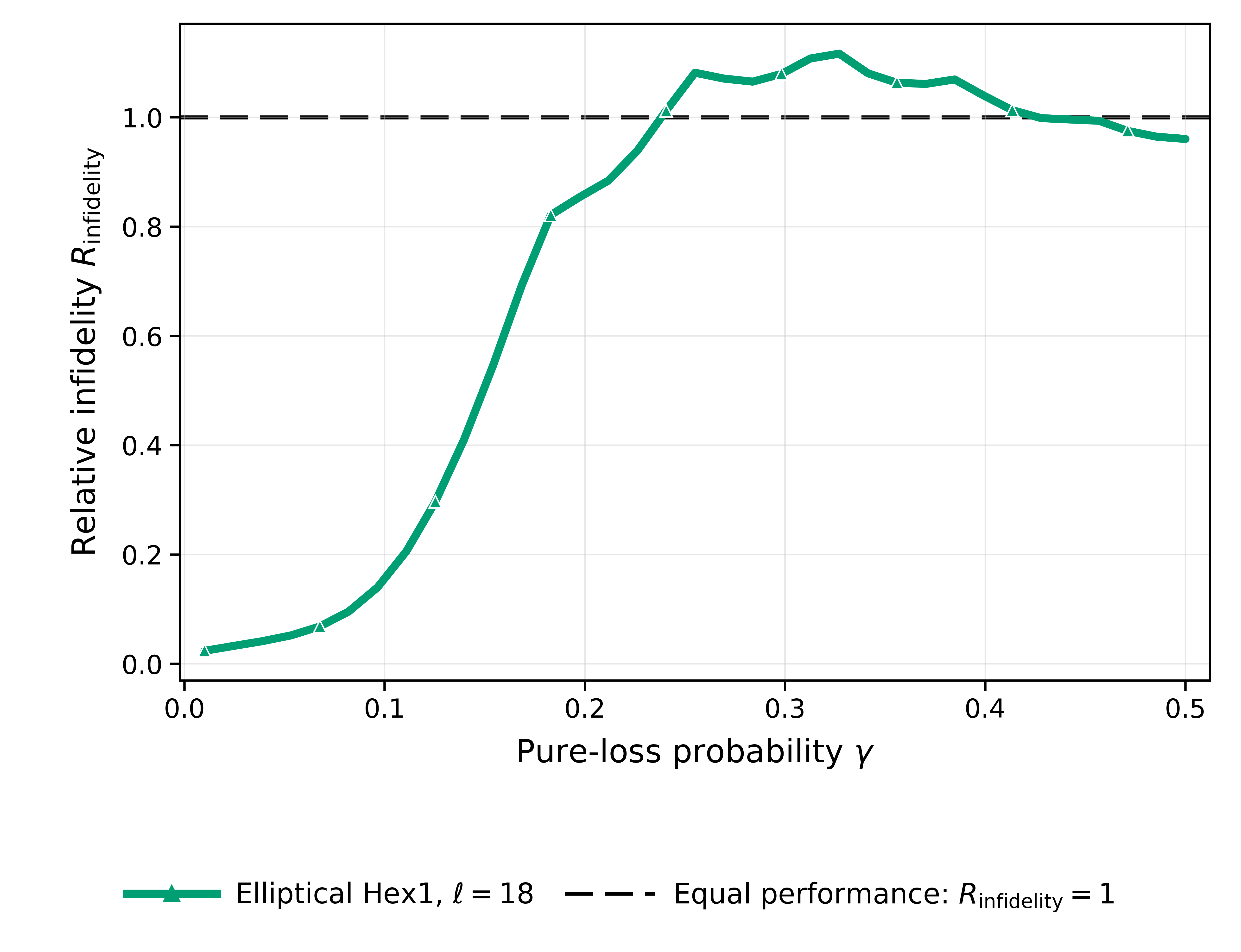}
\caption{ Elliptical Hex1}
\label{fig:appendix-method1-elliptical-hex1-qsc-c}
\label{fig:appendix-method1-elliptical-hex1-qcc-c}
\end{figure}

\begin{figure}[H]
\centering
\includegraphics[width=0.495\linewidth]{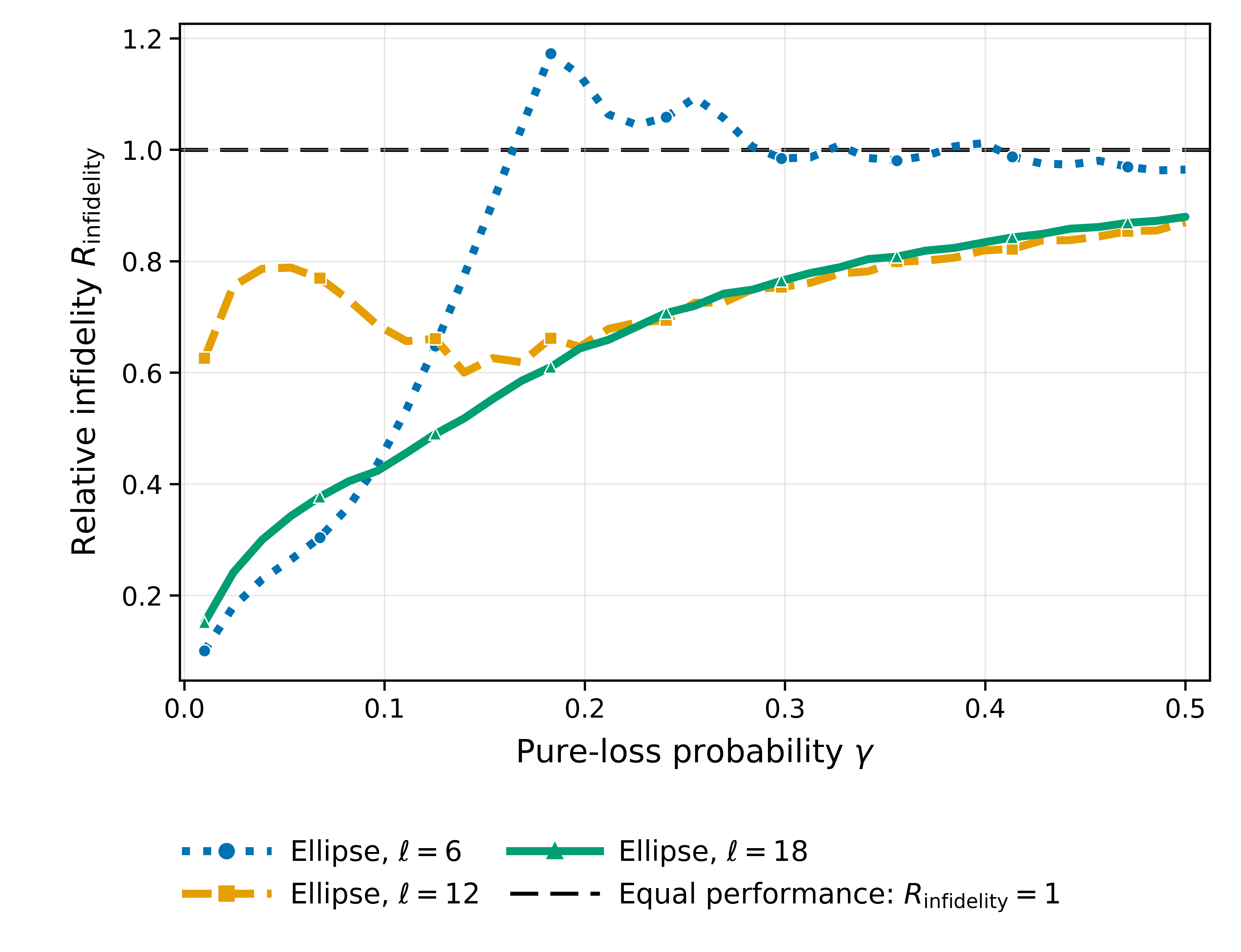}\hfill%
\includegraphics[width=0.495\linewidth]{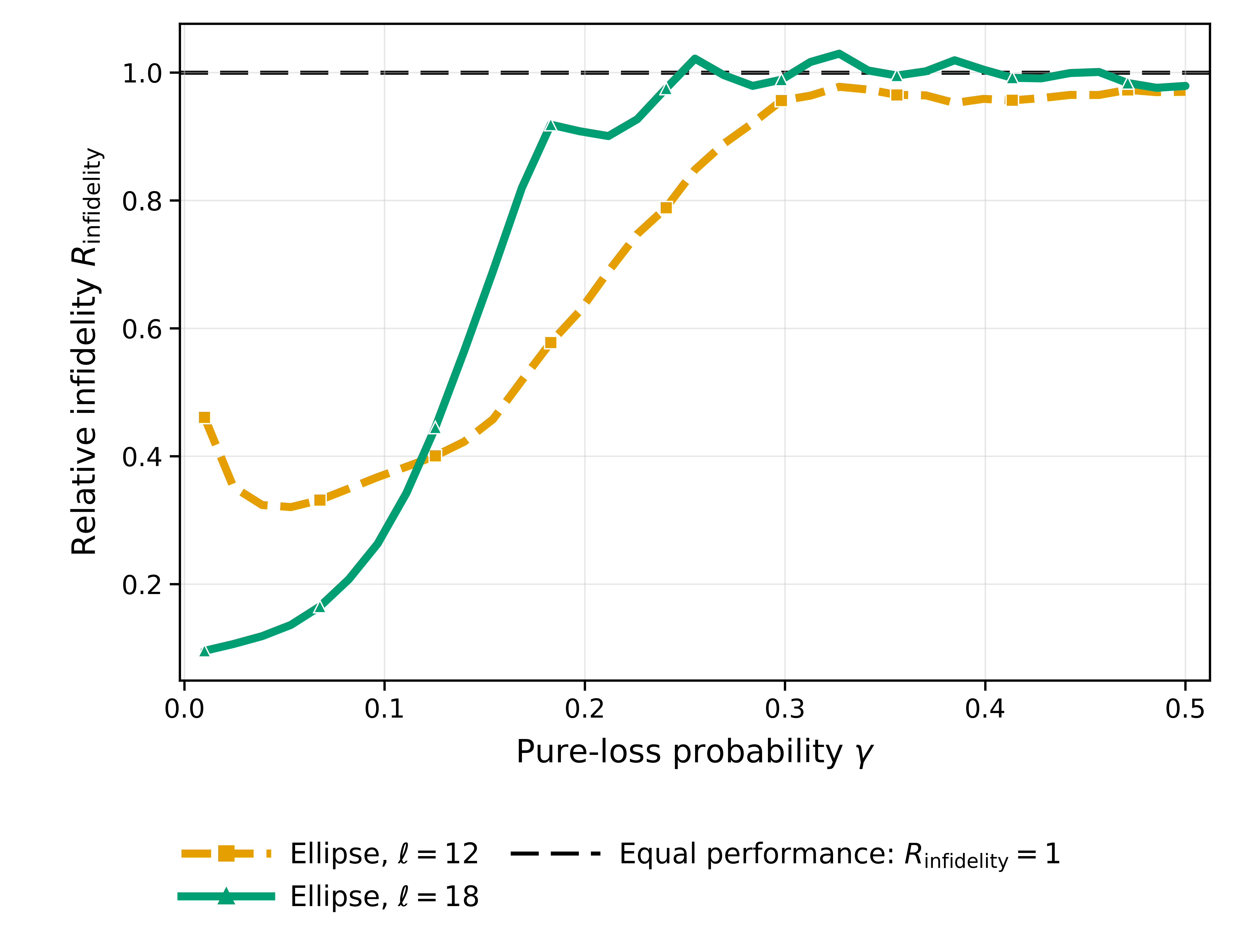}
\caption{ Ellipse }
\label{fig:appendix-method1-ellipse-qsc-c}
\label{fig:appendix-method1-ellipse-qcc-c}
\label{fig:appendix-method1-ellipse-qsc}
\label{fig:appendix-method1-ellipse-qcc}
\end{figure}
\FloatBarrier

\section{Supplementary figures for comparison under candidates mean photon budget}
\label{sec:appendix-method2-extra-figures}

\begin{figure}[H]
\centering
\includegraphics[width=0.495\linewidth]{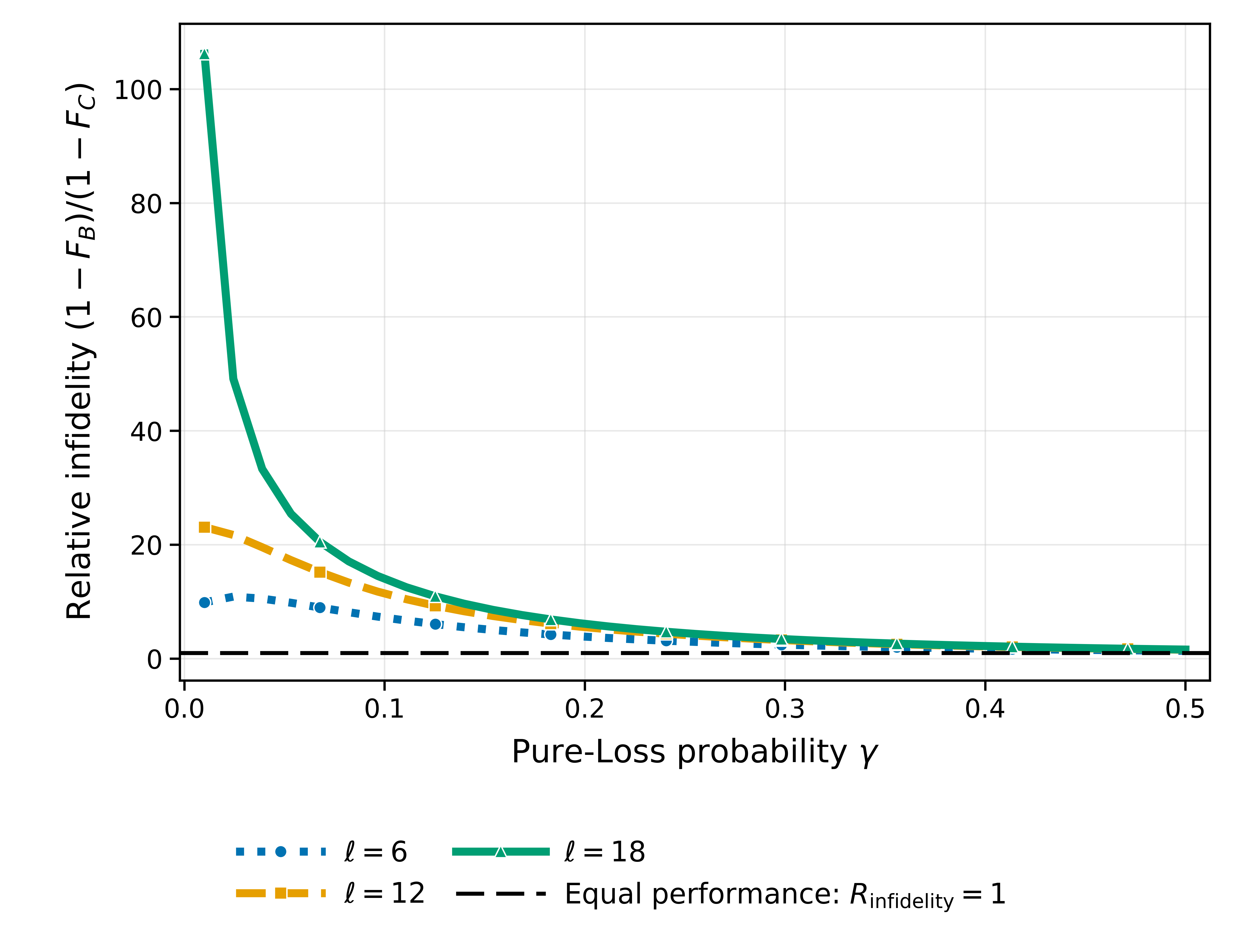}\hfill%
\includegraphics[width=0.495\linewidth]{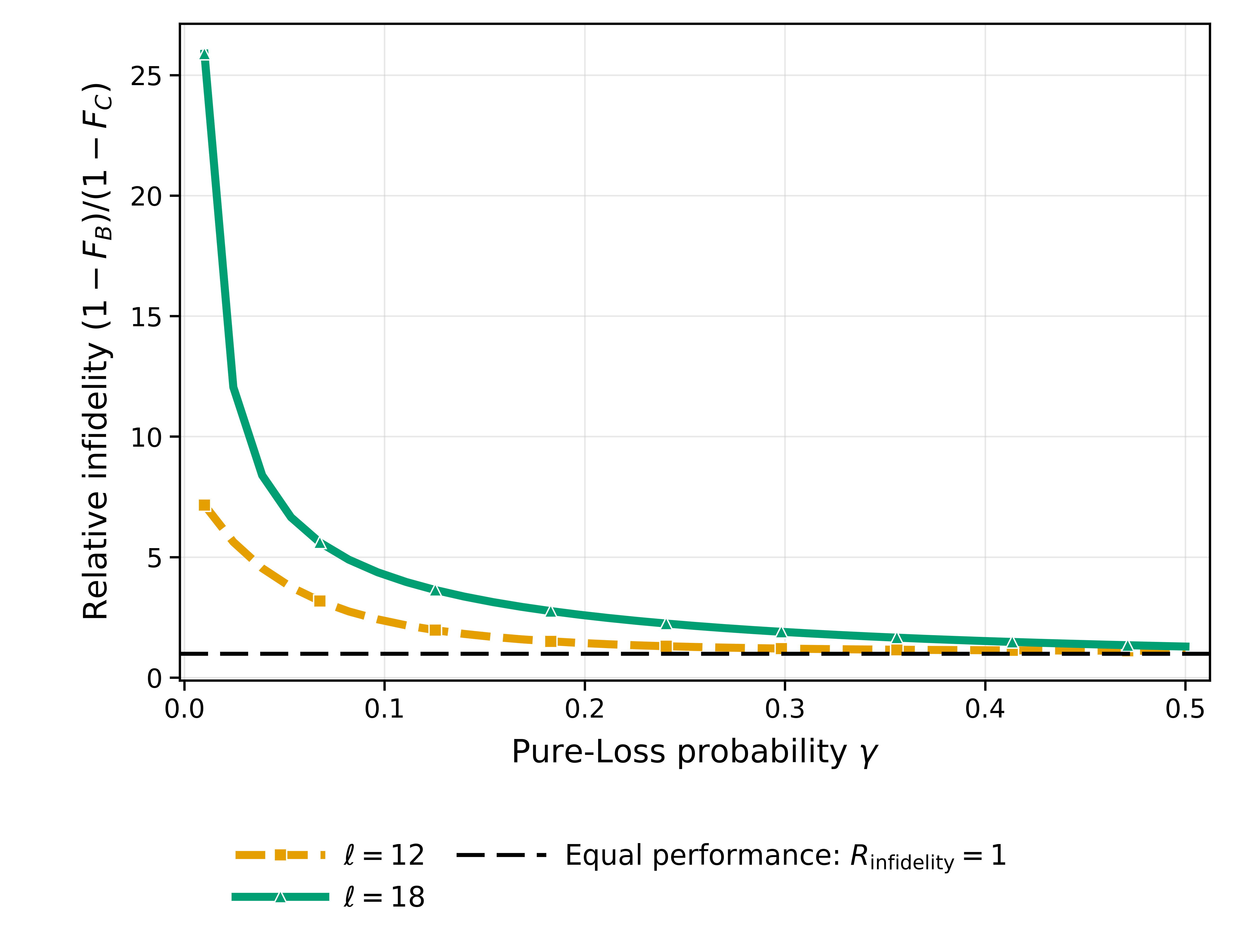}
\caption{ Hex3 }
\label{fig:appendix-method2-grid}
\label{appendix-method2-hex3-qsc}
\label{appendix-method2-hex3-qcc}
\end{figure}

\begin{figure}[H]
\centering
\includegraphics[width=0.495\linewidth]{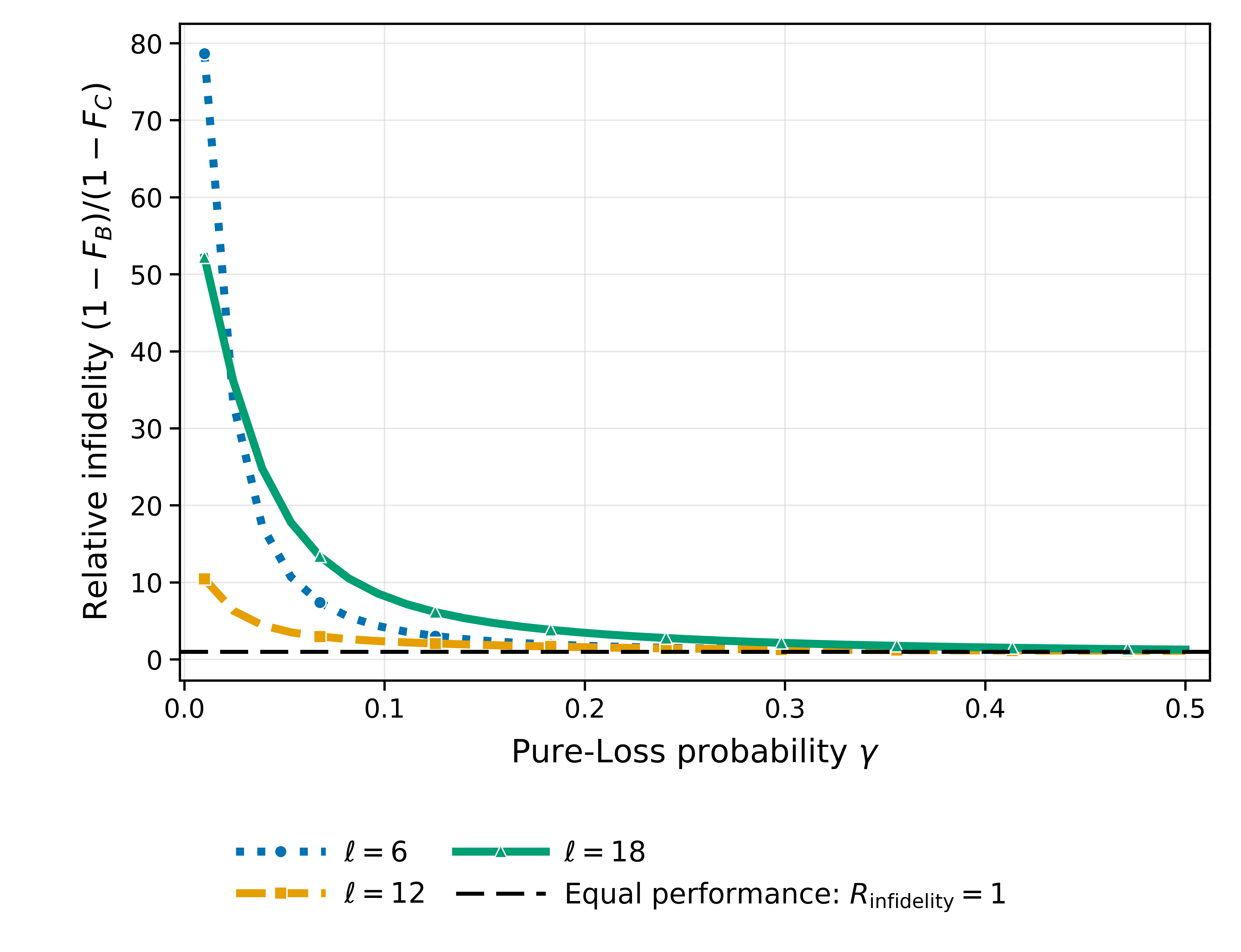}\hfill%
\includegraphics[width=0.495\linewidth]{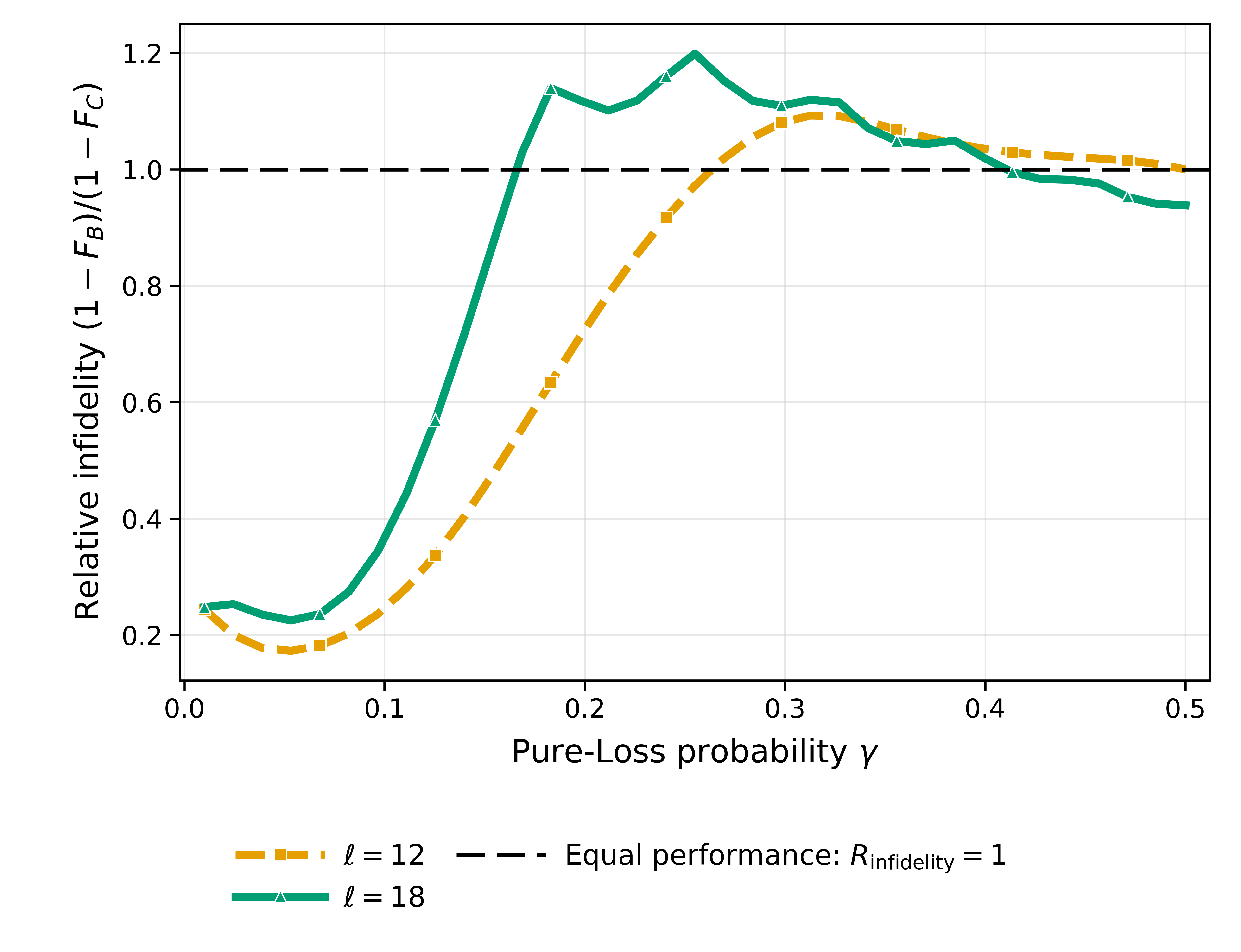}
\caption{ Hex2 }
\label{appendix-method2-hex2-qsc}
\label{appendix-method2-hex2-qcc}
\end{figure}

\begin{figure}[H]
\centering
\includegraphics[width=0.495\linewidth]{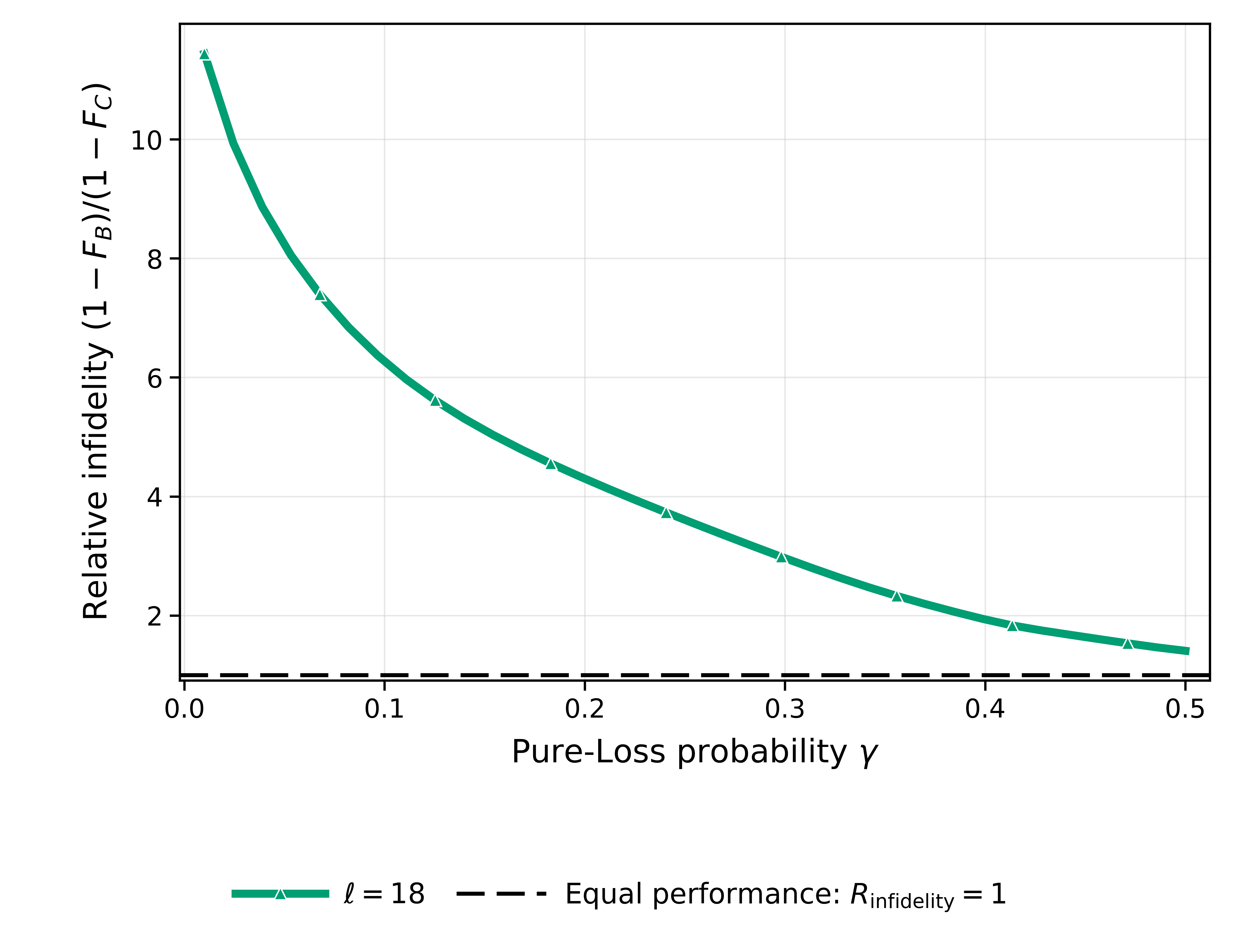}\hfill%
\includegraphics[width=0.495\linewidth]{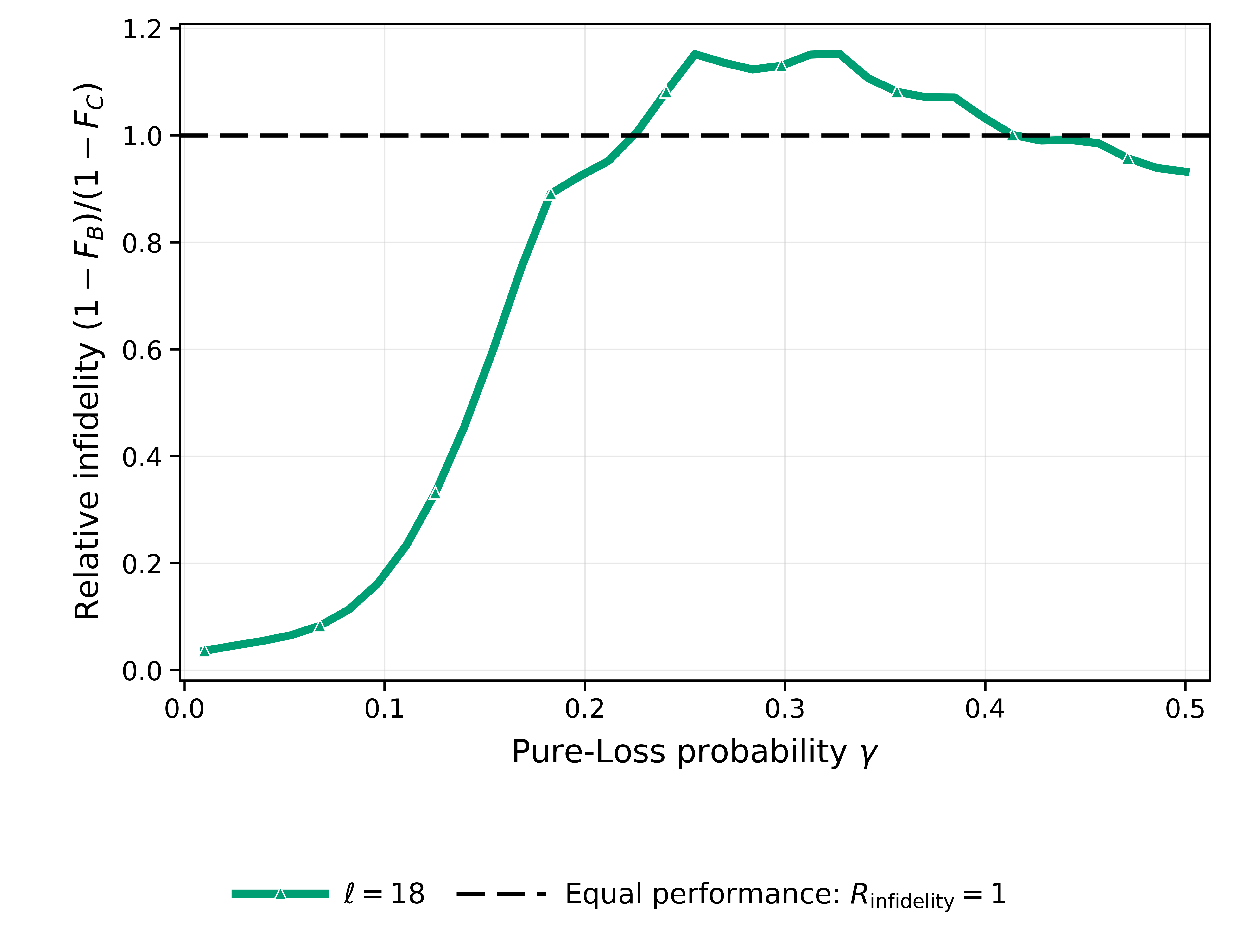}
\caption{Hex1}
\label{appendix-method2-hex1-qsc}
\label{appendix-method2-hex1-qcc}
\end{figure}

\begin{figure}[H]
\centering
\includegraphics[width=0.495\linewidth]{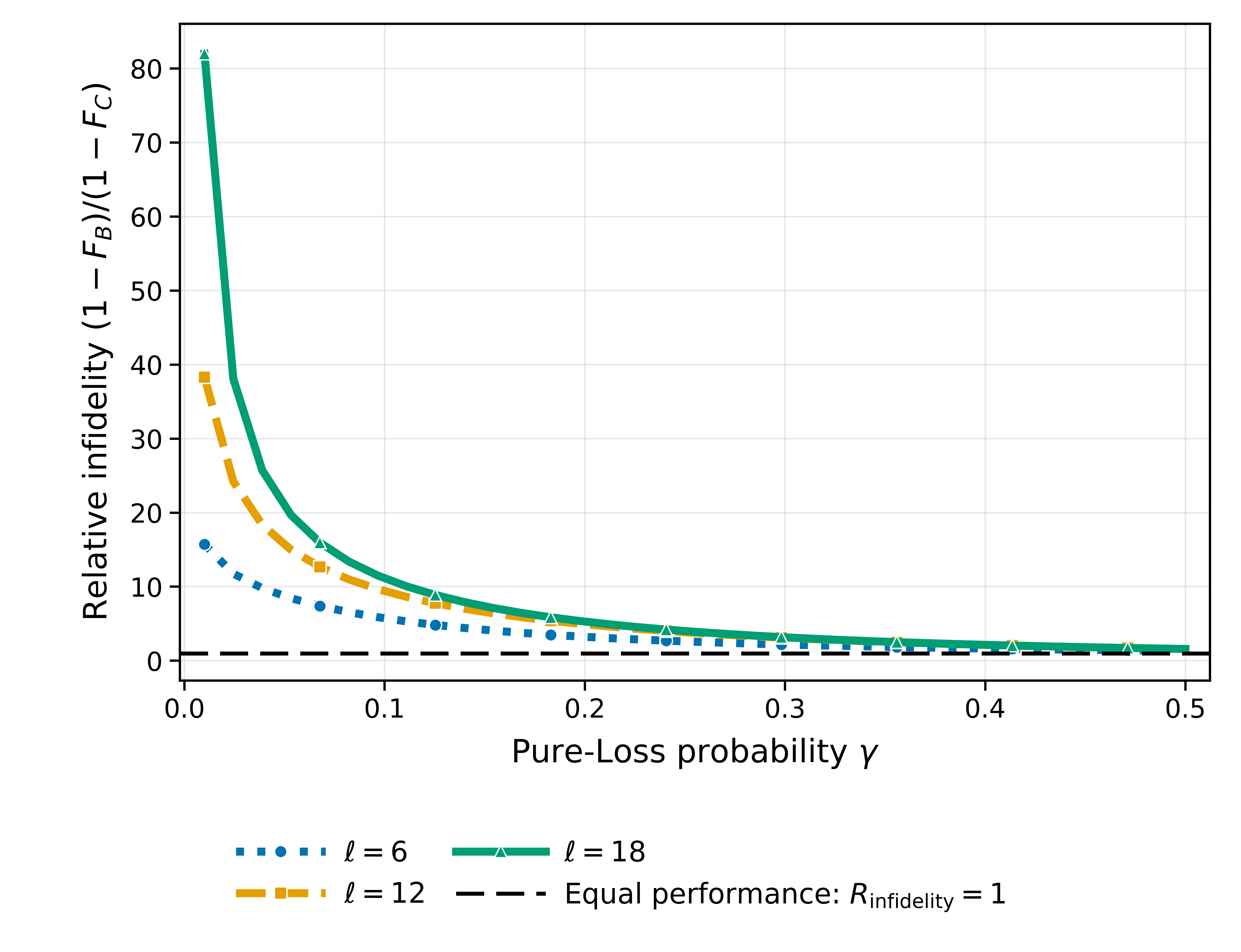}\hfill%
\includegraphics[width=0.495\linewidth]{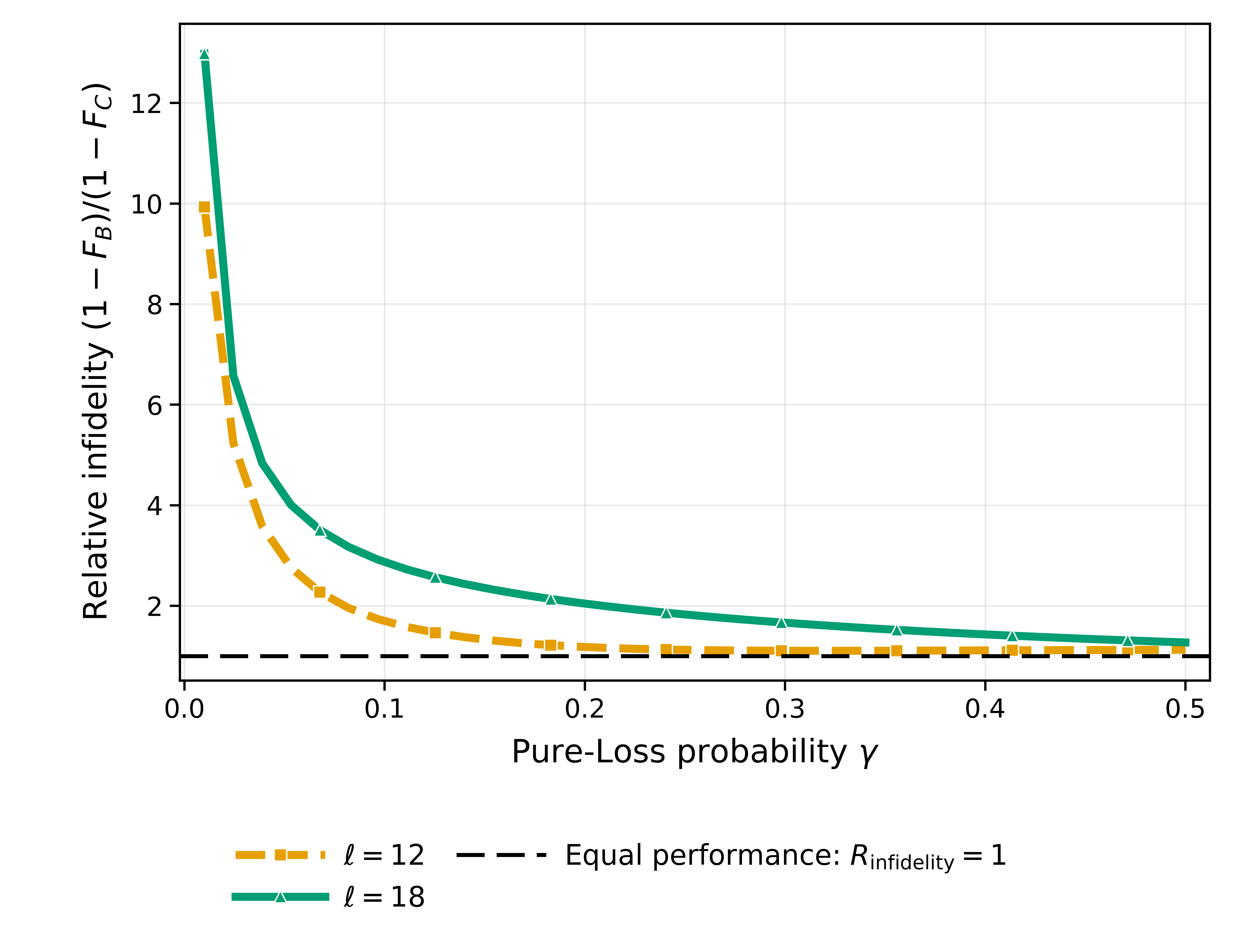}
\caption{ Elliptical Hex3 }
\label{appendix-method2-elliptical-hex3-qsc}
\label{appendix-method2-elliptical-hex3-qcc}
\end{figure}

\begin{figure}[H]
\centering
\includegraphics[width=0.495\linewidth]{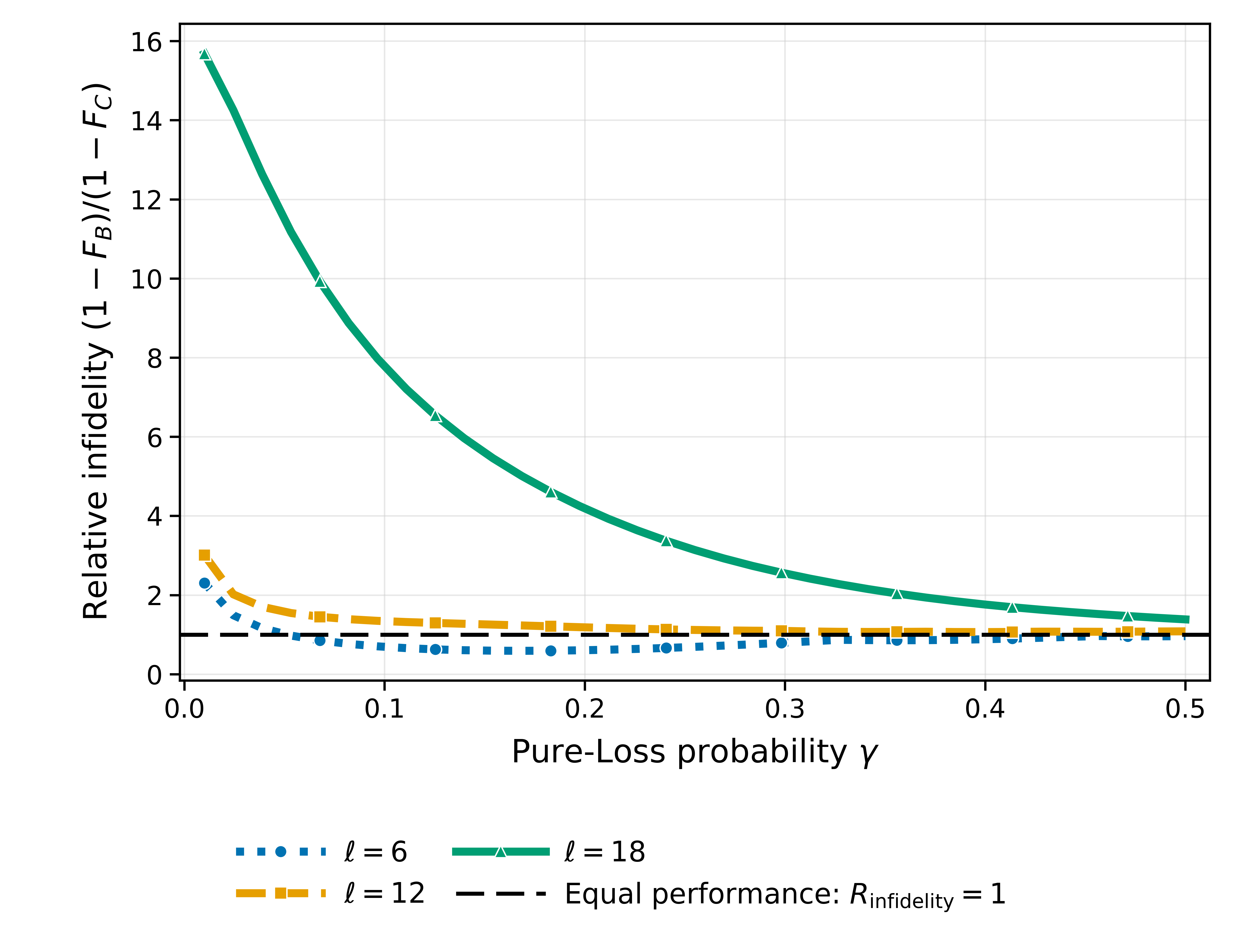}\hfill%
\includegraphics[width=0.495\linewidth]{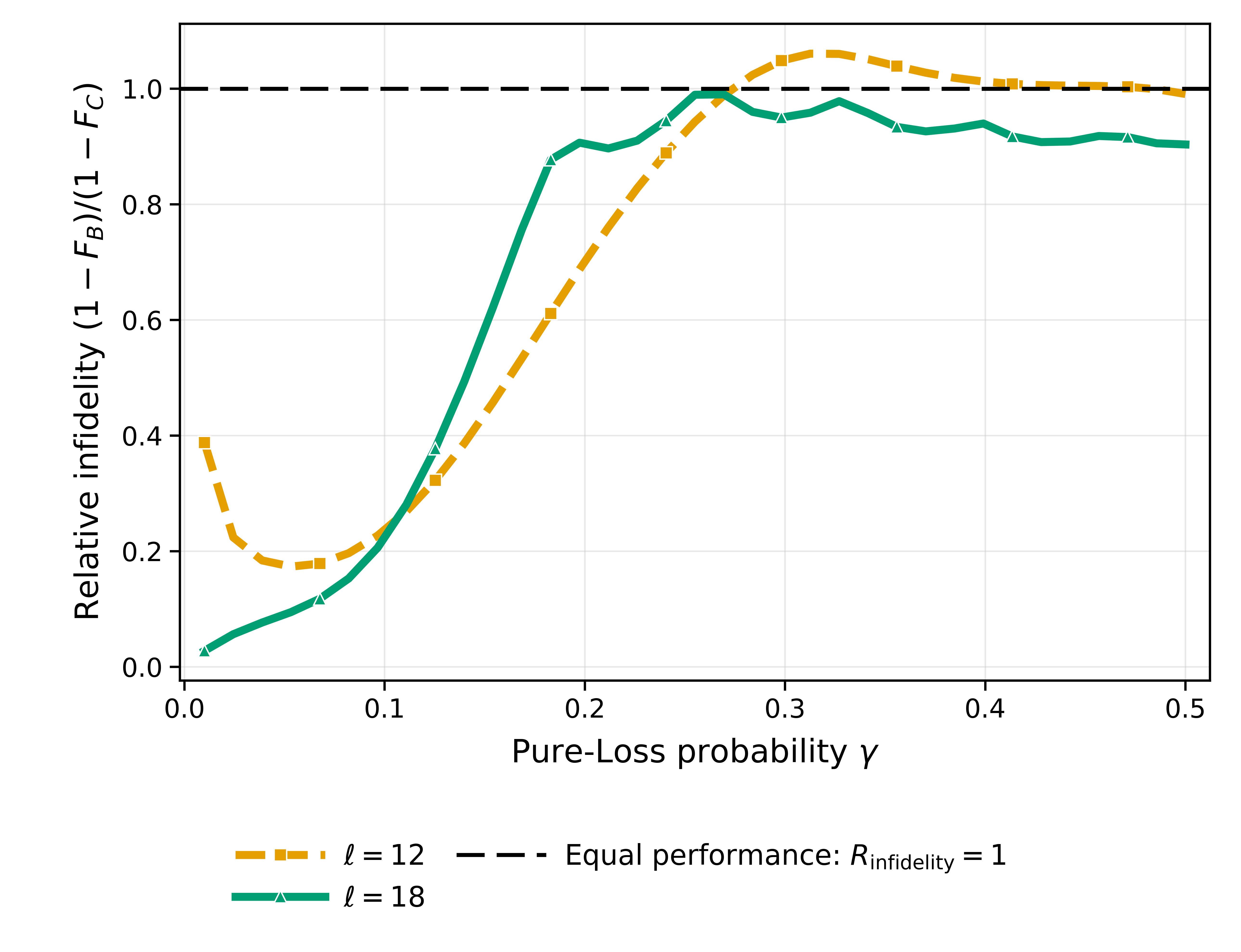}
\caption{ Elliptical Hex2 }
\label{appendix-method2-elliptical-hex2-qsc}
\label{appendix-method2-elliptical-hex2-qcc}
\end{figure}

\begin{figure}[H]
\centering
\includegraphics[width=0.495\linewidth]{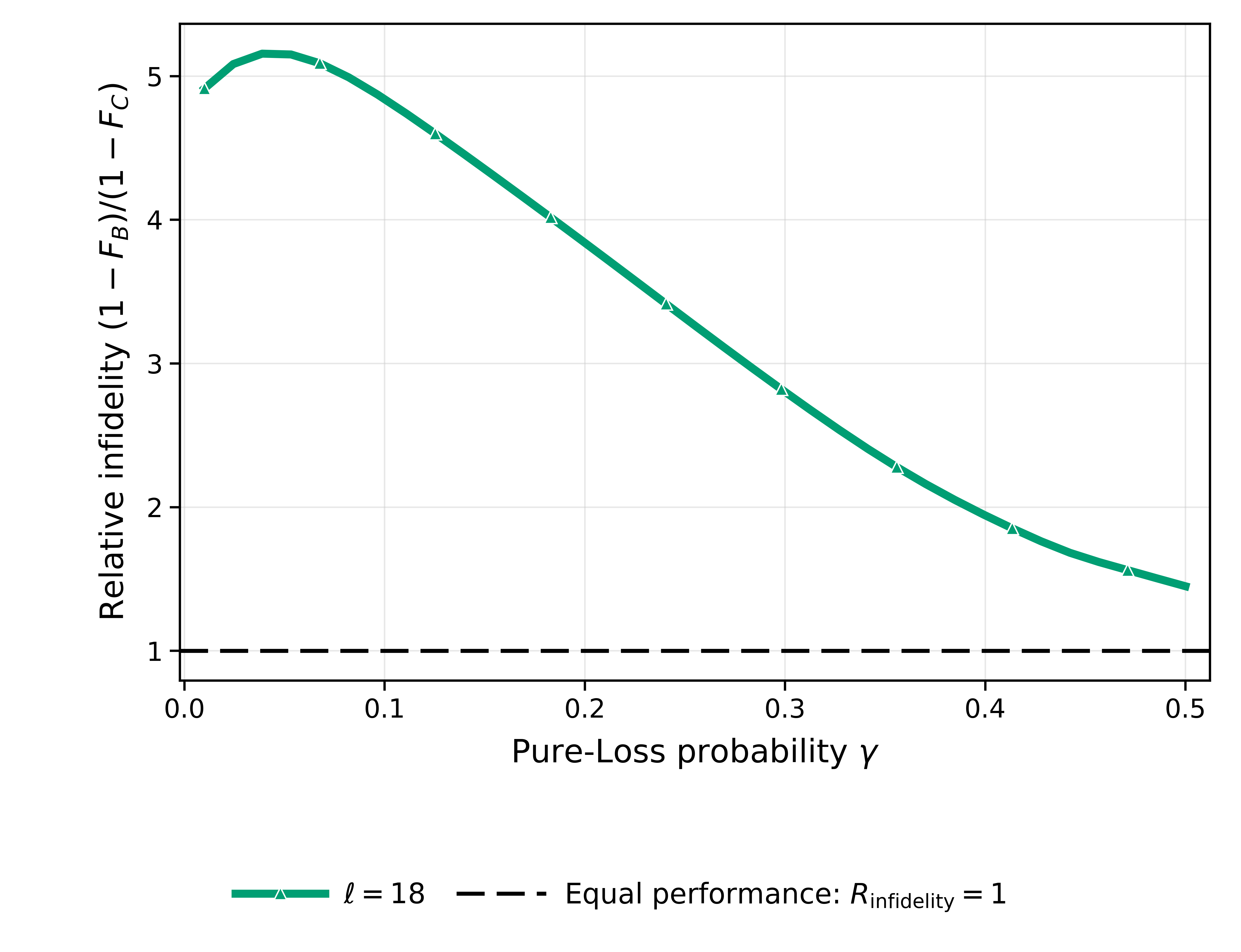}\hfill%
\includegraphics[width=0.495\linewidth]{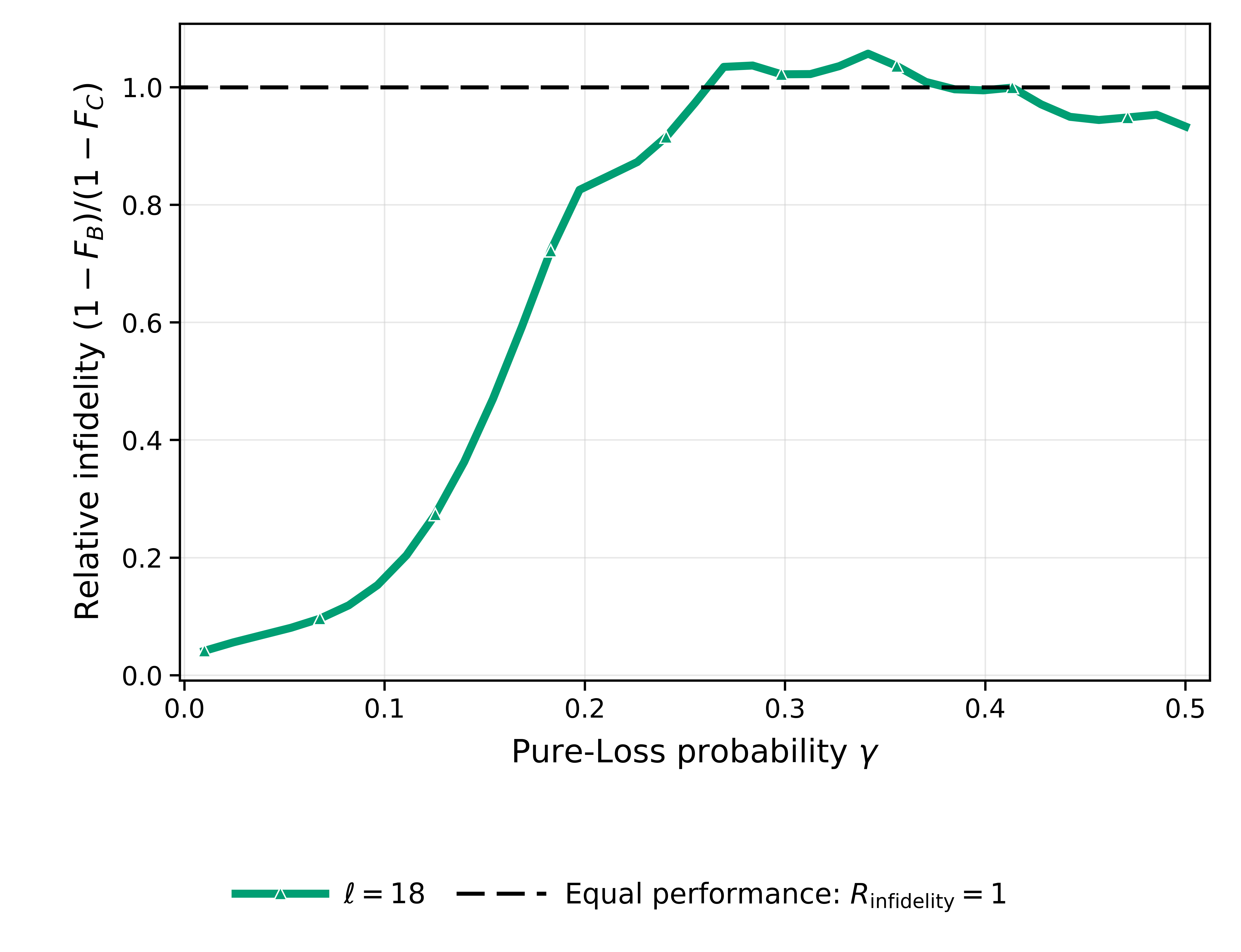}
\caption{Elliptical Hex1 }
\label{appendix-method2-elliptical-hex1-qsc}
\label{appendix-method2-elliptical-hex1-qcc}
\end{figure}

\begin{figure}[H]
\centering
\includegraphics[width=0.495\linewidth]{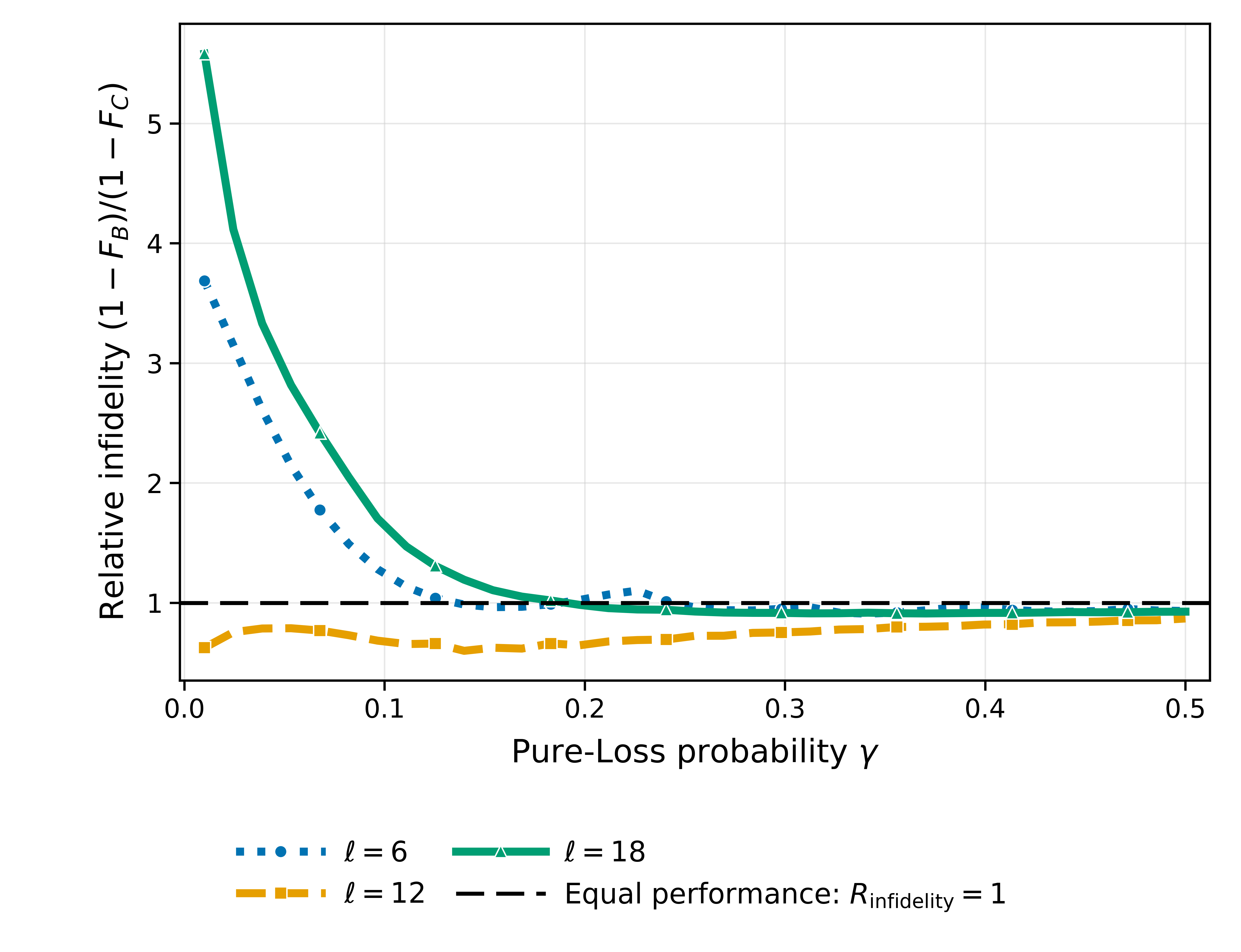}\hfill%
\includegraphics[width=0.495\linewidth]{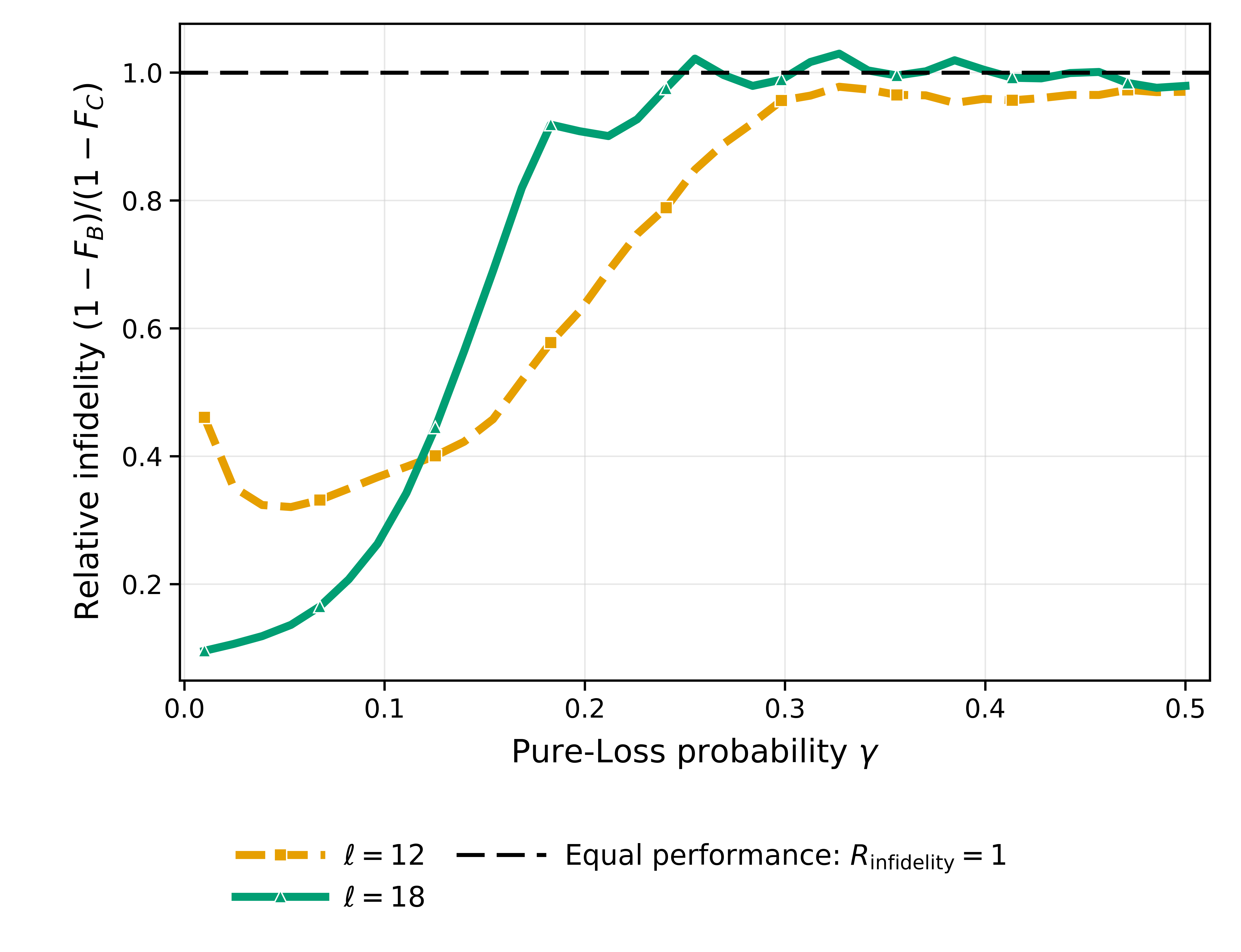}
\caption{Ellipse }
\label{appendix-method2-ellipse-qsc}
\label{appendix-method2-ellipse-qcc}
\end{figure}
\FloatBarrier

\section{Numerical results and code parameters}
\label{sec:numerical-code-parameters}

This section records the conventions and numerical operating points used in
Sections~\ref{sec:method1-results} and~\ref{sec:method2-results}. All benchmarked codes
encode $K=2$ logical states in a single bosonic mode, and $\ell$ denotes the
number of coherent components in each literal logical word. The tabulated
parameters describe the independently selected operating points used in the
equal-size comparison. They should therefore be distinguished from the
reference amplitudes selected for the photon-number-controlled comparison.

\paragraph{Literal coherent-state words.}
Let $z_{ki}\left(\alpha\right)\in\mathbb C$ denote the coherent amplitude of component
$i\in\left\{1,\ldots,\ell\right\}$ in literal word $k\in\left\{0,1\right\}$. The normalized word
is
\begin{equation}
 \ket{\psi_k\left(\alpha\right)}
 =
 \frac{\sum_{i=1}^{\ell}c_{ki}\ket{z_{ki}\left(\alpha\right)}}
 {\sqrt{c_k^\dagger S_k\left(\alpha\right)c_k}},
 \qquad
 \left(S_k\right)_{ij}=\braket{z_{ki}}{z_{kj}}.
 \label{eq:parameter-literal-word}
\end{equation}
The coherent-state overlap convention is
\begin{equation}
 \braket{z}{w}
 =
 \exp\!\left[-\frac{\abs{z}^2+\abs{w}^2}{2}+z^\ast w\right].
 \label{eq:parameter-coherent-overlap}
\end{equation}
The seven candidate BARC code families and QSCs use uniform coefficients. QCC coefficients are the square roots of their cubature probabilities, consistent
with the weighted-superposition convention of Ref.~\cite{Yang2025}. The
literal words are normalized but are not replaced by orthogonalized logical
states; their nonzero overlap is reported explicitly below.

\paragraph{Mean photon number.}
For the normalized literal words, the mean photon number used throughout the
comparisons is
\begin{equation}
 \bar n_C\left(\alpha\right)
 =
 \frac12\sum_{k=0}^1
 \frac{c_k^\dagger H_k\left(\alpha\right)c_k}
      {c_k^\dagger S_k\left(\alpha\right)c_k},
 \qquad
 \left(H_k\right)_{ij}
 =z_{ki}^{\ast}z_{kj}\left(S_k\right)_{ij}.
 \label{eq:equalell-energy}
\end{equation}
We write
$\bar n_C^\star=\bar n_C\!\left(\alpha_C^\star\right)$.
This is the quantum expectation value for the normalized coherent-state
superpositions, averaged over the two literal words. Coherent-state
interference is retained in both the numerator and normalization. Therefore,
$\bar n_C^\star$ is distinct from the squared radius of a point and from an
unnormalized geometric second moment. Because this amplitude--energy mapping
depends on the point geometry and superposition weights, values of $\alpha$
should not be compared across families as though they were a common energy
coordinate.

\paragraph{Phase-space resolution.}
The minimum physical separation between coherent components at the selected
operating point is
\begin{equation}
 \Delta_C^\star
 =
 \min_{\left(k,i\right)\ne\left(k',j\right)}
 \abs{z_{ki}\!\left(\alpha_C^\star\right)
       -z_{k'j}\!\left(\alpha_C^\star\right)}.
 \label{eq:equalell-resolution}
\end{equation}
This is an unsquared Euclidean distance over the union of both literal words,
including pairs within a word. A
larger $\Delta_C^\star$ means that the closest coherent components are farther
apart.

\paragraph{Literal-word overlap.}
The overlap between the two normalized literal words is
\begin{equation}
 s_C^\star
 =
 \abs{\braket{\psi_0\left(\alpha_C^\star\right)}{\psi_1\left(\alpha_C^\star\right)}}
 =
 \frac{\abs{c_0^\dagger S_{01}c_1}}
 {\sqrt{\left(c_0^\dagger S_0c_0\right)
              \left(c_1^\dagger S_1c_1\right)}},
 \qquad
 \left(S_{01}\right)_{ij}=\braket{z_{0i}}{z_{1j}}.
 \label{eq:equalell-word-overlap}
\end{equation}
Here $s_C^\star=0$ corresponds to orthogonal literal words, while values near
one indicate nearly indistinguishable prepared words. 

\paragraph{Relative infidelity.}

For a candidate $C$ and equal-size reference $B$, we use
the relative-infidelity ratio $R_{C|B}\left(\gamma\right)$ defined in
Eq.~\eqref{eq:equalell-ratio} in Section~\ref{sec:comparison-method-1-desc}.

Values above one mean that the candidate has lower infidelity; values below
one mean that the reference has lower infidelity.

\paragraph{Photon-number-controlled references.}
For the resource-controlled comparison, the candidate remains at
$\alpha_C^\star$ and defines the target energy $\bar n_C^\star$. 
The reference amplitude is selected by Eq.~\eqref{eq:energy-match}
in Section~\ref{sec:comparison-method-2-desc}.

\begin{table}[H]
\centering\small
\begin{tabular}{lrrrrrr}
\toprule
Family & $\ell$ & $\alpha_C^\star$ & $\bar n_C^\star$ &
$\Delta_C^\star$ & $F_{\mathrm B,C}^\star\left(\gamma_0\right)$ & $s_C^\star$ \\
\midrule
Hex3 & 6  & 2.2449 & 2.6444 & 1.2961 & 0.983188 & 0.0550 \\
Hex3 & 12 & 3.7959 & 2.6163 & 1.1317 & 0.977090 & 0.1232 \\
Hex3 & 18 & 4.7653 & 2.3231 & 0.9437 & 0.975359 & 0.0038 \\
Hex2 & 6  & 3.2143 & 6.5685  & 0.6614 & 0.955293 & 0.0767 \\
Hex2 & 12 & 6.8980 & 15.4504 & 0.7546 & 0.961114 & 0.1181 \\
Hex2 & 18 & 9.4184 & 19.3719 & 0.9068 & 0.968614 & 0.1258 \\
Hex1 & 18 & 8.8367 & 17.4795 & 0.6835 & 0.932726 & 0.3165 \\
Elliptical Hex3 & 6  & 2.8265 & 3.3661 & 1.0430 & 0.979646 & 0.0044 \\
Elliptical Hex3 & 12 & 4.3776 & 2.7339 & 0.8342 & 0.972513 & 0.0634 \\
Elliptical Hex3 & 18 & 5.7347 & 2.6817 & 0.7258 & 0.969632 & 0.0180 \\
Elliptical Hex2 & 6  & 4.3776 & 8.9431 & 0.9411 & 0.966477 & 0.0380 \\
Elliptical Hex2 & 12 & 8.4490 & 17.2261 & 0.9656 & 0.959587 & 0.0459 \\
Elliptical Hex2 & 18 & 10.0000$^{\dagger}$ & 17.0433 & 0.9883 & 0.949309 & 0.2610 \\
Elliptical Hex1 & 18 & 10.0000$^{\dagger}$ & 16.3251 & 0.4944 & 0.921857 & 0.4114 \\
Ellipse & 6  & 2.2449 & 13.3502 & 0.9729 & 0.975994 & 0.1574 \\
Ellipse & 12 & 4.1837 & 45.3251 & 0.8660 & 0.975315 & 0.0548 \\
Ellipse & 18 & 5.9286 & 90.9561 & 0.8246 & 0.959035 & 0.1932 \\
\midrule
QSC & 6  & 3.9898 & 15.9075 & 2.0653 & 0.989740 & 0.0249 \\
QSC & 12 & 6.5102 & 42.3055 & 1.6995 & 0.982982 & 0.0124 \\
QSC & 18 & 10.0000$^{\dagger}$ & 99.9698 & 1.7431 & 0.982735 & 0.0378 \\
QCC & 12 & 3.0204 & 9.1034 & 2.0087 & 0.990976 & 0.0577 \\
QCC & 18 & 4.1837 & 17.4944 & 2.0843 & 0.989487 & 0.0555 \\
\bottomrule
\end{tabular}
\caption{Independently selected operating points at $\gamma_0=0.095$.
All codes contain two literal words in one bosonic mode. The amplitude
$\alpha_C^\star$ is the best point on the 50-sample scan rather than a
certified continuous optimum. The mean photon number $\bar n_C^\star$ is
computed from the normalized coherent-state superpositions;
$\Delta_C^\star$ is the minimum unsquared separation over the union of both
words; and $s_C^\star$ is the normalized literal-word overlap. The fidelity
$F_{\mathrm B,C}^\star\left(\gamma_0\right)$ is optimized over recovery at the selected
amplitude. 
A dagger ($\dagger$) marks a best sampled amplitude at an endpoint of the scan; while the actual optimal amplitude may lie beyond the sampled limit.}
\label{tab:equalell-parameters}
\end{table}

\begin{table}[H]
\centering\small
\begin{tabular}{lp{0.69\linewidth}}
\toprule
Family & Construction parameters \\
\midrule
Hex3 & Alternating coloring of sixfold triangular-lattice orbits with
$\ell/3$ seed orbits; target moment $G_{\mathrm{iso}}=2I_2$; uniform
coefficients. \\
Hex2 & Opposite deep-hole shifts of alternating triangular subsets followed
by rewhitening to $G_{\mathrm{iso}}=2I_2$; uniform coefficients. \\
Hex1 & Origin-excluded radius-two triangular-lattice patch with $18$ points
per word, target moment $G_{\mathrm{iso}}=2I_2$, and relative rotation
$\pi/2$; uniform coefficients. The stored filename parameter $t=2$ is the
patch radius, giving $\ell=3t\left(t+1\right)=18$; it is not an error-correction order. \\
Elliptical Hex3 & Hex3 construction with $\ell=6,12,18$ and target moment
$G_{\mathrm{aniso}}$; uniform coefficients. \\
Elliptical Hex2 & Hex2 construction with $\ell=6,12,18$ and target moment
$G_{\mathrm{aniso}}$; uniform coefficients. \\
Elliptical Hex1 & Hex1 construction with $\ell=18$ and target moment
$G_{\mathrm{aniso}}$; uniform coefficients. \\
Ellipse & Circular points transformed directly by
$L=\left(\begin{smallmatrix}2&0.5\\0.5&1\end{smallmatrix}\right)$; uniform
coefficients. \\
QSC & Unit-circle constellation with angles
$2\pi r/\ell+2\pi k/\left(2\ell\right)$ and uniform coefficients, following the
single-mode QSC construction of Ref.~\cite{Jain2023}. \\
QCC & Three radial shells with radii ratio $1:2:3$;
$\left(t,p,m\right)=\left(7,3,4\right)$ for $\ell=12$ and $\left(9,3,6\right)$ for $\ell=18$.
The probability-weighted geometric radius squared is one, the coefficients
are square roots of the point probabilities, and the second word is rotated
by $\pi/m$~\cite{Yang2025}. \\
\bottomrule
\end{tabular}
\caption{Geometry and coefficient conventions for the benchmarked families.
All point-set centroids vanish. The isotropic lattice families use
$G_{\mathrm{iso}}=2I_2$, while the elliptically deformed lattice families use
$G_{\mathrm{aniso}}=\left(\begin{smallmatrix}2&0.5\\0.5&1\end{smallmatrix}\right)$.
For a word with real phase-space coordinates
$\boldsymbol\xi_{ki}\in\mathbb R^2$, $G$ is the prescribed summed centered
second moment,
$G=\sum_i\left(\boldsymbol\xi_{ki}-\bar{\boldsymbol\xi}_k\right)
\left(\boldsymbol\xi_{ki}-\bar{\boldsymbol\xi}_k\right)^{\mathsf T}$.
After whitening, the coordinates are mapped using the positive square root
$G^{1/2}$. Hence $G$ is a target moment matrix, not a direct coordinate map
and not the normalized quantum photon number. The Ellipse family is distinct:
$L$ acts directly on its circular coordinates.}
\label{tab:equalell-geometry}
\end{table}

Table~\ref{tab:equalell-geometry} specifies how the common scalar scan
parameter acts on each construction. If
$\boldsymbol\xi_{ki}=\left(x_{ki},p_{ki}\right)^{\mathsf T}$ denotes the final
dimensionless phase-space coordinate produced by the listed construction,
then
\begin{equation}
 z_{ki}\left(\alpha\right)=\alpha\left(x_{ki}+i p_{ki}\right).
 \label{eq:parameter-coordinate-scaling}
\end{equation}
The matrices $G_{\mathrm{iso}}$ and $G_{\mathrm{aniso}}$ prescribe moments
for the lattice constructions, whereas $L$ acts directly on the circular
coordinates of the Ellipse family. This distinction is required when
reconstructing the tabulated resolution, photon number, and word overlap.
For a centered set $\left\{w_i\right\}$ with $\sum_iw_iw_i^{\mathsf T}=I_2$, applying a coordinate map $L$ gives
\begin{equation}
    \sum_i\left(Lw_i\right)\left(Lw_i\right)^{\mathsf T}=LL^{\mathsf T}.
\end{equation}
Thus realizing a prescribed second-moment matrix $G$ requires $LL^{\mathsf T}=G$, for example $L=G^{1/2}$; applying $G$ itself would instead give $G^2$.
For the unscaled unit-circle $\ell$-gon, $\sum_iw_iw_i^{\mathsf T}=\left(\ell/2\right)I_2$, so a direct ellipse map gives $\left(\ell/2\right)LL^{\mathsf T}$, with any common amplitude factor squared multiplying this matrix.
For example, the matrix listed for the direct Ellipse map has
\begin{equation}
    G_\mathrm{aniso}G_\mathrm{aniso}^\top =
    \begin{bmatrix}2&0.5\\0.5&1\end{bmatrix}
    \begin{bmatrix}2&0.5\\0.5&1\end{bmatrix}^{\mathsf T}
    =\begin{bmatrix}4.25&1.5\\1.5&1.25\end{bmatrix}.
\end{equation}
These identities distinguish a coordinate map from a target moment matrix; identifying the actual inputs used for each numerical family requires the stored node arrays and their scaling conventions.

\FloatBarrier

\bibliographystyle{unsrt}
\bibliography{references}

\end{document}